\newtheorem*{theorem*}{Theorem}
\newcommand{\thup}[0]{\mathbin {\begin{tikzpicture}[scale=0.1]  \draw[->,thick] (0, 1.5 ) -- (2,1.5);     \draw[->,thick] (0,1.5) -- (0,3.5);	\end{tikzpicture} } }
\newcommand{\thdown}[0]{\mathbin{\begin{tikzpicture}[scale=0.1]  \draw[->,thick] (0, 1.5 ) -- (2,1.5);     \draw[->,thick] (0,1.5) -- (0,-0.5);	\end{tikzpicture} } }
\newcommand{\mitsuup}[0]{\forall 3\hspace{0.2em}\thup}
\newcommand{\mitsudown}[0]{\forall 3\hspace{0.2em}\thdown}
\newcommand{\eo}[0]{\textup{\textsf{EO}}}
\newcommand{\ceo}[0]{\textup{\#\textsf{EO}}}
\newcommand{\hol}[0]{\textup{\textsf{Holant}}}
\newcommand{\holodd}[0]{\textup{\textsf{Holant}}^\text{odd}}
\newcommand{\csp}[0]{\textup{\textsf{CSP}}}
\newcommand{\ccsp}[0]{\textup{\textsf{\#CSP}}}
\newcommand{\gc}[0]{\textsf{S}}
\newcommand{\len}[1]{\text{Len}(#1)} 
\newcommand{\T}[0]{\mathrm{T}}
\newcommand{\eoe}[0]{\text{HW}^=}
\newcommand{\eog}[0]{\text{HW}^\geq}
\newcommand{\eol}[0]{\text{HW}^\leq}
\newcommand{\eosg}[0]{\text{HW}^>}
\newcommand{\eosl}[0]{\text{HW}^<}
\newcommand{\su}[0]{\text{supp}}
\newcommand{\var}[0]{\text{Var}}
\newcommand{\ari}[0]{\text{arity}}
\newcommand{\pin}[0]{\Delta}
\newcommand{\eom}[1][\text{M}]{\textup{\textsf{EO}}^{#1}}
\newcommand{\ba}[1][0]{{{#1}-rebalancing}}
\newcommand{\pnp}[0]{\text{FP}^\text{NP}}
\newcommand{\sw}[0]{single-weighted}
\newcommand{\khol}[0]{\textup{\textsf{K-Holant}}}
\newcommand{\ii}[0]{\mathfrak{i}}
\title{From an odd arity signature to a Holant dichotomy} 
\author{Boning Meng\footnote{The authors share first author status.}}{Key Laboratory of System Software (Chinese Academy of Sciences) and State Key Laboratory of Computer Science, Institute of Software, Chinese Academy of Sciences; University of Chinese Academy of Sciences, Beijing 100080, China}{mengbn@ios.ac.cn}{https://orcid.org/0009-0006-0088-1639}{}
\author{Juqiu Wang\footnotemark[1]}{Key Laboratory of System Software (Chinese Academy of Sciences) and State Key Laboratory of Computer Science, Institute of Software, Chinese Academy of Sciences; University of Chinese Academy of Sciences, Beijing 100080, China}{wangjq21@ios.ac.cn}{https://orcid.org/0000-0001-9801-271X}{}
\author{Mingji Xia\footnotemark[1]}{Key Laboratory of System Software (Chinese Academy of Sciences) and State Key Laboratory of Computer Science, Institute of Software, Chinese Academy of Sciences; University of Chinese Academy of Sciences, Beijing 100080, China}{mingji@ios.ac.cn}{https://orcid.org/0000-0002-3868-9910}{}
\author{Jiayi Zheng\footnotemark[1]}{Key Laboratory of System Software (Chinese Academy of Sciences) and State Key Laboratory of Computer Science, Institute of Software, Chinese Academy of Sciences; University of Chinese Academy of Sciences, Beijing 100080, China}{zhengjy@ios.ac.cn}{https://orcid.org/0009-0005-5728-3616}{}
\authorrunning{B. Meng, J. Wang, M. Xia, J. Zheng}
\keywords{Complexity dichotomy, Counting, Holant problem, \#P} 
\begin{document}
\maketitle

\begin{abstract}
   \textsf{Holant} is an essential framework in the field of counting complexity. For over fifteen years, researchers have been clarifying the complexity classification for complex-valued \textsf{Holant} on the Boolean domain, a challenge that remains unresolved.  In this article, we prove a complexity dichotomy for complex-valued \textsf{Holant} on Boolean domain when a non-trivial signature of odd arity exists. This dichotomy is based on the dichotomy for \textsf{\#EO}, and consequently is an $\text{FP}^\text{NP}$ vs. \#P dichotomy as well, stating that each problem is either in $\text{FP}^\text{NP}$ or \#P-hard. 

   Furthermore, we establish a generalized version of the decomposition lemma for complex-valued \textsf{Holant} on Boolean domain. It  asserts that each signature can be derived from its tensor product with other signatures, or conversely, the problem itself is in $\text{FP}^\text{NP}$. We believe that this result is a powerful method for building reductions in complex-valued \textsf{Holant}, as it is also employed as a pivotal technique in the proof of the aforementioned dichotomy in this article.
\end{abstract}

\section{Introduction}

Counting complexity is an essential aspect of computational complexity. Basically, it focuses on computing the sum of the weights of the solutions.
\hol\ is one of the most significant frameworks in the field of counting complexity, as it is capable of capturing a number of counting problems, such as counting perfect matchings in a graph or computing the partition function of the six-vertex model in statistical physics. There is a long history studying the complexity of \hol, and much progress has been made. On the other hand, however, the complete complexity classification for complex-valued \hol\ on Boolean domain remains unclear. 

 This article focuses on counting problems defined on Boolean domain and develops existing complexity dichotomies. It presents a full dichotomy for complex-valued \hol\ where there exists a non-trivial\footnote{By the term 'non-trivial' we mean that the signature does not remain constant at 0. It can be easily verified that if a signature remains constant at 0, then removing such signature from the signature set would not change the complexity of the problem.} signature of odd arity, which we denoted by complex $\holodd$ for short. This dichotomy encompasses the previous results in \cite{cai2020holantoddarity,backens2021full,yanghuangfu2024holant3}, with the exception that it is an $\pnp$ vs. \#P dichotomy. Furthermore, this article also presents a decomposition lemma for complex-valued \hol\ based on the dichotomy in \cite{meng2025fpnp}, which is the extension of a widely-used method originally developed in \cite{LinWang2018plushol}.

To begin with, $\hol$ is a framework of counting problems parameterized by a signature set $\mathcal{F}$, originally defined in \cite{CaiLuXia2009csp+hol}. Each signature of arity $r$ in $\mathcal{F}$ is a mapping from $\{0,1\}^r$ to $\mathbb{C}$, representing a local constraint. Each input, or equivalently an instance of $\hol(\mathcal{F})$, is a grid $\Omega=(G,\pi)$ where $G$ is a graph\footnote{In this article, the term 'graph' is understood to refer to a multi-graph. The presence of self-loops and parallel edges is invariably permitted.} and $\pi$ assigns a signature to each vertex in $G$. Each edge in $G$ is regarded as a variable subject to constraints imposed by incident vertices, and the output is the summation over the weights of all possible assignments of the variables, where the weight is the production of the output of each signature. See Definition \ref{def:Hol} for a detailed definition. 

$\hol$ is considered as one of the most important frameworks in the field of counting complexity. On one hand, it is capable of expressing a considerable number of counting problems, such as counting matchings and counting perfect matchings in a graph (\#\textsf{Matching} and \#\textsf{PM}), counting weighted Eulerian orientations in a graph (\ceo), counting constraint satisfaction problem (\ccsp) and computing the partition function of the six-vertex and eight-vertex model. On the other hand, \hol\ was originally inspired by the discovery of the holographic algorithm \cite{Valiant2006Accidental,valiant2008holographic}, which in turn was inspired by quantum computation and is closely related to the concept of Stochastic Local Operations and Classical Communication (SLOCC). Furthermore, several non-trivial connections between \hol\ and quantum computation have also been established \cite{cai2020holantoddarity,backens2017holant+,backens2021full,cai2024planar}.

Consequently, \hol\ has attracted a number of researchers to classify the complexity of this framework. The ultimate goal in this study is to decide the complexity of $\hol(\mathcal{F})$ for every complex-valued signature set $\mathcal{F}$. It is believed that there exists a dichotomy for complex-valued \hol, which means for arbitrary $\mathcal{F}$,  $\hol(\mathcal{F})$ is either polynomial-time computable or \#P-hard. Nevertheless, this hypothesis has been open for over 15 years, and currently the dichotomy result can only be stated for several types of signature sets. In the following, an overview of the research history is presented, along with a discussion of the current state of knowledge.

The research commences with symmetric signature sets, where the output of each signature only depends on the Hamming weight, or equivalently the number of 1's, of the input. Consequently, a symmetric signature $f$ of arity $r$ can be expressed as $[f_0, f_1, \ldots, f_r]_{r}$, where $f_i$ is the value of $f$ when the Hamming weight of the input is $i$. Under this restriction, the \hol\ framework is denoted as sym-\hol. Besides, sometimes the dichotomy holds when some specific signatures are available. 
We use $\pin_0,\pin_1,\pin_+,\pin_-$ to denote $[1,0],[0,1],[1,1],[1,-1]$ respectively. If all unary signatures are available, such problem is denoted as $\hol^*$; if $\pin_0,\pin_1,\pin_+,\pin_-$ are available, such problem is denoted as $\hol^+$; if $\pin_0,\pin_1$ are available, such problem is denoted as $\hol^c$. In particular, if a non-trivial signature of odd arity belongs to the signature set, we denote such problem as $\holodd$ for convenience. 

The previous results for symmetric cases are summarized in Table \ref{table:symhol}, and those for general cases are summarized in Table \ref{table:hol}. 
By definition, the result of each individual cell encompasses both the results that are located in the same column above it and the results that are located in the same row to its left. Furthermore, several essential dichotomies for $\hol$ not listed in the table are presented  below without exhaustive explanation. These are dichotomies respectively for \ccsp\ \cite{cai2014complexity}, six-vertex model \cite{CFX2018sixvertex}, eight-vertex model \cite{caifu2023eightvertex}, \ceo\ \cite{cai2020beyond,meng2024p,meng2025fpnp} and a single ternary signature \cite{yanghuangfu2024holant3}.

\begin{table}[]
\begin{center}
\begin{tabular}{|l|l|l|l|}
\hline
             & sym-\hol$^*$                          & sym-\hol$^c$                          & sym-\hol                              \\ \hline
real-valued & /                                       & \cite{CaiLuXia2009csp+hol} & \cite{HL2016realsymHolant} \\ \hline
complex-valued & \cite{CaiLuXia2009csp+hol} & \cite{CHL2012symholantc}   & \cite{CGW2013symHolant}    \\ \hline
\end{tabular}
\caption{Research history of sym-\hol. The term '/' denotes that the corresponding dichotomy is directly encompassed by another dichotomy, thus obviating the necessity for separate research.}
\label{table:symhol}
\begin{tabular}{|l|l|l|l|l|l|}
\hline
                                               & \hol$^*$                         & \hol$^+$                         & \hol$^c$                             & $\hol^\text{odd}$ & \hol                                    \\ \hline
nonnegative-valued & /                                                                      & /                                  & /                                      & /          & \cite{LinWang2018plushol}    \\ \hline
real-valued         & /                                  & /                                  & \cite{cai2018realholantc} &   \cite{cai2020holantoddarity}     & \cite{shao2020realholant} \\ \hline
complex-valued          & \cite{CLX2011Holant*} & \cite{backens2017holant+} & \cite{backens2021full}    &  \textbf{Our result}         & Unknown                                        \\ \hline
\end{tabular}
 \caption{Research history of \hol. The term '/' is same as that in Table \ref{table:symhol}.}
 \label{table:hol}
\end{center}
\end{table}

It is noticeable that  except for the last column in Table \ref{table:hol}, the majority of works concentrate on scenarios where specific signatures of odd arity are present. Furthermore, the number of such signatures decreases from left to right, being infinite, 4, 2, 1 (and 0 in the last column) respectively. This article proves a definitive conclusion to this series of works, the position of which is also presented in Table \ref{table:hol}.

In order to reach this conclusion, it is necessary to extend the result of the decomposition lemma, originally developed in \cite{LinWang2018plushol}. This lemma can derive a specific signature from its tensor product with other signatures, and has been employed in numerous recent works \cite{cai2020beyond,cai2020holantoddarity,shao2020realholant,caifu2023eightvertex,meng2025fpnp}. However, the general form of this lemma does not hold for all complex-valued signature sets.

On the other hand, a series of works has classified the complexity of counting weighted Eulerian orientations in a graph (\ceo) \cite{CFX2018sixvertex,cai2020beyond,shao2024eulerian,meng2024p,meng2025fpnp}. An  $\pnp$ vs. \#P dichotomy dichotomy is established in \cite{meng2025fpnp}, which claims that a problem in \ceo\ is either in $\pnp$ or \#P-hard.
This motivates us to prove a generalized version of the decomposition lemma, stating that either the signature can be derived from the tensor product, or the complexity of the problem can be classified by the \ceo\ dichotomy.

Consequently, there are two major results in this article, stated as follows. The detailed forms of them are Theorem \ref{thm:decomposition lemma} and Theorem \ref{thm:main theorem}, presented in Section \ref{main}. We use $ \leq_T$ to denote polynomial-time Turing reduction and $\equiv_T$ to denote polynomial-time Turing equivalence. We use $\otimes$ to denote tensor product.

\begin{theorem}[Decomposition lemma]\label{thm:easydecompo}
    Let $\mathcal{F}$ be a set of signatures and $f$, $g$ be two signatures. Then one of the following holds.
\begin{enumerate}
    \item $\hol(\mathcal{F},f,g)\equiv_T\hol(\mathcal{F},f\otimes g)$;
    \item $\hol(\mathcal{F},f\otimes g)$ is in $\pnp$.
\end{enumerate}
\end{theorem}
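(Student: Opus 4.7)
One direction, $\hol(\mathcal{F}, f \otimes g) \leq_T \hol(\mathcal{F}, f, g)$, is immediate: in any instance, each occurrence of $f \otimes g$ may be replaced by a pair of disjoint vertices labelled $f$ and $g$ acting on the respective halves of the edges, preserving the holant value exactly. The real content is the converse, and the plan is a dichotomous case analysis. Either a gadget built from copies of $f \otimes g$ together with signatures from $\mathcal{F}$ realizes $f$ and $g$ individually in $\hol(\mathcal{F}, f \otimes g)$ --- yielding Case~1 --- or the obstruction to this gadget is strong enough to place $\hol(\mathcal{F}, f \otimes g)$ inside the tractable side of an existing $\pnp$ vs.\ \#P dichotomy, yielding Case~2.

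The gadget I would use is a complex-valued extension of the Lin--Wang interpolation. Take many copies of $f \otimes g$; pair off the $g$-ports either via self-loops inside a single copy or via matchings between distinct copies (possibly routed through fragments of $\mathcal{F}$). The $g$-sides contract into a scalar $\gamma$ determined by $g$ and the pairing pattern, leaving a gadget whose signature is $\gamma \cdot f^{\otimes k}$ on the remaining $f$-ports. By varying the number of copies of $f \otimes g$ used, the resulting realized signatures form a sequence from which, via Vandermonde-style polynomial interpolation in the number of copies, one isolates $f$ itself (up to a nonzero scalar, which is absorbed into the global count). A symmetric construction realizes $g$.

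The escape to $\pnp$ arises when both constructions fail. Failure forces $\gamma = 0$ for every admissible pairing, or makes the interpolation system singular; these are strong algebraic conditions on the supports and values of $f$ and $g$. I would argue that these conditions, possibly after a suitable holographic transformation, confine $f \otimes g$ to an Eulerian-type support, so that $\hol(\mathcal{F}, f \otimes g)$ is captured by a $\ceo$ instance; the $\pnp$ vs.\ \#P dichotomy for $\ceo$ proved in \cite{meng2025fpnp} then delivers Case~2 exactly when we are on the tractable side. The main obstacle is precisely this last step: pinning down the algebraic form of the failure and checking that it always lands on the tractable side of the $\ceo$ dichotomy rather than the hard side. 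This likely needs a case split by the parities of the arities of $f$ and $g$ and by whether either of $f, g$ is supported on a proper subset of the Boolean cube with additional algebraic structure.
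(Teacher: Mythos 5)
Your overall architecture (the trivial direction plus a case split: either realize $f$ and $g$ from $f\otimes g$, or land in a $\ceo$-classified case) matches the paper's, and extracting $f$ from a realized $\gamma\cdot f^{\otimes k}$ is indeed unconditionally available (Lemma \ref{lem:singledecomposition}). The genuine gap is in how you identify and use the failure case. You assert that failure of your gadgets is ``a strong algebraic condition on the supports and values of $f$ and $g$'' (all contractions $\gamma=0$), and that this confines $f\otimes g$ to Eulerian-type support so the problem becomes a $\ceo$ instance. Neither half is right. The correct escape condition (Theorem \ref{thm:decomposition lemma}) is a condition on the \emph{whole} set: $\widehat{\mathcal F}\cup\{\widehat f\otimes\widehat g\}$ consists entirely of $\eog$ signatures (or entirely of $\eol$ signatures) --- only then is the problem equivalent to an $\eo$-type problem to which Corollary \ref{cor:eol eog dichotomy} applies. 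If instead, say, $f$ is vanishing (equivalently $\widehat f$ is $\eosg$ or $\eosl$; this characterization, Lemma \ref{lem:vanish is eosg or eosl}, is a key ingredient your plan lacks --- once all closed gadgets from $f\otimes g$ alone vanish, you have no further handle) but the set contains both a signature whose hat has support outside $\eog$ and one outside $\eol$, the decomposition still goes through: the paper extracts $\lambda\Delta_0^{\otimes r}$ or $\lambda\Delta_1^{\otimes r}$ from $\widehat f$ by $\neq_2$ self-loops (Corollary \ref{cor:hw> construct Delta1}), plugs them into signatures $h_0,h_1$ with $h_0(\mathbf{0})\neq 0$, $h_1(\mathbf{1})\neq 0$ built from the \emph{other} signatures via Lemma \ref{lem:self-loop to pure1 string}, and thereby produces a nonzero instance containing a positive number of copies of $f$ --- which is what Lemma \ref{lem:nonzero instance decompose} actually requires, rather than a nonzero contraction of the $g$-ports alone. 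Your failure analysis, phrased as a property of $f$ and $g$ only, leaves exactly these mixed cases unresolved.

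Second, the step you single out as the main obstacle --- showing the failure case ``always lands on the tractable side'' of the $\ceo$ dichotomy --- cannot be carried out and is not needed. In the escape case Corollary \ref{cor:eol eog dichotomy} yields either membership in $\pnp$ (your Case 2) or \#P-hardness, and the hard side genuinely occurs; but it is harmless, since when $\hol(\mathcal F,f\otimes g)$ is \#P-hard, Case 1 holds trivially (a complex-weighted $\hol$ problem is Turing-reducible to a \#P function and hence to any \#P-hard problem, as the paper notes after the theorem statement). With these two repairs --- making the escape condition a property of the entire hatted set, and handling vanishing $f$ or $g$ through auxiliary signatures and the nonzero-instance lemma rather than through closed $g$-contractions --- your plan essentially becomes the paper's proof.
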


\begin{theorem}[Dichotomy for $\holodd$]\label{thm:easymain}
    Let $\mathcal{F}$ be a set of signatures that contains a non-trivial signature of odd arity. Then one of the following holds.
\begin{enumerate}
    \item $\hol(\mathcal{F})$ is \#P-hard;
    \item $\hol(\mathcal{F})$ is in $\pnp$.
\end{enumerate}
\end{theorem}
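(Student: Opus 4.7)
The plan is to show that, under the hypothesis, $\hol(\mathcal{F})$ reduces either to a complex-valued $\hol^c$ problem (classified by \cite{backens2021full}) or to a \ceo\ problem (classified by \cite{meng2025fpnp}), both of which yield the desired $\pnp$ vs.\ \#P classification. The odd-arity signature $f \in \mathcal{F}$ is the engine of the reduction: together with gadget constructions and the decomposition lemma (Theorem \ref{thm:easydecompo}), it should let us realize either pinnings $\pin_0,\pin_1$---reducing to $\hol^c$---or a binary disequality $\neq_2$ that enables a Turing reduction to \ceo.

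\textbf{Main steps.} First, I would fix a non-trivial odd-arity $f\in \mathcal{F}$ and analyze its support, possibly after a holographic transformation that brings $f$ into a convenient normal form. Since $\ari(f)$ is odd, the input space $\{0,1\}^{\ari(f)}$ splits unevenly between even and odd Hamming weights, and joining copies of $f$ along edges (or attaching unary gadgets obtained from $\mathcal{F}$) generically produces a nonzero gadget of strictly smaller arity. Each time such a gadget realizes a signature of the form $f'\otimes g'$, I would invoke Theorem \ref{thm:easydecompo}: either the tensor factors separate, giving me direct oracle access to $f'$ and $g'$ and preserving Turing equivalence, or the problem is already in $\pnp$, which immediately settles the tractable side. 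Iterating this peeling process, I would drive the effective signatures down to unary and binary gadgets.

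Second, the argument splits into two branches depending on what these low-arity gadgets look like. In the \emph{pinning branch}, the unary gadgets realize $\pin_0$ and $\pin_1$; the problem then lies inside $\hol^c$ and inherits the complex-valued $\hol^c$ dichotomy of \cite{backens2021full}, whose tractable side is in $\text{P}\subseteq \pnp$. In the \emph{EO branch}, the binary gadget realizes $\neq_2$ after an appropriate holographic transformation (for instance by a matrix such as $T=\left(\begin{smallmatrix} 1 & 1 \\ \ii & -\ii \end{smallmatrix}\right)$ that sends $=_2$ to $\neq_2$); this allows me to re-express every signature in $\mathcal{F}$ in the EO normal form, reducing $\hol(\mathcal{F})$ to a \ceo\ problem and invoking the dichotomy of \cite{meng2025fpnp}. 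Finally I would verify that one of the two branches is always reachable, so that every $\mathcal{F}$ with a non-trivial odd-arity signature is uniformly classified.

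\textbf{Main obstacle.} The principal difficulty will be the case analysis showing that the reduction succeeds for a general complex-valued odd-arity $f$. Over $\mathbb{C}$ there are delicate cancellations that can cause natural gadgets to vanish, or to produce signatures lying in degenerate classes (affine, product, matchgate, or those already killed by $f$'s support structure). In each such subcase one must either exhibit an alternative gadget or argue, via the decomposition lemma, that the problem already sits in $\pnp$. Coordinating the holographic transformations so that they are simultaneously compatible with every signature of $\mathcal{F}$, and stitching together the pinning and EO branches so that no $\mathcal{F}$ escapes the classification, is expected to be the most technical part of the proof.
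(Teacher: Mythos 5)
Your overall instinct---use the odd-arity signature to drive arity down via self-loops and the decomposition lemma, then land in a previously classified problem---matches the paper's skeleton, but both of your terminal branches have genuine gaps. The ``EO branch'' as you describe it does not exist: realizing $\neq_2$ (or performing the $K$-transformation so that $\neq_2$ sits on the LHS) is always available and gives $\hol(\mathcal{F})\equiv_T\khol(\widehat{\mathcal{F}})$ for free, but this does \emph{not} re-express $\hol(\mathcal{F})$ as a $\ceo$ problem. The $\ceo$ dichotomy of \cite{meng2025fpnp} only applies when the signatures themselves are $\eo$ signatures (support contained in $\eoe$), and its corollaries apply only when $\widehat{\mathcal{F}}$ happens to be $\eog$, $\eol$, or single-weighted---a structural property of $\mathcal{F}$ that no gadget construction confers. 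In the paper the $\pnp$ cases arise exactly as these structural exceptional cases (in Theorem \ref{thm:easydecompo}, Lemma \ref{lem:odd case2 case3 fenlei} and Lemma \ref{lem:kholantc fenlei}), not as the target of a reduction from a general $\mathcal{F}$ with $\neq_2$ available.

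The ``pinning branch'' glosses over what is in fact the bulk of the work. From an odd-arity signature one cannot in general realize both $\pin_0$ and $\pin_1$; the self-loop process of Lemma \ref{lem:1fenlei} yields either a unary signature that an orthogonal transformation turns into $\pin_0$ alone, or the degenerate unaries $[1,\pm\ii]$ (i.e.\ $\pin_0$ or $\pin_1$ only in the $\khol$ setting), or a signature $[a,0,\ldots,0,b]_{2k+1}$ handled via $\ccsp_k(\neq_2,\cdot)$. With only $\pin_0$ one cannot quote the $\hol^c$ dichotomy of \cite{backens2021full}; the paper has to prove a separate full dichotomy for $\hol(\mathcal{F},\pin_0)$ (Lemma \ref{lem:hol0 dichotomy}), which requires extracting an irreducible ternary signature, the GHZ/W entanglement classification, symmetrization, reductions to $\ccsp$ and $\ccsp_2$, and a long special-case analysis when $\widehat{f}=[-1,1,0,0]$, where $\pin_1$ cannot even be interpolated (Lemmas \ref{lem:no up triangle}--\ref{lem:yi dingyou unary or binary or tenary}). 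Your proposal names ``delicate cancellations'' as the obstacle but offers no mechanism for these cases, and your two branches, as stated, do not cover the case split that actually occurs; so the key step ``one of the two branches is always reachable'' would fail.
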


In some cases appearing in Theorem \ref{thm:easydecompo}, $\hol(\mathcal{F},f\otimes g)$ is \#P-hard. In these cases, the reduction in the first statement trivially holds. We also remark that the definitions of complexity classes appearing in the two theorems can be found in \cite{arora2009computational}. In particular, the complexity class $\pnp$ is introduced due to the dichotomy for \ceo\ in \cite{meng2025fpnp}, making Theorem \ref{thm:easymain} an $\pnp$ vs. \#P dichotomy. As analyzed in \cite{meng2025fpnp}, such dichotomy can separate the computational complexity as well, unless the complexity class PH collapses at the second level. Furthermore, as shown in the proof strategy in Section \ref{main}, the proof of Theorem \ref{thm:easymain} is unfeasible in the absence of the dichotomy for  \ceo, hence the introduction of the complexity class $\pnp$ is somehow inevitable. On the other hand, when the intermediate results are independent of the dichotomy for  \ceo,  the $\pnp$ term does not manifest. In other words, if a traditional FP vs. \#P dichotomy is later proved for \ceo, Theorem \ref{thm:easymain} will consequently become an FP vs. \#P dichotomy as well.

Despite the $\pnp$ part, this article makes a great progression towards the complexity classification for complex-valued \hol. With Theorem \ref{thm:easymain}, it is now sufficient for future studies to focus on the case when all signatures in $\mathcal{F}$ are of even arity. With Theorem \ref{thm:easydecompo}, we may further assume that these signatures are irreducible, and each realized signature does not contain a factor of odd arity. 
In addition, Theorem \ref{thm:easymain} also encompasses the preceding results in \cite{cai2020holantoddarity,backens2021full,yanghuangfu2024holant3}. To the best of our knowledge, the dichotomy results in the field of complex-valued \hol\ that have not been yet encompassed are precisely those in \cite{cai2013vanishing,shao2020realholant,caifu2023eightvertex,meng2025fpnp} and this article. 

This article is organized as follows.
In Section \ref{preli}, we introduce preliminaries needed in this article.
In Section \ref{main}, we present our main results in detail, and explain our proof strategy.
In Section \ref{sec:decomposition lemma}, we prove Theorem \ref{thm:easydecompo}.
In Section \ref{oddsi}, we prove Theorem \ref{thm:easymain}.
In Section \ref{ccls}, we conclude our results.

\section{Preliminaries}\label{preli}

\subsection{Definitions and notations}

A \textit{Boolean domain} refers to the set $\{0,1\}$, or $\mathbb{F}_2$ for convenience to describe particular classes of tractable signatures. A \textit{complex-valued Boolean signature} $f$ with $r$ variables is a mapping from $\{0,1\}^r$ to $\mathbb{C}$. In particular, this article focuses on algebraic complex-valued Boolean signatures. Given a 01-string $\alpha$ with length $r$, we use $f(\alpha)$ or $f_{\alpha}$ to denote the value of $f$ on the input $\alpha$. The set of variables of $f$ is denoted by $\var(f)$, and its size (or arity) is denoted by $\ari(f)$. The set of strings on which $f$ has non-zero values is the \textit{support} of $f$, denoted by $\su(f)$.

Let $\alpha$ denote a binary string. For $s \in \{0,1\}$, we use $\#_s(\alpha)$ to denote the number of occurrences of $s$ in $\alpha$, and $\#_1(\alpha)$ is also known as the \textit{Hamming weight} of $\alpha$. The length of $\alpha$ is denoted by $\len\alpha$, and $\alpha_i$ refers to its $i$-th bit. We use $0_k$ and $1_k$ to denote the two strings of length $k$ that only contain 0 and 1 respectively. When the length $k$ is clear from the context, we use $\textbf{0}$ and $\textbf{1}$ to represent $0_k$ and $1_k$.

The following notations from \cite{meng2024p,meng2025fpnp} are employed in this article.

$$
\eoe = \{\alpha \mid \#_1(\alpha) = \#_0(\alpha)\} \quad \text{and} \quad \eog = \{\alpha \mid \#_1(\alpha) \geq \#_0(\alpha)\}.  
$$  
A signature $f$ is an $\eoe$ signature or equivalently an $\eo$ signature if $\su(f) \subseteq \eoe$, and such signatures inherently have even arity. Analogously, $f$ is an $\eog$ signature if $\su(f) \subseteq \eog$, and $\mathcal{F}$ is $\eog$ if $\mathcal{F}$ only contains $\eog$ signatures. Similar notations are also defined by replacing "$\geq$" with "$\leq$", "$<$", or "$>$". 

The \textit{bitwise addition} (XOR) of two strings $\alpha$ and $\beta$, denoted by $\alpha \oplus \beta$, produces a string $\gamma$ where $\gamma_i = \alpha_i + \beta_i \pmod 2$ for each $i$. A set $A$ of strings is \textit{affine} if $\alpha \oplus \beta \oplus \gamma \in A$ for arbitrary $\alpha, \beta, \gamma \in A$. The \textit{affine span} of a set $S$ is the minimal affine space containing $S$. For a signature $f$, $\text{Span}(f)$ denotes the affine span of its support.

A \textit{symmetric signature} $f$ of arity $r$ is expressed as $[f_0, f_1, \ldots, f_r]_{r}$, where $f_i$ is the value of $f$ on all inputs of Hamming weight $i$ and the subscript $r$ can be omitted without causing ambiguity. 
Commonly used examples include the unary signatures $\Delta_0 = [1, 0]$ and $\Delta_1 = [0, 1]$, and the binary disequality signature $(\neq_2) = [0, 1, 0]$. 

Let $\mathcal{EQ}$ denote the class of \textit{equality signatures}, defined as $\mathcal{EQ} = \{=_1, =_2, \ldots, =_r, \ldots\}$, where $=_r$ is the signature $[1, 0, \ldots, 0, 1]_r$ of arity $r$. In other words, the output of $=_r$ is 1 if all bits of the input are identical (all 0’s or all 1’s), and is 0 otherwise.  

The tensor product of signatures is denoted by $\otimes$. We use $\langle\mathcal{F}\rangle$ to denote the closure of $\mathcal{F}$ under tensor products.

\subsection{Frameworks of counting problems}

In this subsection, we present several pivotal frameworks of counting problems, including $\hol$, $\ceo$ and $\ccsp$. We largely refer to \cite[Section 1.2]{cai2017complexity} and \cite[Section 2.1]{cai2020beyond}.

Let $\mathcal{F}$ denote a fixed finite set of Boolean signatures. A signature grid over $\mathcal{F}$, denoted as $\Omega(G, \pi)$, consists of a graph $G=(V,E)$ and a mapping $\pi$ that assigns to each vertex $v \in V$ a signature $f_v \in \mathcal{F}$, along with a fixed linear order of its incident edges. The arity of each signature $f_v$ matches the degree of $v$, with each incident edge corresponding to a variable of $f_v$. Throughout, we allow graphs to have parallel edges and self-loops. Given any 0-1 edge assignments $\sigma$, the evaluation of the signature grid is given by the product $\prod_{v \in V} f_v(\sigma|_{E(v)})$, where $\sigma|_{E(v)}$ represents the restriction of $\sigma$ to the edges incident to $v$. 

\begin{definition}[$\hol$ problems]
    A $\hol$ problem, $\hol(\mathcal{F})$, parameterized by a set $ \mathcal{F} $ of complex-valued signatures, is defined as follows: Given an instance $I$, that is a signature grid $ \Omega(G, \pi) $ over $ \mathcal{F} $, the output is the partition function of $ \Omega $,
    
    $$
    \text{Z}(I)=\hol_\Omega = \sum_{\sigma:E\rightarrow\{0,1\}} \prod_{v \in V} f_v(\sigma|_{E(v)}).
    $$
The bipartite $\hol$ problem $\hol(\mathcal{F} \mid \mathcal{G})$ is a specific case where $ G $ is bipartite, say $ G(U, V, E) $, with vertices in $ U $ assigned signatures from $ \mathcal{F} $ and vertices in $ V $ assigned signatures from $ \mathcal{G}$. We denote the left-hand side (or the right-hand side) in bipartite \hol\ briefly as LHS (or RHS).
\label{def:Hol}
\end{definition}

For simplicity of presentation, we omit the curly braces when writing the set of one single signature. For example, we write $\hol(\mathcal{F}\cup\{f\})$ as $\hol(\mathcal{F},f)$. In addition to bipartite \hol, there are several variants of $\hol$. We use $\hol^c(\mathcal{F})$ to denote $\hol(\mathcal{F},\Delta_0,\Delta_1)$. We use $\khol(\mathcal{F})$ to denote $\hol(\neq_2\mid\mathcal{F})$ and $\khol^c(\mathcal{F})$ to denote $\khol(\mathcal{F},\Delta_0,\pin_1)$. 

Another special variant of \hol\ is counting weighted Eulerian Orientations (\ceo). Let $\mathcal{F}$ denote a fixed finite set of $\eo$ signatures. An $\eo$-signature grid, denoted by $\Omega(G, \pi)$, is defined on an Eulerian graph $G = (V, E)$, where every vertex $v \in V$ has positive even degree. The definition of $\pi$ is similar to that of a signature grid in $\hol$ problems. For an Eulerian graph $G$, let $\eo(G)$ denote the set of all Eulerian orientations of $G$. In such an orientation, every edge is assigned a direction. For each edge, the head is assigned $0$ and the tail is assigned $1$ to represent the direction. This assignment ensures that each vertex $v$ has an equal number of $0$'s and $1$'s incident to it. The weight of a vertex $v$ under an orientation $\sigma \in \eo(G)$ is determined by $f_v$ evaluated on the local assignment $\sigma|_{E(v)}$, which restricts $\sigma$ to the edges incident to $v$. The global evaluation of $\sigma$ is the product $\prod_{v \in V} f_v(\sigma|_{E(v)})$. 

\begin{definition}[$\#\eo$ problems {\cite{cai2020beyond}}]\label{def:numbereo}  
Given a set $\mathcal{F}$ of $\eo$ signatures, the $\#\eo$ problem $\#\eo(\mathcal{F})$ is defined as follows: For an instance $I$, that is an $\eo$-signature grid $\Omega(G, \pi)$ over $\mathcal{F}$, compute the partition function:  

$$
\text{Z}(I)=\#\eo_\Omega = \sum_{\sigma \in \eo(G)} \prod_{v \in V} f_v(\sigma|_{E(v)}).
$$  
\end{definition}

This framework generalizes classical models in statistical physics. For instance, the six-vertex model corresponds to a $\ceo$ problem defined by a single quaternary $\eo$ signature.

In this work, we apply polynomial-time Turing reductions (denoted by $ \leq_T$) and equivalences (denoted by $ \equiv_T $) to derive complexity classifications. A fundamental result is the representation of $\#\eo$ problems as a restricted subclass of bipartite $\hol$ problems. 

\begin{lemma}[\cite{cai2020beyond}]  
For any finite set $\mathcal{F}$ of $\eo$ signatures, $\#\eo(\mathcal{F}) \equiv_T \hol(\neq_2 \mid \mathcal{F}).$
\end{lemma}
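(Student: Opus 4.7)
The plan is to exhibit two explicit polynomial-time reductions based on a single observation: a binary disequality $\neq_2$ placed on the LHS of a bipartite $\hol$ plays precisely the role of an oriented edge. Any assignment that gives non-zero weight must set its two adjacent variables to opposite values in $\{0,1\}$, which is exactly the data of an edge orientation (say $1$ marks the tail and $0$ the head). Since every signature in $\mathcal{F}$ is an $\eo$ signature, its support lies in $\eoe$, so a vertex contributes a non-zero factor only when the surrounding orientation is balanced; combined across all vertices this is precisely the Eulerian orientation condition.

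For the reduction $\ceo(\mathcal{F})\leq_T\hol(\neq_2\mid\mathcal{F})$, given an $\eo$-signature grid $\Omega(G,\pi)$ with $G=(V,E)$, I would build a bipartite signature grid $\Omega'(G',\pi')$ by subdividing every edge $e\in E$ with a new degree-$2$ LHS vertex $w_e$ labelled $\neq_2$, keeping the vertices of $V$ on the RHS with their original signatures and edge orderings. Each non-zero pair of values on the two edges at $w_e$ is in bijection with the two orientations of $e$, and the product of RHS signature values equals the weight of the corresponding Eulerian orientation in $\Omega$; summing gives $\hol_{\Omega'}=\ceo_{\Omega}$. For the reverse reduction $\hol(\neq_2\mid\mathcal{F})\leq_T\ceo(\mathcal{F})$, given any bipartite signature grid in which every LHS vertex carries $\neq_2$, I would contract each such degree-$2$ vertex into a single edge joining its two RHS neighbours (producing a self-loop when the neighbours coincide and parallel edges when they already share edges). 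The resulting multigraph inherits an $\eo$-signature grid, and the same bijection between non-zero assignments and Eulerian orientations shows the partition functions agree.

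There is no real conceptual obstacle here; the work is purely bookkeeping. The point needing care is to transport the linear order of edges incident to each RHS vertex through the subdivision and contraction, so that each $\mathcal{F}$-signature is evaluated on the correctly ordered input, and to check that self-loops and parallel edges, which are permitted both in $\hol$ and in $\ceo$, travel through the constructions without creating spurious contributions. Once these points are verified, the correspondence is term-by-term and both reductions run in linear time, establishing the claimed Turing equivalence.
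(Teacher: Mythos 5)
Your proposal is correct and is essentially the same argument as the one in the cited source \cite{cai2020beyond}: the $\neq_2$ vertices on the LHS are exactly edge-orientations, subdivision and contraction translate between the two settings, and the $\eo$ support condition kills all non-Eulerian terms, so the partition functions coincide. The bookkeeping points you flag (edge orderings, self-loops, parallel edges) are handled the same way there, and both reductions are indeed value-preserving and linear time.
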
  

It is noteworthy that $\#\eo(\mathcal{F})$ differs from $\khol(\mathcal{F})$ that $\ceo$ is parameterized by an $\eo$ signature set. 

The framework of \ccsp\ is also closely related to \hol\ \cite{CaiLuXia2009csp+hol}. A variant of \ccsp, denoted by $\ccsp_d$ problems \cite{HL2016realsymHolant}, also plays a role in our proof.

\begin{definition}[$\#\csp$]\label{def:csp}  
Let $\mathcal{F}$ be a fixed finite set of complex-valued signatures over the Boolean domain. An instance of $\#\csp(\mathcal{F})$ consists of a finite variable set $X = \{x_1, x_2, \ldots, x_n\}$ and a clause set $C$, where each clause $C_i \in C$ includes a signature $f_i=f \in \mathcal{F}$ (of arity $k$) and a selection of variables $(x_{i_1}, x_{i_2}, \ldots, x_{i_k})$ from $X$, allowing repetition. The output of the instance is defined as:  

\begin{definition}\label{def:cspd} 
    Let $d\geq1$ be an integer and let $\mathcal{F}$ be a set of complex-valued signatures. The problem $\ccsp_d(\mathcal{F})$ is the restriction of $\ccsp(\mathcal{F})$ to the instances where every variable occurs a multiple of $d$ times.
\end{definition}
$$
\text{Z}(I)=\sum_{\substack{x_1, \ldots, x_n \in \{0,1\}}} \prod_{(f, x_{i_1}, \ldots, x_{i_k}) \in C} f(x_{i_1}, \ldots, x_{i_k}),  
$$    
\end{definition}  

For an integer $d\geq1$, let $\mathcal{EQ}_d$ denote the class of equality signatures whose arities are divisible by $d$, defined by $\mathcal{EQ}_d = \{=_d, =_{2d}, \ldots, =_{rd}, \ldots\}$. It is well-known that $\ccsp$ and $\ccsp_d$ can be expressed within the $\hol$ framework, as shown in the following lemmas.  

\begin{lemma}[{\cite[Lemma 1.2]{cai2017complexity}}]\label{lemma:csp-holant}  
 $\#\csp(\mathcal{F}) \equiv_T \hol(\mathcal{EQ} \cup \mathcal{F})$ 
\end{lemma}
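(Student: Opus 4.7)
The plan is to establish the two directions of the Turing equivalence independently, using the fact that an equality signature of arity $r$ is nonzero only on the all-$0$ and all-$1$ inputs — precisely the mechanism by which a $\ccsp$ variable is shared across clauses.

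For $\ccsp(\mathcal{F}) \leq_T \hol(\mathcal{EQ} \cup \mathcal{F})$, given an instance with variables $X = \{x_1, \ldots, x_n\}$ and clause set $C$, I would build a signature grid as follows. For each variable $x_i$, let $d_i$ be the total number of occurrences of $x_i$ across all clauses; whenever $d_i \geq 1$, create a vertex $u_i$ carrying $=_{d_i} \in \mathcal{EQ}$. For each clause $(f, x_{i_1}, \ldots, x_{i_k}) \in C$ create a vertex $w$ carrying $f \in \mathcal{F}$ and, for each position $\ell$, add a fresh edge between $w$ and $u_{i_\ell}$ in the appropriate linear order. Only edge-assignments in which every bundle of edges incident to some $u_i$ is monochromatic contribute to the Holant, and such assignments are in bijection with Boolean assignments of $X$. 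The product of the $f$-evaluations at the clause vertices then matches the $\ccsp$ weight, so the Holant value equals $\text{Z}(I)$; isolated variables with $d_i = 0$ simply contribute a separately computable factor of $2$ each.

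For the reverse reduction $\hol(\mathcal{EQ} \cup \mathcal{F}) \leq_T \ccsp(\mathcal{F})$, given a signature grid $\Omega(G,\pi)$, I would regard each edge of $G$ as a Boolean variable and process every vertex carrying some $=_r \in \mathcal{EQ}$ by identifying all of its incident edges as a single variable, which captures the equality constraint exactly. Each remaining vertex $v$ with $\pi(v) = f \in \mathcal{F}$ is then converted into a clause applying $f$ to the identified variables in the order prescribed by the edge ordering at $v$. Self-loops and parallel edges between $v$ and $\mathcal{EQ}$-vertices simply cause the same variable to appear at several argument positions of the clause, which Definition \ref{def:csp} explicitly allows.

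There is no serious obstacle here — both constructions are essentially immediate — so the only care needed is to handle degenerate configurations (connected components consisting entirely of equality vertices, self-loops attached to equality vertices, or variables with $d_i = 0$), each of which contributes an easily computable constant factor that can be folded into the answer outside the reduction. Both transformations are clearly polynomial-time, yielding the claimed Turing equivalence.
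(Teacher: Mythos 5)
Your proof is correct and is essentially the standard argument for this equivalence: the paper itself does not reprove the lemma but cites it from \cite{cai2017complexity}, where the proof proceeds by exactly this correspondence (variables as equality vertices and clauses as $\mathcal{F}$-vertices in one direction, edge-identification at equality vertices in the other). Your handling of the degenerate cases (unused variables and all-equality components contributing factors of $2$) is also the right bookkeeping, so nothing is missing.
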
  

\begin{lemma} 
 For an integer $d\geq1$, $\ccsp_d(\mathcal{F}) \equiv_T \hol(\mathcal{EQ}_d \mid \mathcal{F}).$ 
\end{lemma}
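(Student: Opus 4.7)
The plan is to mirror the classical reduction between $\#\csp$ and $\hol$ (Lemma \ref{lemma:csp-holant}) and simply track how the ``multiple of $d$'' occurrence restriction translates to a divisibility restriction on the arities of the equality signatures. The two directions are independent, so I would establish each reduction separately.

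For the direction $\ccsp_d(\mathcal{F}) \leq_T \hol(\mathcal{EQ}_d \mid \mathcal{F})$, I would take an instance $I$ with variable set $X = \{x_1, \ldots, x_n\}$ and clause set $C$, and for each variable $x_i$ let $r_i d$ be its total number of occurrences across all clauses (which by hypothesis is a positive multiple of $d$; I would handle any $x_i$ that occurs zero times by discarding it and multiplying the final answer by $2$). I would then build a bipartite signature grid $\Omega$ whose LHS has one vertex $u_i$ per variable $x_i$ carrying the equality signature $=_{r_i d} \in \mathcal{EQ}_d$, whose RHS has one vertex per clause $(f, x_{i_1}, \ldots, x_{i_k})$ carrying $f \in \mathcal{F}$, and whose edges connect each clause-vertex to the variable-vertices of its arguments (one edge per occurrence). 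Because the equality signature forces all edges incident to $u_i$ to take a common value, the sum $\sum_{\sigma : E \to \{0,1\}} \prod_v f_v(\sigma|_{E(v)})$ collapses to a sum over assignments $x_1, \ldots, x_n \in \{0,1\}$ of exactly the $\ccsp_d$ product; thus $\text{Z}(\Omega) = \text{Z}(I)$.

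For the direction $\hol(\mathcal{EQ}_d \mid \mathcal{F}) \leq_T \ccsp_d(\mathcal{F})$, I would reverse the construction. Given a bipartite signature grid $\Omega(G,\pi)$ with LHS vertices labelled by $\mathcal{EQ}_d$ and RHS vertices labelled by $\mathcal{F}$, I would introduce one $\ccsp_d$ variable $x_v$ per LHS vertex $v$ and convert each RHS vertex, with signature $f \in \mathcal{F}$ of arity $k$ and incident edges $e_1, \ldots, e_k$ whose other endpoints are LHS vertices $v_1, \ldots, v_k$, into the clause $(f, x_{v_1}, \ldots, x_{v_k})$. Parallel edges and self-loops present no issue because variable repetition in a clause is allowed by Definition \ref{def:csp}. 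Since $\pi(v) = (=_{r d})$ forces arity divisibility by $d$, the variable $x_v$ occurs exactly $\deg(v) = r d$ times in the resulting instance, so the instance lies in $\ccsp_d$. The same algebraic identity as above, using that $(=_{rd})$ contributes $1$ whenever all incident edges agree with $x_v$ and $0$ otherwise, gives $\text{Z}(I) = \text{Z}(\Omega)$.

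Both constructions are clearly polynomial time, so they yield Turing equivalence. No obstacle arises beyond the bookkeeping for variables of zero occurrence, which is handled by the factor-of-$2$ trick noted above; the proof is essentially a routine specialization of Lemma \ref{lemma:csp-holant}.
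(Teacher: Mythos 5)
Your proof is correct and is exactly the routine specialization of the standard $\ccsp$-to-$\hol$ simulation that the paper has in mind: it states this lemma without proof as the evident analogue of Lemma~\ref{lemma:csp-holant}, and your construction (equality of arity equal to the occurrence count, which the $\ccsp_d$ restriction forces into $\mathcal{EQ}_d$, and conversely reading off occurrence counts from the degrees of the $\mathcal{EQ}_d$-labelled vertices) is the intended argument. The bookkeeping for zero-occurrence variables via a factor of $2$ per discarded variable is a fine way to close the only edge case.
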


\subsection{Fundamental methods}

In this subsection, we present three pivotal reduction techniques in the study of counting problems: gadget construction, polynomial interpolation, and holographic transformation. Additionally, we introduce the theory of unique tensor decomposition for signatures. These methods constitute essential tools required for the proofs developed in this article.

\subsubsection{Gadget construction and signature matrix}

This subsection introduces two principal concepts in the study of counting problems: \textit{gadget construction} and the \textit{signature matrix}. Gadget construction is a critical reduction technique, and the signature matrix provides an algebraic representation bridging gadget construction with matrix multiplication.

Let $\mathcal{F}$ denote a set of signatures. An $\mathcal{F}$-gate is similar to a signature grid $\Omega(G,\pi)$, while the edges of $G = (V, E, D)$ are partitioned into internal edges $E$ (edges that have two ends from $V$) and dangling edges $D$ (edges that are incident to a single vertex). Let $|E| = n$ and $|D| = m$, with internal edges encoding variables $\{x_1, \ldots, x_n\}$ and dangling edges corresponding to $\{y_1, \ldots, y_m\}$. The $\mathcal{F}$-gate induces a signature $f: \{0,1\}^m \to \mathbb{C}$ defined by:

$$
f(y_1, \ldots, y_m) = \sum_{\sigma: E \to \{0,1\}} \prod_{v \in V} f_v\left(\hat{\sigma}|_{E(v)}\right),
$$
where $\hat{\sigma}: E \cup D \to \{0,1\}$ extends $\sigma$ by incorporating the assignment $\mathbf{y} \in \{0,1\}^m$ to dangling edges, and $f_v$ denotes the signature assigned to vertex $v$ via $\pi$. A signature $f$ is called realizable from $\mathcal{F}$ if it can be represented by an $\mathcal{F}$-gate. If $E = \emptyset$, the resulting signature is the tensor product $\bigotimes_{v \in V} f_v$. We denote the set of all signatures realizable from $\mathcal{F}$ as $\gc(\mathcal{F})$.
For any signature set $\mathcal{F}$ and $f\in\gc(\mathcal{F})$, it is shown in \cite[Lemma 1.3]{cai2017complexity} that  
$
\hol(\mathcal{F}) \equiv_T \hol(\mathcal{F},f)
$  
, which provides a powerful tool to build reductions.

Next we introduce the signature matrix, which is an algebraic representation of a given signature. This representation is conducive to computational convenience.
For a signature $f: \{0,1\}^r \to \mathbb{C}$ with ordered variables $(x_1,\ldots,x_r)$, its signature matrix with parameter $l$ is $M(f) \in \mathbb{C}^{2^l \times 2^{r-l}}$. Its row indices correspond to the assignments to the first $l$ variables $(x_1, ..., x_l)$ and its column indices correspond to the assignments to the remaining $r-l$ variables $(x_{l+1}, ..., x_r)$. The entry is the corresponding value of $f$ given the input that combines the row index and the column index. For even arity signatures, $l = r/2$ is commonly used. The matrix is denoted by $M_{x_1\cdots x_l,x_{l+1}\cdots x_r}(f)$ or simply $M_f$ without causing ambiguity.

An example is a binary signature $f=(f_{00},f_{01},f_{10},f_{11})$, and it has a matrix form as:

   $$
   M_f = \begin{pmatrix} 
   f_{00} & f_{01} \\ 
   f_{10} & f_{11} 
   \end{pmatrix}.
   $$  
   
We now present the relation between matrix multiplication and gadget constructions. Let $f,g \in \gc(\mathcal{F})$ are of arity $n$ and $m$ respectively. Suppose $\var(f)=\{x_1,\ldots,x_n\}$ and $\var(g)=\{y_1,\ldots,y_m\}$. Connecting $l$ dangling edges $\{x_{n-l+1},...,x_n\}$ of $f$ to dangling edges $\{y_1,\ldots,y_l\}$ of $g$ yields a resulting signature $h \in \gc(\mathcal{F})$ with arity $n+m-2l$ and $\var(h)=\{x_1,\ldots,x_{n-l},y_{l+1},\ldots,y_m\}$. We have

$$
M_h = M_{x_1 \ldots x_{n - l}, y_{l + 1} \ldots y_m} = M_{x_1 \ldots x_{n - l}, x_{n - l + 1} \ldots x_n} \cdot M_{y_1 \ldots y_l, y_{l + 1} \ldots y_m} = M_f \cdot M_g.
$$  

We also address several typical operations in gadget construction. The first operation is \textit{adding a self-loop}. Suppose $f$ is a signature and $\var(f)=\{x_1,\ldots,x_n\}$. Suppose $b$ is a binary signature with $M_b = \begin{pmatrix} 
   b_{00} & b_{01} \\ 
   b_{10} & b_{11} 
   \end{pmatrix}.$ By adding a self-loop using a binary signature $b$ on $x_1$ and $x_2$ of $f$ we mean connecting $x_1$ to the first variable of $b$ and  $x_2$ to the second variable of $b$, obtaining a signature $f'$. Here we use $\textbf{x}$ to denote $(x_3,\ldots,x_n)$ and we have:

$$f'(\textbf{x})=b_{00}f(0,0,\textbf{x})+b_{01}f(0,1,\textbf{x})+b_{10}f(1,0,\textbf{x})+b_{11}f(1,1,\textbf{x}).$$
We use $\partial_{ij}f$ (or $\widehat{\partial_{ij}}\widehat{f}$) to denote the signature obtained by adding a self-loop using $=_2$ (or $\neq_2$) on $x_i$ and $x_j$ of $f$ (or $\widehat{f}$).

The second operation is \textit{pinning}. Given $f$ as a signature, the pinning operation connects $\pin_0$ or $\pin_1$ to one of $f$'s variables. We use $f^{x_1=0}$ and $f^{x_1=1}$ to denote the resulting signature after connecting the first variable $x_1$ to $\pin_0$ and $\pin_1$ respectively.

The following lemma will be used in our proof.

\begin{lemma}[{\cite[Lemma 3.9]{cai2020holantoddarity}}]\label{lem:shaoshuai lemma 3.9}
    Let $f$ be a signature of arity $n\geq2$. If for any index $i$, by pinning the variable $x_i$ of $f$ to $0$, we have $f^{x_i=0}\equiv0$, then $f_\alpha=0$ for any $\alpha$ satisfying $\#_1(\alpha)\neq n$. Furthermore, if there is a pair of indices $\{j,k\}$ such that $\partial_{jk}f\equiv0$, then $f\equiv0$.
\end{lemma}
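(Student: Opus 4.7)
The plan is to unfold the definitions of pinning and of the self-loop operation $\partial_{jk}$ directly, and to observe that both statements follow immediately by evaluating the hypotheses at suitable inputs.

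For the first statement, I would start by rereading what $f^{x_i=0}\equiv 0$ means: by the definition of pinning, $f^{x_i=0}$ is the signature on the remaining $n-1$ variables whose value on any assignment $\beta$ is $f(\beta_1,\dots,\beta_{i-1},0,\beta_{i+1},\dots,\beta_n)$. So the vanishing hypothesis says exactly that $f_\alpha=0$ whenever $\alpha_i=0$. Applying this to every index $i\in\{1,\dots,n\}$, any $\alpha$ with $f_\alpha\neq 0$ must satisfy $\alpha_i=1$ for all $i$, i.e.\ $\alpha=1_n$. Thus $f_\alpha=0$ for every $\alpha$ with $\#_1(\alpha)\neq n$, which is the first claim.

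For the second statement, I would combine what we just proved with the explicit formula for $\partial_{jk}f$ in the excerpt. Writing $\textbf{x}$ for the remaining $n-2$ variables, adding a self-loop with $=_2$ on $x_j$ and $x_k$ yields
\[
(\partial_{jk}f)(\textbf{x}) \;=\; f|_{x_j=0,x_k=0}(\textbf{x}) \;+\; f|_{x_j=1,x_k=1}(\textbf{x}).
\]
Evaluating this at $\textbf{x}=1_{n-2}$: the first summand is $f$ applied to an input of Hamming weight $n-2\neq n$, which vanishes by the first statement; the second summand is $f(1_n)$. Since $\partial_{jk}f\equiv 0$ by hypothesis, we obtain $f(1_n)=0$. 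Combined with the first statement (which forces $f_\alpha=0$ for every $\alpha\neq 1_n$), this yields $f\equiv 0$.

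There is essentially no substantive obstacle here; the entire argument is a direct application of the definitions. The only points demanding a bit of care are (i) making sure the pinning notation $f^{x_i=0}\equiv 0$ is parsed as a pointwise vanishing on the remaining $n-1$ variables (so we do not accidentally lose information about which coordinate is pinned), and (ii) choosing the evaluation point $1_{n-2}$ for $\partial_{jk}f$, since this is the unique input that makes the second summand coincide with $f(1_n)$ while forcing the first summand to lie in the weight stratum already known to vanish. The hypothesis $n\geq 2$ is used implicitly to guarantee that a pair of distinct indices $\{j,k\}$ exists so that $\partial_{jk}f$ is well-defined.
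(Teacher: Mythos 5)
Your argument is correct: the pinning hypothesis says exactly that $f$ vanishes on every input containing a $0$, and evaluating $\partial_{jk}f$ at $1_{n-2}$ then isolates $f(1_n)$, which the vanishing self-loop forces to be zero. The paper itself does not prove this lemma but imports it from \cite{cai2020holantoddarity}, and your direct unfolding of the definitions is essentially the same elementary support argument used there, so nothing further is needed.
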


\subsubsection{Holographic transformations and SLOCC}

In the study of $\hol$ problems, holographic transformations provide a powerful tool for complexity classification. This section formalizes key concepts and theorems, and presents their relation with quantum computation.
 
For any graph $ G $, a bipartite graph preserving the $\hol$ value can be constructed via the \textit{2-stretch} operation: each edge $ e $ is replaced by a path of length two, introducing a new vertex assigned $ =_2 $. We present the equivalence that $\hol(=_2 \mid \mathcal{F}) \equiv_T\hol(\mathcal{F}) $, where $ \mathcal{F} $ denotes a set of signatures.
 
Let $ T \in \mathbf{GL}_2(\mathbb{C}) $ be an invertible 2×2 matrix. For a signature $ f $ of arity $ n $, represented as a column vector $ f \in \mathbb{C}^{2^n} $, the transformed signature is denoted by $ Tf = T^{\otimes n}f $. For a signature set $\mathcal{F}$, we define $T\mathcal{F}=\{Tf\mid f\in\mathcal{F}\}$. The transformation of contra-variant signatures (row vectors) is $ fT^{-1} $. When we write $Tf$ or $T\mathcal{F}$, $f$ and signatures in $\mathcal{F}$ are regarded as column vectors by default; similarly for $fT$ or $\mathcal{F}T$ as row vectors. We also use $T f$ to denote the matrix $M_{V,\var(f)-V}(T f)$, where $V$ is a subset of $\var(f)$. Suppose $|V|=k$, we have $M_{V,\var(f)-V}(T f)=T^{\otimes k} M_{V,\var(f)-V}(f)\left(T^{\mathrm{T}}\right)^{\otimes (n-k)}$. Similarly, $M_{V,\var(f)-V}\left(f T^{-1}\right)=\left(T^{-\T}\right)^{\otimes k} M_{V,\var(f)-V}(f)\left(T^{-1}\right)^{\otimes (n-k)}$, where we briefly denote $(T^{-1})^{\T}$ as $T^{-\T}$ and $|V|=k$.

Suppose $T\in\textbf{GL}_2(\mathbb{C})$. A holographic transformation defined by $ T $ applies $T$ to all signatures in the RHS and $ T^{-1} $ to all signatures in the LHS. In other words, given a signature grid $ \Omega = (H, \pi) $ of $\hol(\mathcal{F} \mid \mathcal{G}) $, the transformed grid $ \Omega' = (H, \pi') $ is an instance of $ \hol(\mathcal{F}T^{-1} \mid T\mathcal{G}) $.

\begin{theorem}[{\cite{valiant2008holographic}}]\label{thm:holographic transformation equivalence}
For any $ T \in \mathbf{GL}_2(\mathbb{C}) $,  

$$
\hol(\mathcal{F} \mid \mathcal{G}) \equiv_T \hol(\mathcal{F}T^{-1} \mid T\mathcal{G}).
$$ 
\end{theorem}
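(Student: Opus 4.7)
The plan is to fix an arbitrary signature grid $\Omega = (H, \pi)$ of $\hol(\mathcal{F} \mid \mathcal{G})$ with bipartition $U \cup V$ of vertices, form $\Omega' = (H, \pi')$ on the same graph $H$ by replacing each $f_u \in \mathcal{F}$ on the LHS with $f_u T^{-1}$ and each $g_v \in \mathcal{G}$ on the RHS with $T g_v$, and then show $\text{Z}(\Omega) = \text{Z}(\Omega')$ by a direct algebraic computation. Since $\Omega'$ is obtained from $\Omega$ by a purely local transformation of the vertex labels (the graph $H$ is unchanged), this construction is clearly polynomial time, and the reverse reduction is the same construction with the roles of $T$ and $T^{-1}$ swapped on the two sides, giving Turing equivalence.

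The central step is to expand the transformed signatures coordinate-wise. At a vertex $u \in U$ of degree $n_u$, unpacking $f_u T^{-1} = (T^{-\T})^{\otimes n_u} f_u$ viewed as a row vector yields
\[
(f_u T^{-1})(\sigma|_{E(u)}) = \sum_{\tau_u \in \{0,1\}^{E(u)}} f_u(\tau_u) \prod_{e \in E(u)} (T^{-1})_{\tau_u(e),\, \sigma(e)},
\]
and symmetrically at each $v \in V$ of degree $n_v$,
\[
(T g_v)(\sigma|_{E(v)}) = \sum_{\tau_v \in \{0,1\}^{E(v)}} \Bigl(\prod_{e \in E(v)} T_{\sigma(e),\, \tau_v(e)}\Bigr) g_v(\tau_v).
\]
Substituting these into the definition of $\text{Z}(\Omega')$ and exchanging the outer sum over $\sigma \colon E \to \{0,1\}$ with the internal sums over the $\tau_u, \tau_v$'s gives a factor, for each edge $e = (u,v)$, of
\[
\sum_{s \in \{0,1\}} (T^{-1})_{\tau_u(e),\, s}\, T_{s,\, \tau_v(e)} = (T^{-1}T)_{\tau_u(e),\, \tau_v(e)} = \delta_{\tau_u(e),\, \tau_v(e)}.
\]
Thus only those $(\tau_u)_{u \in U}, (\tau_v)_{v \in V}$ that agree on every shared edge survive; such agreeing families are in bijection with assignments $\sigma \colon E \to \{0,1\}$, and the surviving sum is precisely $\text{Z}(\Omega)$.

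The main obstacle is purely bookkeeping: since $H$ may have parallel edges and self-loops, and since each signature has a fixed linear order on its incident edges, one must be careful that the tensor factor $T^{-1}$ (respectively $T$) is applied to exactly the edge slot that is later matched with the corresponding slot across the edge. Once the per-edge matching is pinned down, however, the collapse via $T^{-1}T = I$ is entirely local and the identity $\text{Z}(\Omega) = \text{Z}(\Omega')$ requires no further structural assumption on $H$ or on the signatures. This yields the reduction $\hol(\mathcal{F} \mid \mathcal{G}) \leq_T \hol(\mathcal{F}T^{-1} \mid T\mathcal{G})$, and applying the same argument with $T$ replaced by $T^{-1}$ (swapping the two sides) gives the converse, completing the proof of equivalence.
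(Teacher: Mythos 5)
Your proof is correct and is exactly the standard argument behind Valiant's Holant theorem, which the paper simply cites from \cite{valiant2008holographic} rather than reproving: expand $f_uT^{-1}$ and $Tg_v$ coordinate-wise, swap the summation order, and collapse each edge via $\sum_s (T^{-1})_{\tau_u(e),s}T_{s,\tau_v(e)}=\delta_{\tau_u(e),\tau_v(e)}$, with the reverse reduction given by transforming back with $T^{-1}$. The only cosmetic remark is that in the bipartite setting self-loops cannot occur (every edge has one endpoint on each side, which is precisely why each edge picks up one factor of $T^{-1}$ and one of $T$), so that part of your bookkeeping concern is vacuous.
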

Theorem \ref{thm:holographic transformation equivalence} shows that holographic transformations preserve the complexity of bipartite $\hol$ problems. We use $\mathscr{O}$ to denote all orthogonal matrices, that is $\mathscr{O}=\{O\in\textbf{GL}_2(\mathbb{C})\mid O^{\T}O=\begin{pmatrix}
    1 & 0 \\
    0 & 1
\end{pmatrix}\}.$ Notably, any holographic transformation by $ O \in \mathscr{O} $ keeps the binary Equality signature invariant: $ (=_2)O^{-1} = (=_2) $, hence $ \hol(=_2 \mid \mathcal{F}) \equiv_T \hol(=_2 \mid O\mathcal{F}) $.

A pivotal holographic transformation employs $ K^{-1} = \frac{1}{\sqrt{2}}\begin{bmatrix} 1 & -i \\ 1 & i \end{bmatrix} $, mapping $ (=_2) $ to the disequality signature $ (=_2)K=(\neq_2) $, yielding $ \hol(=_2 \mid \mathcal{F}) \equiv_T \hol(\neq_2 \mid K^{-1}\mathcal{F}) $.  We use $\widehat{f}$ and $\widehat{\mathcal{F}}$ to denote $K^{-1}f$ and $K^{-1}\mathcal{F}$ respectively, and $X=\begin{pmatrix}
  0 & 1 \\
  1 & 0\end{pmatrix}$ to denote the signature matrix of $\neq_2$. It can be verified that $K=\frac{1}{\sqrt{2}}\left(\begin{matrix}
  1 & 1 \\
  \mathfrak{i} & -\mathfrak{i}\end{matrix}\right)
$. Furthermore, with these notations, we have $\hol(\mathcal{F}) \equiv_T \khol(\widehat{\mathcal{F}}) $.

We are also interested in what kind of holographic transformations preserves $\neq_2$ invariant, and by direct computation we have the following lemma.
\begin{lemma}
    Let $\widehat{Q}=\begin{pmatrix}
        1/q & 0\\
        0 & q
    \end{pmatrix}$ or $\begin{pmatrix}
        0 & 1/q \\
        q & 0
    \end{pmatrix}$, where $q\neq 0$. Then  $(\neq_2)\widehat{Q}=(\neq_2)$ and $\khol(\mathcal{F})\equiv_T\khol(\widehat{Q}^{-1}\mathcal{F})$. Furthermore, any matrix $A$ satisfying $(\neq_2)A=(\neq_2)$ is of the form $\begin{pmatrix}
        1/q & 0\\
        0 & q
    \end{pmatrix}$ or $\begin{pmatrix}
        0 & 1/q \\
        q & 0
    \end{pmatrix}$, where $q\neq0$.
    \label{lem:quanxi}
\end{lemma}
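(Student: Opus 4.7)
The plan is to treat the lemma as three short bookkeeping tasks: verify the two candidate transformations fix $\neq_2$, invoke Theorem~\ref{thm:holographic transformation equivalence} to get the equivalence, and then solve the matrix equation $A^{\T} X A = X$ to show these are the only solutions.

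First, for the invariance $(\neq_2)\widehat{Q}=(\neq_2)$, I would identify $\neq_2$ with its $2\times 2$ signature matrix $X=\begin{pmatrix}0&1\\1&0\end{pmatrix}$ and use the transformation rule for contra-variant signatures stated in the preliminaries, namely $M(fT^{-1}) = T^{-\T}M(f)T^{-1}$. Taking $T^{-1}=\widehat{Q}$ converts the claim $(\neq_2)\widehat{Q}=(\neq_2)$ into the matrix identity $\widehat{Q}^{\T} X\,\widehat{Q}= X$. A direct computation verifies this for both forms: for $\widehat{Q}=\mathrm{diag}(1/q,q)$ the product is $\mathrm{diag}(1/q,q)\cdot X\cdot\mathrm{diag}(1/q,q) = X$ because the diagonal factors contribute reciprocal scalings to the off-diagonal entries; for the anti-diagonal form the same cancellation occurs after one extra swap.

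Second, once the invariance is in hand, I would apply Theorem~\ref{thm:holographic transformation equivalence} with the matrix $T=\widehat{Q}^{-1}$. That theorem yields $\hol(\neq_2\mid\mathcal{F})\equiv_T\hol((\neq_2)\widehat{Q}\mid \widehat{Q}^{-1}\mathcal{F})$, which by the first part collapses to $\khol(\mathcal{F})\equiv_T\khol(\widehat{Q}^{-1}\mathcal{F})$. This step is essentially a one-line application.

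For the converse characterization, I would write an arbitrary candidate $A=\begin{pmatrix}a&b\\c&d\end{pmatrix}$ and expand the requirement $A^{\T} X A = X$ explicitly. The $(1,1)$ entry gives $2ac=0$, the $(2,2)$ entry gives $2bd=0$, and both off-diagonal entries give $ad+bc=1$. A short case split on which of $a,c$ vanishes handles everything: if $a=0$, then $bc=1$ forces $b,c\neq 0$, and $2bd=0$ then forces $d=0$, producing the anti-diagonal form with $q=c$; if instead $c=0$, then $ad=1$ forces $a,d\neq 0$ and hence $b=0$, producing the diagonal form with $q=d$. I do not anticipate any real obstacle here, as the argument is a straightforward $2\times 2$ matrix calculation; the only point where one must be careful is correctly reading off which side the transpose and inverse appear in the signature-matrix transformation rule, so that the condition to be verified is exactly $\widehat{Q}^{\T} X\,\widehat{Q}=X$ rather than a transposed variant.
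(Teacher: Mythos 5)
Your proposal is correct and matches the paper's intent: the paper gives no explicit proof beyond ``by direct computation,'' and your computation is exactly that, namely reading $(\neq_2)A=(\neq_2)$ as $A^{\T}XA=X$ under the stated contra-variant transformation rule, verifying it for the two given forms, deducing the equivalence from Theorem~\ref{thm:holographic transformation equivalence} with $T=\widehat{Q}^{-1}$, and solving $2ac=0$, $2bd=0$, $ad+bc=1$ for the converse. No gaps.
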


Sometimes, it is more convenient to study a \hol\ problem in the setting of \khol. As presented in Lemma \ref{lem:quanxi}, holographic transformations that preserve $\neq_2$ invariant do not change the support of signatures on the RHS, except for a possible exchange of the symbol of $0$ and $1$ for all signatures. Consequently, many results are proved by analyzing the support in the setting of \khol. The following lemma is such an example, which is also useful in our proof.
\begin{lemma}[\cite{cai2020beyond}]\label{lem:wangjuqiuwangle}
    Let $\widehat{f}$ be a signature of arity $k\ge 3$. If for any indices $1 \le i,j\le k$, by adding a self-loop on $x_i$ and $x_j$ of $\widehat{f}$ using $\neq_2$, $\widehat{\partial_{ij}}\widehat{f}\equiv 0$, then $\widehat{f}(\alpha)=0$  for arbitrary $\alpha$ with $0 < \#_1(\alpha) < n$.
\end{lemma}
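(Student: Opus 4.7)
The plan is to translate the hypothesis into an ``antisymmetry under value swap'' property of $\widehat{f}$ and then run a short $3$-cycle argument in which the arity bound $k \geq 3$ is used.

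First I would unwind the definition of the $\neq_2$ self-loop. Using the gadget formula from the preliminaries with the binary signature $b = (\neq_2) = [0,1,0]$, we have
$$\widehat{\partial_{ij}}\widehat{f}(\mathbf{y}) \;=\; \widehat{f}(\mathbf{y},\, x_i{=}0,\, x_j{=}1) \;+\; \widehat{f}(\mathbf{y},\, x_i{=}1,\, x_j{=}0).$$
Hence the hypothesis $\widehat{\partial_{ij}}\widehat{f} \equiv 0$ for every pair $\{i,j\}$ is equivalent to the statement: for any $\alpha \in \{0,1\}^k$ and any two positions $i \neq j$ with $\alpha_i \neq \alpha_j$, letting $\alpha^{(ij)}$ denote the string obtained from $\alpha$ by exchanging the values at positions $i$ and $j$, we have $\widehat{f}(\alpha) = -\widehat{f}(\alpha^{(ij)})$. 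Note that every such swap preserves the Hamming weight.

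Second I would carry out the $3$-cycle. Fix any $\alpha$ with $0 < \#_1(\alpha) < k$. Since $k \geq 3$, the string $\alpha$ must contain either at least two $0$'s or at least two $1$'s; by symmetry assume there are two $0$'s, at positions $i_2$ and $i_3$, and a $1$ at position $i_1$. Keeping all other coordinates frozen, I perform the three successive swaps $(i_1,i_2)$, $(i_2,i_3)$, $(i_3,i_1)$, under which the triple $(\alpha_{i_1}, \alpha_{i_2}, \alpha_{i_3})$ evolves as
$$(1,0,0) \;\to\; (0,1,0) \;\to\; (0,0,1) \;\to\; (1,0,0).$$
Each swap is between positions carrying opposite values, so the antisymmetry from the previous paragraph applies at every step and contributes one sign flip. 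After three such flips the string returns to $\alpha$, giving $\widehat{f}(\alpha) = (-1)^3 \widehat{f}(\alpha) = -\widehat{f}(\alpha)$, hence $\widehat{f}(\alpha) = 0$.

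The only subtlety is ensuring that every intermediate swap in the cycle is legal, i.e.\ between positions of opposite values; this is immediate from the displayed evolution and relies on having a third available position, which is exactly where $k \geq 3$ enters. Strings of weight $0$ or $k$ are excluded, consistent with the conclusion $0 < \#_1(\alpha) < k$, and no further case analysis is needed.
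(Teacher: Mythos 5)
Your proof is correct: the hypothesis $\widehat{\partial_{ij}}\widehat{f}\equiv 0$ for all pairs is exactly the antisymmetry $\widehat{f}(\alpha)=-\widehat{f}(\alpha^{(ij)})$ whenever $\alpha_i\neq\alpha_j$, and your three legal swaps return to $\alpha$ with an odd number of sign flips, forcing $\widehat{f}(\alpha)=0$ for every $\alpha$ with $0<\#_1(\alpha)<k$; the use of $k\ge 3$ to guarantee a third position of a repeated value is exactly where the arity bound is needed. Note that the paper itself does not prove this lemma — it is quoted from \cite{cai2020beyond} — so there is no in-paper argument to compare against; your short swap-antisymmetry argument is a valid self-contained derivation of the cited statement (with $n$ in the statement read as the arity $k$).
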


A concept in the quantum computation theory called Stochastic Local Operations and Classical Communication (SLOCC) \cite{bennett2000exact,dur2000three} is a generalization of the holographic transformation, which allows distinct transformations on each qubits, or variables in the setting of \hol. For an $ n $-ary signature $ f $, SLOCC applies $ M_1 \otimes \cdots \otimes M_n $ with $ M_i \in \mathbf{GL}_2(\mathbb{C}) $ on $f$. When $ M_i = T $ for each $1\leq i\leq n$, it is exactly the holographic transformation defined by $T$. Backens \cite{backens2017holant+} leveraged SLOCC to establish the dichotomy theorem for $ \hol^+$  and $\hol^c$, which implies relations between counting problems and the quantum computation theory.

Besides, a classification of ternary irreducible signatures under SLOCC is given in \cite{dur2000three}.

\begin{lemma}[{\cite{dur2000three}}]
    Suppose $g$ is a ternary irreducible signature. Then one of the following holds.
    \begin{enumerate}
        \item $g$ is of GHZ type. That is, $g=(M_1\otimes M_2\otimes M_3)[1,0,0,1]$;
        \item $g$ is of W type. That is, $g=(M_1\otimes M_2\otimes M_3)[0,1,0,0]$,
    \end{enumerate}
    where $M_1,M_2,M_3\in\mathbf{GL}_2(\mathbb{C})$.
\end{lemma}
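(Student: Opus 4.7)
The plan is to analyze $g$ through its two ``slices'' along the first variable: let $A_0$ and $A_1$ denote the $2\times 2$ matrices $M_{x_2,x_3}(g^{x_1=0})$ and $M_{x_2,x_3}(g^{x_1=1})$ respectively. A direct computation of the SLOCC action shows that $M_1$ sends the pair $(A_0,A_1)$ to the linear combinations $(A_0',A_1')$ with $A_i'=\sum_j (M_1)_{ij}A_j$, while $M_2$ acts by left-multiplication and $M_3$ by right-multiplication-via-transpose on every slice. The key invariant I would track is the pencil determinant $p(x,y)=\det(xA_0+yA_1)$, a binary form of degree at most $2$: one checks that $M_1$ acts on $p$ by the substitution $(x,y)\mapsto(x,y)M_1$, whereas $M_2,M_3$ only multiply $p$ by $\det(M_2)\det(M_3)$. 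Consequently the discriminant of $p$ is a projective SLOCC invariant, and the proof splits on whether this discriminant is nonzero, is zero with $p\not\equiv 0$, or $p\equiv 0$.

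In the nonzero-discriminant case, $p$ factors as a product of two linearly independent linear forms in $(x,y)$, so I can pick $M_1$ that brings $p(x,y)$ to $\lambda xy$ with $\lambda\ne 0$. Both $A_0$ and $A_1$ are then singular and nonzero (a zero slice would force $\Delta_0$ or $\Delta_1$ to factor out of $g$, contradicting irreducibility), so each is rank-one: $A_0=u_0v_0^{\T}$ and $A_1=u_1v_1^{\T}$. Irreducibility on the other two flattenings of $g$ forces $\{u_0,u_1\}$ and $\{v_0,v_1\}$ each to be linearly independent. Taking $M_2$ with $M_2u_0=e_1$, $M_2u_1=e_2$ and $M_3$ with $v_0^{\T}M_3^{\T}=e_1^{\T}$, $v_1^{\T}M_3^{\T}=e_2^{\T}$ then gives $A_0=\mathrm{diag}(1,0)$ and $A_1=\mathrm{diag}(0,1)$, i.e.\ the GHZ form $[1,0,0,1]$ up to a global scalar.

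In the zero-discriminant case with $p\not\equiv 0$, I would choose $M_1$ so that $p(x,y)=y^2$. Then $A_1$ is invertible; picking $M_2=A_1^{-1}$ with compensating $M_3=A_1^{\T}$ makes the new $A_1$ equal to $I$, and the new $A_0$ satisfies $\det(A_0+yI)=y^2$, so it is nilpotent and (by irreducibility) nonzero. A further similarity $(M_2,M_3)=(P,P^{-\T})$ brings $A_0$ to its Jordan block $\bigl(\begin{smallmatrix}0 & 1\\0 & 0\end{smallmatrix}\bigr)$, yielding a signature supported on $\{001,100,111\}$ with all values $1$; a final SLOCC by $(X,I,X)$ flips it to the W form $[0,1,0,0]$. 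For the residual case $p\equiv 0$, both $A_0$ and $A_1$ have rank at most $1$, and the identity $\det(xA_0+yA_1)\equiv 0$ on the whole pencil forces either their column-spans or their row-spans to coincide, exhibiting $g$ as a tensor product with a unary factor on the second or third variable, contradicting irreducibility. The step I expect to require the most care is consistently wielding irreducibility to exclude degenerate subcases, in particular securing the linear independence of the $u$'s and $v$'s in the GHZ branch and verifying that the ``semi-canonical'' state reached in the W branch is genuinely SLOCC-equivalent to $[0,1,0,0]$.
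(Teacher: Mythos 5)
Your proposal addresses a statement that the paper itself does not prove at all: the lemma is quoted directly from D\"ur--Vidal--Cirac \cite{dur2000three}, so there is no in-paper argument to compare against. Your pencil-of-slices proof is a correct, essentially self-contained alternative to the original one (which argues via local ranks and the range/hyperdeterminant criterion): the observation that $M_1$ acts on $p(x,y)=\det(xA_0+yA_1)$ by an invertible substitution while $M_2,M_3$ merely rescale $p$, so that the vanishing pattern of the discriminant of $p$ stratifies the SLOCC orbits, is exactly the Kronecker-pencil route to the GHZ/W trichotomy, and your use of irreducibility (nonzero slices, independence of the $u_i$'s and $v_i$'s, exclusion of the $p\equiv 0$ stratum by exhibiting a unary tensor factor) is sound. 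This route is arguably more elementary and more explicit than the cited one, since it directly outputs the transforming matrices. Incidentally, in the GHZ branch the independence of $u_0,u_1$ (and of $v_0,v_1$) already follows from $\det(A_0+A_1)=\lambda\neq 0$, so irreducibility is not needed at that particular step.

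One small slip in the W branch: with your stated convention (new slice $=M_2AM_3^{\T}$), the choice $M_2=A_1^{-1}$, $M_3=A_1^{\T}$ sends $A_1$ to $A_1^{-1}A_1A_1=A_1$, not to $I$; the intended normalization is obtained with $M_2=A_1^{-1}$ and $M_3=I$ (or $M_2=I$ and $M_3^{\T}=A_1^{-1}$). Nothing downstream is affected: once $A_1=I$, the identity $\det(xA_0+yI)=cy^2$ forces $\operatorname{tr}A_0=\det A_0=0$, so $A_0$ is a nonzero nilpotent, the similarity pair $(P,P^{-\T})$ preserves $A_1=I$, and your final check that the state supported on $\{001,100,111\}$ is carried to $[0,1,0,0]$ by $(X,I,X)$ is correct.
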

\subsubsection{Polynomial interpolation}
Polynomial interpolation is a powerful tool to build reductions as well. Detailed information about it can be found in \cite{cai2017complexity}, and it is sufficient for us to use the following lemma in our proof, whose proof follows from polynomial interpolation.

\begin{lemma}\label{lem:interpolation}
    In the setting of $\khol(\widehat{\mathcal{F}})$, if $(n+1)$ pairwise linearly independent unary signatures with $poly(n)$-size can be realized in poly(n) time for any $n\in\mathbb{N}_+$ in the setting of \khol, then $\khol(\widehat{\mathcal{F}},[1,-1])\leq_T\khol(\widehat{\mathcal{F}})$.
\end{lemma}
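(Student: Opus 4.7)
The plan is to apply standard polynomial interpolation. Let $\Omega$ be an instance of $\khol(\widehat{\mathcal{F}},[1,-1])$ in which the unary signature $[1,-1]$ occurs at $n$ distinct RHS positions. I would fix any basis $\{v_1,v_2\}$ of $\mathbb{C}^2$ and write an arbitrary unary substitute as $u=\lambda v_1+\mu v_2$. By multilinearity of the partition function in the $n$ unary positions,
\[
Z(\Omega_u) \;=\; \sum_{k=0}^{n} Z_k\,\lambda^{k}\mu^{n-k},
\]
where $\Omega_u$ denotes the instance obtained by replacing every occurrence of $[1,-1]$ by $u$, and $Z_k$ is the sum of the partition functions of all mixed instances that place $v_1$ at exactly $k$ of the $n$ positions and $v_2$ at the remaining $n-k$. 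Crucially, these $n+1$ quantities depend only on $\Omega$ and the chosen basis, not on $u$.

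By hypothesis, I may obtain in $\mathrm{poly}(n)$ time a list of $n+1$ pairwise linearly independent unary signatures $u_0,\ldots,u_n$, each realized as a $\widehat{\mathcal{F}}$-gate in the \khol\ setting with size $\mathrm{poly}(n)$. Write $u_j=\lambda_j v_1+\mu_j v_2$. For each $j$, I substitute the gate realizing $u_j$ for every occurrence of $[1,-1]$ in $\Omega$; since each such gate is bipartite with a single dangling edge playing the role of the unary variable, the result $\Omega_j$ is a legitimate $\khol(\widehat{\mathcal{F}})$ instance of $\mathrm{poly}(|\Omega|,n)$ size. A single oracle call to $\khol(\widehat{\mathcal{F}})$ on each $\Omega_j$ yields
\[
Z(\Omega_j) \;=\; \sum_{k=0}^{n} Z_k\,\lambda_j^{k}\mu_j^{n-k}, \qquad j=0,1,\ldots,n,
\]
which is a linear system in the unknowns $Z_0,\ldots,Z_n$.

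The coefficient matrix $\bigl(\lambda_j^{k}\mu_j^{n-k}\bigr)_{0\le j,k\le n}$ is a generalized Vandermonde matrix whose determinant is a nonzero scalar multiple of $\prod_{j<k}(\lambda_j\mu_k-\lambda_k\mu_j)$. Pairwise linear independence of the $u_j$'s forces every factor to be nonzero, so the system is nonsingular and the $Z_k$ can be recovered by Gaussian elimination in polynomial time. Expanding $[1,-1]=a v_1+b v_2$ in the fixed basis and substituting finally yields $Z(\Omega)=\sum_{k=0}^{n} Z_k\,a^{k}b^{n-k}$, the desired output. The only non-routine step is the Vandermonde nonsingularity, which follows directly from the pairwise linear independence hypothesis; the remaining work is bookkeeping to confirm that gate substitution and arithmetic remain polynomial-time, which is immediate from the size and construction-time bounds given.
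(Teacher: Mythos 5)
Your proposal is correct and follows exactly the standard polynomial-interpolation argument that the paper itself invokes (it cites the technique and gives no separate proof): multilinearity in the $n$ unary positions, a homogeneous Vandermonde system whose determinant $\prod_{j<k}(\lambda_j\mu_k-\lambda_k\mu_j)$ is nonzero by pairwise linear independence, and recovery of $Z(\Omega)$ from the $Z_k$. The only implicit assumption, shared with the paper's own use of the lemma (where the realized signatures are explicitly $[k\lambda+1,1]$), is that ``realized'' includes knowing the entries $\lambda_j,\mu_j$ of each realized unary signature, so the coefficient matrix can actually be written down.
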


\subsubsection{Signature factorization}
In the study of $\hol$ problems, the algebraic structure of signatures plays a pivotal role. This section introduces the \textit{Unique Prime Factorization} (UPF), which is a fundamental method of characterizing signatures.
 
A non-trivial signature $ f $ is irreducible if it cannot be expressed as a tensor product $ f = g \otimes h $ for non-constant signatures $ g, h $. Then we introduce the Unique Prime Factorization.

\begin{lemma}[{\cite[Lemma 2.13]{cai2020beyond}}]\label{lem:upf}
Every non-trivial signature $ f $ has a prime factorization $ f = g_1 \otimes \cdots \otimes g_k $, where each $ g_i $ is irreducible, and the factorization is unique up to variable permutation and constant factors. That is to say, if $ f = g_1 \otimes \cdots \otimes g_k = h_1 \otimes \cdots \otimes h_\ell $, then $ k = \ell $, and there exists a permutation $\pi$ such that $ g_{\pi(i)}=c_i h_i $ for each $1\leq i\leq k $, where each $c_i$ is a constant.
\end{lemma}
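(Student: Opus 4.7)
The plan is to handle existence by a straightforward induction on $\ari(f)$, and to concentrate the effort on uniqueness. For existence: if $f$ is already irreducible the claim is immediate; otherwise $f = g\otimes h$ with $\ari(g),\ari(h) < \ari(f)$, and applying the induction hypothesis to $g$ and $h$ exhibits $f$ as a tensor product of irreducibles. Unary signatures are vacuously irreducible, providing the base case.

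For uniqueness, suppose $f = g_1\otimes\cdots\otimes g_k = h_1\otimes\cdots\otimes h_\ell$ are two factorizations into irreducibles, and set $A_i = \var(g_i)$, $B_j = \var(h_j)$, so that $\{A_i\}$ and $\{B_j\}$ are two partitions of $\var(f)$. The key tool is a rank criterion, which is readily verified from the definition of the signature matrix: for any partition $(V,V^c)$ of $\var(f)$, the signature $f$ factors as $f = g\otimes h$ with $\var(g)=V$, $\var(h)=V^c$ if and only if $M_{V,V^c}(f)$ has rank at most $1$. The central step is to argue that $A_1$ must be a union of some $B_j$'s. On the one hand, the factorization $f = g_1\otimes(g_2\otimes\cdots\otimes g_k)$ together with the criterion shows that $M_{A_1, A_1^c}(f)$ has rank exactly $1$. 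On the other hand, rewriting this matrix using the $h$-factorization and the tensor structure of signature matrices, its rank equals a product indexed by the $h_j$: each $h_j$ with $B_j\subseteq A_1$ or $B_j\subseteq A_1^c$ contributes a factor of rank $1$, while any $h_j$ that is genuinely split (i.e., $B_j\cap A_1$ and $B_j\cap A_1^c$ are both non-empty) contributes a factor of rank at least $2$, since otherwise the criterion would realize $h_j$ as a non-trivial tensor product, contradicting irreducibility. Rank $1$ therefore forces no $h_j$ to be split, hence $A_1$ is a union of some $B_j$.

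To conclude, fix an assignment to the variables in $A_1^c$ on which $g_2\otimes\cdots\otimes g_k$ is non-zero; such an assignment exists because each $g_i \not\equiv 0$. The resulting restriction of $f$ to the variables in $A_1$ equals, up to a non-zero scalar, both $g_1$ and $\bigotimes_{B_j\subseteq A_1} h_j$. Since $g_1$ is irreducible, exactly one $B_j$ lies in $A_1$, that $B_j$ equals $A_1$, and the corresponding $h_j$ is equal to $g_1$ up to a constant factor. Iterating on the remaining factors, or equivalently quotienting out this matched pair and applying induction on $k$, yields the bijection required by the statement. I expect the principal obstacle to be the careful bookkeeping of the tensor-product identity for ranks of signature matrices and the verification that the pinning used in the final step can always be chosen so that the restricted signature is non-zero; both are routine but demand care about variable orderings and scaling constants.
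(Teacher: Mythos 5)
Your proof is correct, but there is nothing in this paper to compare it against: Lemma \ref{lem:upf} is imported verbatim from \cite[Lemma 2.13]{cai2020beyond} and the paper gives no proof of its own. Your argument is a sound self-contained derivation. The existence part by induction on arity is immediate from the definition of irreducibility. For uniqueness, the rank criterion (for non-trivial $f$, $f$ splits across $(V,V^c)$ iff $M_{V,V^c}(f)$ has rank exactly $1$) is correct, and the central step works as you describe: up to row and column permutations, $M_{A_1,A_1^c}(f)$ is the Kronecker product of the matrices $M_{B_j\cap A_1,\,B_j\cap A_1^c}(h_j)$, so its rank is the product of their ranks; factors with $B_j\subseteq A_1$ or $B_j\subseteq A_1^c$ are non-zero column or row vectors (rank $1$, using $f\not\equiv 0$), and a split $h_j$ would have rank at least $2$ by irreducibility, forcing $A_1$ to be a union of $B_j$'s. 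Three small points deserve explicit care when you write this up: (i) the non-vanishing of the scalar on the $h$-side of the pinning step is not chosen but deduced (the restriction equals a non-zero multiple of $g_1$, hence the product of the pinned $h_j$-values is forced to be non-zero); (ii) after matching $g_1=c\,h_{j_1}$, to set up the induction on $k$ you should cancel the common factor, e.g.\ evaluate at an assignment where $h_{j_1}\neq 0$ to conclude $g_2\otimes\cdots\otimes g_k$ is a constant multiple of $\bigotimes_{j\neq j_1}h_j$; and (iii) fix the convention that prime factors have arity at least $1$, so that arity-$0$ constants are absorbed into the ``constant factors'' allowed by the statement. With these routine details supplied, the proof is complete.
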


\subsection{Known dichotomies}
In this section, we introduce important results of complexity classifications for counting problems. 

\subsubsection{\ccsp\ and $\ccsp_d$}
Let $ X = (x_1, x_2, \ldots, x_d, 1)^\T $ be a $(d+1)$-dimensional column vector over $ \mathbb{F}_2 $ and $ A $ be a matrix over $ \mathbb{F}_2 $. The indicator function $ \chi_{AX} $ takes value 1 when $ AX = \mathbf{0} $ and 0 otherwise, which indicates an affine space.

\begin{definition}\label{defa}
We denote by $\mathscr{A}$ the set of signatures which have the form $\lambda\cdot\chi_{AX}\cdot\mathfrak{i}^{L_1(X)+L_2(X)+\cdots+L_n(X)}$, where $\mathfrak{i}$ is the imaginary unit, $\lambda\in\mathbb{C}$, $n\in\mathbb{Z}_+$. Each $L_j$ is a 0-1 indicator function of the form $\langle \alpha_j,X \rangle$, where $\alpha_j$ is a $(d+1)$-dimensional vector over $\mathbb{F}_2$, and the dot product $\langle \cdot,\cdot \rangle$ is computed over $\mathbb{F}_2$.
\end{definition}

\begin{definition}\label{defp}
 We denote by $\mathscr{P}$ the set of all signatures which can be expressed as a product of unary signatures, binary equality signatures ($=_2$) and binary disequality signatures ($\neq_2$) on not necessarily disjoint subsets of variables. 
\end{definition}
The following property will be used in our proof, which can be verified from Definition \ref{defa}
 and \ref{defp} directly.
 \begin{lemma}\label{lem:dagonggaocheng}
     Suppose $f,g$ are binary signatures and $M_f=M_g^{-1}$. Then $f\in \mathscr{A}$ (or $\mathscr{P}$) if and only if $g\in \mathscr{A}$ (or $\mathscr{P}$ respectively).
 \end{lemma}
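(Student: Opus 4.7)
The plan is to proceed by direct case analysis on the shape of the signature matrix. Since the relation $M_f=M_g^{-1}$ is symmetric in $f$ and $g$, it suffices in each case to prove one implication; in every case invertibility of $M_f$ forces invertibility of $M_g$ and conversely, so both signature matrices have rank $2$.

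\emph{The $\mathscr{P}$ case.} I would first observe that any binary signature in $\mathscr{P}$ is a product on the two variables $x_1,x_2$ of unary signatures together with at most one binary constraint from $\{=_2,\neq_2\}$, since multiple such binary constraints on the same pair either collapse or annihilate. Hence its signature matrix is either rank at most $1$, diagonal, or anti-diagonal. Invertibility excludes the rank-$1$ case. Both diagonal and anti-diagonal shapes are preserved under matrix inversion, and every nonsingular diagonal (resp. anti-diagonal) matrix is realized in $\mathscr{P}$ as $=_2$ (resp. $\neq_2$) multiplied by two unaries. Hence $g\in\mathscr{P}$.

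\emph{The $\mathscr{A}$ case.} I would split according to the size of $\su(f)$. If $|\su(f)|=2$, invertibility of $M_f$ forces $\su(f)\in\{\{00,11\},\{01,10\}\}$, because any other two-element affine subspace of $\mathbb{F}_2^2$ produces a rank-$1$ matrix. In both remaining cases $M_f$ is diagonal or anti-diagonal with nonzero entries of the form $\lambda\mathfrak{i}^{e}$, so $M_f^{-1}$ is manifestly of the same form and $g\in\mathscr{A}$. If $|\su(f)|=4$, every entry has the form $\lambda\mathfrak{i}^{e_{\alpha}}$ for some $e_{\alpha}\in\mathbb{Z}/4$. Inspecting each of the eight $\mathbb{F}_2$-affine $0$-$1$ indicator functions on $(x_1,x_2)$ shows that $L_j(00)+L_j(11)-L_j(01)-L_j(10)$ is always even; summing over $j$ forces the parity relation $e_{00}+e_{11}\equiv e_{01}+e_{10}\pmod 2$, equivalently $f_{00}f_{11}=\pm f_{01}f_{10}$. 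Invertibility selects the minus sign, so $\det M_f=\pm 2\lambda^2\mathfrak{i}^{k}$ for some integer $k$, and a direct computation shows every entry of $M_f^{-1}$ has the form $\lambda'\mathfrak{i}^{e'_{\alpha}}$ with the parity relation preserved. To certify $g\in\mathscr{A}$ I would then exhibit the required $L'_j$'s: any parity-even exponent tuple in $(\mathbb{Z}/4)^4$ is in fact a nonnegative integer combination of the indicators of $x_1,x_2,x_1\oplus x_2$ and the constant $1$, as a short counting argument (computing the kernel of this evaluation map in $(\mathbb{Z}/4)^4$) shows that the image already surjects onto all parity-even tuples.

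\emph{Main obstacle.} The delicate point is the full-support case for $\mathscr{A}$: one must uncover the hidden parity constraint on $(e_{\alpha})$, verify that it survives matrix inversion, and then confirm realizability of every parity-even tuple as a genuine sum of the allowed $\mathbb{F}_2$-linear $0$-$1$ indicators. The rest of the argument is mechanical once this structural fact is in place.
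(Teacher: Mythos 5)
Your proposal is correct, and it follows the same route the paper takes: the paper offers no explicit argument, asserting that the lemma "can be verified from Definition \ref{defa} and \ref{defp} directly," and your case analysis (diagonal/anti-diagonal shapes for $\mathscr{P}$ and for the two-point affine supports, plus the parity constraint $e_{00}+e_{11}\equiv e_{01}+e_{10}\pmod 2$ and its realizability by affine indicators in the full-support $\mathscr{A}$ case) is exactly such a direct verification, carried out in full. In particular your counting argument is sound: the four generators $1,x_1,x_2,x_1\oplus x_2$ span, over $\mathbb{Z}/4$, precisely the $128$ parity-even exponent tuples, so inversion stays inside $\mathscr{A}$.
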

 
 Now we can state the vital theorem that classifies complex-valued Boolean $\ccsp$ problems as follows:

\begin{theorem}[{\cite[Theorem 3.1]{cai2014complexity}}]\label{thm:CSPdichotomy}
Suppose $\mathcal{F}$ is a finite set of signatures mapping Boolean inputs to complex numbers. If $\mathcal{F}\subseteq\mathscr{A}$ or $\mathcal{F}\subseteq\mathscr{P}$, then $\#\csp(\mathcal{F})$ is computable in polynomial time. Otherwise, $\#\csp(\mathcal{F})$ is \#P-hard.
\end{theorem}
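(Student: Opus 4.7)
The plan is to prove the dichotomy in two halves: establish tractability whenever $\mathcal{F}\subseteq\mathscr{A}$ or $\mathcal{F}\subseteq\mathscr{P}$, and prove \#P-hardness in every remaining case by a reduction-and-induction argument. For tractability, I would first treat the $\mathscr{P}$ case. Every signature in $\mathscr{P}$ factors as a product of unaries together with binary $(=_2)$ and $(\neq_2)$ constraints on (possibly overlapping) variable subsets, so an instance reduces to a constraint graph whose connected components under these equality/disequality edges can be contracted to single $\mathbb{F}_2$ variables, after which the unary weights multiply in. The partition function is then a polynomial-time product over a polynomial number of reduced variables. For the $\mathscr{A}$ case, an instance evaluates to a sum of the form $\sum_{X\in\mathbb{F}_2^n}\lambda\cdot\chi_{AX=0}\cdot\mathfrak{i}^{Q(X)}$, where $Q$ is a (possibly inhomogeneous) quadratic form over $\mathbb{F}_2$ obtained by summing the $L_j$-contributions across all constraints. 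The affine support is found by Gaussian elimination, and the residual Gauss-style exponential sum over that affine subspace is evaluated in polynomial time by diagonalizing $Q$ and reading off the standard closed forms for Gauss sums.

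For hardness, assume $\mathcal{F}\not\subseteq\mathscr{A}$ and $\mathcal{F}\not\subseteq\mathscr{P}$, and pick witnesses $f_A\in\mathcal{F}\setminus\mathscr{A}$ and $f_P\in\mathcal{F}\setminus\mathscr{P}$. By Lemma \ref{lemma:csp-holant} all equalities $=_k$ are available, which in particular gives the pinning signatures $\Delta_0,\Delta_1$ for free, so any witness can be restricted to a lower-arity signature. I would then proceed by induction on the maximum arity of $\mathcal{F}$: pin variables of $f_A$ and $f_P$ repeatedly, at each step checking whether the resulting restriction still lies outside both tractable classes, until the analysis reduces to a finite list of low-arity base cases (principally symmetric ternary and quaternary signatures, plus a handful of asymmetric ones). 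For each base case, a concrete gadget together with polynomial interpolation in the spirit of Lemma \ref{lem:interpolation} realizes one of the canonical \#P-hard anchors, possibly after a preliminary holographic transformation to bring the signature into a convenient normal form.

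The main obstacle will be the base-case analysis in the hardness half. The tractable classes $\mathscr{A}$ and $\mathscr{P}$ intersect nontrivially (roughly, products of unaries whose entries lie in $\{0,\pm 1,\pm\mathfrak{i}\}$), and being outside $\mathscr{A}$ versus outside $\mathscr{P}$ manifests in qualitatively different ways: the former can be a $\bmod\,4$ phase mismatch on an otherwise affine support, while the latter is a genuine higher-order correlation that no product of pair-constraints can reproduce. Ensuring that pinning and gadget composition never drop the realized signature back into a tractable class requires meticulous algebraic bookkeeping of the entries, and the final witnessing gadget must survive the holographic transformations applied during the reduction without accidentally re-entering $\mathscr{A}$ or $\mathscr{P}$. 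This combinatorial and number-theoretic case analysis is exactly what makes the original proof long and delicate.
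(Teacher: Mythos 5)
This statement is not proved in the paper at all: it is quoted verbatim as \cite[Theorem 3.1]{cai2014complexity} (the Cai--Lu--Xia dichotomy for complex-weighted Boolean $\ccsp$), so the only meaningful comparison is with the known proof in that reference, and against that benchmark your sketch has genuine gaps. The tractability half is acceptable in outline (the $\mathscr{P}$ case by contracting equality/disequality components, the $\mathscr{A}$ case by evaluating an exponential sum with an $\mathfrak{i}$-power whose exponent is a sum of 0--1 linear indicators, i.e.\ a mod-4 quantity, over an affine support), although the polynomial-time evaluation of that sum is itself a nontrivial ingredient that you assert rather than establish. The hardness half, however, contains a concrete error: having all equalities $=_k$ available does \emph{not} give you the pinning signatures $\Delta_0,\Delta_1$ ``for free.'' The Pinning Lemma, i.e.\ $\ccsp(\mathcal{F},\Delta_0,\Delta_1)\equiv_T\ccsp(\mathcal{F})$, is one of the substantial and delicate steps of the actual proof for complex weights, and your argument silently assumes it.

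Beyond that, the induction-by-pinning plan does not work as stated. Pinning variables of a witness $f_A\notin\mathscr{A}$ (or $f_P\notin\mathscr{P}$) can easily produce restrictions all of which lie back inside $\mathscr{A}\cup\mathscr{P}$, so ``pin until a low-arity base case outside both classes'' is not a well-defined descent; you acknowledge this obstacle in your last paragraph but offer no mechanism to overcome it. The actual proof instead proceeds by showing structurally that a signature outside $\mathscr{P}$ (respectively outside $\mathscr{A}$) lets one \emph{realize}, via gadget constructions combining several copies of the signature with equalities and via polynomial interpolation of unary signatures, specific binary or small-arity signatures known to make $\ccsp$ \#P-hard, and it leans on previously established dichotomies for these small cases. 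In short, the missing ideas are (i) the pinning lemma, and (ii) a realization argument that provably preserves non-membership in the tractable classes, neither of which is supplied or replaceable by the finite base-case check you propose.
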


Next we introduce the complexity classification for $\ccsp_d(\mathcal{F})$. In this article, we focus on the case that $d=2$ and a special case that $\neq_2\in\mathcal{F}$. Let $\rho_d=\text{e}^{\frac{i\pi}{2d}}$ be a 4d-th primitive root of unity, $T_d=\begin{pmatrix} 
   1 & 0 \\ 
   0 & \rho_d 
   \end{pmatrix},$ and $\mathscr{A}_d^r=\{T_d^rf\mid f\in\mathscr{A}\}$, where $r\in[d]$. 
   A signature $f:\{0,1\}^n\rightarrow\mathbb{C}$ is called local affine if it satisfies $(\bigotimes_{j=1}^nT_2^{\alpha_j})f\in\mathscr{A}$ for any $\alpha\in \su(f)$. The set of all local affine signatures is denoted by $\mathscr{L}$. The dichotomies are stated as follows.

\begin{theorem}[{\cite{cai2018realholantc}}]\label{thm:csp2 dichotomy}
    Suppose $\mathcal{F}$ is a finite set of signatures. If $\mathcal{F}\subseteq\mathscr{A},\mathscr{P},\mathscr{A}_2^1$ or $\mathscr{L}$, then $\ccsp_2(\mathcal{F})$ is polynomial-time computable, otherwise it is \#P-hard.
\end{theorem}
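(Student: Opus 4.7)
The plan is to view $\ccsp_2(\mathcal{F})$ through its equivalence with $\hol(\mathcal{EQ}_2 \mid \mathcal{F})$ and to leverage the standard Boolean $\ccsp$ dichotomy (Theorem~\ref{thm:CSPdichotomy}) as a baseline. The new tractable classes $\mathscr{A}_2^1$ and $\mathscr{L}$ appearing beyond $\mathscr{A}$ and $\mathscr{P}$ arise precisely because in $\ccsp_2$ every variable occurs with even multiplicity: this parity restriction forbids certain hardness gadgets (for instance, realising $=_3$ by copying) while simultaneously enabling extra tractable families via holographic transformations whose order-two effect is invisible on even-degree variables. The argument accordingly splits into a tractability part (four algorithms) and a hardness part (a reduction to the $\ccsp$ dichotomy).

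For tractability, the cases $\mathcal{F}\subseteq\mathscr{A}$ and $\mathcal{F}\subseteq\mathscr{P}$ are immediate: a $\ccsp_2(\mathcal{F})$ instance is a $\ccsp(\mathcal{F})$ instance, so polynomial-time computability follows from Theorem~\ref{thm:CSPdichotomy}. For $\mathcal{F}\subseteq\mathscr{A}_2^1$, I would apply the holographic transformation by $T_2$ to the bipartite form $\hol(\mathcal{EQ}_2 \mid \mathcal{F})$; because each variable has even degree, the copies of $T_2$ and $T_2^{-1}$ meeting at a variable compose into a diagonal factor whose order divides $4$, and this factor can be absorbed into a $\ccsp_2$ problem over $\mathscr{A}$. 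For $\mathcal{F}\subseteq\mathscr{L}$, the local affineness condition $(\bigotimes_j T_2^{\alpha_j}) f\in\mathscr{A}$ for each $\alpha\in\su(f)$ supports a support-by-support algorithm that runs an affine-counting subroutine on each local affine component and sums the contributions globally.

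For hardness, suppose $\mathcal{F}$ lies in none of the four tractable classes. The goal is to realise, using only $\ccsp_2$-legal gadgets, a set $\mathcal{F}'$ for which Theorem~\ref{thm:CSPdichotomy} yields $\ccsp(\mathcal{F}')$ being \#P-hard while $\ccsp(\mathcal{F}')\leq_T\ccsp_2(\mathcal{F})$. The first ingredient is to argue that the failure of $\mathcal{F}\subseteq\mathscr{A}_2^1$ or $\mathcal{F}\subseteq\mathscr{L}$ produces, after gadgetry and interpolation steps permitted under even occurrence, a signature that is neither affine nor of product type, so that $\mathcal{F}'\not\subseteq\mathscr{A}$ and $\mathcal{F}'\not\subseteq\mathscr{P}$. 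The second ingredient is to simulate a general $\ccsp$ instance by a $\ccsp_2$ instance; this is where one leverages auxiliary signatures realised from $\mathcal{F}$ to break the even-parity restriction, for example by pinning or by interpolating odd-arity equalities from suitable pairs of realised signatures.

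The main obstacle is exactly this last step: isolating the precise algebraic condition on $\mathcal{F}$---failure of $T_2$-affineness on one hand, failure of local affineness on the other---that nonetheless allows one to realise $=_3$ or simulate a general $\ccsp$ directly, despite the ban on odd-arity equalities on the LHS. This requires a delicate case analysis of how the four tractable classes $\mathscr{A}, \mathscr{P}, \mathscr{A}_2^1, \mathscr{L}$ intersect, including checking that an $\mathcal{F}$ outside all of them cannot be rescued by belonging to a small hybrid family, and that the realisability of odd-arity gadgets under the $\ccsp_2$ constraint is equivalent to the failure of local affineness in the appropriate sense.
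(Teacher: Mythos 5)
This statement is not proved in the paper at all: Theorem~\ref{thm:csp2 dichotomy} is quoted verbatim from \cite{cai2018realholantc} as a known dichotomy, so the only thing to judge is whether your sketch would stand on its own as a proof of that prior result. It would not, because the part that carries all the difficulty is missing. Your hardness argument is a statement of intent rather than a proof: you say that when $\mathcal{F}$ lies outside all four classes one should realise, ``using only $\ccsp_2$-legal gadgets,'' a set $\mathcal{F}'$ with $\ccsp(\mathcal{F}')$ \#P-hard and $\ccsp(\mathcal{F}')\leq_T\ccsp_2(\mathcal{F})$, and you then explicitly flag the realisability of pinning/odd-arity equalities as ``the main obstacle'' without resolving it. That obstacle \emph{is} the theorem. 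Moreover, the route you propose cannot work as stated in general: for $\mathcal{F}\subseteq\mathscr{A}_2^1\setminus(\mathscr{A}\cup\mathscr{P})$ the problem $\ccsp(\mathcal{F})$ is \#P-hard by Theorem~\ref{thm:CSPdichotomy} while $\ccsp_2(\mathcal{F})$ is tractable, so a reduction from an unrestricted $\ccsp$ over signatures realised from $\mathcal{F}$ is exactly what must fail in the tractable cases; identifying the precise algebraic conditions (failure of $T_2$-affineness and of local affineness) under which odd-arity gadgets or interpolation do become available is the long case analysis in \cite{cai2018realholantc}, and nothing in your sketch substitutes for it.

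The tractability half also has a soft spot. The cases $\mathscr{A}$ and $\mathscr{P}$ are indeed immediate, and your $\mathscr{A}_2^1$ argument is in the right spirit (after the transformation by $T_2$ the even-arity equalities become $[1,0,\ldots,0,i^k]$, which are still affine, so everything reduces to an affine computation), although the phrase about $T_2$ and $T_2^{-1}$ ``composing at a variable'' misdescribes how a holographic transformation cancels across edges. But for $\mathscr{L}$, a ``support-by-support algorithm that runs an affine-counting subroutine on each local affine component and sums the contributions globally'' is not an algorithm: summing over the supports of the constraint functions is exponential in the instance size, and the known polynomial-time algorithm for local affine signatures in the $\ccsp_2$/$\hol^c$ setting is a genuinely nontrivial argument, not a local patching of affine counts. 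So both halves need substantive work before this could be considered a proof rather than a plan.
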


\begin{theorem}[{\cite[Theorem 5.3]{cai2020holantoddarity}}]\label{thm:cspd neq_2 dichotomy}
Let $\mathcal{F}$ be a set of complex-valued signatures. If $\mathcal{F}\subseteq\mathscr{P}$ or $\mathcal{F}\subseteq\mathscr{A}_d^r$ for some $r\in[d]$, the $\ccsp_d(\neq_2,\mathcal{F})$ is tractable; otherwise, $\ccsp_d(\neq_2,\mathcal{F})$ is \#P-hard.
\end{theorem}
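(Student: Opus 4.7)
The plan is to use the bipartite equivalence $\ccsp_d(\neq_2,\mathcal{F})\equiv_T\hol(\mathcal{EQ}_d\mid\neq_2,\mathcal{F})$ together with the unrestricted $\ccsp$ dichotomy (Theorem~\ref{thm:CSPdichotomy}): for tractability I would verify that the listed tractable classes remain tractable once $\neq_2$ is adjoined; for hardness I would bootstrap $\mathcal{EQ}_d$ onto the right-hand side with the aid of $\neq_2$ and then invoke the unrestricted dichotomy on the enlarged signature set.

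For the tractable side, if $\mathcal{F}\subseteq\mathscr{P}$ then $\{\neq_2\}\cup\mathcal{F}\subseteq\mathscr{P}$ since $\neq_2\in\mathscr{P}$ by Definition~\ref{defp}, and the trivial inclusion $\ccsp_d(\cdot)\leq_T\ccsp(\cdot)$ together with Theorem~\ref{thm:CSPdichotomy} yields polynomial-time tractability. For $\mathcal{F}\subseteq\mathscr{A}_d^r$, I would apply the holographic transformation by $T_d^{-r}$ on the right-hand side. A direct computation yields $(T_d^{-r})^{\otimes 2}(\neq_2)=\rho_d^{-r}(\neq_2)$ and $T_d^{-r}\mathcal{F}\subseteq\mathscr{A}$, while the induced dual action on each $=_{kd}\in\mathcal{EQ}_d$ scales only the all-$\mathbf{1}$ entry by $\rho_d^{rkd}=i^{rk}$, a fourth root of unity. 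Hence every transformed equality has the form $[1,0,\ldots,0,i^{rk}]_{kd}$ and still lies in $\mathscr{A}$ (the fourth-root phase absorbs into an $i^{L}$ factor with $L$ a suitable $\mathbb{F}_2$-linear indicator). All signatures now sit inside $\mathscr{A}$, so Theorem~\ref{thm:CSPdichotomy} closes the case.

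For the hard side, assume $\mathcal{F}$ escapes $\mathscr{P}$ and every $\mathscr{A}_d^r$. The key construction is to transport $\mathcal{EQ}_d$ from the left-hand side to the right-hand side: attaching $d$ copies of $\neq_2$ to the $d$ legs of $=_d\in\mathcal{EQ}_d$ realizes the signature
\[
\sum_{y\in\{0,1\}}\prod_{i=1}^d[x_i\neq y]=[\mathbf{x}=\mathbf{0}]+[\mathbf{x}=\mathbf{1}]=(=_d)
\]
on the right-hand side, and the same gadget applied to $=_{kd}$ produces $=_{kd}$ on the right-hand side for every $k\geq 1$. The enlarged right-hand side therefore contains $\mathcal{F}':=\mathcal{F}\cup\{\neq_2\}\cup\mathcal{EQ}_d$, which inherits $\mathcal{F}\not\subseteq\mathscr{P}$ and, because one of the $\mathscr{A}_d^r$ coincides with $\mathscr{A}$ up to the scalar-closure of $\mathscr{A}$, also satisfies $\mathcal{F}'\not\subseteq\mathscr{A}$. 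Applying Theorem~\ref{thm:CSPdichotomy} to $\ccsp(\mathcal{F}')\equiv_T\hol(\mathcal{EQ}\mid\mathcal{F}')$ gives \#P-hardness, and the equalities in $\mathcal{EQ}_d\subseteq\mathcal{F}'$ let us simulate the variable identifications demanded by that reduction while preserving the $d$-divisibility invariant, lifting the hardness back to $\ccsp_d(\neq_2,\mathcal{F})$.

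The main obstacle is the lifting step above: although $\mathcal{EQ}_d$ is now available on the right-hand side after the bootstrap, simulating arbitrary equalities of $\mathcal{EQ}$ (in particular $=_2$ when $d>2$) requires routing variable identifications through higher-arity gadgets while keeping every variable's multiplicity a multiple of $d$. Coupled with the non-membership argument that ``$\mathcal{F}\not\subseteq\mathscr{A}_d^r$ for all $r\in[d]$ forces $\mathcal{F}'\not\subseteq\mathscr{A}$,'' which needs a careful case analysis over the twist $r$ and over which specific signatures witness the escape, this interplay between the $d$-divisibility constraint and the twisted affine structure is the technical heart of the proof.
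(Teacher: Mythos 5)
This statement is not proved in the paper at all --- it is imported verbatim from \cite[Theorem 5.3]{cai2020holantoddarity} --- so your proposal has to stand on its own. Its tractability half does: $\neq_2\in\mathscr{P}$ handles the $\mathscr{P}$ case, and for $\mathcal{F}\subseteq\mathscr{A}_d^r$ the transformation by $T_d^{-r}$ sends $\neq_2$ to $\rho_d^{-r}(\neq_2)$, sends $\mathcal{F}$ into $\mathscr{A}$, and turns each $=_{kd}$ into $[1,0,\ldots,0,\ii^{rk}]_{kd}\in\mathscr{A}$ (e.g.\ via $rk$ copies of the indicator $L=x_1$), so the whole bipartite instance is $\mathscr{A}$-transformable and Theorem \ref{thm:CSPdichotomy} gives the algorithm; your parenthetical $\mathscr{A}_d^d=\mathscr{A}$ is also correct, since $T_d^d=\operatorname{diag}(1,\ii)$ preserves $\mathscr{A}$.

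The hardness half, however, has a genuine gap, and it is exactly the step you flag and then skip. Your argument only ever uses the consequences $\mathcal{F}\not\subseteq\mathscr{P}$ and $\mathcal{F}\not\subseteq\mathscr{A}$ (the latter is just the case $r=d$), realizes $\mathcal{EQ}_d$ on the RHS, and then asserts a lifting of the \#P-hardness of $\ccsp(\mathcal{F}')$ back to $\ccsp_d(\neq_2,\mathcal{F})$ ``while preserving the $d$-divisibility invariant.'' No such generic lifting can exist: take $d=2$ and $\mathcal{F}=\{T_2f\}$ where $f=\chi_{x_1+x_2+x_3=1}$ is the arity-$3$ odd-parity affine signature. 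Then $\mathcal{F}\subseteq\mathscr{A}_2^1$, so $\ccsp_2(\neq_2,\mathcal{F})$ is tractable by the very tractability argument you gave; yet $T_2f\notin\mathscr{A}$ (its normalized values $1,1,1,\ii$ on the $2$-dimensional affine support cannot be written as $\ii^{\sum_j L_j}$, since mod $2$ this would be an $\mathbb{F}_2$-affine function equal to $1$ at exactly one of the four support points) and $T_2f\notin\mathscr{P}$, so $\mathcal{F}'\not\subseteq\mathscr{A}$ and $\mathcal{F}'\not\subseteq\mathscr{P}$ and your reduction would conclude \#P-hardness of this tractable problem. In other words, the hypothesis that $\mathcal{F}$ escapes \emph{every} $\mathscr{A}_d^r$, $r=1,\ldots,d-1$, must enter the hardness construction itself, not merely the non-membership check on $\mathcal{F}'$; your sketch never uses it there. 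The structural reason the naive lifting fails is the one you sense: any equality gadget built from $\mathcal{EQ}_d$ on the LHS and the bootstrapped $\mathcal{EQ}_d$ on the RHS, with all dangling edges on the constraint side, has arity divisible by $d$ (total legs minus internal edges, counted mod $d$), so variables of multiplicity $\not\equiv 0 \pmod d$ cannot be simulated this way, and producing the missing equalities (or a substitute hardness source) from the failure of the twisted classes $\mathscr{A}_d^1,\ldots,\mathscr{A}_d^{d-1}$ is the actual content of the cited proof, which proceeds through the dedicated $\ccsp_d$ machinery rather than a direct reduction from unrestricted $\ccsp$.
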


\subsubsection{$\hol^c$}
Several important notations and definitions are introduced. 
\begin{definition}
We use following notations.
    \begin{itemize}

    \item $\mathcal{T}$ is the set of all unary and binary signatures.
    
    \item $\mathcal{E}:=\{f\mid\exists \alpha\in\{0,1\}^{arity(f)}\text{ such that }f(x)=0\text{ if }x\notin\{\alpha,\overline{\alpha}\}\}$.

    \item $\mathcal{M}:=\{f\mid   f(x)=0\text{ if }\#_1(x)>1\}$ is the set of weighted matching signatures .

    \item 
$\mathcal{B}:=\{M\mid M^\T\{=_2,\pin_0,\pin_1\}\subseteq\mathscr{A}\}$.
\end{itemize}
\end{definition}

\begin{definition}
    We say a signature set $\mathcal{F}$ is $\mathscr C$-transformable if there exists a $M\in\textbf{GL}_2(\mathbb{C})$ such that $=_2M\in\mathscr{C}$ and $M^{-1}\mathcal{F}\subseteq\mathscr{C}$.
\end{definition}


\begin{theorem}[{\cite[Theorem 59]{backens2021full}}]\label{thm:holantc dichotomy}
Let $\mathcal{F}$ be finite. Then $\hol^c(\mathcal{F})$ is \#P-hard unless:
\begin{enumerate}
\item $\mathcal{F}\subseteq\left \langle \mathcal{T} \right \rangle$;

\item $\mathcal{F}\cup\{\pin_0,\pin_1\}$ is $\mathscr{P}$-transformable. Equivalently, there exists $O\in\mathscr{O}$ such that $\mathcal{F}\subseteq\left \langle O\mathcal{E} \right \rangle$, or $\mathcal{F}\subseteq\left \langle K\mathcal{E} \right \rangle=\left \langle KX\mathcal{E} \right \rangle$;

\item $\mathcal{F}\subseteq\left \langle K\mathcal{M} \right \rangle$ or $\mathcal{F}\subseteq\left \langle KX\mathcal{M} \right \rangle$;

\item $\mathcal{F}\cup\{\pin_0,\pin_1\}$ is $\mathscr{A}$-transformable. Equivalently, there exists $B\in\mathcal{B}$ such that $\mathcal{F}\subseteq B \mathscr{A}$;

\item $\mathcal{F}\subseteq\mathscr{L}$.
\end{enumerate}
In all of the exceptional cases, $\hol^c(\mathcal{F})$ is polynomial-time computable.
\end{theorem}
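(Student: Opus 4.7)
The plan is to establish the dichotomy in two directions: tractability for each of the five exceptional cases, and \#P-hardness for every $\mathcal{F}$ falling outside all five. As a starting point, I would use the easy equivalence $\hol^c(\mathcal{F}) \equiv_T \hol(=_2, \pin_0, \pin_1 \mid \mathcal{F})$ obtained by the $2$-stretch, which allows me to freely use pinning and equality gadgets on the LHS throughout the argument.

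For the tractability side, each of the five cases admits a direct polynomial-time algorithm. Case 1 ($\mathcal{F} \subseteq \langle \mathcal{T} \rangle$) is immediate: every signature decomposes as a tensor product of unary and binary pieces, so after decomposition the instance becomes a product of local computations on edges. Cases 2 and 3 reduce, after the relevant holographic transformation by $O$, $K$, or $KX$, to counting (weighted) perfect matchings, which is tractable via Kasteleyn's matchgate algorithm for signatures in $\mathcal{M}$ or $\mathcal{E}$. Case 4 ($\mathscr{A}$-transformability) reduces, via the transformation $B \in \mathcal{B}$, to an instance of $\#\csp(\mathscr{A})$ which is tractable by Theorem \ref{thm:CSPdichotomy}. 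Case 5 ($\mathcal{F} \subseteq \mathscr{L}$) is tractable via Theorem \ref{thm:csp2 dichotomy}, since with $\pin_0, \pin_1, =_2$ available one obtains an instance of $\ccsp_2$ on local affine signatures.

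For the hardness side, I would proceed by contrapositive: assuming that none of the five exceptional conditions holds, the task is to reduce a known \#P-hard problem to $\hol^c(\mathcal{F})$. The first step is to show that some non-trivial signature in $\mathcal{F}$ cannot be realized from $\mathcal{T}$ alone, and then to use pinning together with self-loops to reduce its arity. Lemmas \ref{lem:shaoshuai lemma 3.9} and \ref{lem:wangjuqiuwangle} are the key tools here: if every such reduction collapses to the trivial signature, then $f$ itself must be supported on a very restricted set, which combined with the negation of the five cases yields a contradiction. Having produced a non-trivial low-arity gadget, I would then apply the SLOCC classification of ternary signatures into GHZ and W types, and the unique prime factorization Lemma \ref{lem:upf} for higher arity, to bring the hard instance to a canonical form. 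Finally, the sym-$\hol^c$ dichotomy from \cite{CHL2012symholantc}, together with Theorems \ref{thm:CSPdichotomy} and \ref{thm:csp2 dichotomy}, serve as the base hardness results from which to conclude.

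The hardest part will be the fine-grained case analysis that decides which canonical form actually appears, and ensuring that the gadgets used to reach it do not accidentally land inside one of the tractable classes. In particular, separating cases 4 and 5 (affine vs.\ local affine) from cases 2 and 3 (product/matchgate) requires a careful algebraic inspection of the signature matrices realized by each candidate gadget, and the SLOCC normalization must be combined with polynomial interpolation (Lemma \ref{lem:interpolation}) to eliminate residual scalar ambiguity arising from the action of $\widehat{Q}$ in Lemma \ref{lem:quanxi}. I expect this bookkeeping---matching every possible non-tractable configuration to a specific reduction target while respecting the five exceptional families---to be the most technically demanding portion of the argument, and the reason the full proof in \cite{backens2021full} is so intricate.
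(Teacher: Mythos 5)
This statement is not proved in the paper at all: it is Theorem~59 of \cite{backens2021full}, imported verbatim as a known dichotomy and used as a black box (e.g.\ in Lemmas~\ref{lem:no up triangle}--\ref{lem:Rnoternary} and in the proof of Theorem~\ref{thm:main theorem}). So there is no paper-internal proof to compare against; what you have written is an outline for re-deriving Backens' full $\hol^c$ dichotomy, which is a substantial standalone result whose proof occupies an entire paper and cannot be reconstructed from the lemmas summarized here.

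Judged on its own, your sketch has concrete problems. On the tractability side, the claim that Cases~2 and~3 ``reduce to counting weighted perfect matchings, tractable via Kasteleyn's matchgate algorithm'' is wrong: Kasteleyn/FKT requires planar instances, while $\hol^c(\mathcal{F})$ is over general graphs; the tractability of $\mathscr{P}$-transformable sets comes from the product-type algorithm behind Theorem~\ref{thm:CSPdichotomy}, and the tractability of $\mathcal{F}\subseteq\langle K\mathcal{M}\rangle$ (resp.\ $\langle KX\mathcal{M}\rangle$) is a separate combinatorial algorithm for $\khol$ with matching-type signatures, related to the vanishing-signature analysis, not a matchgate reduction. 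On the hardness side, your plan (``produce a low-arity irreducible signature, apply the GHZ/W classification and UPF, then invoke the symmetric $\hol^c$ dichotomy and the $\ccsp$/$\ccsp_2$ dichotomies'') names the right ingredients but defers exactly the parts that constitute the proof: showing that when none of the five conditions holds one can actually realize a GHZ-type symmetric ternary signature or otherwise force $\ccsp$/$\ccsp_2$ hardness, handling the W-type and vanishing-adjacent families where symmetrization fails, and delimiting the $\mathscr{L}$ boundary. Lemmas~\ref{lem:shaoshuai lemma 3.9}, \ref{lem:wangjuqiuwangle}, \ref{lem:upf} and \ref{lem:interpolation} are nowhere near sufficient for that case analysis, so as it stands the hardness direction is a plan rather than a proof. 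If your goal is to use this theorem in the present paper, the correct move is simply to cite \cite{backens2021full}, as the authors do.
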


We remark that, if $\mathcal{F}\subseteq\langle\mathcal{T}\rangle$, then $\hol(\mathcal{F})$ is polynomial-time computable directly by sequential matrix multiplication.

\subsubsection{\ceo}
Let $ f $ be an $\eo$ signature of arity $ 2d $ with $ \text{Var}(f) = \{x_1, x_2, \ldots, x_{2d}\} $. For an arbitrary perfect pairing $ P $ of $ \text{Var}(f) $, say $ P = \{\{x_{i_1}, x_{i_2}\}, \{x_{i_3}, x_{i_4}\}, \ldots, \{x_{i_{2d-1}}, x_{i_{2d}}\}\} $, we define $ \eom[P] $ as the subset of $ \{0,1\}^{2d} $ satisfying that $\eom[P] = \{\alpha \in \{0,1\}^{2d} \mid \alpha_{i_1} \neq \alpha_{i_2}, \ldots, \alpha_{i_{2d-1}} \neq \alpha_{i_{2d}}\}.
$

\begin{definition}\label{def:eoaeop}
  Suppose $f$ is an arity $2d$ $\eo$ signature and $S\subseteq \eoe$. $f|_{S}$ is the restriction of $f$ to $S$, which means when $\alpha\in S$, $f|_{S}(\alpha)=f(\alpha)$, otherwise $f|_{S}(\alpha)=0$.
  
  If for any perfect pairing $P$ of Var$(f)$, $f|_{\eom[P]}\in\mathscr{A}$, then we say that $f$ is $\eom[\mathscr{A}]$.
  
  Similarly, if for any perfect matching $P$ of Var$(f)$, $f|_{\eom[P]} \in \mathscr{P}$, then $f$ is $\eom[\mathscr{P}]$.
\end{definition}

\begin{definition}\label{def:reba}
An $\eo$ signature $f$ of arity $2d$ is called \ba[0](\ba[1] respectively), when the following  recursive conditions are met.
\begin{itemize}
    \item $d=0$: No restriction.
    \item $d\ge 1$: For any variable $x$ in $X=\text{Var}(f)$, there exists a variable $y=\psi(x)$ different from $x$, such that for any $\alpha\in\{0,1\}^X$, if $\alpha_x=\alpha_y=0$($\alpha_x=\alpha_y=1$ respectively) then $f(\alpha)=0$. Besides, the arity $2d-2$ signature $f^{x=0,y=1}$ is 0-rebalancing($f^{x=1,y=0}$ is \ba[1] respectively).
\end{itemize}
For completeness we view all nontrivial signatures of arity 0, which is a non-zero constant, as \ba[0](\ba[1]) signatures. Moreover, an \eo\ signature set $\mathcal{F}$ is said to be \ba[0](\ba[1] respectively) if each signature in $\mathcal{F}$ is \ba[0](\ba[1] respectively). 
\end{definition}

\begin{definition}
    For an $\eo$ signature $f$, if there exists $\alpha,\beta,\gamma\in \su(f)$ and $\delta=\alpha\oplus\beta\oplus\gamma$, such that $\delta\in\eoe$ and $\delta\notin \su(f)$, then we say $f$ is a $\exists3\nrightarrow$ signature. If $\delta\in \eosg$ (or $\delta\in \eosl$) instead, we say $f$ is a $\exists 3\uparrow$ signature (or a $\exists 3\downarrow$ signature). If $f$ is neither a $\exists3\nrightarrow$ signature nor a $\exists 3\uparrow$ signature (or a $\exists 3\downarrow$ signature), we say it is a $\mitsudown$ signature (or a $\mitsuup$ signature).
    \label{def:thupthdown}
\end{definition}

\begin{theorem}[\cite{meng2024p,meng2025fpnp}]
      Let $\mathcal{F}$ be a set of \eo\ signatures. Then $\ceo(\mathcal{F})$ is \#P-hard, unless all signatures in $\mathcal{F}$ are $\mitsuup$ signatures or all signatures in $\mathcal{F}$ are $\mitsudown$ signatures, and $\mathcal{F}\subseteq \eom[\mathscr{A}]$ or $\mathcal{F}\subseteq \eom[\mathscr{P}]$, in which cases it is in $\text{FP}^{\text{NP}}$.
      
 In particular, if the aforementioned condition holds, then $\ceo(\mathcal{F})$ is polynomial-time computable, unless $\mathcal{F}$ is neither \ba[0] nor \ba[1].
 \label{thm:eo dichotomy}
 \end{theorem}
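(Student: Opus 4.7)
The plan is to split the proof into three parts: a \#P-hardness direction, membership of the tractable cases in $\pnp$, and the further refinement to $\text{FP}$ inside those cases. My main tool throughout is the equivalence $\ceo(\mathcal{F})\equiv_T\khol(\mathcal{F})$, which makes every $\neq_2$-wiring on the LHS behave like a $\ccsp_2(\neq_2,\cdot)$ variable, so Theorem \ref{thm:cspd neq_2 dichotomy} is directly at my disposal whenever I can realize a concrete $\eo$ signature as an RHS constraint.

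For the hardness direction I would first handle the $\exists 3\nrightarrow$ case. If some $f\in\mathcal{F}$ has $\alpha,\beta,\gamma\in\su(f)$ with $\delta=\alpha\oplus\beta\oplus\gamma\in\eoe\setminus\su(f)$, then gluing a few copies of $f$ through $\neq_2$'s along a suitably chosen perfect pairing should produce a signature with a missing XOR-triple in its support, hence outside both $\mathscr{A}$ and $\mathscr{P}$ on every pairing, forcing \#P-hardness by Theorem \ref{thm:cspd neq_2 dichotomy}. If $\mathcal{F}$ instead contains both a $\exists 3\uparrow$ and a $\exists 3\downarrow$ signature, I would stack them with $\neq_2$ connectors so that an upward imbalance and a downward imbalance cancel and yield a $\exists 3\nrightarrow$ signature in $\gc(\neq_2,\mathcal{F})$, reducing to the previous case. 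Finally, if $\mathcal{F}$ is uniformly $\mitsuup$ (or uniformly $\mitsudown$) but $\mathcal{F}\not\subseteq\eom[\mathscr{A}]$ and $\mathcal{F}\not\subseteq\eom[\mathscr{P}]$, Definition \ref{def:eoaeop} gives some $f$ and a pairing $P$ with $f|_{\eom[P]}\notin\mathscr{A}\cup\mathscr{P}$; reading $P$ as a $\neq_2$-wiring exhibits a hard $\ccsp_2(\neq_2,\cdot)$ subinstance.

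For the $\pnp$ upper bound, assume without loss of generality $\mathcal{F}$ is uniformly $\mitsuup$ and $\mathcal{F}\subseteq\eom[\mathscr{A}]$ (the other three cases are symmetric). The $\mitsuup$ property says that any XOR-triple of support elements either stays in the support or leaks strictly upward, so at every vertex the contributing local orientations form a ``convex-from-below'' set. My plan is to use the NP oracle to identify, per vertex, a lex-maximal reference profile compatible with the global Eulerian constraints; once these references are fixed, the residual partition function becomes an instance of $\ccsp_2(\neq_2,\mathcal{F}')$ where each $\mathcal{F}'$-signature inherits an $\eom[P]$-restriction of some original $f$ and therefore lies in $\mathscr{A}$, tractable by Theorem \ref{thm:cspd neq_2 dichotomy}. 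The oracle calls are confined to polynomially many comparisons selecting the reference profiles, so the whole procedure sits in $\pnp$.

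The refinement from $\pnp$ to $\text{FP}$ under \ba[0] or \ba[1] is the step I expect to be the main obstacle. The rebalancing property (Definition \ref{def:reba}) recursively supplies, for every variable $x$ of every signature, a partner $y=\psi(x)$ for which a specific pinning preserves both the $\eo$ structure and the rebalancing property. My plan is to turn $\psi$ into a deterministic contraction: induct on arity to show that the partition function factors through the canonical pinnings dictated by $\psi$, so that a bottom-up dynamic program computes $\ceo_\Omega$ in polynomial time without ever invoking the oracle. The converse, namely that failure of both \ba[0] and \ba[1] actually requires the oracle unless $\text{PH}$ collapses, is the technically delicate part: I would attempt an interpolation-style reduction from an $\pnp$-complete optimization problem into $\ceo$ instances whose non-zero contributions encode oracle queries, and the delicate point is producing such gadgets without exiting the $\mitsuup$/$\eom[\mathscr{A}]$ regime that is still required for the other half of the dichotomy.
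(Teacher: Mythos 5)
This statement is not proved in the paper at all: it is Theorem~\ref{thm:eo dichotomy}, quoted verbatim from \cite{meng2024p,meng2025fpnp} as an imported black box (the paper's own contributions build \emph{on top of} it, via Corollary~\ref{cor:eol eog dichotomy} and Theorem~\ref{thm:single weighted dichotomy}). So there is no in-paper proof to match, and what you are sketching is a re-derivation of an entire prior dichotomy; judged on its own terms, the sketch has genuine gaps. In the hardness direction, you invoke Theorem~\ref{thm:cspd neq_2 dichotomy}, but that theorem classifies $\ccsp_2(\neq_2,\cdot)$, and nothing in $\ceo(\mathcal{F})\equiv_T\hol(\neq_2\mid\mathcal{F})$ lets you simulate $\ccsp_2$ for free: you would first have to realize equality-type signatures of all even arities (or pinnings) on the RHS from an arbitrary $\eo$ set, and ``gluing a few copies of $f$ through $\neq_2$'s'' is precisely the gadget/interpolation work that constitutes the bulk of the cited papers, not a step you can wave through. (The observation that a missing XOR-triple forces the support to be non-affine, hence $f\notin\mathscr{A}\cup\mathscr{P}$, is fine, but it only matters once you can actually place $f$ into a $\ccsp_2$-style instance.) The $\pnp$ membership argument also does not engage the actual tractability condition: $\eom[\mathscr{A}]$ constrains the restrictions $f|_{\eom[P]}$ for every perfect pairing $P$, and after fixing your ``lex-maximal reference profiles'' the residual sum over Eulerian orientations is not an instance of $\ccsp_2(\neq_2,\mathcal{F}')$; you give no mechanism by which the global $\eoe$ constraints at each vertex turn into pairing restrictions, nor how polynomially many oracle comparisons suffice.

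For the final clause, note that you are trying to prove something the theorem does not assert. The statement only says: \emph{if} the tractable condition holds \emph{and} $\mathcal{F}$ is \ba[0] or \ba[1], then $\ceo(\mathcal{F})$ is in FP; it makes no claim that failure of both rebalancing properties forces genuine use of the NP oracle, so your ``converse via an interpolation-style reduction from an $\pnp$-complete problem'' is both unnecessary and almost certainly not provable unconditionally. Your dynamic-programming idea for the rebalancing case is plausible in spirit (the map $\psi$ does supply canonical pinnings), but as written it is an intention, not an argument --- in particular you never show that the partition function factors through those pinnings on an arbitrary signature grid rather than on a single signature. If you need this theorem, cite \cite{meng2024p,meng2025fpnp} as the paper does; if you want to prove it, the hardness gadgetry and the oracle-based algorithm each require a development far beyond this outline.
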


For a signature $f$, we use $f|_{\eo}$ to denote the signature which satisfies that $f|_{\eo}(\alpha)=f(\alpha)$ if $\alpha\in\eoe$, and $f|_{\eo}(\alpha)=0$ otherwise. This notation is consistent with Definition \ref{def:eoaeop} by regarding the symbol \eo\ as $\eoe$. We use $\mathcal{F}|_{\eo}$ to denote the set $\{f|_{\eo}\mid f\in \mathcal{F}\}$. A direct corollary of Theorem \ref{thm:eo dichotomy} is the dichotomy for $\khol$ problems defined by $\eog$ (or $\eol$) signatures.

\begin{corollary}[\cite{meng2025fpnp}]\label{cor:eol eog dichotomy}
    Suppose $\mathcal{F}$ is a set of $\eog$ (or $\eol$ respectively) signatures. Then $\khol(\mathcal{F})$ is \#P-hard, unless all signatures in $\mathcal{F}|_{\eo}$ are $\mitsuup$ signatures or all signatures in $\mathcal{F}|_{\eo}$ are $\mitsudown$ signatures, and $\mathcal{F}|_{\eo}\subseteq \eom[\mathscr{A}]$ or $\mathcal{F}|_{\eo}\subseteq \eom[\mathscr{P}]$, in which cases it is in $\text{FP}^{\text{NP}}$.
\end{corollary}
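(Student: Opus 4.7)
The plan is to show that, for an $\eog$ signature set $\mathcal{F}$, the problem $\khol(\mathcal{F})$ is literally equal (as a partition function on every instance) to $\ceo(\mathcal{F}|_{\eo})$, and then invoke Theorem~\ref{thm:eo dichotomy} directly. The $\eol$ case is symmetric, either by the same argument or by flipping $0\leftrightarrow 1$ via a holographic transformation preserving $\neq_2$.

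The key step is a global counting argument. Given an instance $\Omega$ of $\khol(\mathcal{F})=\hol(\neq_2\mid\mathcal{F})$, any assignment $\sigma$ whose contribution is nonzero must satisfy every LHS $\neq_2$, so each edge of the bipartite graph pairs with its sibling to contribute exactly one $0$-endpoint and one $1$-endpoint at the RHS vertices. Summing over RHS vertices $v$, I would observe
\begin{equation*}
\sum_{v}\#_1(\sigma|_{E(v)})=\sum_{v}\#_0(\sigma|_{E(v)})=|E_{\mathrm{RHS}}|,
\end{equation*}
where the common value counts the total number of edge-endpoints carrying a $1$ (respectively $0$) on the RHS. Since every $f\in\mathcal{F}$ is $\eog$, a nonzero local contribution at $v$ forces $\#_1(\sigma|_{E(v)})\geq\#_0(\sigma|_{E(v)})$. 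Combined with the global equality, this forces $\#_1(\sigma|_{E(v)})=\#_0(\sigma|_{E(v)})$ at every vertex $v$, so that $\sigma$ restricted to $E(v)$ lies in $\eoe$. At such $\sigma$ the local value is exactly $f|_{\eo}(\sigma|_{E(v)})$; conversely, any assignment contributing to $\ceo(\mathcal{F}|_{\eo})$ automatically satisfies all $\neq_2$ constraints in $\khol(\mathcal{F})$ and produces the same local value. Hence $\khol(\mathcal{F})$ and $\ceo(\mathcal{F}|_{\eo})$ compute the same partition function on every instance, yielding the Turing equivalence in both directions for free.

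With this equivalence, the corollary follows immediately from Theorem~\ref{thm:eo dichotomy} applied to $\mathcal{F}|_{\eo}$: the tractability (membership in $\pnp$) and the \#P-hardness conditions transfer verbatim, since they are stated purely in terms of $\mathcal{F}|_{\eo}$. For the $\eol$ case, one may either rerun the argument with the roles of $0$ and $1$ swapped, or apply the holographic transformation by $\widehat{Q}=\bigl(\begin{smallmatrix}0&1\\1&0\end{smallmatrix}\bigr)$ from Lemma~\ref{lem:quanxi} to reduce to the $\eog$ case, noting that swapping $0\leftrightarrow 1$ sends $\mitsuup$ signatures to $\mitsudown$ signatures and preserves membership in $\eom[\mathscr{A}]$ and $\eom[\mathscr{P}]$.

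There is no real obstacle; the only subtlety is making sure the reduction is an equivalence rather than a one-way reduction, which is why I emphasize that the equality of partition functions holds instance-by-instance. The rest is bookkeeping to transport the classification from $\ceo(\mathcal{F}|_{\eo})$ back to $\khol(\mathcal{F})$.
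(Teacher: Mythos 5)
Your argument is correct and matches the paper's intent: the paper states this as a direct corollary of Theorem~\ref{thm:eo dichotomy} (citing \cite{meng2025fpnp}), the underlying observation being exactly yours — in the $\khol$ setting the $\eosg$ part of $\eog$ signatures can never contribute, so $\khol(\mathcal{F})$ coincides with $\ceo(\mathcal{F}|_{\eo})$ and the EO dichotomy transfers, with the $\eol$ case handled by the $0/1$ flip. The only cosmetic caveat is instances using odd-arity $\eog$ signatures, where your counting argument shows the value is $0$ rather than producing a literal $\ceo$ instance; this does not affect the equivalence.
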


At last we introduce the dichotomy for $\khol$ problems defined by single-weighted signatures. Suppose $f$ is a signature of arity $k$. If $f$ takes the value $0$ on all input strings whose Hamming weight is not equal to $d$, where $d$ is an integer satisfying $0\leq d\leq k$, then we say $f$ is a \textit{\sw}\ signature.

For each single-weighted signature $f\in \mathcal{F}$ of arity $k$ which can only take non-zero values at Hamming weight $d,0\le d\le k$, let 
\begin{equation}
f_{\to\eo}=\begin{cases}
f\otimes \Delta_0^{2d-k} & 2d\ge k;\\
f\otimes \Delta_1^{k-2d}, & 2d<k.\notag
\end{cases}
\end{equation}
 Let $\mathcal{F}_{\to\eo}=\{f_{\to\eo}|f\in\mathcal{F}\}$. Then we can state the dichotomy for $\khol$ problems defined by single-weighted signatures.
\begin{theorem}[\cite{meng2025fpnp}]\label{thm:single weighted dichotomy}
    Suppose $\mathcal{F}$ is a set of \sw\ signatures. Then $\khol(\mathcal{F})$ is \#P-hard, unless one of the following holds, in which cases it is in $\text{FP}^{\text{NP}}$.
    \begin{enumerate}
        \item All signatures in $\mathcal{F}$ are $\eog$ (or $\eol$ respectively) signatures.  In addition, all signatures in $\mathcal{F}|_{\eo}$ are $\mitsuup$ signatures or all signatures in $\mathcal{F}|_{\eo}$ are $\mitsudown$ signatures, and $\mathcal{F}|_{\eo}\subseteq \eom[\mathscr{A}]$ or $\mathcal{F}|_{\eo}\subseteq \eom[\mathscr{P}]$;
        \item There exist a signature that is not $\eog$ and a signature that is not $\eol$ belonging to $\mathcal{F}$.  In addition, All signatures in $\mathcal{F}_{\to\eo}$ are $\mitsuup$ signatures or all signatures in $\mathcal{F}_{\to\eo}$ are $\mitsudown$ signatures, and $\mathcal{F}_{\to\eo}\subseteq \eom[\mathscr{A}]$ or $\mathcal{F}_{\to\eo}\subseteq \eom[\mathscr{P}]$.
    \end{enumerate}
\end{theorem}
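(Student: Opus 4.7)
The plan is to split on the two cases of the theorem and, in each case, reduce $\khol(\mathcal{F})$ to a problem whose dichotomy is already known.

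\textbf{Case 1: all signatures in $\mathcal{F}$ are $\eog$, or all are $\eol$.} This case follows directly from Corollary \ref{cor:eol eog dichotomy}, once one checks that for single-weighted signatures the restriction $\mathcal{F}|_{\eo}$ is precisely the subset of $f\in \mathcal{F}$ whose Hamming weight equals half the arity (since any $f$ with $2d_f\neq k_f$ has $f|_{\eo}\equiv 0$). The stated tractability condition then matches condition 1 verbatim.

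\textbf{Case 2: there exist $f,g\in\mathcal{F}$ with $2d_f<k_f$ and $2d_g>k_g$.} The goal is to establish
$$\khol(\mathcal{F})\ \equiv_T\ \ceo(\mathcal{F}_{\to\eo})$$
and then apply Theorem \ref{thm:eo dichotomy}; the exceptional tractability and hardness conditions for $\ceo(\mathcal{F}_{\to\eo})$ translate exactly into condition 2 of the statement. The first ingredient is the observation that, in the $\khol$ setting, a $\neq_2$-self-loop $\widehat{\partial_{ij}}$ sends a single-weighted signature of arity $k$ and Hamming weight $d$ to one of arity $k-2$ and Hamming weight $d-1$. Iterating this on $f$ exactly $d_f$ times yields (a nonzero scalar multiple of) $\Delta_0^{\otimes(k_f-2d_f)}$, and iterating on $g$ exactly $k_g-d_g$ times yields $\Delta_1^{\otimes(2d_g-k_g)}$. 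Thus, in Case~2, arbitrary tensor powers of $\Delta_0$ and of $\Delta_1$ in sizes from two fixed positive arithmetic progressions can be realized.

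With these ``bulk pin'' gadgets in hand, one can transform any $f\in\mathcal{F}$ into $f_{\to\eo}=f\otimes\Delta_0^{\otimes(2d_f-k_f)}$ or $f\otimes\Delta_1^{\otimes(k_f-2d_f)}$ by tensoring enough copies of the pin gadgets and, if necessary, pairing surplus pins on the same vertex via $\neq_2$ on the LHS (a $\Delta_0$-$\Delta_1$ pair joined through $\neq_2$ contributes a harmless nonzero scalar). This gives $\ceo(\mathcal{F}_{\to\eo})\le_T\khol(\mathcal{F})$. Conversely, $\khol(\mathcal{F})\le_T\ceo(\mathcal{F}_{\to\eo})$ because each occurrence of $f$ in a $\khol(\mathcal{F})$-instance can be replaced by $f_{\to\eo}$ tensored with the complementary pin power (again realizable by the same self-loop gadgets), and the extra dangling $\neq_2$-edges can be matched to freshly added auxiliary vertices carrying the opposite pin gadget. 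Combining the two directions yields the desired Turing equivalence.

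\textbf{Main obstacle.} The delicate step is the second one: the pin powers produced by self-looping $f$ and $g$ have fixed, usually nonunit, sizes $k_f-2d_f$ and $2d_g-k_g$, whereas $f_{\to\eo}$ may require a different, smaller number of pins. Thus the reduction has to work globally across the instance rather than vertex-by-vertex, using divisibility and the ability to tensor the entire instance with neutral pin pairs to absorb any surplus. Once this bookkeeping is handled, the remainder is a direct invocation of Theorem \ref{thm:eo dichotomy}, whose case analysis on $\mitsuup$/$\mitsudown$ signatures and $\eom[\mathscr{A}]$/$\eom[\mathscr{P}]$ membership is exactly what condition 2 encodes.
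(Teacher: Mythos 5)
First, a point of context: this paper does not prove the statement at all — it is imported verbatim from \cite{meng2025fpnp} — so there is no in-paper proof to compare against; your sketch has to stand on its own. Your overall plan (Case 1 directly from Corollary~\ref{cor:eol eog dichotomy}; Case 2 by relating $\khol(\mathcal{F})$ to $\ceo(\mathcal{F}_{\to\eo})$ and invoking Theorem~\ref{thm:eo dichotomy}) is the natural route and very likely the intended one. But the Case-2 reductions, as you describe them, have genuine gaps. A small one first: ``iterating the $\neq_2$-self-loop $d_f$ times yields a nonzero multiple of $\Delta_0^{\otimes(k_f-2d_f)}$'' is false for an arbitrary choice of self-loops, because the two terms $f(0,1,\mathbf{x})+f(1,0,\mathbf{x})$ can cancel identically; you need the careful choice of pairs as in Lemma~\ref{lem:self-loop to pure1 string} / Corollary~\ref{cor:hw> construct Delta1}, which does repair this step.

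The more serious problems are in the two directions of the claimed equivalence. For $\ceo(\mathcal{F}_{\to\eo})\le_T\khol(\mathcal{F})$ you only allow yourself pin bundles coming from the two witness signatures, i.e.\ $\Delta_0$'s in multiples of $a=k_f-2d_f$ and $\Delta_1$'s in multiples of $b=2d_g-k_g$, with surplus absorbed by $\Delta_0$--$\neq_2$--$\Delta_1$ pairings. This forces the supplied/demanded counts to satisfy $pa-qb=N_0-N_1$ with $p,q\ge0$, which fails whenever $\gcd(a,b)\nmid N_0-N_1$: e.g.\ with $a=b=2$ and a third signature $h\in\mathcal{F}$ of arity $3$ and weight $1$, an instance built from $h_{\to\eo}$ alone needs a single $\Delta_1$ pin, and no combination of size-$2$ bundles plus pairings can provide it (a parity count on pins shows this). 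The fix is to draw self-loop bundles from \emph{all} signatures occurring in the instance, whose excesses $k-2d$ generate exactly the subgroup containing $N_0-N_1$; this is precisely the bookkeeping you flag as the ``main obstacle'' but do not carry out, and your stated mechanism does not suffice. Conversely, for $\khol(\mathcal{F})\le_T\ceo(\mathcal{F}_{\to\eo})$, the proposed ``freshly added auxiliary vertices carrying the opposite pin gadget'' cannot work: every gadget realizable inside $\ceo(\mathcal{F}_{\to\eo})$ has an $\eo$ signature, so any realizable pin bundle is balanced ($\Delta_0^{\otimes s}\otimes\Delta_1^{\otimes s}$) and cannot absorb an imbalance between the $\Delta_0$- and $\Delta_1$-coordinates created by the replacement $f\mapsto f_{\to\eo}$. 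The correct observation is that an imbalanced instance (one with $\sum_v d_v\neq|E|$) has partition function $0$ and can be answered without the oracle, while for balanced instances the extra $\Delta_0$- and $\Delta_1$-coordinates can be paired off exactly by $\neq_2$ edges. With these repairs the strategy goes through, but as written the two key reductions would fail on concrete instances.
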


There is another interesting class of signatures. Suppose $\mathcal{F}$ is a set of signatures. It is called vanishing if the partition functions of all instances of $\hol(\mathcal{F})$ are 0. We characterize this class in Lemma \ref{lem:vanish is eosg or eosl}.

\section{Main results}\label{main}
In this section, we present our main results in detail. The first result is the \textit{generalized decomposition lemma}, which can be applied within the complex-valued $\hol$ framework and provides a powerful tool to make complexity classifications. We give the proof of the following theorem in Section \ref{sec:decomposition lemma}.

\begin{theorem}[Decomposition lemma]\label{thm:decomposition lemma}
    Let $\mathcal{F}$ be a set of signatures and $f$, $g$ be two signatures. Then

$$\hol(\mathcal{F},f,g)\equiv_T\hol(\mathcal{F},f\otimes g)$$ 
holds unless $\widehat{\mathcal{F}}'=\widehat{\mathcal{F}}\cup\{\widehat{f}\otimes\widehat{g}\}$ only contains $\eog$ signatures (or $\eol$ signatures respectively). 
In the latter situation, if all signatures in $\widehat{\mathcal{F}}'|_{\eo}$ are $\mitsuup$ signatures or all signatures in $\widehat{\mathcal{F}}'|_{\eo}$ are $\mitsudown$ signatures, and $\widehat{\mathcal{F}}'|_{\eo}\subseteq \eom[\mathscr{A}]$ or $\widehat{\mathcal{F}}'|_{\eo}\subseteq \eom[\mathscr{P}]$, then $\hol(\mathcal{F},f\otimes g)$ is in $\pnp$. Otherwise it is \#P-hard.
\end{theorem}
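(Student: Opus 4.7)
The easy direction $\hol(\mathcal{F}, f \otimes g) \leq_T \hol(\mathcal{F}, f, g)$ is immediate, since $f\otimes g$ is realized by a two-vertex gadget consisting of $f$ and $g$ with no internal edge. The substantive task is the reverse reduction, and that is what I would target outside the exceptional case. My plan is to first apply the holographic transformation by $K$ (Theorem \ref{thm:holographic transformation equivalence}) to pass from $\hol$ to the $\neq_2$-bipartite setting, since the hypotheses of the theorem ($\eog$, $\eol$, $\mitsuup$, $\mitsudown$) are stated naturally in the $\khol$ framework where Hamming weight is preserved under $\neq_2$-contraction. The problem then becomes showing $\khol(\widehat{\mathcal{F}}, \widehat{f}, \widehat{g}) \leq_T \khol(\widehat{\mathcal{F}}, \widehat{h})$, where $\widehat{h} = \widehat{f}\otimes\widehat{g}$.

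The driving idea is to simulate each occurrence of $\widehat{f}$ in an instance by a copy of $\widehat{h}$ whose $\widehat{g}$-block is \emph{neutralized} by a gadget built from $\widehat{\mathcal{F}}\cup\{\widehat{h}\}$. Concretely, if I can realize a signature $\widehat{g}^{\dagger}$ on the $b=\ari(\widehat{g})$ variables of $\widehat{g}$ such that contracting it with $\widehat{g}$ through parallel $\neq_2$'s produces a nonzero scalar, then attaching $\widehat{g}^{\dagger}$ to the $\widehat{g}$-block of a fresh copy of $\widehat{h}$ realizes a scalar multiple of $\widehat{f}$. The first candidate for $\widehat{g}^{\dagger}$ is $\widehat{g}$ itself, obtained from another copy of $\widehat{h}$ after eliminating its $\widehat{f}$-block; iterating over many copies of $\widehat{h}$ and combining with polynomial interpolation via Lemma \ref{lem:interpolation} then lets me extract $\widehat{f}$ from any realized scalar multiple of $\widehat{f}^{\otimes k}$. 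The symmetric construction extracts $\widehat{g}$.

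The main obstacle is to show that this extraction fails only in the exceptional cases of the theorem. A single $\neq_2$-matching between two $\widehat{g}$-blocks computes $\sum_{\beta}\widehat{g}(\beta)\widehat{g}(\overline{\beta})$, which vanishes whenever $\widehat{g}$ is $\eosg$ or $\eosl$ and can still vanish under a $\eog$/$\eol$ constraint because the sum is forced onto EO-paired support. To handle these obstructions I would broaden the gadget by inserting any available auxiliary signature $\widehat{k}\in\widehat{\mathcal{F}}$ between the two contracted blocks: a $\widehat{k}$ that fails to be $\eog$ injects support of Hamming weight below half, while a $\widehat{k}'$ that fails to be $\eol$ injects support of Hamming weight above half, and the two together produce a non-degenerate contraction. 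The heart of the proof---and the step I expect to be the main technical burden---is a case analysis, guided by unique prime factorization (Lemma \ref{lem:upf}) and by Lemmas \ref{lem:shaoshuai lemma 3.9} and \ref{lem:wangjuqiuwangle}, showing that whenever $\widehat{\mathcal{F}}'$ is neither uniformly $\eog$ nor uniformly $\eol$, some gadget of this broader form neutralizes $\widehat{g}$ (and symmetrically some neutralizes $\widehat{f}$), so both extractions succeed and the equivalence follows.

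Finally, in the exceptional case where $\widehat{\mathcal{F}}'$ is uniformly $\eog$ (or uniformly $\eol$), the extraction must be abandoned, but then $\hol(\mathcal{F}, f\otimes g) \equiv_T \khol(\widehat{\mathcal{F}},\widehat{h})$ is a $\khol$ problem defined entirely by $\eog$ (respectively $\eol$) signatures, which puts us squarely in the regime of Corollary \ref{cor:eol eog dichotomy}. That corollary classifies the complexity directly: the problem is in $\pnp$ precisely when $\widehat{\mathcal{F}}'|_{\eo}$ is uniformly $\mitsuup$ or uniformly $\mitsudown$ and is contained in $\eom[\mathscr{A}]$ or $\eom[\mathscr{P}]$, and is \#P-hard otherwise. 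This matches the statement of the theorem exactly and completes the plan.
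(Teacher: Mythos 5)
Your handling of the exceptional case coincides with the paper's: when $\widehat{\mathcal{F}}'$ is uniformly $\eog$ (or $\eol$), pass to $\khol(\widehat{\mathcal{F}},\widehat{f}\otimes\widehat{g})$ and invoke Corollary \ref{cor:eol eog dichotomy}; and moving to the $\khol$ setting is also the paper's first step (justified there by Remark \ref{remark:holo doesn't change decomposition}). The gap is in the main, non-exceptional reduction. Your plan is to realize a nonzero multiple of $\widehat{f}$ as an actual gadget by ``neutralizing'' the $\widehat{g}$-block of $\widehat{h}=\widehat{f}\otimes\widehat{g}$, but the construction of the neutralizer is circular: you propose to obtain $\widehat{g}^{\dagger}=\widehat{g}$ from another copy of $\widehat{h}$ ``after eliminating its $\widehat{f}$-block,'' which is exactly the extraction problem you have not yet solved, only with the roles of $f$ and $g$ swapped. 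The fallback---pairing blocks of two copies of $\widehat{h}$ through $\neq_2$ and inserting auxiliary signatures from $\widehat{\mathcal{F}}$ when the pairing vanishes---is precisely where all the difficulty sits, and you leave that case analysis as something you ``expect'' rather than prove; nothing in the proposal shows that failure of every such gadget forces $\widehat{\mathcal{F}}'$ to be uniformly $\eog$ or uniformly $\eol$, and it is far from clear that a statement of that strength holds at the gadget level, since the known decomposition arguments are not gadget realizations at all. A smaller but real error: Lemma \ref{lem:interpolation} realizes $[1,-1]$ from many pairwise independent unary signatures and is not a tool for recovering $\widehat{f}$ from a multiple of $\widehat{f}^{\otimes k}$; the relevant statement is Lemma \ref{lem:singledecomposition}, and even that presupposes you have already neutralized every $\widehat{g}$-block, which is the open point.

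What is missing is the mechanism the paper actually uses, which bypasses gadget realization entirely. By Corollary \ref{lem: not vanishing decompose} it suffices that $f$ (resp.\ $g$) be non-vanishing, and by Lemma \ref{lem:nonzero instance decompose} it otherwise suffices to exhibit a single instance with nonzero partition function in which $f$ occurs a positive number of times alongside signatures of $\mathcal{F}\cup\{f\otimes g\}$. The paper supplies exactly the ingredients needed for this: Lemma \ref{lem:vanish is eosg or eosl} shows that if $f$ is vanishing then $\widehat{f}$ is $\eosg$ or $\eosl$; Corollary \ref{cor:hw> construct Delta1} then extracts $\lambda\Delta_1^{\otimes r}$ or $\lambda\Delta_0^{\otimes r}$ from $\widehat{f}$ by $\neq_2$ self-loops; and, in the non-exceptional case, Lemma \ref{lem:self-loop to pure1 string} produces from $\widehat{\mathcal{F}}'$ signatures $h_0$, $h_1$ with $h_0(\mathbf{0})\neq 0$ and $h_1(\mathbf{1})\neq 0$, which are matched against those $\Delta$-powers to build the required nonzero instance. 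Your proposal never addresses vanishing signatures and never invokes the occurrence-counting lemmas, so the substantive direction of the equivalence remains unestablished; the framing and the exceptional case are right, but the heart of the proof is not there.
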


The second result is the dichotomy for complex-valued $\holodd$. The proof of it is in Section \ref{sec:proof of main}. 

\begin{theorem}[Dichotomy for $\holodd$]\label{thm:main theorem}
    Let $\mathcal{F}$ be a set of signatures that contains a non-trivial signature of odd arity, then $\hol(\mathcal{F})$ is \#P-hard unless:
    \begin{enumerate}
        \item $\widehat{\mathcal{F}}$ only contains $\eog$ signatures (or $\eol$ signatures respectively). All signatures in $\widehat{\mathcal{F}}|_{\eo}$ are $\mitsuup$ signatures or all signatures in $\widehat{\mathcal{F}}|_{\eo}$ are $\mitsudown$ signatures, and $\widehat{\mathcal{F}}|_{\eo}\subseteq \eom[\mathscr{A}]$ or $\widehat{\mathcal{F}}|_{\eo}\subseteq \eom[\mathscr{P}]$;
        \item All signatures in $\widehat{\mathcal{F}}$ are single-weighted. There exist a signature that is not $\eog$ and a signature that is not $\eol$ belonging to $\widehat{\mathcal{F}}$.  In addition, all signatures in $\widehat{\mathcal{F}}_{\to\eo}$ are $\mitsuup$ signatures or all signatures in $\widehat{\mathcal{F}}_{\to\eo}$ are $\mitsudown$ signatures, and $\widehat{\mathcal{F}}_{\to\eo}\subseteq \eom[\mathscr{A}]$ or $\widehat{\mathcal{F}}_{\to\eo}\subseteq \eom[\mathscr{P}]$;

        \item $\mathcal{F}\subseteq\langle\mathcal{T}\rangle$;

        \item $\widehat{\mathcal{F}}\subseteq\langle \mathcal{M}\rangle$ or $\widehat{\mathcal{F}}\subseteq\langle X\mathcal{M}\rangle$;

        \item $\mathcal{F}$ is $\mathscr{A}$-transformable;

        \item $\mathcal{F}$ is $\mathscr{P}$-transformable;

        \item $\mathcal{F}$ is $\mathscr{L}$-transformable;
    \end{enumerate}
    In case 1, 2, $\hol(\mathcal{F})$ is in $\pnp$; in case 3-7, $\hol(\mathcal{F})$ is polynomial time computable.  We denote Case 1-7 as condition ($\mathcal{PC}$).
\end{theorem}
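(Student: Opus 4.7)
The plan is to transfer the problem into the $\khol$ setting via the holographic transformation by $K^{-1}$, so that $\hol(\mathcal{F})\equiv_T\khol(\widehat{\mathcal{F}})$ and $\widehat{\mathcal{F}}$ still contains a non-trivial signature $\widehat{f}$ of odd arity. In this setting the RHS-preserving transformations are exactly those in Lemma~\ref{lem:quanxi}, so I can analyze the support of signatures cleanly. The goal then splits into two fronts: (a) if enough structure can be realized from $\widehat{f}$, reduce to $\hol^c(\mathcal{F})$ and invoke Theorem~\ref{thm:holantc dichotomy}, which will cover the tractable cases 3--7; (b) if almost nothing useful can be realized, show that $\widehat{\mathcal{F}}$ must already sit inside an $\eog/\eol$ class or must be single-weighted, which places us inside the $\pnp$ branches of Corollary~\ref{cor:eol eog dichotomy} or Theorem~\ref{thm:single weighted dichotomy} (i.e.\ cases 1 and 2).

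Next I would apply the generalized decomposition lemma (Theorem~\ref{thm:decomposition lemma}) to realized signatures. Whenever a signature factors as $f=f_1\otimes\cdots\otimes f_k$, either each $f_i$ may be taken individually available (so we may WLOG assume every signature in $\widehat{\mathcal{F}}$ is irreducible, after closing under self-loops and gadget realization), or the problem itself is already in $\pnp$ and captured by case~1 via the $\ceo$ dichotomy. Under this WLOG reduction, $\widehat{f}$ is an irreducible odd-arity signature; if $\ari(\widehat{f})=1$ we are already in case~3 after composing with $\mathcal{T}$-gadgets, and if $\ari(\widehat{f})=3$ the SLOCC classification gives two templates, GHZ-type and W-type, each of which I would handle separately. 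For higher odd arities, I would repeatedly form $\neq_2$-self-loops to reduce to a ternary or unary irreducible factor; Lemma~\ref{lem:wangjuqiuwangle} guarantees that if every such self-loop yields~$0$, then the original $\widehat{f}$ is supported only on $0^n$ and $1^n$, which is already single-weighted and falls into case~2 or a transformation case.

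For the GHZ subcase, one can realize (after an $\widehat{Q}$-transformation preserving $\neq_2$) a signature equivalent to $=_3$ in the $\khol$ picture; joined with $=_2$ this gives full $\#\csp$-capability, whereupon Theorem~\ref{thm:CSPdichotomy} together with Theorem~\ref{thm:csp2 dichotomy} / Theorem~\ref{thm:cspd neq_2 dichotomy} pins down the remaining tractability conditions as cases~5, 6, 7 (the $\mathscr{A}$-, $\mathscr{P}$-, $\mathscr{L}$-transformable cases). For the W-type subcase, the standard strategy is to attempt to realize $\Delta_0$ or $\Delta_1$ through gadgets; if pinning is realized, Theorem~\ref{thm:holantc dichotomy} closes out the argument directly and produces cases 3--7 in their full generality. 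If pinning is \emph{not} realizable, the W-state gadget only produces matching-type signatures, which either shows $\widehat{\mathcal{F}}\subseteq\langle \mathcal{M}\rangle$ or $\langle X\mathcal{M}\rangle$ (case 4), or produces a signature violating this containment, from which $\hol^c$-style hardness reductions apply via Lemma~\ref{lem:shaoshuai lemma 3.9} together with interpolation (Lemma~\ref{lem:interpolation}).

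The main obstacle will be the W-type / matching branch: here pinning signatures are not a priori available, and one must extract hardness (or the precise tractable containment) from a combinatorially delicate analysis of W-state compositions, using the decomposition lemma to strip tensor factors and Lemma~\ref{lem:shaoshuai lemma 3.9} to force either a collapse to $\equiv 0$ or a non-matching signature that unlocks $\hol^c$-hardness. A secondary obstacle is ensuring that every residual case where the decomposition lemma fails to produce equivalence lies squarely inside the $\pnp$ islands of cases 1 and 2; this requires carefully checking that the hypotheses of Corollary~\ref{cor:eol eog dichotomy} and Theorem~\ref{thm:single weighted dichotomy} are met whenever the decomposition lemma stalls, so that the $\pnp$ classification is genuinely attained rather than merely conjectured. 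Once these two branches are settled, the seven enumerated cases cover all surviving possibilities, and the remaining $\hol(\mathcal{F})$ instances are \#P-hard.
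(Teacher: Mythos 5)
Your high-level frame (self-loops on the odd-arity signature, GHZ/W split for an irreducible ternary, the decomposition lemma to enforce irreducibility, and the $\eog/\eol$ and single-weighted dichotomies as the $\pnp$ escape hatches) matches the paper's strategy in outline, but the proposal has concrete gaps and at least one incorrect step. First, when every $\neq_2$-self-loop on $\widehat{f}$ vanishes, Lemma \ref{lem:wangjuqiuwangle} leaves you with $\widehat{f}=[a,0,\dots,0,b]_{2k+1}$, $ab\neq 0$; this signature has support at Hamming weights $0$ and $2k+1$, so it is \emph{not} single-weighted and does not fall into case~2 as you claim. The paper handles it by realizing $=_k$ on the RHS after a $\neq_2$-preserving diagonal transformation and reducing to $\ccsp_k(\neq_2,\widehat{Q}\widehat{\mathcal{F}})$ (Lemmas \ref{lem:khol to ccspk}--\ref{lem:youduoyuanxiangdeng suanfa}, Theorem \ref{thm:cspd neq_2 dichotomy}), a component absent from your plan. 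Second, obtaining a unary signature does not place you ``in case~3'': the crucial split is whether the unary is a multiple of $[1,\pm\ii]$. If not, an orthogonal transformation yields $\pin_0$ and one proves the FP vs.\ \#P dichotomy of Lemma \ref{lem:hol0 dichotomy}; if it is, you only get $\pin_0$ or $\pin_1$ on the $\khol$ side, and this branch (the paper's Section \ref{sec:khol}: Lemmas \ref{lem:odd case2 case3 fenlei}--\ref{lem:kholantc to holantc with neq}) is precisely where $\khol^c$, the single-weighted dichotomy and Corollary \ref{cor:eol eog dichotomy} genuinely enter; your sketch essentially skips it.

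Third, your GHZ argument is not sound as stated: transformations preserving $\neq_2$ are only diagonal or anti-diagonal (Lemma \ref{lem:quanxi}), so a general GHZ-type signature $M^{\otimes 3}[1,0,0,1]$ cannot be converted to $=_3$ that way. The paper applies a general holographic transformation by $M$ and then needs the $\omega$-normalization machinery of \cite{CHL2012symholantc} (Lemmas \ref{lem:symGHZisgeneric}--\ref{lem:symmetric GHZ goto CSP}) to reach $\ccsp$, plus Lemma \ref{lem:orthogonal does not change tractable} / Corollary \ref{cor:no pc no others} to know the transformed family still violates condition ($\mathcal{PC}$) -- an invariance step your proposal never addresses but which is required to convert hardness of the transformed problem into hardness of $\hol(\mathcal{F})$. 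Finally, in the W-type branch the obstacle you flag is exactly where your tools break: when $\widehat{f}$ is a multiple of $[-1,1,0,0]$ and every other realizable low-arity signature lies in $\{\pin_0,[1,1],[0,1,0],[-1,1,0,0]\}$, the interpolation of $[1,-1]$ you invoke (Lemma \ref{lem:interpolation} via an upper-triangular binary as in Lemma \ref{lem:no up triangle}) is unavailable, and Lemma \ref{lem:shaoshuai lemma 3.9} does not substitute for it. The paper resolves this with a dedicated analysis (Lemmas \ref{lem:Rnounary}--\ref{lem:yi dingyou unary or binary or tenary}), including a non-orthogonal transformation by $N=\begin{pmatrix}1&1\\0&1\end{pmatrix}$ and an inheritance-property support argument showing either $\widehat{\mathcal{F}}\subseteq\langle\mathcal{R}\rangle\subseteq\langle\mathcal{M}\rangle$ (case~4, tractable) or a low-arity signature outside $\mathcal{R}$ is realized. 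Without these pieces the proposal is a plausible roadmap but not a proof.
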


\begin{remark}

    In case 1, 2 of condition ($\mathcal{PC}$), there are parts of situations where the problem is actually polynomial time computable. One typical example is the vanishing signature. By Lemma \ref{lem:vanish is eosg or eosl}, if $\widehat{\mathcal F}$ is $\eosl$ or $\eosg$, or equivalently $\mathcal{F}$ is vanishing, then $\hol(\mathcal{F})$ is polynomial-time computable. We do not address these classes in this article, and the detailed information can be found in \cite{meng2024p,meng2025fpnp}.
\end{remark}


The proof of Theorem \ref{thm:main theorem} consists of two parts. Firstly, we prove a dichotomy for complex-valued \hol\ when $\pin_0$ is available, stated as Lemma \ref{lem:hol0 dichotomy} in the following. In particular, we remark that it is a traditional FP vs. \#P dichotomy. The reduction map of it is presented as Figure \ref{fig:hol d0 map}. Then we prove the dichotomy for complex-valued $\holodd$ based on this dichotomy, whose reduction map is presented as Figure \ref{fig:hol odd map}.

\begin{lemma}
\label{lem:hol0 dichotomy}
    Let $\mathcal{F}$ be a set of signatures, then $\hol(\mathcal{F},\Delta_0)$ is \#P-hard unless:
    \begin{enumerate}
        \item $\mathcal{F}\subseteq\langle\mathcal{T}\rangle$;

        \item $\mathcal{F}\subseteq\langle K\mathcal{M}\rangle$ or $\mathcal{F}\subseteq\langle KX\mathcal{M}\rangle$;

        \item $\mathcal{F}\cup\{\Delta_0\}$ is $\mathscr{A}$-transformable;

        \item $\mathcal{F}$ is $\mathscr{P}$-transformable;

        \item $\mathcal{F}\cup\{\pin_0\}$ is $\mathscr{L}$-transformable;
    \end{enumerate}
    in which cases, $\hol(\mathcal{F},\Delta_0)$ is polynomial-time computable.
\end{lemma}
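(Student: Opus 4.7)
The plan is to reduce to the known $\hol^c$ dichotomy (Theorem \ref{thm:holantc dichotomy}). The overall structure has two parts. On the tractable side, each of the five listed cases admits a polynomial-time algorithm: Case 1 by sequential matrix multiplication on tensor products of $\mathcal{T}$-signatures; Case 2 by matchgate algorithms after the $K$- or $KX$-transformation; Cases 3 and 5 by affine or local-affine $\ccsp$ algorithms after the specified transformation that preserves $\Delta_0$; and Case 4 by the $\mathscr{P}$-transformable tractability of $\hol$, noting that $\Delta_0$ transformed by $M^{-1}$ remains a unary and thus lies in $\mathscr{P}$. On the hard side, I would show that whenever $\mathcal{F}$ fits into none of the five cases, $\Delta_1$ can be realized as an $(\mathcal{F}\cup\{\Delta_0\})$-gate. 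This yields $\hol^c(\mathcal{F})\leq_T\hol(\mathcal{F},\Delta_0)$, and combined with the observation that each tractable case of Theorem \ref{thm:holantc dichotomy} implies one of the five listed Lemma cases (Case 1 of the Lemma is identical to the theorem's Case 1; Case 2 to the theorem's Case 3; Cases 3 and 5 are weakenings of the theorem's Cases 4 and 5 obtained by removing the requirement that $\Delta_1$ be preserved; and Case 4 follows from the theorem's Case 2 since unary signatures always lie in $\mathscr{P}$), the failure of all five Lemma cases forces $\hol^c(\mathcal{F})$ to be \#P-hard by Theorem \ref{thm:holantc dichotomy}, whence so is $\hol(\mathcal{F},\Delta_0)$.

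Concretely, for the hardness direction, the first step is direct pinning: for each $f\in\mathcal{F}$ of arity $n$ and each index $i$, pinning all other variables to $0$ via $\Delta_0$ yields the unary $[f(\mathbf{0}),f(e_i)]$, where $e_i$ denotes the weight-$1$ string with its $1$ in position $i$. If some such unary is non-parallel to $\Delta_0$, it is convenient to pass to the $\khol$ picture via $K^{-1}$, so that $\Delta_0$ becomes the unary $[1,1]$ and $\Delta_1$ becomes $[1,-1]$ up to scalar; we then combine copies of the realized unary with signatures in $\widehat{\mathcal{F}}$ and $\neq_2$ self-loops to generate the polynomial family of pairwise linearly independent unaries required by Lemma \ref{lem:interpolation}, whose application yields $[1,-1]$ and hence $\Delta_1$. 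The translation back of Theorem \ref{thm:holantc dichotomy}'s hardness — together with the one-way implications among the tractable classes described above — completes the \#P-hardness argument.

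The main obstacle is the degenerate subcase in which every single-variable pinning produces a unary parallel to $\Delta_0$, i.e.\ $f(e_i)=0$ for all $f\in\mathcal{F}$ and all indices $i$. Here Lemma \ref{lem:shaoshuai lemma 3.9} is central: if further every $f^{x_i=0}\equiv 0$, then $f$ is supported only on $\mathbf{1}$, so $f\propto\Delta_1^{\otimes\ari(f)}\in\langle\mathcal{T}\rangle$, placing $\mathcal{F}$ in Case 1. In the intermediate regime, where the $0$-pinnings are non-zero but the resulting unaries all align with $\Delta_0$, one must combine the unique prime factorization (Lemma \ref{lem:upf}) with iterated gadget constructions — including self-loops between signatures in $\mathcal{F}$ and partial pinnings leaving two or more variables free — to show that each irreducible factor of every $f\in\mathcal{F}$ is forced into a structure compatible with one of Cases 2--5. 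The hardest part of the proof will be this case analysis, as it requires tracking how the local vanishing conditions at low Hamming weights interact with the transformability structures $\mathscr{A}$, $\mathscr{P}$, and $\mathscr{L}$, and identifying precisely why the failure of $\Delta_1$-realizability confines the global signature set to exactly the listed five tractable classes.
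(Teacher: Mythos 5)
Your tractability direction and the mapping of the five cases of Theorem \ref{thm:holantc dichotomy} into the five cases of the lemma are fine, but the hardness direction has a genuine gap at its very core: you assert, without an argument, that whenever $\mathcal{F}$ avoids all five cases you can realize (or interpolate) $\pin_1$ from $\mathcal{F}\cup\{\pin_0\}$, so that $\hol^c(\mathcal{F})\leq_T\hol(\mathcal{F},\pin_0)$ and Theorem \ref{thm:holantc dichotomy} finishes the job. That claim is exactly the difficulty, and the paper's proof is structured precisely to avoid having to prove it. The paper first uses $\pin_0$, self-loops and the generalized decomposition lemma to produce either an irreducible ternary signature, or a quaternary signature yielding $=_4$ (handled through $\ccsp_2$, Lemmas \ref{lem:hol =4 csp2 part1}--\ref{lem:hol =4 csp2 part2} and Theorem \ref{thm:csp2 dichotomy}), or $\pin_1$ (Lemma \ref{lem:2fenlei}); in the ternary branch it classifies the signature as GHZ or W type, symmetrizes (Lemmas \ref{lem:GHZsym}, \ref{lem:Wsym}, \ref{lem:Vasym}), and derives hardness by reducing $\ccsp$ to the problem via Lemma \ref{lem:symGHZto=3} --- not by realizing $\pin_1$ at all. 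Most importantly, in the W-type case with $\widehat{f}=[-1,1,0,0]$ the paper explicitly notes that $\pin_1$ may not even be obtainable by polynomial interpolation; hardness there is established through a delicate case analysis (Lemmas \ref{lem:no up triangle} through \ref{lem:yi dingyou unary or binary or tenary}) in which some branches interpolate $[1,-1]$ but others go through a symmetric GHZ signature and $\ccsp$, and the remaining branch ($\widehat{\mathcal{F}}\subseteq\langle\mathcal{R}\rangle$) is tractable. Your plan collapses all of this into ``generate enough pairwise linearly independent unaries and apply Lemma \ref{lem:interpolation},'' which is not available in precisely these cases, and you give no substitute argument.

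A second, related problem is that your treatment of the degenerate situation is a restatement of the goal rather than a proof. The paragraph invoking Lemma \ref{lem:shaoshuai lemma 3.9} only disposes of the trivial subcase $f\propto\pin_1^{\otimes\ari(f)}$ (and note that $f(e_i)=0$ for all $i$ does not give $f^{x_i=0}\equiv 0$, so even that application is misstated), while the ``intermediate regime'' --- where the realized unaries are all parallel to $\pin_0$ or where interpolation fails --- is exactly where the paper needs the decomposition lemma, the GHZ/W classification, the $=_4$/$\ccsp_2$ branch, and the $[-1,1,0,0]$ analysis. Saying that one ``must combine UPF with iterated gadget constructions to force each factor into Cases 2--5'' identifies the right battlefield but supplies no argument, and since the paper's own resolution of this battlefield does not go through $\hol^c$, there is no reason to believe your single-reduction architecture can be completed as stated.
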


\begin{figure}
    \begin{center}
\begin{tikzpicture}[
  minimum size=5mm,
  sibling distance=4cm, edge from parent/.style={draw,-latex}
  ]
\node[draw] (d0){\hol\ with $\pin_0$}
child {node[draw] (=4)[xshift=-0.2cm]{\hol\ with $=_4$}
child {node[draw](csp2)[yshift=-3cm]{$\ccsp_2$}}}
child {node[draw,align=left](d03){\hol\ with $\pin_0$ and\\ irreducible ternary $f$}
child {node[draw,align=left](ghz){\hol\ with $\pin_0$\\and $f$ of GHZ type}
child {node[draw,align=left](symghz){\hol\ with \\symmetric $f$\\ of GHZ type}
child {node[draw](csp){\ccsp}}
}}
child{node[draw,align=left](w){\hol\ with $\pin_0$\\and $f$ of W type}
child{node[draw,align=left](-1100)[xshift=2cm]{\khol\ with $[1,1]$\\ and $[-1,1,0,0]$\\(or $[0,0,1,-1]$)}}
}
}
child {node[draw](holc) {$\hol^c$}}
;
\draw[-latex] (w) ->  (symghz);
\draw[-latex] (ghz) ->  (w);
\draw[-latex] (w) ->  (holc);
\draw[-latex] (-1100) ->  (holc);
\draw[-latex] (-1100) ->  (symghz);
\end{tikzpicture}
\end{center}
    \caption{The reduction map for \hol\ with $\pin_0$. We use the term 'A$\to$B' to mean the corresponding reduction in our proof. It is noteworthy that a node may have multiple outgoing arcs, and in this case at least one of the reduction holds.}
    \label{fig:hol d0 map}
\end{figure}
\begin{figure}
    \begin{center}
\begin{tikzpicture}[
  minimum size=5mm,
  sibling distance=4cm, edge from parent/.style={draw,-latex}
  ]
\node[draw,align=left](odd){$\holodd$}
child {node[draw] (d0){\hol\ with $\pin_0$}}
child {node[draw,align=left](kd0) {\khol\ with\\ $\pin_0$(or $\pin_1$)}
child {node[draw,align=left](kfc){\khol\ with\\ $\pin_0$ and $\pin_1$}  
child{node[draw](d0d1){$\hol^c$ with $\neq_2$}}
}
child{node[draw](eo)[xshift=-1cm,yshift=-1.5cm]{\ceo}}
}
child {node[draw,align=left](=k)[yshift=-1.5cm] {\khol\ with \\$[a,0,\dots,0,b]_{k},$\\$ab\neq0,k\ge 3$}
child {node[draw](cspk){$\ccsp_k$ with $\neq_2$}}}
;
\draw[-latex] (kd0) ->  (d0);
\draw[-latex] (kfc) ->  (eo);
\draw[-latex] (kfc) ->  (d0);
\draw[-latex] (kfc) ->  (=k);
\end{tikzpicture}
\end{center}
    \caption{The reduction map for $\holodd$. We use the term 'A$\to$B' to mean the corresponding reduction in our proof. It is noteworthy that a node may have multiple outgoing arcs, and in this case at least one of the reduction holds.}
    \label{fig:hol odd map}
\end{figure}

Now we present the proof strategy for Theorem \ref{thm:main theorem}. The proof combines the techniques from the dichotomy for real-valued $\holodd$ \cite{cai2020holantoddarity} and that for complex-valued $\hol^c$ \cite{backens2021full}. The structure of the proof is also a combination of those from the aforementioned references, 
and basically consists of four parts.  In general, the structure can be stated as follows.

\begin{enumerate}
    \item Realize a non-trivial unary signature, and transform it into $\pin_0$ through holographic transformation.
    \item Use $\pin_0$ and self-loops to obtain an irreducible ternary signature $f$.
    \item Realize a symmetric signature $h$ of GHZ type with $f$.
    \item Transform $h$ into $[1,0,0,1]$ through holographic transformation, and reduce a certain \ccsp\ to this problem.
\end{enumerate}

Step 1 comes from \cite{cai2020holantoddarity}, stated as Lemma \ref{lem:1fenlei}. In Step 1, the proof uses self-loop to reduce the arity of the odd signature by 2 at a time, while keeping the obtained signature non-trivial. If this process fails, we deal with this case by \cite[Theorem 5.3]{cai2020holantoddarity} with some slight modifications. Otherwise we may obtain a unary signature.

Step 2 also comes from \cite{cai2020holantoddarity}. In Step 2, we use $\pin_0$, self-loops and the technique of Unique Prime Factorization (UPF) to reduce the arity of an irreducible signature as in \cite{cai2020holantoddarity}. Again we can realize an irreducible ternary signature $f$, or deal with other cases by the dichotomy in \cite{backens2021full} or \cite[Lemma 5.2]{cai2018realholantc} with some slight modifications on the origin proof. This part corresponds to the first part  of Section \ref{sec:hol0}. 

Step 3 comes from \cite{backens2021full}. We do not further reduce the arity of $f$ to 2 as this part of proof in \cite{cai2020holantoddarity} is quite dependent on the properties of real numbers. Rather, we use the knowledge from quantum entanglement theory to classify such $f$ into different types.  We also use the methods developed in \cite{backens2021full} to realize a symmetric signature $h$ of GHZ type for different types of signatures. This part corresponds to the last part  of Section \ref{sec:hol0}. 

Step 4 involves the method in \cite[Theorem 5]{CHL2012symholantc}, which is also employed in \cite{backens2021full}. The only difference between this step and the corresponding proof in \cite{backens2021full} is the inability to realize $\pin_1$. However, this does not introduce any challenges. This part corresponds to the middle part  of Section \ref{sec:hol0}. 

On the other hand, we remark that nontrivial obstacles occur in Step 1, 2 and 3. 

In Step 1, if the obtained signature is a multiple of $[1,\ii]$ or $[1,-\ii]$, then the obtained unary signature may not be transformed into $\pin_0$. This special case does not appear in the setting of real-valued signatures, and consequently we need to analyze it separately. The analysis of this part includes a number of reductions, which induces a complicated  reduction map, as presented in Figure \ref{fig:hol odd map}.  This obstacle is overcome in Section \ref{sec:khol}. We also remark that in these reductions, the dichotomies respectively for $\eog$, $\eol$ and \sw\ signatures are employed, hence the complexity class $\pnp$ is also introduced.

In Step 2, the UPF method employs the decomposition lemma from \cite{LinWang2018plushol}. As the original lemma does not hold for all complex-valued signature sets, this motivated us to prove the generalized version of decomposition lemma, presented as Theorem \ref{thm:easydecompo}.

In Step 3, due to the absence of $\pin_1$, we are not able to realize a symmetric $h$ of GHZ type for some $f$ of W type. Furthermore, for some specific $f$, we may not even simulate $\pin_1$ through polynomial interpolation. By carefully analyzing each possible case, we prove that for such $f$, $\widehat{f}$ is restricted to be either $[-1,1,0,0]$ or $[0,0,1,-1]$. For this specific case, we prove that the form of each other signature is also restricted to a specific form, otherwise again symmetric $h$ of GHZ type or $\pin_1$ can be realized. This proof is challenging, as it involves detailed classification for signatures, a special holographic transformation to restrict the form of each signature, the UPF method for reducing the arity and a complicated analysis for the support of each signature. Finally, the only case left is when all signatures in $\mathcal{F}$ take this specific form, which induces the tractability. The proof of this part is also presented in Figure \ref{fig:hol d0 map}, and involves Lemma \ref{lem:no up triangle}-\ref{lem:yi dingyou unary or binary or tenary}.

\section{Decomposition lemma}\label{sec:decomposition lemma}

In this chapter, we present the \textit{generalized decomposition lemma}, which demonstrates that, under certain conditions, $\hol(f,g,\mathcal{F})\equiv_T\hol(f\otimes g,\mathcal{F})$, otherwise the computational complexity of $\hol(f\otimes g,\mathcal{F})$ is known. The proof of this result relies on some preliminary lemmas, which are presented in the following.

\begin{lemma}[{\cite[Lemma 3.1]{LinWang2018plushol}}]
    For any signature set $\mathcal{F}$ and signature $f$,
    
$$\hol(\mathcal{F},f)\leq_T\hol(\mathcal{F},f^{\otimes d})$$
    
    for all $d\geq1$.
    \label{lem:singledecomposition}
\end{lemma}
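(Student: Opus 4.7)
The plan is to give a polynomial-time Turing reduction based on the classical ``$d$-fold replication'' trick. Let $r=\ari(f)$ and let $\Omega=(G,\pi)$ be a signature grid of $\hol(\mathcal{F},f)$ with $n$ vertices assigned $f$. First, I would form the disjoint union $\Omega^{(d)}$ consisting of $d$ identical copies of $\Omega$; by independence of the copies, $\hol(\Omega^{(d)})=\hol(\Omega)^d$. For each $f$-vertex $v$ of $\Omega$, the disjoint union contains $d$ parallel copies $v^{(1)},\dots,v^{(d)}$, each carrying its own set of $r$ incident edges, for a total of $rd$ edges across the $d$ copies. I would then identify these $d$ vertices as a single vertex $\tilde v$ of degree $rd$ and assign it the signature $f^{\otimes d}$, ordering its variables so that the $i$-th block of $r$ edges corresponds to the original edges of $v^{(i)}$. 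The resulting grid $\Omega^{*}$ is a legitimate instance of $\hol(\mathcal{F},f^{\otimes d})$.

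The first key step is to verify that $\hol(\Omega^{*})=\hol(\Omega)^d$. This follows from a direct expansion of $\prod_v f_v(\sigma|_{E(v)})$: at each merged vertex $\tilde v$, the factor contributed by $f^{\otimes d}$ under any edge assignment is, by definition of the tensor product, exactly the product of the factors contributed by $v^{(1)},\dots,v^{(d)}$ in $\Omega^{(d)}$; the remaining vertices and edges of $\Omega^{*}$ are unchanged from those of $\Omega^{(d)}$. Consequently, a single oracle query to $\hol(\mathcal{F},f^{\otimes d})$ on $\Omega^{*}$ returns $\hol(\Omega)^d$, and the size of $\Omega^{*}$ is linear in the size of $\Omega$, so this step is polynomial-time.

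The remaining step is to recover $\hol(\Omega)$ from its $d$-th power. In the nonnegative-valued setting of \cite{LinWang2018plushol} the unique nonnegative real $d$-th root suffices; in the algebraic complex setting of the present article, $\hol(\Omega)$ lies in a fixed number field determined by the entries of the signatures in $\mathcal{F}\cup\{f\}$, and I would pin down the correct root among the at most $d$ algebraic candidates by additional oracle queries on modified grids (for instance, disjoint unions of $\Omega$ with small test gadgets built from $f^{\otimes d}$) combined with the standard effective representation of algebraic numbers of polynomially bounded height and degree. The main obstacle is precisely this final disambiguation of the $d$-th root over $\mathbb{C}$; by contrast, the combinatorial construction of $\Omega^{*}$ and the identity $\hol(\Omega^{*})=\hol(\Omega)^d$ are essentially routine.
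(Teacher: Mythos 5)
Your replication-and-merge construction is correct: the merged grid $\Omega^{*}$ is a valid instance of $\hol(\mathcal{F},f^{\otimes d})$ and $\mathrm{Z}(\Omega^{*})=\mathrm{Z}(\Omega)^{d}$, and this is exactly the argument behind the cited lemma (the present paper gives no proof; in Lin--Wang's nonnegative-weight setting the proof ends there, because the nonnegative $d$-th root is unique). The gap is entirely in your last step, and the patch you sketch for the complex-valued setting does not work. Since $f^{\otimes d}$ has no interaction between its $d$ blocks, every admissible oracle instance that incorporates copies of $\Omega$ evaluates to $\mathrm{Z}(\Omega)^{j}\cdot c$, where $c$ is the value of the $\Omega$-independent part and $j$ is the number of copies of $\Omega$ used; feasibility forces $d\mid jk$ (with $k$ the number of $f$-occurrences in $\Omega$), i.e.\ $j$ is a multiple of $j_{0}=d/\gcd(d,k)$. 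Hence every query of the kind you propose --- in particular disjoint unions of the merged grid with small test gadgets built from $f^{\otimes d}$ --- returns $\mathrm{Z}(\Omega)^{j_{0}t}\cdot c$ with $c$ a known constant that is identical for all $j_{0}$ candidate values $\zeta\,\mathrm{Z}(\Omega)$, $\zeta^{j_{0}}=1$; such queries cannot tell these candidates apart. The appeal to effective representations of algebraic numbers does not rescue this: all candidate roots have comparable degree and height and may lie in the same number field, so no algebraic side condition singles out the true one. Thus when $d\nmid k$ and $\mathrm{Z}(\Omega)\neq 0$, your reduction does not determine $\mathrm{Z}(\Omega)$.

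The phase-safe mechanism used in this line of work is division by a nonzero, independently computable quantity rather than root extraction. For instance, if $f$ is not vanishing, then $f^{\otimes(d-1)}$ is not vanishing either, and Lemma~\ref{lem: not vanishing decompose} applied with $g=f^{\otimes(d-1)}$ gives $\hol(\mathcal{F},f,f^{\otimes d})\leq_T\hol(\mathcal{F},f^{\otimes d})$ directly, with no root ambiguity; the vanishing case requires a separate analysis, e.g.\ via the characterization in Lemma~\ref{lem:vanish is eosg or eosl} ($\widehat{f}$ being $\eosg$ or $\eosl$). This is precisely the structure of the paper's own generalized decomposition lemma (Theorem~\ref{thm:decomposition lemma}, built on Lemmas~\ref{lem:nonzero instance decompose} and~\ref{lem: not vanishing decompose}), and the fact that that theorem must carry an exceptional $\eog$/$\eol$ case is a signal that the obstacle you hit is genuine and is not removable by additional disjoint-union queries. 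If you only intend to prove the lemma in the nonnegative-weight setting of the cited source, your first step together with the unique nonnegative $d$-th root already constitutes the complete (and the original) proof; as a proof over $\mathbb{C}$, the final step stands as a real gap.
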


\begin{lemma}[{\cite[Lemma 3.2]{LinWang2018plushol}}]\label{lem:nonzero instance decompose}
Let $\mathcal{F}$ be a set of signatures, and $f$, $g$ be two signatures. Suppose that
there exists an instance $I$ of $\hol(\mathcal{F},f, g)$ such that Z$(I)\neq0$, and the number
of occurrences of $g$ in $I$ is greater than that of $f$. Then

$$\hol(\mathcal{F},f,f\otimes g)\leq_T\hol(\mathcal{F},f\otimes g).$$
\end{lemma}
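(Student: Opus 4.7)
The plan is to use the hypothesis on $I$ as a source of extra $g$-vertices that can be matched with every $f$-vertex appearing in an arbitrary input, producing an instance that uses only $\mathcal{F}$ and $f\otimes g$. The combinatorial identity underlying the construction is that a disjoint pair consisting of one $f$-vertex and one $g$-vertex contributes exactly $f(\cdot)\cdot g(\cdot)$ to the Holant sum, which equals the contribution of a single $(f\otimes g)$-vertex on the disjoint union of the two variable tuples. Consequently, one may freely relabel such a disjoint pair as a single $(f\otimes g)$-vertex, or split an $(f\otimes g)$-vertex back into a disjoint $f$/$g$ pair, without altering the Holant value.

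Given an arbitrary instance $J$ of $\hol(\mathcal{F},f,f\otimes g)$ containing $n$ occurrences of $f$ and $m$ of $f\otimes g$, and writing $a$ and $b$ for the numbers of $f$- and $g$-vertices in $I$ (so that $b>a$ by hypothesis), I would form the disjoint union $J_k := J \sqcup (k\cdot I)$ for each integer $k\ge 1$. By disjointness, $Z(J_k)=Z(J)\cdot z^{k}$ with $z=Z(I)\ne 0$, and $J_k$ contains $n+ka$ copies of $f$, $kb$ copies of $g$, $m$ copies of $f\otimes g$, and the $\mathcal{F}$-signatures. Choosing $k$ so that $kb=n+ka$, i.e.\ $k=n/(b-a)$, every $f$-vertex can be paired with a distinct $g$-vertex and the pair relabeled as a single $(f\otimes g)$-vertex, yielding a legitimate instance of $\hol(\mathcal{F},f\otimes g)$ whose value equals $Z(J)\cdot z^{k}$. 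A single oracle query then returns $Z(J)$.

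The main obstacle is divisibility: the required $k=n/(b-a)$ is an integer only when $(b-a)\mid n$. To overcome this, I would first replace $J$ by its $r$-fold disjoint union $J^{\sqcup r}$ with $r=b-a$, whose Holant value is $Z(J)^{r}$ and whose $f$-count equals $rn$, guaranteeing the divisibility condition with $k=n$. The oracle then returns $Z(J)^{b-a}\cdot z^{n}$, and recovering $Z(J)$ itself from $Z(J)^{b-a}$ up to the correct root of unity is the step I expect to require the most care. I would address this by repeating the construction on auxiliary instances obtained from $J$ by small controlled perturbations whose effect on the Holant value is computable directly, producing additional independent polynomial equations in $Z(J)$ that pin the value down to a single branch (possibly with a finite side computation to fix the correct root). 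This interpolation-style cleanup is exactly the subtle part, and I would follow closely the standard argument used in the original source \cite{LinWang2018plushol} to assemble the steps into a polynomial-time Turing reduction.
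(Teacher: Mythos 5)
Your construction is fine up to the point where the oracle returns $Z(J)^{\,b-a}\cdot Z(I)^{\,n}$, but the final step --- recovering $Z(J)$ from its $(b-a)$-th power --- is a genuine gap, not a routine cleanup. The signatures here are complex-valued (that is the setting in which this paper invokes the lemma), so $Z(J)^{\,b-a}$ determines $Z(J)$ only up to a $(b-a)$-th root of unity, and your proposed ``controlled perturbations'' cannot break this ambiguity: every closed instance you are able to submit to the oracle is assembled from copies of $J$, copies of $I$, and gadgets over $\mathcal{F}\cup\{f\otimes g\}$, and the balance condition (each $f\otimes g$ vertex accounts for exactly one $f$ and one $g$) forces the number $j$ of copies of $J$ used to satisfy $jn\equiv 0 \pmod{b-a}$, i.e.\ $j$ is always a multiple of $r_0=(b-a)/\gcd(n,b-a)$. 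Hence every oracle answer has the form $Z(J)^{j}\cdot C$ with $r_0\mid j$ and $C$ a known constant, and no collection of such values determines more than $Z(J)^{r_0}$. The fallback of ``following the original source'' does not rescue this either: in \cite{LinWang2018plushol} all weights are nonnegative, so the $r$-th power does determine the value; that shortcut is exactly what is unavailable over $\mathbb{C}$.

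The missing idea is to avoid powers of $Z(J)$ altogether by keeping a \emph{single} copy of $J$ and repairing the count mismatch with a hard-coded auxiliary instance, via a case analysis on the residue of $n$ modulo $b-a$. Concretely, consider closed instances over $\mathcal{F}\cup\{f,g\}$ and the residues of $(\#f-\#g)$ modulo $b-a$ attainable by instances with non-zero value; these residues form a subgroup of $\mathbb{Z}_{b-a}$ (disjoint union adds residues, the empty instance gives $0$). If the residue of $n$ is \emph{not} attainable, then $Z(J)=0$ (after splitting its $f\otimes g$ vertices, $J$ itself is such an instance) and the reduction outputs $0$ without any oracle call. If it is attainable, then so is $-n$, and the reduction --- which may depend arbitrarily on the fixed data $\mathcal{F},f,g,I$ --- can hard-wire one non-zero witness $W$ in that residue class; padding $J$ with $W$ and suitably many copies of $I$ then achieves exact balance with one copy of $J$, so a single oracle call returns $Z(J)$ times a non-zero constant. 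Some care is still needed to make the total $f$/$g$ counts match exactly (not just modulo $b-a$), but that is bookkeeping with copies of $I$; the essential point is the residue-class dichotomy, which your proposal lacks and which cannot be replaced by root extraction in the complex-valued setting.
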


\begin{lemma}[{\cite[Corollary 3.3]{LinWang2018plushol}}]\label{lem: not vanishing decompose}
    Let $\mathcal{F}$ be a set of signatures, and $f$, $g$ be two signatures. If $g$ is
not vanishing, then

$$\hol(\mathcal{F},f,f\otimes g)\leq_T\hol(\mathcal{F},f\otimes g).$$
\end{lemma}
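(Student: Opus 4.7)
The plan is to derive this statement as an immediate corollary of Lemma~\ref{lem:nonzero instance decompose}. That lemma requires the existence of a single instance of $\hol(\mathcal{F},f,g)$ whose partition function is non-zero and in which $g$ appears strictly more often than $f$; if I can exhibit such an instance using only the hypothesis that $g$ is not vanishing, the conclusion follows directly.

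First I would invoke the assumption that $g$ is not vanishing, which by definition means that the signature set $\{g\}$ has at least one signature grid $I_0$ with $Z(I_0)\neq 0$. Since every instance of $\hol(g)$ is also an instance of $\hol(\mathcal{F},f,g)$ (one is free to use only $g$), this $I_0$ serves as a candidate witness. In $I_0$ the number of occurrences of $f$ is $0$, while the number of occurrences of $g$ is at least $1$ (for any signature of positive arity a non-trivial partition function forces the grid to contain at least one vertex labelled $g$). Thus $I_0$ satisfies the strict inequality on occurrence counts required by Lemma~\ref{lem:nonzero instance decompose}, and applying that lemma yields exactly the desired reduction $\hol(\mathcal{F},f,f\otimes g)\leq_T\hol(\mathcal{F},f\otimes g)$.

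The only minor caveat is the degenerate case $\ari(g)=0$, where $g$ is simply a non-zero scalar; here the statement is vacuous because $f\otimes g$ coincides with $f$ up to a non-zero multiplicative constant, and that constant can be stripped off without changing complexity. Because the whole argument is a direct specialization of the preceding lemma, there is no substantive obstacle: all the real work was done in Lemma~\ref{lem:nonzero instance decompose}, and the present corollary just packages its hypothesis in a more convenient form that will be easier to invoke later in the decomposition-lemma proof.
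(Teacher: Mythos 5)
Your derivation is correct and is essentially the intended one: the paper only cites this statement from \cite{LinWang2018plushol}, where it is likewise obtained as an immediate corollary of Lemma~\ref{lem:nonzero instance decompose} by taking a (non-empty) witness instance of $\hol(g)$ with non-zero partition function, which contains $g$ at least once and $f$ not at all. The only nitpick is your justification that a non-zero partition function ``forces'' a vertex labelled $g$ (the empty grid has value $1$); the correct point is that the notion of vanishing implicitly ranges over instances actually using the signatures, so the witness is non-empty, but this does not affect the argument.
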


\begin{remark}\label{remark:holo doesn't change decomposition}
    As mentioned in \cite{cai2020beyond}, the unique prime factorization of a signature $f$ naturally divides $\var(f)$ into several sets, each corresponding to a prime factor of $f$. It can be easily verified that the partition of $\var(f)$ remains unchanged under holographic transformation (even SLOCC). Therefore, we can apply the three lemmas above in the setting of $\khol$.  
\end{remark}

\begin{lemma}\label{lem:self-loop to pure1 string}
Suppose $f$ is a signature and $\alpha\in\su(f)$. Suppose $\#_0(\alpha)-\#_1(\alpha)=k>0$ (or $\#_1(\alpha)-\#_0(\alpha)=k>0$), then a signature $g$ of arity $k$  can be obtained by adding self-loops using $\neq_2$ on $f$ such that $g(\textbf{0})\neq0$ (or $g(\textbf{1})\neq0$ respectively).
\end{lemma}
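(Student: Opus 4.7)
The plan is to induct on $m := (n - k)/2$, where $n = \ari(f)$. For the base case $m = 0$, we have $n = k$ and $\alpha = 0_k$, so no self-loops are needed and setting $g := f$ already gives $g(\mathbf{0}) = f(\alpha) \neq 0$.

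For the inductive step, my goal is to locate a pair of indices $(i, j)$ and a string $\alpha' \in \{0,1\}^{n-2}$ with $\#_0(\alpha') - \#_1(\alpha') = k$ such that, after adding a single $\neq_2$ self-loop on $x_i$ and $x_j$ of $f$ to obtain $f'$, the value $f'(\alpha')$ is nonzero. Because $\ari(f') = n - 2$ and $\alpha' \in \su(f')$ still has excess $k$, the inductive hypothesis applied to $(f', \alpha')$ then produces the desired $g$ after $m - 1$ further self-loops. The natural first attempt is to take $(i, j)$ with $\alpha_i = 1$ and $\alpha_j = 0$, and let $\alpha'$ be $\alpha$ with those two entries removed; in this case $f'(\alpha') = f(\alpha) + f(\beta)$, where $\beta$ is obtained from $\alpha$ by swapping its entries at positions $i$ and $j$. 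If this sum is nonzero for some such choice, the induction proceeds directly.

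The crux of the argument is showing that an admissible triple $(i, j, \alpha')$ always exists. Suppose to the contrary that $f'(\alpha') = 0$ for every choice of $(i, j)$ and every $\alpha' \in \{0,1\}^{n-2}$ of excess $k$. Unwinding the definition of $f'$, this assumption is equivalent to the relation $f(\gamma) + f(\gamma^{(ij)}) = 0$ for every $\gamma$ in $V_k := \{\, x \in \{0,1\}^n : \#_0(x) - \#_1(x) = k \,\}$ and every single 0-1 swap $\gamma \mapsto \gamma^{(ij)}$ that stays inside $V_k$. Because $k \geq 1$ and $m \geq 1$, the string $\alpha$ has at least two 0-positions $j \neq j'$ and at least one 1-position $i$; consequently $\gamma_1 := \alpha$, $\gamma_2 := \alpha^{(ij)}$, and $\gamma_3 := \alpha^{(ij')}$ are three distinct elements of $V_k$ that are pairwise related by single 0-1 swaps (indeed $\gamma_2$ and $\gamma_3$ differ only at positions $j$ and $j'$). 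The sign relations $f(\gamma_2) = -f(\gamma_1)$, $f(\gamma_3) = -f(\gamma_1)$, and $f(\gamma_3) = -f(\gamma_2)$ around this 3-cycle force $f(\alpha) = 0$, contradicting $\alpha \in \su(f)$.

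The symmetric version for $\#_1(\alpha) - \#_0(\alpha) = k$ follows from the same argument after interchanging the roles of $0$ and $1$. I expect the main technical obstacle to be the precise formulation of the 3-cycle cancellation inside $V_k$ and the verification that such a triple of pairwise swap-neighbors always exists; this requirement reduces to having at least two 0-positions in $\alpha$, which is guaranteed by the condition $m + k \geq 2$ that holds throughout the induction.
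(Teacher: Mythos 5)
Your proof is correct and follows essentially the same route as the paper: the paper likewise reduces the arity by two at a time via a $\neq_2$ self-loop while preserving the excess $k$, and its key step is the same three-string cancellation you use (for one 1-position and two 0-positions of $\alpha$, the relations $x+y=x+z=y+z=0$ force $f(\alpha)=0$). The only difference is presentational: the paper argues iteratively and constructively on the three specific strings $010\alpha'$, $100\alpha'$, $001\alpha'$, whereas you package the identical swap-cycle argument as an induction on $m=(n-k)/2$ with a proof by contradiction.
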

\begin{proof}
    We only prove the case where $\#_0(\alpha)-\#_1(\alpha)=k>0$. The other case is similar. 
    
    Let $\ari(f)=d$. If $d=1$ or $2$, the statement is already true. Suppose $d>2$. If $\#_1(\alpha)=0$, the statement also holds. If $\#_1(\alpha)>0$, without loss of generality, assume the first three bits of $\alpha$ are 0, 1 and 0.

    We write $\alpha$ as $\alpha=010\alpha'$. Consider $\beta=100\alpha'$ and $\gamma=001\alpha'$. Let $f(\alpha)=x$, $f(\beta)=y$ and $f(\gamma)=z$. If $x+y\neq0$, we add a self-loop by $\neq_2$ on the first two variables of $f$ and obtain a signature $g$ such that $g(0\alpha')\neq0$. We have $\#_0(0\alpha')-\#_1(0\alpha')=k$. If $x+z\neq0$ (or $y+z\neq0$), we similarly add a self-loop on the second and the third (or the first and the third) variables of $f$ to obtain a $g$. If $x+z=x+y=y+z=0$, then $x=0$, which contradicts $\alpha\in\su(f)$. Therefore, we obtain a arity $d-2$ signature $g$ satisfying the condition that there exists a string $\alpha\in\su(g)$ such that $\#_0(\alpha)-\#_1(\alpha)=k>0$.

    Noticing that the process above decreases $\#_0(\alpha)$ and $\#_1(\alpha)$ by 1. Repeating this process, we can decrease $\#_1(\alpha)$ to 0. Therefore, the lemma is proved.
\end{proof}



\begin{corollary}\label{cor:hw> construct Delta1}
     Suppose $f$ is an $\eosg$ (or $\eosl$) signature, by adding self-loop by $\neq_2$ we can always obtain $\lambda\Delta_1^{\otimes r}$ (or $\lambda\Delta_0^{\otimes r}$), where $r>0$ is an integer. 
\end{corollary}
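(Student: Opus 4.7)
The plan is to apply Lemma \ref{lem:self-loop to pure1 string} after a careful choice of the input string, then use the fact that $f$ is $\eosg$ to pin down the support of the resulting signature to the single string $1_r$. The $\eosl$ case is entirely symmetric (swap the roles of $0$ and $1$), so I would only write out the $\eosg$ case.

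The first key step will be to isolate an invariant: a self-loop with $\neq_2$ on variables $x_i,x_j$ of a signature $h$ yields $h'(\beta)=h(\ldots 0\ldots 1\ldots)+h(\ldots 1\ldots 0\ldots)$ with $0,1$ in positions $i,j$, so any $\beta\in\su(h')$ extends to some $\gamma\in\su(h)$ by inserting one $0$ and one $1$. Hence $\#_1(\gamma)-\#_0(\gamma)=\#_1(\beta)-\#_0(\beta)$, and iterating shows that every support string of the signature finally produced extends to some support string of $f$ with the same value of $\#_1-\#_0$.

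Next, I would set $m=\min_{\alpha\in\su(f)}\{\#_1(\alpha)-\#_0(\alpha)\}$, which is $\geq 1$ because $f$ is $\eosg$. Picking $\alpha_0\in\su(f)$ achieving this minimum and applying Lemma \ref{lem:self-loop to pure1 string} to $\alpha_0$, I obtain by self-loops with $\neq_2$ a signature $g$ of arity $m$ with $g(1_m)\neq 0$. By the invariant above, every $\beta\in\su(g)$ satisfies $\#_1(\beta)-\#_0(\beta)\geq m$; since $\len\beta=m$, this forces $\#_1(\beta)=m$ and so $\beta=1_m$. Therefore $\su(g)=\{1_m\}$ and $g=g(1_m)\,\Delta_1^{\otimes m}=\lambda\,\Delta_1^{\otimes r}$ with $r=m\geq 1$ and $\lambda=g(1_m)\neq 0$.

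I do not expect any serious obstacle, since Lemma \ref{lem:self-loop to pure1 string} already supplies the construction and the invariant is immediate from the definition of the $\neq_2$ self-loop. The only subtlety is the deliberate choice of $\alpha_0$ to minimise $\#_1-\#_0$ over $\su(f)$; without this, the arity $r$ returned by Lemma \ref{lem:self-loop to pure1 string} could exceed $m$, and the pigeonhole step that collapses $\su(g)$ to the single string $1_r$ would fail.
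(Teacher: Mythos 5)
Your proposal is correct and follows essentially the same route as the paper: choose $\alpha_0\in\su(f)$ minimising $\#_1-\#_0$, apply Lemma \ref{lem:self-loop to pure1 string}, and use minimality to force $\su(g)=\{1_m\}$. The invariant you spell out (a $\neq_2$ self-loop preserves $\#_1-\#_0$ of support strings) is exactly what the paper appeals to implicitly with the phrase ``by the constructing process in Lemma \ref{lem:self-loop to pure1 string} and the fact that $\#_1(\alpha)-\#_0(\alpha)$ is minimal,'' so your write-up is just a slightly more explicit version of the same argument.
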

\begin{proof}
    We only prove the case where $f$ is an $\eosg$ signature. The other case is similar.

    If $f$ is a unary signature, then $f$ is $\lambda\Delta_1$, where $\lambda\neq0$ is a constant, and the statement holds. If $f$ is a binary signature, then $f$ is $\lambda\Delta_1^{\otimes2}$, where $\lambda\neq0$ is a constant. By Lemma \ref{lem:singledecomposition} we are done. In the following, we assume $\ari(f)=d>2$.

    Assume that $\alpha$ is the string in $\su(f)$ such that $\#_1(\alpha)-\#_0(\alpha)$ takes the minimum value. By Lemma \ref{lem:self-loop to pure1 string} we can obtain a signature $g$ such that $g(1_{\#_1(\alpha)-\#_0(\alpha)})\neq0$. By the constructing process in Lemma \ref{lem:self-loop to pure1 string} and the fact that $\#_1(\alpha)-\#_0(\alpha)$ is minimal, we have $g=\lambda\Delta_1^{\otimes(\#_1(\alpha)-\#_0(\alpha))}$.
\end{proof}
Symmetric vanishing signatures have been characterized in \cite{cai2013vanishing}. The following lemma characterizes vanishing signatures not necessarily to be symmetric.
\begin{lemma}\label{lem:vanish is eosg or eosl}
    Suppose $f$ is a signature and $\widehat{f}=K^{-1}f$. Then the following two statements are equivalent:

    1. $f$ is vanishing;

    2. $\widehat{f}$ is a $\eosg$ or $\eosl$ signature.
\end{lemma}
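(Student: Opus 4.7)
The proof proceeds by establishing the two implications separately.

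For $(2) \Rightarrow (1)$, assume $\widehat{f}$ is $\eosg$ (the $\eosl$ case is analogous). By the holographic equivalence $\hol(f) \equiv_T \khol(\widehat{f}) = \hol(\neq_2 \mid \widehat{f})$, it suffices to prove every instance of $\khol(\widehat{f})$ has partition function $0$. In any such instance, each degree-$2$ $\neq_2$-vertex on the LHS forces opposite values on its two incident endpoints on the RHS, so summing over all edges the total number of $0$-endpoints equals the total number of $1$-endpoints on the RHS. For a nonzero term in the partition function, every local configuration at each $\widehat{f}$-vertex must lie in $\su(\widehat{f}) \subseteq \eosg$, so strictly more $1$'s than $0$'s appear at each RHS vertex, and hence strictly more $1$'s than $0$'s appear globally. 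This contradiction forces every term to vanish, so $f$ is vanishing.

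For $(1) \Rightarrow (2)$, we argue the contrapositive: if $\widehat{f}$ is neither $\eosg$ nor $\eosl$, we build an instance of $\khol(\widehat{f})$ with nonzero partition function. By hypothesis there exist $\alpha \in \su(\widehat{f}) \cap \eol$ and $\beta \in \su(\widehat{f}) \cap \eog$; set $a = \#_0(\alpha) - \#_1(\alpha) \ge 0$ and $b = \#_1(\beta) - \#_0(\beta) \ge 0$. When $a = 0$ (the case $b = 0$ is symmetric), $\alpha \in \eoe$, and a direct $\neq_2$-pairing of the $0$-positions with the $1$-positions of $\alpha$ inside a single copy of $\widehat{f}$ yields a scalar gadget whose designated assignment contributes $\widehat{f}(\alpha) \neq 0$. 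Otherwise $a, b > 0$, and we apply Lemma~\ref{lem:self-loop to pure1 string} to $\alpha$ and to $\beta$ separately to realize two signatures $g_0, g_1$ with $\ari(g_0) = a$, $g_0(0_a) \neq 0$ and $\ari(g_1) = b$, $g_1(1_b) \neq 0$; taking $b$ copies of $g_0$ and $a$ copies of $g_1$, cross-paired via $\neq_2$ so that the designated assignment places every $g_0$-copy at $0_a$ and every $g_1$-copy at $1_b$, yields a scalar gadget with distinguished contribution $g_0(0_a)^b\, g_1(1_b)^a \neq 0$.

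The main obstacle is ruling out cancellations from other valid assignments in these gadgets. We plan to address this by carefully choosing the pairings so that fixing the configurations on one side uniquely determines those on the other, expressing the partition function as a concrete polynomial in the entries of $g_0, g_1$ (or of $\widehat{f}$) that contains the designated nonzero monomial, and then invoking polynomial interpolation by varying the numbers of copies if needed. The hardest step is showing this polynomial is not identically zero; the minimality of the arities of $g_0, g_1$ from Lemma~\ref{lem:self-loop to pure1 string} and, in the $a = 0$ case, the $\eoe$ structure of the only nonzero contributions should provide the necessary rigidity.
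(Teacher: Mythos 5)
Your direction $(2)\Rightarrow(1)$ is fine and matches the paper's (one-line) parity argument. The problem is the converse: the cancellation issue you defer to the end is not a technical loose end but the entire content of this direction, and your plan does not resolve it. Already your ``$a=0$'' step is false as stated: pairing the $0$-positions with the $1$-positions of $\alpha$ inside a single copy of $\widehat{f}$ gives a scalar equal to the \emph{sum} of $\widehat{f}$ over all assignments consistent with that pairing, and this can vanish even though the designated term $\widehat{f}(\alpha)$ does not. A concrete counterexample is the binary signature $\widehat{h}$ with $M_{\widehat{h}}=\begin{pmatrix}0&1\\-1&0\end{pmatrix}$: its only pairing is the single $\neq_2$ self-loop, whose value is $1+(-1)=0$, so no single-copy gadget (and no choice of pairing) works; yet $f$ is not vanishing. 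The same objection applies to your cross-pairing of copies of $g_0$ and $g_1$ in the case $a,b>0$. Your proposed remedies --- ``carefully choosing the pairings'' and ``invoking polynomial interpolation by varying the numbers of copies'' --- are not an argument: interpolation is a tool for simulating signatures across instances, not for certifying that some closed instance has nonzero value, and you explicitly concede that the non-vanishing of the resulting polynomial (the hardest step) is unproven.

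For comparison, the paper resolves exactly this point by an explicit construction. It first ensures an EO string in the support (if $\su(\widehat{f})\cap\eoe=\emptyset$, it passes to $\widehat{f}^{\otimes(k+l)}$, i.e.\ uses several copies of the vertex, so that the concatenation of $l$ copies of the $\eosg$ string and $k$ copies of the $\eosl$ string lies in $\eoe$). Then it reduces arity two at a time as in Lemma~\ref{lem:self-loop to pure1 string}, at each step choosing which of the three candidate pairs to self-loop so that the value on the surviving EO string stays nonzero, until a binary $h$ with $M_h=\begin{pmatrix}a&b\\c&d\end{pmatrix}$, $b\neq0$, remains. Finally, if $b+c\neq0$ a single $\neq_2$ self-loop closes the instance with nonzero value; if $b+c=0$, it connects two copies of $h$ by $\neq_2$ in the two possible ways, obtaining the traces $2ad+2b^2$ and $2ad-2b^2$, which cannot both vanish since $b\neq0$. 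This two-copy trace computation is precisely the ingredient your proposal is missing; without it (or an equivalent explicit non-cancellation argument), your proof of $(1)\Rightarrow(2)$ is incomplete.
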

\begin{proof}
    If $\widehat{f}$ is an $\eosg$ or $\eosl$ signature, the partition function of any instance of $\khol(\widehat{f})$ is 0, that is, $f$ is vanishing.

    In the other direction, we prove the contrapositive of the proposition. Suppose $\widehat{f}$ is not an $\eosg$ or $\eosl$ signature. 
    
    The first case is that there is a string $\alpha\in\su(\widehat{f})$ such that $\alpha\in\eoe$, then $\widehat{f}$ is of even arity. If $\ari(\widehat{f})=d\geq4$, let $\alpha=010\alpha'$, $\beta=100\alpha'$ and $\gamma=001\alpha'$. By the same approach used in the proof of Lemma \ref{lem:self-loop to pure1 string}, we can obtain a signature $g$ of arity $d-2$ satisfying that $\su(g)\cap\eoe\neq\emptyset$. Repeating this process (including 0 times), we can obtain a binary signature $h$, which is not an $\eosg$ or $\eosl$ signature. Suppose $M_h=\begin{pmatrix}
        a & b\\
        c & d
    \end{pmatrix}$, where $b\neq0$. If $b+c\neq0$, then adding a self-loop by $\neq_2$ yields a non-zero value, which implies that $f$ is not vanishing. Assume $b+c=0$. Connecting edges of $h$ to another copy of $h$ using $\neq_2$, the values of two possible instances are the trace of $M_hXM_hX$ and $M_hXM_h^\T X$, which are $2ad+2b^2$ and $2ad-2b^2$. Since $b\neq0$, at least one of the two values is non-zero. Therefore, $f$ is not vanishing.

    The second case is that $\su(f)\cap\eoe=\emptyset$, then both $\su(f)\cap\eosg$ and $\su(f)\cap\eosl$ are non-empty. Suppose $\alpha\in\su(f)\cap\eosg$ with $\#_1(\alpha)-\#_0(\alpha)=k>0$, and $\beta\in\su(f)\cap\eosl$ with $\#_0(\beta)-\#_1(\beta)=l>0$. Then $\su(f^{\otimes(k+l)})\cap\eoe\neq\emptyset$, since the string consisting of $l$ copies of $\alpha$ and $k$ copies of $\beta$ is in its support. Therefore, the second case is reduced to the first case.

    In summary, the lemma is proved.
\end{proof}

We now prove Theorem \ref{thm:decomposition lemma}. 
\begin{proof}[Proof of Theorem \ref{thm:decomposition lemma}]
 To prove $\hol(\mathcal{F},f,g)\leq_T\hol(\mathcal{F},f\otimes g)$, it is sufficient to prove $\khol(\widehat{\mathcal{F}},\widehat{f},\widehat{g})\leq_T\khol(\widehat{\mathcal{F}},\widehat{f}\otimes\widehat{g})$ by Remark \ref{remark:holo doesn't change decomposition}. Now we consider the equivalent problem $\khol(\widehat{\mathcal{F}},\widehat{f}\otimes\widehat{g})$. 

     If $\khol(\widehat{\mathcal{F}}\cup\{\widehat{f}\otimes\widehat{g}\})$ only contains $\eog$ signatures (or $\eol$ signatures respectively), then by Corollary \ref{cor:eol eog dichotomy}, the complexity of $\hol(f\otimes g,\mathcal{F})$ is known, which is either in $\pnp$ or \#P-hard and the classification criterion is explicit.
     
     Now we assume that there exist signatures $\widehat{h},\widehat{h'}\in \widehat{\mathcal{F}}\cup\{\widehat{f}\otimes\widehat{g}\}$ such that  $\su(\widehat{h})\not\subseteq\eog$ and  $\su(\widehat{h'})\not\subseteq\eol$. 
      According to Lemma \ref{lem:self-loop to pure1 string}, by adding self-loops on $\widehat{h}$ and $\widehat{h'}$ we can obtain two signatures $h_0$ of arity $d_0$ and $h_1$ of arity $d_1$, such that $h_0(\textbf{0})\neq0$ and $h_1(\textbf{1})\neq0$.

If $f$ is not vanishing, by Lemma \ref{lem: not vanishing decompose} we have $\hol(g,f\otimes g,\mathcal{F})\leq_T\hol(f\otimes g,\mathcal{F})$. 
        
        If $f$ is vanishing, by Lemma \ref{lem:vanish is eosg or eosl} we have $\widehat{f}$ is an $\eosl$ (or $\eosg$) signature. By Corollary \ref{cor:hw> construct Delta1}, we can obtain $\lambda\Delta_0^{\otimes r}$ (or $\lambda\Delta_1^{\otimes r}$), where $\lambda,r\neq 0$, by adding self-loops on $\widehat{f}$. By connecting $d_1$ copies of $\lambda\Delta_0^{\otimes r}$ to $r$ copies of $h_1$ (or  $d_0$ copies of $\lambda\Delta_1^{\otimes r}$ to $r$ copies of  $h_0$) via $\neq_2$, we construct an instance $I$ with $Z(I)=\lambda^{d_1}h_1^r(\textbf{1})$ (or $Z(I)=\lambda^{d_0}h_0^r(\textbf{0})$). In other words, we actually construct an instance consisting of a positive number copies of $f$ and some signatures from $ \mathcal{F}\cup\{f\otimes g\}$ that yields a non-zero partition function. By Lemma \ref{lem:nonzero instance decompose}, we have $\hol(g,f\otimes g,\mathcal{F})\leq_T\hol(f\otimes g,\mathcal{F})$.

        The analysis for $g$ is similar. Therefore, $\hol(f,g,\mathcal{F})\leq_T\hol(f\otimes g,\mathcal{F})$.
\end{proof}

 \section{Dichotomy for $\holodd$}\label{oddsi}

 In this chapter, we prove the dichotomy for complex-valued $\holodd$. We emphasize that in the proofs of this chapter, we often normalize signatures by default for convenience, since this operation does not affect the computational complexity. For example, we write $\lambda\pin_0,\lambda\neq 0$ as $\pin_0$ by default. 
 By Theorem \ref{thm:decomposition lemma}, we can always assume the signature in $\mathcal{F}$ is irreducible in this section. We commence with the following lemmas, which separate $\holodd$ into several cases.

\begin{lemma}
    Suppose $f\in\mathcal{F}$ is a non-trivial signature of odd arity. Then one of the following statements holds:
    \begin{enumerate}
        \item There exists some matrix $Q\in\mathbb{C}^{2\times2}$ such that $\hol(Q\mathcal{F},\pin_0)\le_T \hol(\mathcal{F})$;
        \item $\khol(\widehat{\mathcal{F}},\pin_0)\le_T \hol(\mathcal{F})$;
        \item $\khol(\widehat{\mathcal{F}},\pin_1)\le_T \hol(\mathcal{F})$;
        \item $\khol(\widehat{\mathcal{F}},[a,0,\dots,0,b]_{2k+1})\le_T \hol(\mathcal{F})$ for some $k\in\mathbb{N}_+$, where $ab\neq 0$.
    \end{enumerate}
    \label{lem:1fenlei}
\end{lemma}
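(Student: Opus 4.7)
The plan is to iteratively decrease the arity of $f$ by two using self-loops while maintaining non-triviality, working in the $\khol$ picture so that a $\hol$-side self-loop by $=_2$ corresponds to a $\khol$-side self-loop by $\neq_2$, namely $\widehat{\partial_{ij}}\widehat{f}$. Starting from $\widehat{f}=K^{-1}f$, I maintain a non-trivial signature $\widehat{g}$ of odd arity $n$ that is realizable in $\khol(\widehat{\mathcal{F}})$: as long as $n\ge 3$ and some pair $(i,j)$ yields a non-trivial $\widehat{\partial_{ij}}\widehat{g}$, I replace $\widehat{g}$ by this self-loop, which still has odd arity $n-2$. The process terminates either at a non-trivial unary $\widehat{u}$, or at some odd arity $n\ge 3$ where every self-loop of $\widehat{g}$ is identically zero.

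In the stalled scenario, Lemma \ref{lem:wangjuqiuwangle} forces $\widehat{g}(\alpha)=0$ whenever $0<\#_1(\alpha)<n$, so $\widehat{g}=[a,0,\dots,0,b]_n$ with $n=2k+1$ odd. If $ab\ne 0$ this is exactly case 4. If $b=0$ (resp.\ $a=0$) then $\widehat{g}$ is a scalar multiple of $\pin_0^{\otimes n}$ (resp.\ $\pin_1^{\otimes n}$), and Lemma \ref{lem:singledecomposition} extracts $\pin_0$ (resp.\ $\pin_1$) from it in $\khol$, yielding case 2 (resp.\ case 3). In the unary scenario, write $\widehat{u}=(a,b)$; if $\widehat{u}\propto\pin_0$ or $\widehat{u}\propto\pin_1$ we directly obtain case 2 or case 3, so assume $ab\ne 0$. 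Then $u=K\widehat{u}=\frac{1}{\sqrt{2}}(a+b,\,\ii(a-b))$ satisfies $u_1^2+u_2^2=2ab\ne 0$, which is precisely the condition that $u$ is not a multiple of $[1,\ii]$ or $[1,-\ii]$. Normalizing and completing $u$ to an orthogonal matrix $O\in\mathscr{O}$ whose first column is proportional to $u$ gives $O^{-1}u\propto\pin_0$, so the holographic transformation by $Q=O^{-1}$ produces $\hol(Q\mathcal{F},\pin_0)\equiv_T \hol(\mathcal{F},u)\equiv_T \hol(\mathcal{F})$, which is case 1.

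The main obstacle is the stalled case: once every self-loop vanishes, the recursion cannot continue and we must read off the classification directly. Lemma \ref{lem:wangjuqiuwangle} does the heavy lifting here by pinning $\widehat{g}$ down to a very restricted shape, and the parity hypothesis ensures that the resulting symmetric signature has odd arity $2k+1$ (with $k\in\mathbb{N}_+$ forced by $n\ge 3$) and lands in case 4 rather than an even-arity analogue. A secondary subtlety is that when the recursion does reach a unary, the $\hol$-picture vector $u$ may be isotropic; in that case no orthogonal holographic transformation turns $u$ into $\pin_0$ in the $\hol$ picture, which is exactly why cases 2 and 3 must be phrased in the $\khol$ picture, where the corresponding $\widehat{u}$ is simply $\pin_0$ or $\pin_1$.
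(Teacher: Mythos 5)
Your proof is correct and follows essentially the same route as the paper's: reduce the odd arity by $\neq_2$ self-loops in the $\khol$ picture, invoke Lemma \ref{lem:wangjuqiuwangle} when all self-loops vanish to conclude $\widehat{g}=[a,0,\dots,0,b]$, treat $ab\neq 0$ as case 4, and handle the unary endpoint by an orthogonal holographic transformation, with the isotropic vectors proportional to $(1,\pm\ii)$ yielding cases 2 and 3. The only (harmless) deviation is in the stalled case with $ab=0$: you extract $\pin_0$ or $\pin_1$ directly via the unconditional power-decomposition Lemma \ref{lem:singledecomposition}, whereas the paper invokes Theorem \ref{thm:decomposition lemma} to get a unary signature and recurses; your shortcut is, if anything, slightly cleaner since it avoids the conditional form of that theorem.
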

\begin{proof}
    We prove this lemma by induction. Suppose $f$ is of arity $2k-1,k\in \mathbb{N}_+$. When $k=1$, $f$ is unary. If $f$ is a multiple of $(1,\ii)$ or $(1,-\ii)$, then $\widehat{f}=\pin_0$ or $\widehat{f}=\pin_1$ respectively, which leads to Case 2 or 3. Now suppose $f=(a,b)$ is not a multiple of $(1,\ii)$ or $(1,-\ii)$. Let $Q=\frac{1}{\sqrt{a^2+b^2}}\begin{pmatrix}
        a & b \\
        b & -a\end{pmatrix}$. It can be verified that $Q^\T Q=I$ and $Q(a,b)^\T=(1,0)=\pin_0$. Consequently we have $\hol(Q\mathcal{F},\pin_0)\le_T \hol(\mathcal{F})$.

        Now suppose one of the statements always holds for all $k<n+1,n\in \mathbb{N}_+$. If $k=n+1$, then $f$ is of arity $2n+1$.  If $\widehat{\partial_{ij}}\widehat{f}\not\equiv 0$ for some $i,j$, we are done by the induction hypothesis. Otherwise by Lemma \ref{lem:wangjuqiuwangle}, we have $\widehat{f}=[a,0,\dots,0,b]_{2n+1}$. If $ab\neq 0$, it leads to Case 4. Otherwise by Theorem \ref{thm:decomposition lemma} we can obtain a unary signature and are done by induction.
\end{proof}

In the following, we classify the complexity of each case in Lemma \ref{lem:1fenlei} respectively. In Section \ref{sec:hol0}, we deal with Case 1 and prove Lemma \ref{lem:hol0 dichotomy}. In Section \ref{sec:khol}, we deal with other cases. In Section \ref{sec:proof of main} we present the proof of Theorem \ref{thm:main theorem}.

\subsection{A dichotomy for $\hol$ problems when $\pin_0$ is available}\label{sec:hol0}
In this section, we prove Lemma \ref{lem:hol0 dichotomy}. The reductions appearing in this section are summarized as a map in Figure \ref{fig:hol d0 map}.

If $\mathcal{F}\subseteq\mathcal{T}$, $\hol(\mathcal{F},\pin_0)$ is polynomial-time computable. As a result, we may assume there exists an irreducible signature $f\in \mathcal{F}$ with $\text{arity}(f)\ge3$. By replacing the original decomposition lemma by our generalized version Theorem \ref{thm:decomposition lemma}, the following result, which previously only holds for real-valued \hol\ \cite[Lemma 1.8]{cai2020holantoddarity}, can be extended to complex-valued \hol.

\begin{lemma}
    Suppose $f\in \mathcal{F}$ is irreducible with $\text{arity}(f)\ge3$. Then one of the following holds.
    \begin{enumerate}
        \item There is a ternary irreducible $g\in \gc(f,\pin_0)$;
        \item There is a quaternary $g\in \gc(f,\pin_0)$ satisfying $M_g=\begin{pmatrix}
        a & 0 & 0 & b\\
        0 & 0 & 0 & 0\\
        0 & 0 & 0 & 0\\
        c & 0 & 0 & d\\
    \end{pmatrix}, ad-bc\neq 0$;
        \item $\pin_1\in \gc(f,\pin_0)$.
    \end{enumerate}
    \label{lem:2fenlei}
\end{lemma}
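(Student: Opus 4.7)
The proof will be by induction on the arity $r$ of $f$, mirroring the structure of \cite[Lemma 1.8]{cai2020holantoddarity} with Theorem \ref{thm:decomposition lemma} substituted at every invocation of the (original, real-valued) decomposition lemma. For the base case $r = 3$, Case 1 holds trivially with $g = f$. For $r \geq 4$, I would first locate a non-trivial arity reduction: by Lemma \ref{lem:shaoshuai lemma 3.9}, if every pinning $f^{x_i = 0} \equiv 0$, then $f$ is supported only at $\textbf{1}$, hence $f \propto \pin_1^{\otimes r}$, contradicting the irreducibility of $f$ for $r \geq 2$. Thus some pinning $h := f^{x_i = 0}$ is non-zero with $\ari(h) = r - 1 \geq 3$. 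Decomposing $h = h_1 \otimes \cdots \otimes h_m$ via Lemma \ref{lem:upf}, Theorem \ref{thm:decomposition lemma} allows each irreducible factor $h_j$ to be treated as available alongside $f$ and $\pin_0$; whenever some $h_j$ has arity $\geq 3$, the induction hypothesis applied to $h_j$ directly yields Case 1, 2, or 3. A parallel reduction using $=_2$ self-loops $\partial_{ij} f$ (arity $r-2$) is available when some self-loop is non-zero.

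The key point where the complex-valued extension matters is the exceptional clause of Theorem \ref{thm:decomposition lemma}, which blocks factor extraction only when $\widehat{\{f, \pin_0\}} \cup \{\widehat{h_1 \otimes h_2}\}$ lies entirely in $\eog$ or entirely in $\eol$. But $\widehat{\pin_0} = \tfrac{1}{\sqrt 2}(1,1)$ has support $\{0,1\}$, with the length-one string $0$ in $\eosl$ and the string $1$ in $\eosg$, so $\widehat{\pin_0}$ itself lies in neither $\eog$ nor $\eol$. Consequently the one-sided exceptional condition cannot hold so long as $\pin_0$ is in the signature set, and the reduction direction of Theorem \ref{thm:decomposition lemma} applies unconditionally in our setting. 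This is precisely the single place at which the original real-valued proof requires adaptation.

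What remains is a finite case analysis for when every irreducible factor of every reducible arity-reduction of $f$ has arity at most $2$. This residual analysis parallels \cite{cai2020holantoddarity} essentially verbatim: a unary $\pin_1$ factor yields Case 3 directly, while the remaining configurations of unary and binary factors, combined with further $\partial_{ij}$ self-loops and $\pin_0$ pinnings, constrain $f$ enough to force either Case 1 or Case 2. The main bookkeeping obstacle I expect is the boundary case $r = 4$, where $=_2$ self-loops exit into binary signatures and the arity-reduction tools bottom out; there the only surviving structure is $f = a\pin_0^{\otimes 4} + b\pin_1^{\otimes 4}$ with $ab \neq 0$, whose signature matrix is precisely the diagonal $(a, 0, 0, b)$ pattern of Case 2. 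For $r \geq 5$, iterated $=_2$ self-loops on such a $\{\textbf{0}, \textbf{1}\}$-supported $f$ reduce to the irreducible ternary $[a, 0, 0, b]$, giving Case 1 and closing the induction.
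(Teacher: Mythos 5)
Your plan coincides with the paper's own treatment: the paper gives no standalone proof of this lemma, asserting only that the real-valued argument of \cite[Lemma 1.8]{cai2020holantoddarity} goes through once the original decomposition lemma is replaced by Theorem \ref{thm:decomposition lemma}, and your observation that $\widehat{\pin_0}=\tfrac{1}{\sqrt{2}}[1,1]$ is neither an $\eog$ nor an $\eol$ signature (so the exceptional clause can never trigger while $\pin_0$ is in the set) is exactly the justification that makes this substitution legitimate. The only quibble is minor and lies inside your hedged sketch of the deferred case analysis: in the residual arity-4 situation the surviving signatures are those with the full Case 2 matrix (support $\{0000,0011,1100,1111\}$ with $ad-bc\neq 0$), not only $a\pin_0^{\otimes 4}+b\pin_1^{\otimes 4}$, and for even arities $\geq 6$ the iterated self-loops land on that quaternary form rather than on a ternary one.
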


The complexity of the third situation can be classified by Theorem \ref{thm:holantc dichotomy}. The following lemmas  can be used to deal with the second situation. Combining them with Theorem \ref{thm:csp2 dichotomy}, the complexity classification for the second situation is done.

\begin{lemma}[{\cite[Part of Lemma 2.40]{CF2017plCSP}}]\label{lem:hol =4 csp2 part1}
    Suppose $f\in \gc(\mathcal{F})$ satisfying $M_f=\begin{pmatrix}
        a & 0 & 0 & b\\
        0 & 0 & 0 & 0\\
        0 & 0 & 0 & 0\\
        c & 0 & 0 & d\\
    \end{pmatrix}, ad-bc\neq 0$. Then $\hol(\mathcal{F},=_4)\equiv_T\hol(\mathcal{F})$.
\end{lemma}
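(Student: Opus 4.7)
Since $f \in \gc(\mathcal{F})$, we have $\hol(\mathcal{F}) \equiv_T \hol(\mathcal{F}, f)$, and the reduction $\hol(\mathcal{F}) \leq_T \hol(\mathcal{F}, =_4)$ is immediate. So the plan is to establish $\hol(\mathcal{F}, =_4) \leq_T \hol(\mathcal{F}, f)$ by polynomial interpolation on powers of a $2 \times 2$ matrix.

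The structural starting point is that $f$ is supported on configurations with $x_1 = x_2$ and $x_3 = x_4$, and on this support its values are encoded by the invertible matrix $B = \begin{pmatrix} a & b \\ c & d \end{pmatrix}$ indexed by the common values of the two pairs; under the same encoding $=_4$ corresponds to the identity $I$. Chaining $k$ copies of $f$, i.e.\ joining the $(x_3, x_4)$-pair of one copy to the $(x_1, x_2)$-pair of the next, realizes quaternary signatures $f_k \in \gc(\mathcal{F})$ with the same support as $f$ but with values encoded by $B^k$, as a direct computation confirms. In particular $f_1 = f$, and $=_4$ is the ``missing'' case $B^0 = I$.

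Given an instance $\Omega$ of $\hol(\mathcal{F}, =_4)$ with $m$ occurrences of $=_4$, form $\Omega_k$ by replacing every $=_4$ with $f_k$ and compute $Z(\Omega_k)$ via the $\hol(\mathcal{F}, f)$ oracle (noting that $f_k$ is a gadget of size $O(k)$). Expanding via a diagonalization $B = Q \Lambda Q^{-1}$ shows that $Z(\Omega_k)$ is a polynomial of total degree at most $m$ in the eigenvalue powers $\lambda_1^k, \lambda_2^k$ of $B$, and the target $Z(\Omega)$ is exactly the evaluation at $\lambda_1^0 = \lambda_2^0 = 1$. In the generic subcase, where $\lambda_1, \lambda_2$ are both nonzero (which holds since $\det B = ad - bc \neq 0$) and $\lambda_1/\lambda_2$ is not a root of unity, the quantities $(\lambda_1/\lambda_2)^k$ are pairwise distinct, so Lagrange interpolation from $O(m)$ oracle evaluations recovers the polynomial and yields $Z(\Omega)$ in polynomial time.

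The main obstacle I anticipate is the handling of degenerate eigenvalue configurations of $B$. If $B = \lambda I$, then $f$ is already a scalar multiple of $=_4$ and nothing further is needed. If $B$ is a non-diagonalizable Jordan block with eigenvalue $\lambda \neq 0$, the identity $B^k = \lambda^k I + k \lambda^{k-1} N$ still yields a well-posed two-term interpolation in $k$. The trickiest case is when $\lambda_1/\lambda_2$ is a primitive root of unity of small order $N$: here one may first form the chain of length $N$, which is a scalar multiple of the identity and hence of $=_4$ directly, or alternatively apply a modified interpolation that breaks the periodicity by pre-composing with a fixed gadget. These case analyses are routine and are precisely those carried out in \cite[Lemma~2.40]{CF2017plCSP}.
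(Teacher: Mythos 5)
Your interpolation argument is correct, and it is essentially the standard proof of this statement: the paper itself does not prove the lemma but imports it from \cite[Lemma 2.40]{CF2017plCSP}, whose argument is exactly this chaining-plus-Vandermonde interpolation on the compressed $2\times 2$ matrix $B$, with the degenerate eigenvalue cases ($B=\lambda I$, Jordan block, and $\lambda_1/\lambda_2$ a root of unity, where a fixed-length chain already gives a scalar multiple of $=_4$) handled as you describe. The only nitpick is that the root-of-unity order need not be \emph{small}, but since it depends only on $f$ and not on the instance, the constant-size chain still suffices.
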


\begin{lemma}[{\cite[Lemma 5.2]{cai2018realholantc}}]\label{lem:hol =4 csp2 part2}
    Suppose $(=_4)\in\mathcal{F}$. Then $\hol(\mathcal{F})\equiv_T\ccsp_2(\mathcal{F})$.
\end{lemma}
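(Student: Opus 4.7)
The plan is to prove the stated equivalence by two Turing reductions, only one of which uses the hypothesis $(=_4)\in\mathcal{F}$. The direction $\hol(\mathcal{F})\leq_T\ccsp_2(\mathcal{F})$ is immediate for any $\mathcal{F}$: given a signature grid $\Omega=(G,\pi)$ of $\hol(\mathcal{F})$, I regard each edge of $G$ as a variable and each vertex as a clause carrying the signature $\pi(v)$ on its incident edges. Since every edge has exactly two endpoints, each variable occurs in exactly two clauses, so the translated instance satisfies the parity condition of $\ccsp_2(\mathcal{F})$, and the two partition functions coincide term by term.

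For the reverse direction $\ccsp_2(\mathcal{F})\leq_T\hol(\mathcal{F})$, I would invoke the equivalence $\ccsp_2(\mathcal{F})\equiv_T\hol(\mathcal{EQ}_2\mid\mathcal{F})$ established earlier and reduce the problem to showing that every element of $\mathcal{EQ}_2=\{=_2,=_4,=_6,\ldots\}$ lies in $\gc(\{=_4\})$. Once this is done, each $=_{2k}$ vertex in a bipartite instance can be replaced by its $\{=_4\}$-gate, and since $(=_4)\in\mathcal{F}$ the resulting signature grid is a valid instance of $\hol(\mathcal{F})$ of polynomial size with the same partition function.

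The realization of $\mathcal{EQ}_2$ from $=_4$ is entirely combinatorial. For $=_2$, tie two of the four variables of a single $=_4$ together with a self-loop; the sum $\sum_{x}(=_4)(x,x,y,z)$ equals $1$ when $y=z$ and $0$ otherwise, so the resulting dangling signature is exactly $=_2$. For $=_{2k}$ with $k\geq 2$, form a tree of $k-1$ copies of $=_4$ joined by $k-2$ internal edges; a direct count yields $4(k-1)-2(k-2)=2k$ dangling edges, and the equality constraint at each vertex together with the transitivity propagated through the internal edges forces all $2k$ dangling values to coincide, giving $=_{2k}$. Both gadgets are explicit and of polynomial size, so I do not anticipate any real obstacle: the lemma reduces to the combinatorial observation that $\{=_4\}$ generates $\mathcal{EQ}_2$ under gadget construction.
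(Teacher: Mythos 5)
Your proposal is correct and is essentially the standard argument behind the cited lemma: the direction $\hol(\mathcal{F})\leq_T\ccsp_2(\mathcal{F})$ follows because every edge variable occurs exactly twice, and the converse follows from $\ccsp_2(\mathcal{F})\equiv_T\hol(\mathcal{EQ}_2\mid\mathcal{F})$ together with the fact that $=_2$ (via a self-loop on $=_4$) and every $=_{2k}$ (via a tree of $k-1$ copies of $=_4$) lie in $\gc(\{=_4\})$. Both gadget counts and the propagation-of-equality argument check out, so no gap remains.
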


We now focus on the first situation. First, we show that the complexity can be classified if there exists a symmetric signature of GHZ type.
A symmetric signature of GHZ type is denoted as the generic case in \cite{CHL2012symholantc}. We present the following lemmas from \cite{CHL2012symholantc}. It is noteworthy that the concept of $\omega$-normalized signature is important in the following known lemmas, but it does not need to be understood in our proof.

\begin{lemma}[{\cite{CHL2012symholantc}}]
    Suppose $g$ is a symmetric signature of GHZ type, then there exists $M\in\mathbf{GL}_2(\mathbb{C})$ such that $g=M[1,0,0,1]$.
    \label{lem:symGHZisgeneric}
\end{lemma}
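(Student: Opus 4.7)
The plan is to exploit the given SLOCC decomposition $g = (M_1 \otimes M_2 \otimes M_3)[1,0,0,1] = v_1 \otimes v_2 \otimes v_3 + w_1 \otimes w_2 \otimes w_3$, where $v_i, w_i$ are the two columns of $M_i$. Since each $M_i \in \mathbf{GL}_2(\mathbb{C})$, the pair $(v_i, w_i)$ is linearly independent for each $i$, so $g$ has tensor rank exactly $2$ and the three factor matrices $(v_1 \mid w_1)$, $(v_2 \mid w_2)$, $(v_3 \mid w_3)$ each have full rank. These are precisely the hypotheses under which Kruskal's uniqueness theorem for order-$3$ rank-$2$ decompositions applies, so this decomposition of $g$ is unique up to scaling within each summand and permutation of the two summands.

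Next I would apply the symmetry of $g$. Permuting tensor factors by the cyclic permutation $(1\,2\,3)$ produces a second rank-$2$ decomposition $g = v_2 \otimes v_3 \otimes v_1 + w_2 \otimes w_3 \otimes w_1$, and Kruskal uniqueness forces one of two cases. In the self-matched case, $v_1 \otimes v_2 \otimes v_3$ is a scalar multiple of $v_2 \otimes v_3 \otimes v_1$; since a rank-one order-$3$ tensor $u_1 \otimes u_2 \otimes u_3$ is determined up to a scalar by the projective classes $[u_1], [u_2], [u_3] \in \mathbb{P}^1$, this is equivalent to $v_1, v_2, v_3$ being pairwise proportional, and similarly $w_1, w_2, w_3$ are pairwise proportional. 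Absorbing proportionality constants gives $g = a\, u^{\otimes 3} + b\, t^{\otimes 3}$ for some linearly independent $u, t \in \mathbb{C}^2$ and nonzero $a, b \in \mathbb{C}$. In the alternative swap case, extracting projective equalities from $v_1 \otimes v_2 \otimes v_3 = c\, w_2 \otimes w_3 \otimes w_1$ and $w_1 \otimes w_2 \otimes w_3 = c'\, v_2 \otimes v_3 \otimes v_1$ and chaining them in $\mathbb{P}^1$ yields $[v_1] = [w_2] = [v_3] = [w_1] = [v_2] = [w_3]$, and in particular $[v_1] = [w_1]$, contradicting the linear independence of $(v_1, w_1)$.

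Once $g = a\, u^{\otimes 3} + b\, t^{\otimes 3}$ is established, I let $M \in \mathbf{GL}_2(\mathbb{C})$ be the matrix with columns $a^{1/3} u$ and $b^{1/3} t$ (choosing any cube roots; note $a, b \neq 0$ since each rank-one summand is nonzero). Invertibility of $M$ follows from the linear independence of $u$ and $t$, and $M^{\otimes 3}[1,0,0,1] = (a^{1/3} u)^{\otimes 3} + (b^{1/3} t)^{\otimes 3} = g$, as desired. The main obstacle I anticipate is the careful handling of the swap case: one must extract projective equalities cleanly from the tensor-level identities and then verify that chaining them genuinely forces $v_1 \parallel w_1$. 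A cleaner alternative, which I would keep in reserve, is to pass to the associated binary cubic form via the standard correspondence between symmetric ternary Boolean signatures and binary cubics; under this correspondence $[1,0,0,1]$ becomes $x^3 + y^3$, being of GHZ type translates to non-vanishing of the cubic discriminant (hence three distinct roots in $\mathbb{P}^1$), and $3$-transitivity of $\mathbf{PGL}_2(\mathbb{C})$ on $\mathbb{P}^1(\mathbb{C})$ then supplies the desired linear change of variables.
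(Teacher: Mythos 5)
Your proposal is correct. Note that the paper does not prove this lemma itself; it imports it from \cite{CHL2012symholantc}, where the standard route is to work entirely inside the symmetric world: a nondegenerate symmetric ternary signature $[g_0,g_1,g_2,g_3]$ is analyzed via its second-order linear recurrence (equivalently, the rank/eigenstructure of the associated $2\times 2$ Hankel-type matrix), and the GHZ case is precisely the case of two distinct ``eigenvectors'', which immediately yields $g=a\,u^{\otimes 3}+b\,t^{\otimes 3}$ with $u,t$ independent, i.e.\ $g=M[1,0,0,1]$. You instead start from the paper's stated definition of GHZ type (the SLOCC form $g=(M_1\otimes M_2\otimes M_3)[1,0,0,1]$ from \cite{dur2000three}) and use essential uniqueness of rank-$2$ order-$3$ CP decompositions (Kruskal, with all three factor matrices of full rank, $2+2+2\ge 2\cdot 2+2$) together with invariance of $g$ under a cyclic permutation of its tensor factors; the self-matched case forces $v_1\parallel v_2\parallel v_3$ and $w_1\parallel w_2\parallel w_3$, the swap case chains projective equalities to $[v_1]=[w_1]$, contradicting invertibility of $M_1$, and the cube-root rescaling of the columns finishes the argument. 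Both arguments are sound: yours has the advantage of proceeding directly from the hypothesis as the paper states it (no need to first translate GHZ type into a statement about the symmetric entries), at the cost of invoking a uniqueness theorem that is heavier than necessary --- for rank $2$ with linearly independent columns in every mode, uniqueness also follows from a short elementary argument, which would make your proof self-contained. Your reserve argument via binary cubics (distinct roots, $3$-transitivity of $\mathbf{PGL}_2(\mathbb{C})$, absorbing the remaining scalar by a cube root) is essentially a reformulation of the literature's canonical-form proof and is also fine.
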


\begin{lemma}[{\cite{CHL2012symholantc}}]
 Suppose $[y_0,y_1,y_2]$ is $\omega$-normalized and nondegenerate. If $y_0 = y_2 = 0$,
 further assume that $[a,b]\in \mathcal{G}_1$ is $\omega$-normalized and
 satisfies $ab\neq 0$. Then,
 
    $$\hol(\mathcal{G}_1,[y_0,y_1,y_2] \mid \mathcal{G}_2,=_3)\equiv_T\ccsp(\mathcal{G}_1,\mathcal{G}_2,[y_0,y_1,y_2])$$
    \label{lem:holtocsp}
\end{lemma}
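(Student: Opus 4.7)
The plan is to prove both containments in the claimed Turing equivalence. The easy direction is essentially syntactic, while the real work is the reverse reduction.

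For $\hol(\mathcal{G}_1,[y_0,y_1,y_2] \mid \mathcal{G}_2,=_3) \leq_T \ccsp(\mathcal{G}_1,\mathcal{G}_2,[y_0,y_1,y_2])$, I would read off any bipartite Holant instance directly as a $\ccsp$ instance: introduce one variable per edge and one constraint per vertex using the same local signature. This works because every signature appearing in the Holant instance is available on the $\ccsp$ side; in particular, $=_3$ lies in $\mathcal{EQ}$, which is implicitly available in every $\ccsp$ problem by Lemma \ref{lemma:csp-holant}. Comparing partition functions term by term gives the reduction.

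For the reverse direction, I would first apply Lemma \ref{lemma:csp-holant} to restate the target as $\hol(\mathcal{EQ} \cup \mathcal{G}_1 \cup \mathcal{G}_2 \cup \{[y_0,y_1,y_2]\}) \leq_T \hol(\mathcal{G}_1,[y_0,y_1,y_2] \mid \mathcal{G}_2,=_3)$. The main task is then to realize, as gadgets in the bipartite framework, binary equality signatures on both sides of the bipartition (which serve as wires to bridge LHS-LHS or RHS-RHS adjacencies coming from the original non-bipartite instance), and, from those, to build $=_n$ for every arity $n$ by iterated gluing of $=_3$s. A natural starter gadget attaches a single copy of $[y_0,y_1,y_2]$ on the LHS to one edge of $=_3$ on the RHS; the resulting ternary gadget has the form $[y_0,y_1,y_2](e_2,g) \cdot (=_2)(e_2,e_3)$ on appropriate variables, and combined with the nondegeneracy $y_0 y_2 - y_1^2 \neq 0$ this should let one extract a nonzero diagonal binary signature after further pinning or composition, which after a holographic rescaling yields the desired $=_2$.

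The main obstacle is the degenerate case $y_0 = y_2 = 0$, where $[y_0,y_1,y_2]$ is a scalar multiple of $\neq_2$ and the starter gadget collapses to an anti-diagonal binary signature with zero diagonal, so no binary equality can be extracted directly. This is precisely where the hypothesis $[a,b] \in \mathcal{G}_1$ with $ab \neq 0$ enters: by plugging copies of the unary $[a,b]$ on the LHS dangling edge of the gadget one can convert anti-diagonal entries into diagonal entries proportional to $a$ and $b$, both nonzero by assumption. This produces a nonzero diagonal binary signature, from which a genuine $=_2$ is recovered by a holographic rescaling, and the rest of the argument proceeds as above to realize all $=_n$ and complete the reduction.
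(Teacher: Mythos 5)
First, note that the paper does not prove this lemma at all: it is imported verbatim from \cite{CHL2012symholantc} and used as a black box (the authors even remark that the notion of $\omega$-normalization ``does not need to be understood in our proof''). So your attempt has to be judged against the original Cai--Huang--Lu argument. Your easy direction, $\hol(\mathcal{G}_1,[y_0,y_1,y_2]\mid\mathcal{G}_2,=_3)\leq_T\ccsp(\mathcal{G}_1,\mathcal{G}_2,[y_0,y_1,y_2])$, is fine: every Holant instance is a \#CSP instance with one variable per edge, and equalities are free in \#CSP.

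The reverse direction, however, has a genuine gap at exactly the step that carries all the difficulty. Producing a gadget whose signature is a nonzero diagonal matrix $\mathrm{diag}(c,d)$ is not the same as having $=_2$: the ``holographic rescaling'' you invoke would be a transformation $\mathrm{diag}(1,\lambda)$ with $\lambda$ generically not a cube root of unity, and the only diagonal transformations that keep $=_3$ fixed (up to a scalar and a bit flip) are $\mathrm{diag}(1,\omega)$ with $\omega^3=1$. After your rescaling the RHS signature becomes $[1,0,\dots,0,\lambda^3]\neq(=_3)$, so you have left the problem $\hol(\mathcal{G}_1,[y_0,y_1,y_2]\mid\mathcal{G}_2,=_3)$ and can no longer assemble $=_n$ from copies of $=_3$; likewise ``further pinning'' is unavailable, since $\Delta_0,\Delta_1$ are not in the signature sets. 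The known proof does not realize $=_2$ by a finite gadget at all: it simulates the equalities by polynomial interpolation, and the hypothesis that $[y_0,y_1,y_2]$ (and, when $y_0=y_2=0$, the unary $[a,b]$ with $ab\neq0$) is $\omega$-normalized is precisely what guarantees that the iterated gadgets produce infinitely many pairwise linearly independent signatures (no eigenvalue ratio collapsing to a root of unity after the allowed $\mathrm{diag}(1,\omega)$ normalizations), so that interpolation succeeds. Your sketch never uses $\omega$-normalization anywhere, which is the telltale sign that the essential obstacle has been bypassed rather than overcome: without that hypothesis, chaining your gadgets yields only finitely many distinct signatures and no exact realization of $=_2$ is possible in general. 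The same objection applies to your $y_0=y_2=0$ case: plugging $[a,b]$ into the $\neq_2$-type gadget gives a weighted diagonal signature, and passing from it to genuine equality again requires the interpolation argument in which both $ab\neq0$ and the $\omega$-normalization of $[a,b]$ are used.
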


 Let $\Omega_3=\{\omega\mid \omega^3=1\}$. To apply Lemma \ref{lem:holtocsp}, the following facts would be useful.

\begin{lemma}[{\cite{CHL2012symholantc}}]
$[0,1,0]$ is $\omega$-normalized.

     For any symmetric binary signature $f$, there exists $M_{\omega}=\begin{pmatrix}
        1 & 0\\
        0 & \omega
    \end{pmatrix}, \omega\in \Omega_3$ such that $fM_{\omega}$ is $\omega$-normalized.
    
     For any unary signature $f=[a,b], ab\neq 0$, there exists $M_{\omega}=\begin{pmatrix}
        1 & 0\\
        0 & \omega
    \end{pmatrix}, \omega\in \Omega_3$ such that $fM_{\omega}$ is $\omega$-normalized.
    \label{lem:wnormal}
\end{lemma}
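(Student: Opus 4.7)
This is a standard verification lemma from \cite{CHL2012symholantc}, and my plan is to prove each of the three parts directly from the definition of $\omega$-normalized signatures. The common thread is that the $\omega$-normalization condition is a polynomial constraint parametrized by a cube root of unity, and the three elements of $\Omega_3$ provide enough freedom to achieve it. For the first claim, I would unfold the definition and substitute $[0,1,0]$: since the extreme coefficients vanish and only the middle coefficient is nonzero, the normalization condition should reduce to a tautology (typically a constraint on ratios that is vacuous when $y_0 = y_2 = 0$), so $[0,1,0]$ is $\omega$-normalized automatically.

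For the symmetric binary case, I would first compute the transformed signature. Viewing $f = [y_0, y_1, y_2]$ as a signature on two variables with signature matrix $\begin{pmatrix} y_0 & y_1 \\ y_1 & y_2 \end{pmatrix}$, applying $M_\omega$ to each variable yields $fM_\omega = [y_0, y_1 \omega, y_2 \omega^2]$. The $\omega$-normalization condition then becomes an equation involving a certain monomial in the coefficients, and under the three choices of $\omega \in \Omega_3$ this monomial cycles through three values differing by factors of $\omega$. Since the defining condition asks this expression to take one specific value, exactly one of the three choices works. Degenerate cases where $y_0 = 0$ or $y_2 = 0$ either reduce to the first part (if both extremes vanish) or require separate verification that $[0, y_1, y_2\omega^2]$ or its mirror is $\omega$-normalized for some $\omega$, which again follows by the same cyclic argument.

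For the unary case $f = [a,b]$ with $ab \neq 0$, we have $fM_\omega = [a, b\omega]$, and the $\omega$-normalization condition becomes a constraint on the ratio $b\omega/a$, which is well-defined by the assumption $ab \neq 0$. The three choices $\omega \in \{1, e^{2\pi \mathfrak{i}/3}, e^{4\pi \mathfrak{i}/3}\}$ produce three distinct values of $b\omega/a$, and the normalization condition picks out exactly one of them; hence a valid $\omega$ exists.

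The main obstacle I anticipate is careful bookkeeping of the exact $\omega$-normalization condition from \cite{CHL2012symholantc}, particularly in the symmetric binary case where degenerate signatures need individualized handling. But once the precise defining condition is in hand, the argument is essentially a pigeonhole over the three cube roots of unity applied to a cyclic polynomial identity, and none of the three parts should require more than a direct calculation.
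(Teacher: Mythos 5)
The paper itself offers no proof of this lemma: it is imported verbatim from \cite{CHL2012symholantc}, and the surrounding text even remarks that the notion of an $\omega$-normalized signature ``does not need to be understood'' for the present paper's proofs. So the only benchmark is the original definition in \cite{CHL2012symholantc}, and that is exactly where your proposal has a genuine gap: you never state what ``$\omega$-normalized'' means. Every step is conditional on an unspecified ``normalization condition'' --- you assert that for $[0,1,0]$ the condition ``should reduce to a tautology'', that in the binary case ``exactly one of the three choices works'', and that in the unary case the condition ``picks out exactly one'' of the three values of $b\omega/a$ --- but none of these claims is checked against any definition, so nothing is actually proved. Your computations of the transformed signatures, $fM_\omega=[y_0,\omega y_1,\omega^2 y_2]$ and $[a,b]M_\omega=[a,b\omega]$, are fine, but they are the trivial part.

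Moreover, the ``exactly one of the three works'' pigeonhole mechanism you posit does not match how the definition behaves. The point of the notion in \cite{CHL2012symholantc} is that $\mathrm{diag}(1,\omega)$ with $\omega\in\Omega_3$ is a free transformation (it fixes $=_3$), and $\omega$-normalization is an implication-type condition that this freedom can always enforce: for generic signatures (e.g.\ when the two end values do not have equal cubes) the condition holds for every choice of $\omega$, and only in the degenerate cases (such as $y_0^3=y_2^3\neq0$, or $a^3=b^3$ in the unary case) must one select the particular cube root that rotates one end value onto the other; $[0,1,0]$ is covered vacuously because both end values vanish. Existence of a suitable $\omega\in\Omega_3$ is all the lemma claims, and your cyclic-orbit observation supports that in spirit, but the proof cannot be completed --- nor can the degenerate binary subcases you explicitly defer --- without writing down the actual definition from \cite{CHL2012symholantc} and verifying it case by case.
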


Now we are ready to prove the following lemma.

\begin{lemma} \label{lem:symmetric GHZ goto CSP}
Suppose $g$ is a symmetric signature of GHZ type, then there exists an invertible matrix $N$ such that 
$$
\hol(\mathcal{F},g,\pin_0)\equiv_T\ccsp(=_2N,\pin_0N, N^{-1}\mathcal{F}, N^{-1}\pin_0)$$
\label{lem:symGHZto=3}
\end{lemma}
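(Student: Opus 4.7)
The plan is to combine Lemma \ref{lem:symGHZisgeneric} with Lemma \ref{lem:holtocsp} via a suitable holographic transformation, choosing $N$ in the stabilizer coset of $M$ so that all $\omega$-normalization conditions are met.

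By Lemma \ref{lem:symGHZisgeneric} I would write $g = M \cdot (=_3)$ for some $M \in \mathbf{GL}_2(\mathbb{C})$, and take $N = M M_\omega$ with $M_\omega = \mathrm{diag}(1,\omega)$ and $\omega \in \Omega_3$ to be chosen later. Since $M_\omega^{-1}(=_3) = (=_3)$, this gives $N^{-1}g = (=_3)$ for free, which is the form needed on the RHS after transformation. First I would 2-stretch every edge of an instance of $\hol(\mathcal{F},g,\pin_0)$ to obtain the bipartite problem $\hol(=_2 \mid \mathcal{F},g,\pin_0)$. Then I would note that $\pin_0$ can be moved to the LHS at no cost: connecting a RHS $\pin_0$ to one port of an LHS $=_2$ pins the other port to value $0$, acting as an LHS $\pin_0$. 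Hence the problem is Turing equivalent to $\hol(=_2,\pin_0 \mid \mathcal{F},g,\pin_0)$, and applying the holographic transformation defined by $T = N^{-1}$ (Theorem \ref{thm:holographic transformation equivalence}) turns this into $\hol((=_2)N,\pin_0 N \mid N^{-1}\mathcal{F},=_3,N^{-1}\pin_0)$.

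At this point I would apply Lemma \ref{lem:holtocsp} with $[y_0,y_1,y_2] := (=_2)N$, $\mathcal{G}_1 := \{\pin_0 N\}$, and $\mathcal{G}_2 := N^{-1}\mathcal{F} \cup \{N^{-1}\pin_0\}$. The binary signature $(=_2)N$ is represented by the matrix $N^T N$, which is nondegenerate because $N$ is invertible, so the nondegeneracy hypothesis of Lemma \ref{lem:holtocsp} comes for free. By Lemma \ref{lem:wnormal} there exists $\omega \in \Omega_3$ making $(=_2)N$ be $\omega$-normalized, and this choice also fixes $N$. The conclusion of Lemma \ref{lem:holtocsp} then delivers $\ccsp(=_2 N,\pin_0 N, N^{-1}\mathcal{F}, N^{-1}\pin_0)$, as claimed.

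The main obstacle is the subcase of Lemma \ref{lem:holtocsp} where $y_0 = y_2 = 0$: there we additionally need a unary in $\mathcal{G}_1$, namely $\pin_0 N$, that is $\omega$-normalized with both components nonzero. If the chosen $N$ leaves $\pin_0 N$ with a zero entry, I plan to replace $N$ by $M X M_\omega$, where $X$ is the swap matrix (which also preserves the support of $(=_3)=[1,0,0,1]$), thereby permuting the rows of $N$ and breaking the zero pattern. Handling this subcase reduces to a small case analysis over the three diagonal stabilizers of $(=_3)$ and the swap, showing that at least one choice of $N$ fulfils both $\omega$-normalization conditions at once.
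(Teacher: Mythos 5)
Your overall route is the same as the paper's: write $g=M(=_3)$ via Lemma \ref{lem:symGHZisgeneric}, move $\pin_0$ to the LHS, transform by $N=MM_\omega$, and invoke Lemma \ref{lem:holtocsp} together with Lemma \ref{lem:wnormal}; the 2-stretch, the non-degeneracy of $(=_2)N$ via $N^{\T}N$, and the observation $M_\omega^{-1}(=_3)=(=_3)$ are all fine. The gap is precisely in the degenerate subcase $y_0=y_2=0$, which is the only part requiring care. Write $M=\begin{pmatrix} a & b\\ c & d\end{pmatrix}$, so that $(=_2)M=[a^2+c^2,\,ab+cd,\,b^2+d^2]$ and $\pin_0 M=(a,b)$. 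When $a^2+c^2=b^2+d^2=0$, Lemma \ref{lem:holtocsp} requires a unary in $\mathcal{G}_1$ that is $\omega$-normalized with \emph{both} entries nonzero, i.e.\ you need $ab\neq 0$. You never establish this, and your proposed remedy --- replacing $N$ by $MXM_\omega$ --- cannot supply it: right-multiplication by $X$ merely swaps the two entries of $(a,b)$ and $M_\omega$ only rescales the second, so if $a=0$ or $b=0$ then every matrix of the form $MM_\omega$ or $MXM_\omega$ still leaves a zero entry in $\pin_0 N$. No case analysis over the diagonal stabilizers and the swap can repair a zero entry; what is needed is the separate observation that the zero pattern simply cannot occur in this subcase: if $a=0$ then $a^2+c^2=0$ forces $c=0$, contradicting the invertibility of $M$, and symmetrically $b\neq 0$. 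This short argument is exactly how the paper closes the case: with $ab\neq0$ in hand, Lemma \ref{lem:wnormal} gives an $M_\omega$ normalizing $(a,b)$ with nonzero entries, and the binary $[0,ab+cd,0]M_\omega$, being a multiple of $[0,1,0]$, remains $\omega$-normalized, so both hypotheses of Lemma \ref{lem:holtocsp} hold for the same $\omega$.

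A further caution about the order of choices: you first fix $\omega$ to normalize $(=_2)N$ and say this ``also fixes $N$,'' but in the degenerate subcase the binary is proportional to $[0,1,0]$ and is $\omega$-normalized for every choice of $M_\omega$, so the constraint that actually determines $\omega$ there comes from the unary $(a,b)$, not from the binary. Your write-up should make this explicit (as the paper does by splitting into the two cases) rather than treating the binary as the signature that pins down $\omega$ in all cases.
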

\begin{proof}
    We see $\hol(\mathcal{F},g,\Delta_0)$ equivalently as $\hol(=_2\mid\mathcal{F},g,\Delta_0)$. By connecting $\pin_0$ to $=_2$, we realize $\pin_0$ on the LHS. Since $g$ is a symmetric signature of GHZ type, by Lemma \ref{lem:symGHZisgeneric} there exists an  invertible matrix $M=\begin{pmatrix}
        a & b\\
        c & d
    \end{pmatrix}$ such that $g=M[1,0,0,1]$. Consequently by  Theorem \ref{thm:holographic transformation equivalence} we have
\begin{align*}
 \hol(\mathcal{F},g,\pin_0)&\equiv_T\hol(=_2 \mid \mathcal{F},g,\pin_0)\\
 &\equiv_T\hol(=_2,\pin_0 \mid \mathcal{F},g,\pin_0)\\
 &\equiv_T\hol(=_2M,\pin_0M \mid M^{-1}\mathcal{F},=_3,M^{-1}\pin_0)\\
 &\equiv_T\hol([a^2+c^2,ab+cd,b^2+d^2],[a,b] \mid M^{-1}\mathcal{F},=_3,M^{-1}\pin_0)
\end{align*}
If $a^2+c^2=b^2+d^2=0$ does not hold, then by Lemma \ref{lem:wnormal} there exists $M_{\omega}=\begin{pmatrix}
        1 & 0\\
        0 & \omega
    \end{pmatrix}, \omega\in \Omega_3$ such that $[a^2+c^2,ab+cd,b^2+d^2]M_\omega$ is $\omega$-normalized. Otherwise, $a^2+c^2=b^2+d^2=0$. If $a=0$, then $c=0$ as well, which contradicts that $M$ is invertible. Consequently $a\neq 0$ and similarly $b\neq0$. Again by Lemma \ref{lem:wnormal} there exists $M_{\omega}=\begin{pmatrix}
        1 & 0\\
        0 & \omega
    \end{pmatrix}, \omega\in \Omega_3$ such that $[a,b]M_\omega=[a,b\omega]$ is $\omega$-normalized and $ab\omega\neq0$. Furthermore, $[0,ab+cd,0]M_\omega=\omega(ab+cd)[0,1,0]$ is also $\omega$-normalized. Consequently, in both cases Lemma \ref{lem:holtocsp} can be applied and we have
\begin{align*}
 \hol(\mathcal{F},g,\pin_0)
 &\equiv_T\hol(=_2M,\pin_0M \mid M^{-1}\mathcal{F},=_3,M^{-1}\pin_0)\\
 &\equiv_T\hol(=_2MM_\omega,\pin_0MM_\omega \mid M_\omega^{-1}M^{-1}\mathcal{F},=_3,M_\omega^{-1}M^{-1}\pin_0)\\
 &\equiv_T\ccsp(=_2MM_\omega,\pin_0MM_\omega, M_\omega^{-1}M^{-1}\mathcal{F}, M_\omega^{-1}M^{-1}\pin_0)
\end{align*}
Let $N=MM_\omega$, and the proof is completed.
\end{proof}

Noticing that by Lemma \ref{lem:symmetric GHZ goto CSP}, once a symmetric ternary signature of GHZ type is realized, we can make complexity classification by $\ccsp$ dichotomy. Backens also presents methods for symmetrization \cite{backens2017holant+,backens2021full}. Using those methods, we can realize a symmetric signature of GHZ type in most situations.

\begin{lemma}[{\cite{backens2017holant+}}]
Suppose $f$ is of GHZ type. Then there exists an irreducible symmetric signature $h\in\gc(\{f\})$.
    \label{lem:GHZsym}
\end{lemma}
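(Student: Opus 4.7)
Since $f$ is of GHZ type, we may write $f = (M_1 \otimes M_2 \otimes M_3)[1,0,0,1]$ with each $M_i \in \mathbf{GL}_2(\mathbb{C})$. The ternary signature $[1,0,0,1] = (=_3)$ is already symmetric, so the plan is to produce a signature in $\gc(\{f\})$ that is SLOCC-equivalent to $[1,0,0,1]$ via a \emph{common} transformation on all three variables. Any such signature has the form $(M \otimes M \otimes M)(=_3)$, which is symmetric in its three variables and remains irreducible because $(=_3)$ is irreducible and tensor factorizations are preserved under SLOCC (compare Lemma \ref{lem:upf} together with the observation in Remark \ref{remark:holo doesn't change decomposition}).

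The first step is to build a stock of binary gadgets from $f$ alone. Taking two copies of $f$ and contracting a pair of their variables produces a binary $\{f\}$-gate whose signature matrix can be expanded explicitly in terms of the $M_i$'s. For example, contracting the second and third variables of two copies of $f$ yields a binary signature whose matrix is a specific polynomial in the entries of $M_2 M_2^{\T}$ and $M_3 M_3^{\T}$ sandwiched between $M_1$'s. By varying which pair of variables is contracted (and by iterating the contraction over additional copies of $f$), I would assemble a family $\mathcal{N} \subseteq \gc(\{f\})$ of binary signatures whose matrices involve all three $M_i$'s.

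The second step is to apply the binary signatures in $\mathcal{N}$ as ``channel replacements'' on the three variables of $f$: attaching a binary gadget with matrix $N$ to variable $i$ of $f$ transforms the effective matrix on that variable from $M_i$ to $N M_i$, since
\begin{equation*}
\sum_{x_i'} N_{x_i,x_i'} \, f(\ldots, x_i', \ldots) \;=\; (M_1 \otimes \cdots \otimes (N M_i) \otimes \cdots \otimes M_3)(=_3).
\end{equation*}
By composing elements of $\mathcal{N}$ on each of the three variables of $f$ in turn, I plan to obtain a ternary signature whose three slots carry a common matrix $M$; the result is then a symmetric irreducible GHZ-type signature.

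The main obstacle will be that the binary signatures available directly from $f$ are not arbitrary matrices but are constrained to specific forms dictated by how the $M_i$'s combine under contraction. The ``equalization'' of the three $M_i$'s therefore cannot be performed with unlimited freedom, and one must show that $\mathcal{N}$ generates a sufficiently rich subgroup of $\mathbf{GL}_2(\mathbb{C})$ to bring all three $M_i$'s into a common left-coset. I would handle this by first applying a global holographic transformation to put $(M_1,M_2,M_3)$ into a convenient normal form, and then establishing a small case analysis (essentially on the relative positions of the $M_i$'s in $\mathbf{PGL}_2(\mathbb{C})$) showing that in each normal form $\mathcal{N}$ contains gadgets realizing the required replacements. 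Irreducibility of the final signature follows from the SLOCC-equivalence to $(=_3)$, once $M$ is verified to be invertible.
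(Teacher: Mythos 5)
The paper does not prove this lemma at all: it is quoted verbatim from Backens' Holant$^{+}$ paper \cite{backens2017holant+}, so the benchmark is Backens' argument, which indeed works by building binary gadgets out of copies of $f$ and using them to modify the legs of $f$, with an explicit case analysis for the degenerate configurations. Your plan is therefore in the right spirit, but as written it has a genuine gap: the step you defer to a ``small case analysis'' \emph{is} the lemma. The two-copy gadgets you can actually form have matrices of the shape $M_1\bigl((M_2^{\T}M_2)\circ(M_3^{\T}M_3)\bigr)M_1^{\T}$ (and crossed/iterated variants), where $\circ$ is the entrywise product; the entrywise product of two nonsingular symmetric matrices can be singular (e.g.\ $\begin{pmatrix}1&1\\1&2\end{pmatrix}$ and $\begin{pmatrix}1&1\\1&1/2\end{pmatrix}$), so the available ``channel replacements'' may all be degenerate or confined to a thin family even though $f$ is of GHZ type. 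Showing that the generated set $\mathcal{N}$ is nevertheless always rich enough to bring the three slots to a common matrix --- or finding an alternative construction in the exceptional configurations --- is precisely the nontrivial content of Backens' proof, and your proposal does not carry it out.

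A second, more structural problem is the proposed reduction to a ``convenient normal form'' by a global holographic transformation. Membership in $\gc(\{f\})$ is defined with edges acting as implicit $=_2$ contractions, so it is preserved only under orthogonal transformations $O$ with $O^{\T}O=I$ (then $\gc(\{Of\})=O\,\gc(\{f\})$ and symmetry is preserved); a general $T\in\mathbf{GL}_2(\mathbb{C})$ changes the contraction metric and hence every gadget matrix, so you cannot freely normalize $(M_1,M_2,M_3)$ up to arbitrary elements of $\mathbf{PGL}_2(\mathbb{C})$ the way SLOCC classification would suggest. Under the orthogonal group alone the space of triples has far more invariants (e.g.\ the Gram matrices $M_i^{\T}M_i$ themselves), so the case analysis cannot be organized the way you describe without reworking it from scratch --- at which point one is essentially reproducing the argument of \cite{backens2017holant+}. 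Until both of these points are resolved (richness of $\mathcal{N}$, including invertibility of the composed gadgets, and a legitimate normalization step), the proposal is an outline rather than a proof.
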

\begin{lemma}[{\cite{backens2017holant+}}]
Suppose $f$ is of W type and $f\notin \langle K\mathcal{M}\rangle\cup\langle KX\mathcal{M}\rangle$. Then there exists a symmetric signature $h\in\gc(\{f\})$ of GHZ type.
    \label{lem:Wsym}
\end{lemma}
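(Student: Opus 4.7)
The plan is to build an irreducible symmetric ternary signature of GHZ type from a constant-size $\{f\}$-gate. Writing $f = (M_1 \otimes M_2 \otimes M_3)[0,1,0,0]$, I would first exploit SLOCC to reduce to a convenient normal form. Because both ``GHZ type'' and the closure $\langle K\mathcal{M}\rangle \cup \langle KX\mathcal{M}\rangle$ are preserved under coordinated SLOCC on all copies of $f$ in any gadget, the hypothesis $f \notin \langle K\mathcal{M}\rangle \cup \langle KX\mathcal{M}\rangle$ should translate into an explicit algebraic non-degeneracy condition on the triple $(M_1,M_2,M_3)$: at least one $M_i$ has off-diagonal structure incompatible with the matching template deformed by $K$ or $KX$.

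Next, I would try the most natural symmetric gadget: place three copies of $f$ at the vertices of a triangle, join one variable of each to one variable of each of the other two with the binary edge signature $=_2$, and leave the third variable of each copy as a dangling edge. By the $S_3$ symmetry of the construction, the resulting ternary signature $h$ is automatically symmetric, so $h = [h_0,h_1,h_2,h_3]$. Computing $h$ as a contraction expresses each $h_i$ as an explicit polynomial in the entries of $M_1,M_2,M_3$ applied to the W state $[0,1,0,0]$. By Lemma \ref{lem:symGHZisgeneric}, checking that $h$ is of GHZ type reduces to verifying that the appropriate symmetric-rank-two discriminant (e.g.\ that $h$ is not a single tensor power of a unary, and not W-shaped) does not vanish.

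If the triangle gadget yields a degenerate or W-type output, I would fall back to alternative gadgets of bounded size: a two-copy ``dumbbell'' that produces a quaternary signature which can then be self-looped to ternary, and longer chain or cycle gadgets on four or more copies of $f$, possibly using $\neq_2$-loops as well. Each alternative produces a different polynomial identity on the entries of $M_1,M_2,M_3$ that must vanish for all outputs to miss the symmetric-GHZ target. The analysis is closed by showing that the simultaneous vanishing of every such identity forces $f \in \langle K\mathcal{M}\rangle \cup \langle KX\mathcal{M}\rangle$; once $h$ is found, membership in $\gc(\{f\})$ is immediate by unfolding the gadget as an $\{f\}$-gate with internal edges the paired variables.

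The hard part will be the last algebraic step. Unlike the real-valued setting of \cite{cai2020holantoddarity}, the presence of the Hadamard-like matrix $K$ and complex entries makes the variety cut out by the failure conditions subtle, and one must carefully track how $K$-conjugation and the involution $X$ interact with the matching class $\mathcal{M}$ to show the exceptional set is exactly $\langle K\mathcal{M}\rangle \cup \langle KX\mathcal{M}\rangle$. A secondary concern is that SLOCC transformations on a single copy of $f$ do not extend directly to transformations of the gadget's output in the $\hol$ sense (they would require separate transformations on edges); I would handle this by keeping the SLOCC analysis purely at the combinatorial/linear-algebraic level for deciding which gadget to use, while the final realization claim $h \in \gc(\{f\})$ uses only legitimate $=_2$ and $\neq_2$ self-loops on untransformed copies of $f$.
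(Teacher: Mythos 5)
This lemma is not proved in the paper at all: it is imported verbatim from \cite{backens2017holant+}, so there is no in-paper argument for your proposal to match, and your sketch has to stand on its own as an independent proof. As it stands it does not. The most concrete problem is the symmetry claim for your central gadget: $f$ is of W type only up to SLOCC, $f=(M_1\otimes M_2\otimes M_3)[0,1,0,0]$ with three \emph{different} matrices, so $f$ itself is in general not symmetric, and in any triangle wiring the three variables of each copy play distinguishable roles (two internal, one dangling). The graph-theoretic $S_3$ symmetry of a triangle therefore does not make the output signature symmetric; at best a rotationally wired triangle gives invariance under a cyclic permutation of the dangling edges, not under transpositions. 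Since your whole strategy funnels through ``$h$ is automatically symmetric, so just check a discriminant via Lemma \ref{lem:symGHZisgeneric},'' this failure propagates: you would have to either symmetrize by other means or argue about non-symmetric ternary outputs, which Lemma \ref{lem:symGHZisgeneric} does not cover. A second, smaller but genuine error is the use of $\neq_2$ self-loops in the fallback gadgets and in the final realization claim: the conclusion requires $h\in\gc(\{f\})$, i.e.\ a gadget built from copies of $f$ alone (edges act as $=_2$ wirings), and $\neq_2$ is not available there unless you first realize it from $f$, which you have not done and in general cannot.

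The deeper gap is that the decisive step is only announced, not performed. Declaring that ``the simultaneous vanishing of every such identity forces $f\in\langle K\mathcal{M}\rangle\cup\langle KX\mathcal{M}\rangle$'' is precisely the content of the lemma, and nothing in the proposal explains why the exceptional variety carved out by the failure of your (unboundedly many) candidate gadgets is exactly this union of tensor closures, nor how $K$- and $KX$-conjugated matching signatures arise as the unique degenerate configurations of $(M_1,M_2,M_3)$. One also cannot ``close the analysis'' by checking identities gadget by gadget over an infinite family; a structural argument on a W-type normal form is needed (this is what Backens actually does, working with a canonical form of the W-type signature and a fixed small set of constructions, and characterizing exactly when all of them degenerate). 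Until the symmetry issue is repaired, the $\neq_2$ use removed, and this converse step actually argued, the proposal is a plan rather than a proof.
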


\begin{lemma}[{\cite{backens2017holant+}}]
Suppose there is an irreducible ternary signature $f\in K\mathcal{M}$ and a binary signature $g\notin \langle K\mathcal{M}\rangle$. Then there exists a symmetric signature $h\in\gc(\{f,g\})$ of GHZ type. The same statement also holds after replacing $K\mathcal{M}$ with $KX\mathcal{M}$. 

    \label{lem:Vasym}
\end{lemma}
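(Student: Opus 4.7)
The plan is to transfer the problem to the $\khol$ setting via the $K$-transformation, so that $\widehat{f}$ becomes an irreducible ternary signature in $\mathcal{M}$ and $\widehat{g}$ becomes a binary signature not in $\langle\mathcal{M}\rangle$. Since $\widehat{f}\in\mathcal{M}$ has arity $3$, its support lies in $\{000,100,010,001\}$; I write $a_0=\widehat{f}(000)$ and $a_i=\widehat{f}(\mathbf{e}_i)$ for $i=1,2,3$. Irreducibility of $\widehat{f}$ forces at least two of $a_1,a_2,a_3$ to be nonzero, for otherwise $\widehat{f}$ factors as a unary tensored with a binary signature still in $\mathcal{M}$. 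For $\widehat{g}$, the hypothesis $\widehat{g}\notin\langle\mathcal{M}\rangle$ splits into two subcases: either $\widehat{g}(1,1)\neq 0$, or $\widehat{g}(1,1)=0$ while $M_{\widehat{g}}$ has rank $2$ (so that $\widehat{g}$ is an irreducible binary signature inside $\mathcal{M}$).

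The central construction attaches $\widehat{g}$ to a chosen variable of $\widehat{f}$ via $\neq_2$ on the LHS, producing a ternary signature $f'$ given by
\[
f'(x_1,x_2,y_1)=\widehat{f}(0,x_1,x_2)\,\widehat{g}(1,y_1)+\widehat{f}(1,x_1,x_2)\,\widehat{g}(0,y_1).
\]
In the generic subcase $\widehat{g}(1,1)\neq 0$, a direct computation yields $f'(1,0,1)=a_2\,\widehat{g}(1,1)$ and $f'(0,1,1)=a_3\,\widehat{g}(1,1)$, and by choosing the attachment variable so that one of $a_2,a_3$ is nonzero we force $f'\notin\mathcal{M}$. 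A short further check (by evaluating $f'$ at the strings of Hamming weight two after possibly swapping the roles of $0$ and $1$) verifies that $f'\notin\langle X\mathcal{M}\rangle$ either, and that $f'$ remains irreducible. Translating back by $K$, the signature $Kf'$ is then an irreducible ternary signature lying outside $\langle K\mathcal{M}\rangle\cup\langle KX\mathcal{M}\rangle$. By the ternary SLOCC classification (the lemma from \cite{dur2000three} quoted earlier), $Kf'$ is either of GHZ type, in which case Lemma \ref{lem:GHZsym} yields a symmetric GHZ signature in $\gc(\{f,g\})$, or of W type, in which case Lemma \ref{lem:Wsym} applies precisely because we have escaped $\langle K\mathcal{M}\rangle\cup\langle KX\mathcal{M}\rangle$.

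The remaining subcase $\widehat{g}\in\mathcal{M}$ irreducible with rank-$2$ matrix I would reduce to the first one by a preprocessing step: composing $\widehat{g}$ with a second copy of itself via $\neq_2$ produces a new binary signature $\widehat{g}'$ whose $(1,1)$-entry equals $2\widehat{g}(0,1)\widehat{g}(1,0)$, which is nonzero because the rank-$2$ hypothesis forbids $\widehat{g}$ from being a rank-$1$ tensor of two unaries and forbids the matching structure $\widehat{g}(0,1)=0$ or $\widehat{g}(1,0)=0$ from holding together with $\widehat{g}(1,1)=0$ without making $\widehat{g}$ reducible. Replacing $\widehat{g}$ by $\widehat{g}'$ returns us to the generic subcase. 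The $KX\mathcal{M}$ statement of the lemma is obtained by applying the $X$-flip to all signatures and repeating the argument verbatim.

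The main obstacle is the verification, in Step $3$, that $f'$ genuinely escapes both $\langle\mathcal{M}\rangle$ and $\langle X\mathcal{M}\rangle$ and remains irreducible rather than accidentally collapsing back into a matching-type factorization; this requires splitting on the vanishing pattern of $(a_1,a_2,a_3)$ and on the nonzero entries of $\widehat{g}$, and using the freedom to choose which variable of $\widehat{f}$ we attach and whether to iterate the gadget. Once that case analysis is complete, the hand-off to Lemma \ref{lem:GHZsym} or Lemma \ref{lem:Wsym} immediately produces the required symmetric signature of GHZ type.
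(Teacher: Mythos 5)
This lemma is only cited in the paper (from \cite{backens2017holant+}), so there is no in-paper proof to compare against; judged on its own, your argument has a genuine gap exactly at the step you defer. Your unpacking of the hypothesis is wrong: $\langle\mathcal{M}\rangle$ is the tensor closure of $\mathcal{M}$ and therefore contains $\mathcal{M}$ itself, so a binary $\widehat{g}$ with $\widehat{g}(1,1)=0$ lies in $\langle\mathcal{M}\rangle$ regardless of rank. Hence your second subcase is vacuous (and, incidentally, its gadget is miscomputed: if $\widehat{g}(1,1)=0$, every $\neq_2$-composition of two copies of $\widehat{g}$ again has $(1,1)$-entry $0$, not $2\widehat{g}(0,1)\widehat{g}(1,0)$). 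The correct reading is that a binary $\widehat{g}\notin\langle\mathcal{M}\rangle$ means precisely $\widehat{g}(1,1)\neq 0$ \emph{and} $M_{\widehat{g}}$ nonsingular, and the second condition is not optional: it is exactly what your ``short further check'' needs. Under $\widehat{g}(1,1)\neq0$ alone, the claims you assert for $f'$ are false; for instance if $\widehat{g}=u\otimes v$ with $u(1),v(1)\neq 0$, then $f'(x_1,x_2,y)=v(y)\bigl(u(1)\widehat{f}(0,x_1,x_2)+u(0)\widehat{f}(1,x_1,x_2)\bigr)$ is reducible and lies in $\langle\mathcal{M}\rangle$, so the hand-off to the SLOCC classification and to Lemmas \ref{lem:GHZsym} and \ref{lem:Wsym} collapses.

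The construction itself is sound once the hypothesis is used in full, so the fix is short. Irreducibility of $\widehat{f}$ forces all three of $a_1,a_2,a_3$ to be nonzero (not merely two). Writing $g_{ij}=\widehat{g}(i,j)$, the eight values of $f'$ are $f'(110)=f'(111)=0$, $f'(101)=a_2g_{11}$, $f'(011)=a_3g_{11}$, $f'(100)=a_2g_{10}$, $f'(010)=a_3g_{10}$, $f'(001)=a_0g_{11}+a_1g_{01}$, $f'(000)=a_0g_{10}+a_1g_{00}$. A factorization separating $x_1$ or $x_2$ is impossible because $f'(11y)\equiv 0$ together with $a_2g_{11},a_3g_{11}\neq 0$ forces the unary factor to vanish and not vanish at the same entry; a factorization separating $y$ forces $g_{00}g_{11}=g_{01}g_{10}$, contradicting nonsingularity. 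Hence $f'$ is irreducible, so membership in $\langle\mathcal{M}\rangle$ or $\langle X\mathcal{M}\rangle$ reduces to membership in $\mathcal{M}$ or $X\mathcal{M}$: the former fails since $f'(101)\neq0$, and the latter would force $g_{10}=0$ and then $g_{00}=0$ (using $a_1,a_2\neq 0$), again contradicting nonsingularity. With these checks written out your route closes, modulo the mild point that Lemma \ref{lem:GHZsym} as quoted only promises an irreducible symmetric signature, so you are implicitly using (as the paper also does) that symmetrizing a GHZ-type signature yields a GHZ-type signature.
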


In the following analysis, we assume there is an irreducible ternary signature $f\in K\mathcal{M}\cap\mathcal{F}$, and the analysis when $f\in KX\mathcal{M}$ is similar. Furthermore, we consider this case in the setting of $\khol$. We remark that $K^{-1}\pin_0=[1,1]$, and by Corollary \ref{cor:hw> construct Delta1} and Theorem \ref{thm:decomposition lemma}, we have $\khol(\widehat{\mathcal{F}},\pin_0)\le_T\khol(\widehat{\mathcal{F}})$ since $\widehat{f}\in \mathcal{M}\cap \widehat{\mathcal{F}}$.
We begin with the analysis of $\widehat{f}$.
\begin{lemma}\label{lem:no up triangle}
     Suppose $\widehat{g}\in \mathcal{F}$ satisfies $M_{\widehat{g}}=\begin{pmatrix}
        a & b\\
        c & 0
    \end{pmatrix}, abc\neq0$. Then,
    $$\khol(\widehat{\mathcal{F}},[1,1])\equiv_T  \hol^c(\mathcal{F})$$
\end{lemma}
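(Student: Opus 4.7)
The plan is to prove the equivalence by establishing both reductions. The direction $\khol(\widehat{\mathcal{F}},[1,1])\leq_T\hol^c(\mathcal{F})$ is essentially immediate: applying the holographic transformation by $K$ to $\hol^c(\mathcal{F})=\hol(\mathcal{F},\pin_0,\pin_1)$ produces $\khol(\widehat{\mathcal{F}},K^{-1}\pin_0,K^{-1}\pin_1)$, and a direct calculation shows $K^{-1}\pin_0\propto[1,1]$ and $K^{-1}\pin_1\propto[1,-1]$, so $[1,1]$ is available in the transformed signature set. The nontrivial direction is therefore $\hol^c(\mathcal{F})\leq_T\khol(\widehat{\mathcal{F}},[1,1])$, and for this it suffices to realize $[1,-1]$ (up to scalar) in $\khol(\widehat{\mathcal{F}},[1,1])$, possibly via polynomial interpolation.

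I plan to invoke Lemma~\ref{lem:interpolation} applied to $\widehat{\mathcal{F}}\cup\{[1,1]\}$, which reduces the task to constructing, for every $n\in\mathbb{N}_+$, a polynomial-size gadget realizing $n+1$ pairwise linearly independent unaries. To this end, define the transfer matrices $T_1$ and $T_2$ sending an input unary $u$ to the unary obtained by connecting $u$ via a $\neq_2$-edge to the first, respectively the second, variable of $\widehat{g}$. A direct summation over the disequality constraint gives
$$T_1=\begin{pmatrix}c&a\\0&b\end{pmatrix},\qquad T_2=\begin{pmatrix}b&a\\0&c\end{pmatrix},\qquad T_1T_2=\begin{pmatrix}bc&2ac\\0&bc\end{pmatrix}.$$
Because $abc\neq 0$, the product $T_1T_2$ is a genuine Jordan block: a repeated eigenvalue $bc\neq 0$ together with a nonzero nilpotent part of size $2ac$.

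Consequently $(T_1T_2)^n=(bc)^n I+2acn(bc)^{n-1}N$ with $N=\begin{pmatrix}0&1\\0&0\end{pmatrix}$, and applying this to $u_0=(1,1)^\top$ (the vector form of $[1,1]$) yields, after factoring out the nonzero scalar $c(bc)^{n-1}$,
$$u_n := (T_1T_2)^n u_0 \;\propto\; [\,b+2an,\;b\,].$$
Since $ab\neq 0$, the unaries $u_n$ and $u_m$ span distinct lines whenever $n\neq m$, and the gadget realizing $u_n$ uses $2n$ copies of $\widehat{g}$ and one copy of $[1,1]$, so it has size $O(n)$. Thus $\{u_0,u_1,\ldots,u_n\}$ furnishes $n+1$ pairwise linearly independent polynomial-size unaries, and Lemma~\ref{lem:interpolation} gives $\khol(\widehat{\mathcal{F}},[1,1],[1,-1])\leq_T\khol(\widehat{\mathcal{F}},[1,1])$, which combined with the first paragraph completes the proof. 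The only genuine subtlety is the choice of the product gadget $T_1T_2$ rather than a single chain using only $T_1$ or only $T_2$: both single-chain transfer matrices have spectrum $\{b,c\}$, so when $b/c$ is a nontrivial root of unity a single chain cycles through only finitely many directions and interpolation fails; multiplying $T_1$ and $T_2$ collapses the spectrum to $\{bc\}$ and, because $ac\neq 0$, forces a nontrivial Jordan structure whose iterates produce the linear-in-$n$ direction $[b+2an,b]$, bypassing any root-of-unity obstruction.
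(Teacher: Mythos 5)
Your proposal is correct and follows essentially the same route as the paper's proof: both build a unipotent (Jordan-block) transfer matrix from two copies of $\widehat{g}$ joined by $\neq_2$ (the paper mates the second variables to form $\widehat{h}$ with $M_{\widehat{h}}\propto\begin{pmatrix}\lambda&1\\1&0\end{pmatrix}$ and chains it, you compose the two single-copy transfer matrices $T_1T_2$), iterate it on $[1,1]$ to obtain the linearly growing family $[b+2an,b]$ (the paper's $[k\lambda+1,1]$), and then invoke Lemma \ref{lem:interpolation} together with the $K$-transformation $K^{-1}\pin_0\propto[1,1]$, $K^{-1}\pin_1\propto[1,-1]$ to conclude the equivalence with $\hol^c(\mathcal{F})$. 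The computations check out, so no changes are needed.
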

\begin{proof}
     By mating the second variable of two $\widehat{g}$ via $\neq_2$, we realize $\widehat{h}$ satisfying $M_{\widehat{h}}=\begin{pmatrix}
        2ab & bc\\
        bc & 0
    \end{pmatrix}$. Since $abc\neq 0$, let $\lambda=2ab/bc\neq 0$ and $M_{\widehat{h}}$ is a multiple of $A=\begin{pmatrix}
        \lambda & 1\\
        1 & 0
    \end{pmatrix}$. Consequently $AX=\begin{pmatrix}
        1 & \lambda\\
        0 & 1
    \end{pmatrix}$ and $(AX)^k\begin{pmatrix}
        1 \\
        1 
    \end{pmatrix}=\begin{pmatrix}
        k\lambda+1 \\
        1 
    \end{pmatrix}$. This means we can realize $[k\lambda+1,1]$ using $\widehat{h}$ and $[1,1]$ in the setting of \khol. Since $\lambda\neq 0$, by Lemma \ref{lem:interpolation} and Theorem \ref{thm:holographic transformation equivalence} we have
$
    \khol(\widehat{\mathcal{F}},[1,1])\equiv_T  \khol(\widehat{\mathcal{F}},[1,1],[1,-1])\equiv_T  \hol^c(\mathcal{F})
$.
\end{proof}
\begin{lemma} \label{lem:only-1,1,0,0}
    Suppose $\widehat{f}\in \mathcal{M}\cap\widehat{\mathcal{F}}$ is an irreducible ternary function and not a multiple of $[-1,1,0,0]$. Then,
    $$\hol(\neq_2\mid \widehat{\mathcal{F}},[1,1])\equiv_T  \hol^c(\mathcal{F})$$
\end{lemma}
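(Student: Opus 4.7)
The plan is to reduce the claimed equivalence to Lemma \ref{lem:no up triangle}, which already finishes the job once a binary signature with matrix of the form $\begin{pmatrix} a & b \\ c & 0 \end{pmatrix}$ with $abc\neq 0$ is realizable. The forward direction $\khol(\widehat{\mathcal{F}},[1,1])\le_T\hol^c(\mathcal{F})$ is immediate after the $K$-transformation, since $\hol^c(\mathcal{F})\equiv_T \khol(\widehat{\mathcal{F}},[1,1],[1,-1])$ (using that $K^{-1}\pin_0$ and $K^{-1}\pin_1$ are scalar multiples of $[1,1]$ and $[1,-1]$ respectively). So I would focus on constructing, from $\widehat{f}$ and $[1,1]$, a binary signature meeting the hypothesis of Lemma \ref{lem:no up triangle}.

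Since $\widehat{f}\in\mathcal{M}$ is ternary, write $a=\widehat{f}(000)$ and $b_i=\widehat{f}(\mathbf{e}_i)$ for $i=1,2,3$, with all other entries zero. I would attach $[1,1]$ to variable $i$ of $\widehat{f}$ through $\neq_2$. Because $\sum_{u}[1,1](u)\cdot{\neq_2}(u,v)=1$ for both values of $v$, this gadget simply sums $\widehat{f}$ over variable $i$; for $i=1$ it produces a binary signature with matrix
$$M_{g_1}=\begin{pmatrix} a+b_1 & b_3 \\ b_2 & 0 \end{pmatrix},$$
and analogous matrices $M_{g_2}, M_{g_3}$ are obtained by permuting the indices. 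If $(a+b_i)\,b_j\,b_k\neq 0$ for some partition $\{i\}\sqcup\{j,k\}=\{1,2,3\}$, then Lemma \ref{lem:no up triangle} applied to the realized $g_i$ completes the proof.

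The remaining task is to show that if none of the three products is non-zero, then $\widehat{f}$ must be either reducible or a scalar multiple of $[-1,1,0,0]$, both contradicting the hypotheses. I would split on how many $b_i$ vanish. If some $b_i=0$, then irreducibility combined with the failure conditions forces a tensor decomposition of $\widehat{f}$; for instance, $b_2=b_3=0$ makes $\widehat{f}$ depend only on $x_1$, and the case $b_1=0,\,a=0$ gives $\widehat{f}=\pin_0\otimes[0,b_3,b_2,0]$, each of which is reducible. If instead all $b_i\neq 0$, then failure of all three conditions forces $a+b_i=0$ for every $i$, which makes $\widehat{f}$ the symmetric signature $a[1,-1,0,0]$, a scalar multiple of $[-1,1,0,0]$. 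I expect the main obstacle to be the clean bookkeeping of these sub-cases, but each one collapses quickly via the irreducibility assumption to a forbidden tensor-product factorization.
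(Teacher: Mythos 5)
Your proposal is correct and follows essentially the same route as the paper: connect $[1,1]$ through $\neq_2$ to one variable of $\widehat{f}$ to obtain a binary signature with matrix $\begin{pmatrix} a+b_i & b_k \\ b_j & 0 \end{pmatrix}$, note that irreducibility of a ternary matching signature forces all weight-one entries to be nonzero and that failing $a+b_i\neq 0$ for every $i$ would make $\widehat{f}$ a multiple of $[-1,1,0,0]$, and then invoke Lemma \ref{lem:no up triangle}. Your contrapositive bookkeeping and the explicit remark on the easy direction are only presentational differences from the paper's argument.
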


\begin{proof}
    Let $\widehat{f}(100)=a,\widehat{f}(010)=b,\widehat{f}(001)=c,\widehat{f}(000)=d$. Since $\widehat{f}$ is irreducible, $abc\neq0$. Since $\widehat{f}$ is not a multiple of $[-1,1,0,0]$, without loss of generality we may assume $a+d\neq0$. By connecting $[1,1]$ to the first variable of $\widehat{f}$ via $\neq_2$, we realize a binary signature $\widehat{g}$ satisfying $M_{\widehat{g}}=\begin{pmatrix}
        a+d & b\\
        c & 0
    \end{pmatrix}$. Since $a+d\neq0$, the proof is completed by Lemma \ref{lem:no up triangle}.
\end{proof}

By Lemma \ref{lem:only-1,1,0,0} and Theorem \ref{thm:holantc dichotomy}, the only unsolved case is when $\widehat{f}=[-1,1,0,0]$. Consequently we can assume $[-1,1,0,0]\in \widehat{\mathcal{F}}$. In this case, firstly we consider the form of unary and binary signatures in $\widehat{\mathcal{F}}$.

\begin{lemma}
    Suppose $[-1,1,0,0],\widehat{g}\in \widehat{\mathcal{F}}$,  where $\widehat{g}=[a,1],a\neq 1$. Then 
    $$\hol(\neq_2\mid \widehat{\mathcal{F}},[1,1])\equiv_T  \hol^c(\mathcal{F})$$
    \label{lem:Rnounary}
\end{lemma}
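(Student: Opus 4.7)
The plan is to construct from $[-1,1,0,0]$ and $\widehat{g}=[a,1]$ a binary signature that satisfies the hypothesis of Lemma \ref{lem:no up triangle}, namely a matrix of the form $\begin{pmatrix} \alpha & \beta \\ \gamma & 0 \end{pmatrix}$ with $\alpha\beta\gamma\neq 0$, and then invoke that lemma directly.

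Concretely, I would work in the $\khol$ picture and attach the unary $\widehat{g}=[a,1]$ to the first variable of $[-1,1,0,0]$ through $\neq_2$ on the LHS. Evaluating the resulting binary signature $\widehat{h}$ on $(x_2,x_3)$ amounts to
\[
\widehat h(x_2,x_3) \;=\; a\cdot [-1,1,0,0](1,x_2,x_3)\;+\;1\cdot [-1,1,0,0](0,x_2,x_3),
\]
which, using the fact that $[-1,1,0,0]$ has values $-1,1,0,0$ at Hamming weights $0,1,2,3$, yields the matrix
\[
M_{\widehat h} \;=\; \begin{pmatrix} a-1 & 1 \\ 1 & 0 \end{pmatrix}.
\]
The key point is that $a\neq 1$ by hypothesis, so $a-1\neq 0$, and all three nonzero entries are indeed nonzero. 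Hence $\widehat h$ is a legitimate instance of the pattern in Lemma \ref{lem:no up triangle}.

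Applying Lemma \ref{lem:no up triangle} to $\widehat h$ (which lies in $\gc(\widehat{\mathcal{F}})$ and so may be adjoined to $\widehat{\mathcal{F}}$ without changing complexity) gives the chain
\[
\hol(\neq_2\mid\widehat{\mathcal{F}},[1,1]) \;\equiv_T\; \khol(\widehat{\mathcal{F}},\widehat h,[1,1]) \;\equiv_T\; \hol^c(\mathcal{F}),
\]
which is exactly the claimed equivalence. I do not anticipate any serious obstacle here: the computation of $\widehat h$ is a direct unfolding of the gadget, and the only thing the hypothesis $a\neq 1$ is used for is to guarantee that the $(0,0)$-entry does not collapse, which is precisely the input condition required by Lemma \ref{lem:no up triangle}. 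The mild care point is to use $[1,1]$ (a scalar multiple of $\widehat{\pin_0}=K^{-1}\pin_0$) only as part of the instance rather than inside the gadget itself, so that the reduction genuinely flows into $\hol^c(\mathcal{F})$ via Theorem \ref{thm:holographic transformation equivalence} after applying Lemma \ref{lem:no up triangle}.
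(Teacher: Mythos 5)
Your proposal is correct and follows essentially the same route as the paper: connecting $[a,1]$ to $[-1,1,0,0]$ via $\neq_2$ to obtain the binary signature with matrix $\begin{pmatrix} a-1 & 1 \\ 1 & 0 \end{pmatrix}$, and then invoking Lemma \ref{lem:no up triangle} using $a\neq 1$ to guarantee the nonzero corner entry. The extra remarks about realizability via $\gc(\widehat{\mathcal{F}})$ and keeping $[1,1]$ out of the gadget are sound but not needed beyond what the paper's one-line proof already does.
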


\begin{proof}
   By connecting one copy of $[a,1]$ to $[-1,1,0,0]$ via $\neq_2$, we realize a binary signature $\widehat{g}$ satisfying $M_{\widehat{g}}=\begin{pmatrix}
        a-1 & 1\\
        1 & 0
    \end{pmatrix}$. Since $a-1\neq0$, the proof is completed by Lemma \ref{lem:no up triangle}.
\end{proof}

\begin{lemma}
    Suppose $[-1,1,0,0],\widehat{g}\in \widehat{\mathcal{F}}$, where $\widehat{g}$ is an irreducible binary signature and not a multiple of $[0,1,0]$. Then one of the following holds:
    \begin{enumerate}
        \item There exists an invertible matrix $N$ such that 
        
$\hol(\mathcal{F},\pin_0)\equiv_T\ccsp(=_2N,\pin_0N, N^{-1}\mathcal{F}, N^{-1}\pin_0)$.
        \item $\khol(\widehat{\mathcal{F}},[1,1])\equiv_T \hol^c(\mathcal{F})$.
    \end{enumerate}
     \label{lem:Rnobinary}
\end{lemma}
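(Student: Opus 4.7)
The plan is to split on whether $\widehat{g}\in\langle\mathcal{M}\rangle$. Because $\widehat{g}$ is irreducible and binary, the only possible decomposition into matchgate signatures of smaller arity is into two unaries, which would contradict irreducibility; hence $\widehat{g}\in\langle\mathcal{M}\rangle$ reduces to $\widehat{g}\in\mathcal{M}$ in this setting. The first case will produce a symmetric ternary signature of GHZ type and reduce to \ccsp\ via Lemma~\ref{lem:symGHZto=3} (conclusion~1); the second case will produce a binary signature in the upper-triangular form required by Lemma~\ref{lem:no up triangle} (conclusion~2).

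In Case~1 ($\widehat{g}\notin\langle\mathcal{M}\rangle$), I first observe that $\widehat{f}=[-1,1,0,0]$ is an irreducible ternary matchgate signature: the three Hamming-weight-one entries are nonzero while $\widehat{f}(111)=0$, which rules out both a three-unary factorization and any (binary)$\otimes$(unary) factorization.  Consequently $f=K\widehat{f}$ is an irreducible ternary signature in $K\mathcal{M}$, and $g=K\widehat{g}\notin\langle K\mathcal{M}\rangle$.  Lemma~\ref{lem:Vasym} then supplies a symmetric $h\in\gc(\{f,g\})\subseteq\gc(\mathcal{F})$ of GHZ type, and appending this realizable $h$ to $\mathcal{F}$ preserves complexity, giving $\hol(\mathcal{F},\pin_0)\equiv_T\hol(\mathcal{F},h,\pin_0)$.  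Applying Lemma~\ref{lem:symGHZto=3} to $h$ yields the invertible $N$ and the desired equivalence to $\ccsp(=_2N,\pin_0N,N^{-1}\mathcal{F},N^{-1}\pin_0)$, i.e.\ conclusion~1.

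In Case~2 ($\widehat{g}\in\mathcal{M}$), write $M_{\widehat{g}}=\bigl(\begin{smallmatrix}a&b\\c&d\end{smallmatrix}\bigr)$.  Membership in $\mathcal{M}$ forces $d=\widehat{g}(11)=0$; irreducibility forces $b,c\neq 0$ (otherwise $M_{\widehat{g}}$ factors as a column times a row, i.e.\ a tensor product of two unaries); and the hypothesis that $\widehat{g}$ is not a multiple of $[0,1,0]=(\neq_2)$ forces $a\neq 0$.  Hence $abc\neq 0$ and $M_{\widehat{g}}$ has exactly the form required by Lemma~\ref{lem:no up triangle}, yielding $\khol(\widehat{\mathcal{F}},[1,1])\equiv_T\hol^c(\mathcal{F})$, i.e.\ conclusion~2.

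The only moderately delicate point in the whole argument is the normal-form analysis for irreducible binary $\mathcal{M}$-signatures above; beyond that, the proof is a direct assembly of the three cited lemmas, so I anticipate no genuine obstacle.
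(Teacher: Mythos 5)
Your Case 1 ($\widehat{g}\notin\langle\mathcal{M}\rangle$, which for an irreducible binary signature is indeed equivalent to $d=\widehat{g}(11)\neq 0$) matches the paper: Lemma~\ref{lem:Vasym} followed by Lemma~\ref{lem:symGHZto=3} gives conclusion~1. The problem is in Case~2. With $d=0$ and $bc\neq 0$ (that part of your normal-form analysis is fine), the hypothesis that $\widehat{g}$ is not a multiple of $[0,1,0]$ does \emph{not} force $a\neq 0$: a multiple of $[0,1,0]$ has matrix $\lambda\begin{pmatrix}0&1\\1&0\end{pmatrix}$, so the exclusion only rules out the case $a=0$ \emph{and} $b=c$. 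The signature with $M_{\widehat{g}}=\begin{pmatrix}0&b\\c&0\end{pmatrix}$, $b\neq c$, $bc\neq 0$, satisfies all hypotheses of the lemma but has $a=0$, so $abc=0$ and Lemma~\ref{lem:no up triangle} cannot be applied to it. Your argument simply skips this subcase, so it has a genuine gap.

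The paper closes exactly this hole: when $a=0$ and $b\neq c$, connecting $[1,1]$ to the second variable of $\widehat{g}$ via $\neq_2$ realizes the unary signature $(b,c)$, i.e.\ $[b/c,1]$ with $b/c\neq 1$ up to a constant, and then Lemma~\ref{lem:Rnounary} (which uses $[-1,1,0,0]$ to manufacture a binary signature with the $\begin{pmatrix}a&b\\c&0\end{pmatrix}$, $abc\neq0$ shape) yields conclusion~2. Adding this one subcase to your Case~2 would make your proof complete and essentially identical to the paper's.
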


\begin{proof}
    Let $M_{\widehat{g}}=\begin{pmatrix}
        a&b \\
        c&d
    \end{pmatrix}$. If $d\neq 0$, $\widehat{g}\notin\langle\mathcal{M}\rangle$, and since $[-1,1,0,0]\in\mathcal{M}$, we can apply Lemma \ref{lem:Vasym} then Lemma \ref{lem:symGHZto=3}, and the first statement holds.

    Otherwise, $d=0$. Since $\widehat{g}$ is irreducible, $bc\neq0$. In this case, if $a\neq0$, the second statement holds by Lemma \ref{lem:no up triangle}. Otherwise, $a=0$ and $b\neq c$, since $\widehat{g}$ is not a multiple of $[0,1,0]$. By connecting a $[1,1]$ to the second variable of $\widehat{g}$ via $\neq_2$, we realize $[b/c,1],b/c\neq 1$ up to a constant factor. Consequently Lemma \ref{lem:Rnounary} can be applied and the second statement holds.
\end{proof}

This result can also be extended to irreducible ternary signatures.
\begin{lemma}
    Suppose $[-1,1,0,0],\widehat{g}\in \widehat{\mathcal{F}}$, where $\widehat{g}$ is an irreducible ternary signature and not a multiple of $[-1,1,0,0]$. Then one of the following holds:
    \begin{itemize}
        \item There exists an invertible matrix $N$ such that 
$$\hol(\mathcal{F},\pin_0)\equiv_T\ccsp(=_2N,\pin_0N, N^{-1}\mathcal{F}, N^{-1}\pin_0).$$
        \item $\khol(\widehat{\mathcal{F}},[1,1])\equiv_T \hol^c(\mathcal{F})$.
    \end{itemize}
     \label{lem:Rnoternary}
\end{lemma}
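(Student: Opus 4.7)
The plan is to invoke the SLOCC classification of ternary irreducible signatures (Dur--Vidal--Cirac, cited in the excerpt): $\widehat{g}$ is either of GHZ type or of W type. In the GHZ case I would apply Lemma \ref{lem:GHZsym} to $g$ to realize an irreducible symmetric ternary signature of GHZ type in $\gc(\{g\})$, and then Lemma \ref{lem:symGHZto=3} produces the invertible matrix $N$ witnessing the first conclusion. For the W case, I would split further on whether $g\in\langle K\mathcal{M}\rangle\cup\langle KX\mathcal{M}\rangle$: if not, Lemma \ref{lem:Wsym} again yields a symmetric ternary signature of GHZ type and Lemma \ref{lem:symGHZto=3} closes this branch. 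If $\widehat{g}\in\mathcal{M}$, then $\widehat{g}$ is irreducible and, by hypothesis, not proportional to $[-1,1,0,0]$, so Lemma \ref{lem:only-1,1,0,0} directly delivers the second conclusion.

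The essential remaining case is $\widehat{g}\in X\mathcal{M}$, i.e.\ $\widehat{g}$ supported on strings of Hamming weight at least $2$. The plan is to reduce to the binary setting of Lemma \ref{lem:Rnobinary} by mating the available $[1,1]$ to each of the three variables of $\widehat{g}$ via $\neq_2$. Writing $\widehat{g}(011)=a$, $\widehat{g}(101)=b$, $\widehat{g}(110)=c$, $\widehat{g}(111)=d$, irreducibility of $\widehat{g}$ forces $abc\ne 0$. A direct computation shows that mating $[1,1]$ to variable $x_i$ produces a binary signature with vanishing $(0,0)$-entry, nonzero anti-diagonal entries, and a bottom-right entry equal to $d$ plus one of $a$, $b$, $c$. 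If at least one of these three binary signatures is not a scalar multiple of $[0,1,0]$, then Lemma \ref{lem:Rnobinary} yields one of the two conclusions. Otherwise, the linear equations forcing all three binaries to coincide with $[0,1,0]$ up to a constant collapse to $a=b=c=-d$, leaving only the residual configuration that $\widehat{g}$ is proportional to $[0,0,1,-1]$.

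The subcase $\widehat{g}=[0,0,1,-1]$ is the principal obstacle. The plan here is to exploit a bit-flip symmetry: the RHS holographic transformation by $X$ preserves both $\neq_2$ and $[1,1]$ while interchanging $[-1,1,0,0]$ and $[0,0,1,-1]$, so after applying it the distinguished irreducible ternary lies in $\mathcal{M}$. If any other signature in $X\widehat{\mathcal{F}}$ fails to lie in $\mathcal{M}$, then by mating it with $[1,1]$ or with the two ternary generators we reduce to a case already treated in the previous paragraph. In the terminal situation where every other signature lies simultaneously in $\mathcal{M}$ and $X\mathcal{M}$, I would use the $\neq_2$-self-loops $[-1,1,0,0]\leadsto\pin_0$ and $[0,0,1,-1]\leadsto\pin_1$ on the RHS to realize full pinning, and combine these with $[1,1]$ and Lemma \ref{lem:interpolation} to realize $[1,-1]$, which gives $\khol(\widehat{\mathcal{F}},[1,1],[1,-1])\equiv_T\hol^c(\mathcal{F})$. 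The main technical work here, which I expect to be the genuine difficulty, is verifying that the pairwise linearly independent unaries required for the interpolation step are always constructible inside the highly restricted remaining signatures.
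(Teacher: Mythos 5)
Your route is genuinely different from the paper's: the paper proves this lemma by a direct case analysis on the eight values of $\widehat{g}$, realizing unary and binary signatures with $\pin_0$, $[1,1]$ and $\neq_2$ self-loops and then invoking Lemmas \ref{lem:Rnounary} and \ref{lem:Rnobinary}; it never splits by SLOCC type here. Your GHZ branch, your W-type branch outside $\langle K\mathcal{M}\rangle\cup\langle KX\mathcal{M}\rangle$ (both via symmetrization and Lemma \ref{lem:symGHZto=3}), and your $\widehat{g}\in\mathcal{M}$ branch (Lemma \ref{lem:only-1,1,0,0}) are sound, modulo the small caveat that Lemma \ref{lem:GHZsym} as stated only promises an irreducible symmetric signature, so you are implicitly using the stronger form of GHZ type from \cite{backens2017holant+}. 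Your computation in the $\widehat{g}\in X\mathcal{M}$ branch also correctly isolates $\widehat{g}\propto[0,0,1,-1]$ as the only case where mating $[1,1]$ yields nothing beyond $[0,1,0]$.

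The genuine gap is your treatment of that residual case. Your plan there branches on what the \emph{other} signatures of $\widehat{\mathcal{F}}$ look like; the step ``if any other signature in $X\widehat{\mathcal{F}}$ fails to lie in $\mathcal{M}$, mate it \dots to reduce to a case already treated'' is unjustified for signatures of arity at least $4$ (your previously treated cases concern only a ternary); and the terminal interpolation of $[1,-1]$ is exactly the point you leave unverified. None of this machinery is needed, because you never used $\pin_0$ in this branch: $\pin_0$ is realizable from $[-1,1,0,0]$ by a single $\neq_2$ self-loop (Corollary \ref{cor:hw> construct Delta1}), and connecting two copies of $\pin_0$ through $\neq_2$ to $\widehat{g}=[0,0,1,-1]$ pins two of its variables to $1$, producing the unary $(\widehat{g}(110),\widehat{g}(111))=(1,-1)$ outright, with no interpolation. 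Then either apply Lemma \ref{lem:Rnounary} with $[a,1]$, $a=-1\neq 1$, or note directly that $K[1,1]\propto\pin_0$ and $K[1,-1]\propto\pin_1$, so $\khol(\widehat{\mathcal{F}},[1,1])\equiv_T\khol(\widehat{\mathcal{F}},[1,1],[1,-1])\equiv_T\hol^c(\mathcal{F})$, which is the second conclusion. This is precisely what the paper's Case $t\neq0$ does (it realizes $[a,t]$ by double $\pin_0$-pinning, which for $[0,0,1,-1]$ is $[1,-1]$). With this replacement your SLOCC-based argument closes, and is arguably shorter than the paper's exhaustive table for the generic branches; without it the proof is incomplete.
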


\begin{proof}
 We prove this theorem in the setting of \khol. By Corollary \ref{cor:hw> construct Delta1} and Theorem \ref{thm:decomposition lemma}, we may realize $\pin_0$. By connecting $\pin_0,[1,1]$ to $\widehat{g}$ via $\neq_2$, and making a self-loop on $\widehat{g}$ via $\neq_2$, several unary and binary signatures can be realized. When $\widehat{g}$ is an irreducible ternary signature and not a multiple of $[-1,1,0,0]$, it can always be verified that at least one of these signatures does not belong to $\langle\{\pin_0,[1,1],[0,1,0]\}\rangle$. Then we are done by Lemma \ref{lem:Rnounary} or Lemma \ref{lem:Rnobinary}. We present the detailed analysis as follows.

    Suppose $\widehat{g}$ has the following form.   
    \begin{table}[h]
\begin{tabular}{l|llllllll}
input  & 000 & 001 & 010 & 100 & 011 & 101 & 110 & 111 \\ \hline
output & $s$ & $f$ & $e$ & $d$ & $c$ & $b$ & $a$ & $t$  
\end{tabular}
\end{table}

We list out some common gadgets used in this proof. By connecting two copies of $\pin_0$ to $\widehat{g}$ via $\neq_2$, we can realize $[a,t],[b,t],[c,t]$. By connecting a $\pin_0$ and a $[1,1]$ to $\widehat{g}$ via $\neq_2$, we can realize $[a+d,b+t],[b+d,a+t],[a+e,c+t],[c+e,a+t],[b+f,c+t],[c+f,b+t]$. By adding a self-loop on $\widehat{g}$ via $\neq_2$, we can realize $[e+f,a+b],[d+f,a+c],[e+d,b+c]$. By connecting a $\pin_0$ to $\widehat{g}$ via $\neq_2$, we can realize $\begin{pmatrix}
    d& a\\
    b& t
\end{pmatrix},\begin{pmatrix}
    e& a\\
    c& t
\end{pmatrix},\begin{pmatrix}
    f& c\\
    b& t
\end{pmatrix}$. 

\textbf{Case 1:} If $t\neq 0$, without loss of generality we may assume $t=1$. Hence we can realize $[a,1],[b,1],[c,1]$ using double $\pin_0$. Then by Lemma \ref{lem:Rnounary}, we are done unless $a=b=c=1$. Now suppose $a=b=c=t=1$, and we can realize $[d,1,1],[e,1,1],[f,1,1]$ using a single $\pin_0$. By Lemma \ref{lem:Rnobinary}, we are done unless $d=e=f=1$. Now we further suppose $d=e=f=1$. By connecting double $[1,1]$ to $\widehat{g}$ via $\neq_2$, we can realize $[s+3,4]$.  By Lemma \ref{lem:Rnounary}, we are done unless $s=1$. If $s=1$, the signature is exactly $[1,1]^{\otimes 3}$, a contradiction to the fact that $\widehat{g}$ is irreducible. 

\textbf{Case 2:}  Now suppose $t=0$. 

\textbf{Case 2.1:} If $a=b=c=0$ does not hold, without loss of generality we assume $a\neq 0$, and further assume $a=1$. Hence we can realize  $\begin{pmatrix}
    d& 1\\
    b& 0
\end{pmatrix}$ using a single $\pin_0$. By Lemma \ref{lem:Rnobinary}, we are done unless $bd=0$. Using similar analysis to $\begin{pmatrix}
    e& a\\
    c& t
\end{pmatrix},\begin{pmatrix}
    f& c\\
    b& t
\end{pmatrix}$, we may further suppose $ce=0,bcf=0$.

\textbf{Case 2.1.1:} If $bc\neq0$, then $d=e=f=0$. By self-loop we realize $[0,a+b],[0,a+c],[0,b+c]$ and at least one of them is non-trivial since $a\neq 0$. Consequently we are done by Lemma \ref{lem:Rnounary}. 

\textbf{Case 2.1.2:} If $b=c=0$, we can realize $[b+d,a+t]=[d,1]$ and $[c+e,a+t]=[e,1]$. By Lemma \ref{lem:Rnounary}, we are done unless $d=e=1$. If further $d=e=1$, now we can realize $[e+f,a+b]=[1+f,1]$ and we are done unless $f=0$. However, when $t=b=c=f=0$, $\widehat{g}(x_1,x_2,x_3)=\widehat{g'}(x_1,x_2)\otimes \pin_0(x_3)$, a contradiction.

\textbf{Case 2.1.3:} Without loss of generality we assume $b\neq 0,c=0$. We also have $d=0$ by $bd=0$. We can realize $[b+d,a+t]=[b,1],[c+e,a+t]=[e,1],[c+f,b+t]=[f,b]$ and by Lemma \ref{lem:Rnounary} we may assume $b=e=f=1$. By connecting a single $[1,1]$ we can realize $\begin{pmatrix}
    s+d& a+e\\
    b+f& c+t
\end{pmatrix}=\begin{pmatrix}
    s& 2\\
    2& 0
\end{pmatrix}$ and by Lemma \ref{lem:Rnobinary} we may assume $s=0$. Now we have $a=b=e=f=1$ and $c=d=s=t=0$, and we can write $\widehat{g}(x_1,x_2,x_3)=[1,1](x_1)\otimes\neq_2(x_2,x_3)$, a contradiction.

\textbf{Case 2.2:} $a=b=c=0$. Since $\widehat{g}$ is irreducible, we have $d,e,f\neq 0$. By connecting a single $[1,1]$ to the first variable we can realize $\begin{pmatrix}
    s+d& a+e\\
    b+f& c+t
\end{pmatrix}=\begin{pmatrix}
    s+d& e\\
    f& 0
\end{pmatrix}$ and by Lemma \ref{lem:Rnobinary} we may assume $d=-s$. Similarly by connecting to other variables, we may assume $e=-s,f=-s$. Consequently, $\widehat{g}$ is a multiple of $[-1,1,0,0]$, a contradiction.
\end{proof}

By Lemma \ref{lem:Rnounary}, \ref{lem:Rnobinary} and \ref{lem:Rnoternary}, the only case left is that all irreducible unary, binary and ternary signatures in $\widehat{\mathcal{F}}$ belong to $\{\Delta_0,[1,1],[0,1,0],[-1,1,0,0]\}$. The following lemma analyzes this last case.

\begin{lemma}\label{lem:yi dingyou unary or binary or tenary}
    Suppose $[-1,1,0,0]\in\widehat{\mathcal F}$ and $\mathcal{R}=\{\pin_0\}\cup\{[-k,1,0,\ldots,0]_{k+2}\mid k\geq -1\}$. Then one of the following statements holds:
    \begin{enumerate}
        \item $\widehat{\mathcal F}\subseteq\langle\mathcal{R}\rangle$.
        \item An irreducible signature $\widehat{h}\notin\mathcal{R}$ with arity less than 4 can be realized in the setting of $\khol(\widehat{\mathcal F},[1,1])$.
    \end{enumerate}
\end{lemma}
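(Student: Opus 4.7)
The plan is to analyze an arbitrary irreducible $\widehat g\in\widehat{\mathcal F}$ and show that either $\widehat g\in\mathcal R$ or else a bad irreducible signature of arity $\le 3$ (one not in $\mathcal R$) is realizable in $\khol(\widehat{\mathcal F},[1,1])$. Lemmas~\ref{lem:Rnounary}--\ref{lem:Rnoternary} already restrict every irreducible $\widehat g$ of arity $\le 3$ to lie in $\{\pin_0,[1,1],[0,1,0],[-1,1,0,0]\}\subseteq\mathcal R$, so I only need to treat arity $r\ge 4$. Since every element of $\mathcal R$ is irreducible, unique prime factorization identifies the irreducible members of $\langle\mathcal R\rangle$ with $\mathcal R$ itself, so realizing any arity-$\le 3$ irreducible signature outside $\mathcal R$ immediately establishes the second alternative. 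The basic gadgets are: $\pin_0$ on the RHS, obtained from a single $\neq_2$ self-loop on $[-1,1,0,0]$ (direct computation gives value $2$ at $0$ and $0$ at $1$); pin-to-$1$ of any variable via $\pin_0$ through $\neq_2$; sum-out via $[1,1]$ through $\neq_2$; and $\neq_2$ self-loops between pairs of variables. I then case-split on the support of $\widehat g$.

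Case A: some $\alpha\in\su(\widehat g)$ has $\#_1(\alpha)\ge 2$. Fix two 1-positions $i,j$ of $\alpha$. Pin every other 1-position of $\alpha$ to $1$ and then either sum out or self-loop away all but one of the 0-positions. Generically the $\alpha$-contribution survives, so the resulting ternary signature is nonzero at an input of Hamming weight $\ge 2$; since every arity-$\le 3$ signature in $\mathcal R$ is supported on Hamming weight $\le 1$, the reduced signature is outside $\mathcal R$. If it is reducible, Theorem~\ref{thm:decomposition lemma} extracts a smaller-arity irreducible factor that retains the weight-$\ge 2$ support, and Lemmas~\ref{lem:Rnounary}--\ref{lem:Rnoternary} finish the case. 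When a particular sum-out cancels the weight-$\ge 2$ entry by contributions from other support strings, I instead use $\neq_2$ self-loops on mixed $(1,0)$-pairs of $\alpha$, which preserve the contribution without summing.

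Case B: $\su(\widehat g)\subseteq\{0^r\}\cup\{e_1,\dots,e_r\}$. Write $c_0=\widehat g(0^r)$ and $c_i=\widehat g(e_i)$. A direct factoring argument shows that irreducibility forces $c_i\ne 0$ for every $i\in[r]$, since otherwise $\widehat g$ factors out a $\pin_0$ on $x_i$. Normalize $c_1=1$ and sum out $r-3$ of the variables, keeping positions $\{p,q,s\}$; the resulting ternary $\widehat g'$ satisfies $\widehat g'(0^3)=c_0+\sum_{k\notin\{p,q,s\}}c_k$, $\widehat g'(e_t)=c_t$ for $t\in\{p,q,s\}$, and vanishes on Hamming weight $\ge 2$. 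If the $c_i$ are not all equal, some triple yields a non-symmetric $\widehat g'$, which is irreducible by a short case check on its possible factorizations and is not in $\mathcal R$ (whose only irreducible ternary is the symmetric $[-1,1,0,0]$). If all $c_i$ are equal, then $\widehat g=[c_0,1,0,\dots,0]_r$ is symmetric and $\widehat g'=[c_0+r-3,1,0,0]_3$; the latter lies in $\mathcal R$ iff $c_0+r-3=-1$, equivalently iff $\widehat g=[-(r-2),1,0,\dots,0]_r\in\mathcal R$. Otherwise $\widehat g'$ is an irreducible ternary outside $\mathcal R$ and witnesses Case 2 of the lemma.

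The principal obstacle is Case A: there is no direct pin-to-$0$ gadget in the \khol setting (we never realize $[0,1]$), so every arity reduction must be accomplished by the three gadgets above, and the weight-$\ge 2$ entry of the reduced ternary can in principle be cancelled by contributions from other support strings of $\widehat g$. Handling this requires carefully choosing which variables to pin, sum, or loop, and in the worst case iteratively applying Theorem~\ref{thm:decomposition lemma} to strip off a reducible factor while leaving the relevant weight-$\ge 2$ entry intact.
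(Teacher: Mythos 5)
Your Case B (all support strings of Hamming weight at most $1$) is essentially sound: irreducibility does force all $c_i\neq 0$, the summed-out ternary is irreducible by the factor-support argument, and the symmetric subcase reduces correctly to $c_0=-(r-2)$. The problem is Case A, which is the main case and is not actually proved. Your plan there is to pin the extra $1$'s of $\alpha$ and then eliminate the $0$-positions by summing with $[1,1]$ or by $\neq_2$ self-loops while ``generically'' keeping a weight-$\ge 2$ entry alive; but you give no argument that some choice of these gadgets always avoids cancellation, and your fallback is wrong as stated: if the reduced ternary is reducible, its irreducible factors need not ``retain the weight-$\ge 2$ support'' --- for instance a realized ternary proportional to $[1,1]\otimes[0,1,0]$ has a weight-$2$ support string, yet both factors lie in $\mathcal{R}$, so Theorem~\ref{thm:decomposition lemma} hands you nothing outside $\mathcal{R}$ and the argument stalls. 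What the lemma really requires is a closure statement of the form ``if every available one-step reduction of $\widehat g$ lands in $\langle\mathcal{R}\rangle$, then $\widehat g\in\langle\mathcal{R}\rangle$,'' and with only pin-to-$1$, summation and self-loops at your disposal you neither state nor prove such a statement; the paragraph where you acknowledge the ``principal obstacle'' is precisely the heart of the lemma, left unresolved.

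For comparison, the paper sidesteps your obstacle rather than fighting it: it first applies a holographic transformation by $N=\begin{pmatrix}1&1\\0&1\end{pmatrix}$, turning $\khol(\widehat{\mathcal{F}},[1,1])$ into $\hol(\neq_2N^{-1}\mid N\widehat{\mathcal{F}},[2,1])$, where connecting $[2,1]$ and $[1,0]$ to the transformed binary realizes \emph{both} $\pin_0$ and $\pin_1$ on the transformed side, so pinning to $0$ (which you correctly note is unavailable in the raw setting) becomes a legitimate gadget. It then proves, via an ``inheritance property'' of $\langle N\mathcal{R}\rangle$ and a detailed case analysis on the tensor factorization of $f^{x_1=0}$ versus $f^{x_1=1}$, that any $f$ of arity $\ge 4$ all of whose single-variable pinnings lie in $\langle N\mathcal{R}\rangle$ is itself in $\langle N\mathcal{R}\rangle$; the contrapositive plus induction drives the arity below $4$, and Theorem~\ref{thm:decomposition lemma} (usable because $[1,1]$ is neither $\eog$ nor $\eol$) then yields the irreducible witness. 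To repair your proof you would either need to adopt this transformation (or an equivalent device giving pin-to-$0$) or prove an analogous closure statement for your restricted gadget set, which is a substantial piece of case analysis that the proposal currently omits.
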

\begin{proof}

    It can be verified that $[-k,1,0,\ldots,0]_{k+2}=2\times[1,0]^{\otimes(k+2)}+\textbf{sym}([1,0]^{\otimes(k+1)},[-1,1])$. The symbol $\textbf{sym}([1,0]^{\otimes(k+1)},[-1,1])$ means the sum of $k+2$ distinct signatures, each of which has the same tensor factors, that is $k+1$ copies of $[1,0]$ and a $[-1,1]$, and $[-1,1]$ corresponds to the $i$th variable in the $i$th signature. For example, $\textbf{sym}([1,0]^{\otimes 2},[-1,1])=[-1,1]\otimes[1,0]\otimes[1,0]+[1,0]\otimes[-1,1]\otimes[1,0]+[1,0]\otimes[1,0]\otimes[-1,1]$. 
    
    By Corollary \ref{cor:hw> construct Delta1} $\pin_0$ can be realized from $[-1,1,0,0]$ in the setting of $\khol(\widehat{\mathcal F},[1,1])$. We now apply holographic transformation by $N=\begin{pmatrix}
    1 & 1\\
    0 & 1
\end{pmatrix}$ on $\khol(\widehat{\mathcal{F}},[1,1])$, resulting in $\hol(\neq_2N^{-1}\mid N\widehat{\mathcal{F}},[2,1])$. It can be directly verified that $N\mathcal{R}=\{\pin_0\}\cup\{[2,1,0,\ldots,0]_{k}\mid k\geq1\}$. If we connect $[2,1]$ or $[1,0]$ to $\neq_2N^{-1}$, we realize $(2,1)N^{-\T}XN^{-1}=(1,0)$ and $(1,0)N^{-\T}XN^{-1}=(0,1)$ on the LHS. In other words, after the holographic transformation, we can use $\pin_0$ and $\pin_1$ to operate on signatures in $N\widehat{\mathcal{F}}$.

We also introduce an important property, namely the inheritance property: given a signature $f$ in $\langle N\mathcal{R}\rangle$ and $\alpha\in\su(f)$, if we change a $0$ in $\alpha$ to 1, resulting in $\alpha^+$, then either $f(\alpha^+)=\frac{1}{2}f(\alpha)$, or $f(\alpha^+)=0$; if we change a $1$ in $\alpha$ to 0, resulting in $\alpha^-$, then $f(\alpha^-)=2f(\alpha)$. By Lemma \ref{lem:upf} and the form of $f$, this property can be directly verified.

We now prove the following statement: given any $f$ of arity greater than 3, if for any $i$, $f^{x_i=0}$ and $f^{x_i=1}$ are in $\langle N\mathcal{R}\rangle$, then $f\in\langle N\mathcal{R}\rangle$. Suppose $f$ is of arity greater than 3, $f^{x_i=0},f^{x_i=1}\in\langle N\mathcal{R}\rangle$, and $f^{x_1=0}=f_1\otimes\ldots\otimes f_s$, where $f_i\in N\mathcal{R}$, $1\leq i\leq s$. Then there are several cases as follows.

\begin{enumerate}
    \item There exists an $i$ such that $f_i=[1,0]$.

    Without loss of generality, we can assume that $f_i$ corresponds to $x_2$. Then $f^{x_1=0,x_2=1}\equiv0$. By the inheritance property, we also have $f^{x_1=1,x_2=1}\equiv0$. Therefore, $f=f^{x_2=0}\otimes[1,0](x_2)\in\langle N\mathcal{R}\rangle.$ 

    \item $s=1$ and $f_1\in N\mathcal{R}-\{\pin_0\}$.

    Without loss of generality we may suppose $\ari(f_1)=d$, 
$f_1=[4v,2v,0,0,\dots,0]_d$, where $v\neq 0$. 
    Since $f^{x_2=0}\in\langle N\mathcal{R}\rangle$, by the inheritance property and the fact that $f^{x_1=0,x_2=0}(0_{d-1})=4v$ we have $f^{x_1=1,x_2=0}(0_{d-1})=2v$ or $0$. Furthermore, the output of $f^{x_1=1}$ is $0$ when the Hamming weight of the input is greater than $1$. Since $f^{x_1=1}\in\langle N\mathcal{R}\rangle$, there are several cases.
    \begin{enumerate}
        \item $f^{x_1=1,x_2=0}(0_{d-1})=0$. By the inheritance property we have $f^{x_1=1}\equiv0$, which implies $f=[1,0](x_1)\otimes f_1\in\langle N\mathcal{R}\rangle.$

        \item $f^{x_1=1,x_2=0}(0_{d-1})=2v$ and the values of $f^{x_1=1}=[2v,v,0,0,\dots,0]_d$. Then $f=\frac{1}{2}[2,1](x_1)\otimes f_1\in\langle N\mathcal{R}\rangle.$

        \item $f^{x_1=1,x_2=0}(0_{d-1})=2v$ and the values of $f^{x_1=1}=[2v,0,0,\dots,0]_d$. Then $f$ is exactly $2v[2,1,0,\ldots,0]_{d+1}\in N\mathcal{R}$.

        \item $f^{x_1=1,x_2=0}(0_{d-1})=2v$ and without loss of generality we can assume that $f(1100_{d-2})=v$ and $f(1010_{d-2})=0$. Then let $g=f^{x_4=0,\ldots,x_d=0}$, and $M_{x_1,x_2x_3}(g)=\begin{pmatrix}
        4v & 2v & 2v & 0 \\
        2v & 0  & v & 0
    \end{pmatrix}$. We have $g\notin\langle N\mathcal{R}\rangle$, which is a contradiction. 
    \end{enumerate}

    \item $s>1$ and each $f_i\in N\mathcal{R}-\{\Delta_0\}$. Noticing that by the inheritance property and the fact that $f^{x_2=0},f^{x_2=1}\in\langle N\mathcal{R}\rangle$, we have that $\su(f^{x_1=1,x_2=0})\subseteq\su(f^{x_1=0,x_2=0})$ and $\su(f^{x_1=1,x_2=1})\subseteq\su(f^{x_1=0,x_2=1})$, and thus we have $\su(f^{x_1=1})\subseteq\su(f^{x_1=0})$. 
    We write $f^{x_1=1}$ as $f_1'\otimes\ldots f_t'$, where for each $i$, $f_i'$ is irreducible. 
    
    We consider the partition of variables from $\var(f_1)$ in $f^{x_1=1}$. If $y,z\in\var(f_1)$ and $y\in\var(f_i')$, $z\in\var(f_j')$, where $i\neq j$, then at least one of $f_i'$ and $f_j'$ is $[1,0]$. Suppose $f_i'=[1,0](y)$. If $f_j'\neq[1,0]$, by pinning all variables other than those in $\{x_1,y,z\}$ to 0, the resulting signature does not belong to $\langle N\mathcal{R}\rangle$ (this signature is isomorphic to the $g$ in Case 2d), which causes a contradiction. Consequently, we have that when $x_1=1$, each $\var(f_i)$ is either corresponding to $[1,0]^{\otimes\ari(f_i)}$, or a part of variables of some $f_j'\in N\mathcal{R}-\{[1,0]\}$. 
    
    If $\var(f_k')$ contains $\var(f_i)$ and $\var(f_j)$ with $i\neq j$, without loss of generality suppose the first variable of $f_i$ and $f_j$ are $x_2,x_3$ respectively. By pinning all variables other than $x_1,x_2,x_3$ to 0, we obtain a ternary signature $g$, which satisfies that $M_{x_1,x_2x_3}(g)=\begin{pmatrix}
        4v & 2v & 2v & v \\
        2v & v  & v & 0
    \end{pmatrix}$. Equivalently, $g=[4v,2v,v,0]$, which does not belong to $\langle N\mathcal{R}\rangle$ and causes a contradiction. In summary, the tensor decomposition of $f^{x_1=1}$ can only be a refinement of that of $f^{x_1=0}$. 
    
    We write $f^{x_1=1}=f_1''\otimes\ldots\otimes f_s''$, where $f_i''$ is not necessary to be irreducible and $\var(f_i)=\var(f_i'')$ for each $i$. By pinning all variables other than those in $\{x_1\}\cup\var(f_i)$ to 0, the analysis for $f^{x_1=0}$ and $f^{x_1=1}$ in Case 2 can be similarly applied to $f_i$ and $f_i''$. If $f_i''$ satisfies the condition in Case 2d, we can similarly find a contradiction. Consequently, we have that each $f_i''$ satisfies one of the Case 2a, 2b and 2c. There are several situations.
    
    \begin{enumerate}
        \item There exists an $i$ such that $f_i''\equiv0$. Then $f^{x_1=1}\equiv0$ and $f=[1,0](x_1)\otimes f^{x_1=0}\in\langle N\mathcal{R}\rangle$. In the following, we suppose $f_i''\not\equiv0$ for all $1\leq i\leq s$.

        \item  Each $f_i''$ satisfies Case 2b, then $f=[2,1](x_1)\otimes f^{x_1=1}\in\langle N\mathcal{R}\rangle$.

        \item
        There are $i\neq j$ such that $f_i''$ and $f_j''$ satisfy Case 2c, without loss of generality suppose the first variable of $f_i''$ and $f_j''$ are $x_2,x_3$ respectively. By pinning all variables other than $x_1,x_2,x_3$ to 0, we obtain a ternary signature $g$, which 
        satisfies that $M_{x_1,x_2x_3}(g)=\begin{pmatrix}
        4v & 2v & 2v & v \\
        2v & 0  & 0 & 0
    \end{pmatrix}$. It can be verified that $g$ is isomorphic to that in Case 2d and consequently $g\notin\langle N\mathcal{R}\rangle$, which is a contradiction.

        \item
        If there is an $i$ such that $f_i''$ satisfies Case 2c while the others satisfy Case 2b, we can assume $i=1$ without loss of generality. Furthermore, we assume that $\ari(f_1)=d_1$ and $\var(f_1)=\{x_2,\dots,x_{d_1+1}\}$. Then it can be directly verified that $f=[2,1,0,\ldots,0]_{d_1+1}(x_1,x_2,\ldots,x_{d_1+1})\otimes f_2\otimes\ldots \otimes f_s\in\langle N\mathcal{R}\rangle$.
    \end{enumerate}
\end{enumerate}

In summary, the statement is proved. By the statement, if $f\notin\langle N\mathcal{R}\rangle$ and $\ari(f)\ge 4$, then there exist $1\le i\le \ari(f)$ and $c\in\{0,1\}$ such that $f^{x_i=c}\notin \langle N\mathcal{R}\rangle$. By applying this statement successively, we can obtain a signature $g\notin N\mathcal{R}$ with arity less than 4. If $\widehat{F}\subseteq\langle \mathcal{R}\rangle$ does not hold, there exists $\widehat{f}\in\widehat{\mathcal{F}}$ such that $N\widehat{f}\notin \langle N\mathcal{R}\rangle$, and by the above analysis a signature $N\widehat{g}\notin \langle N\mathcal{R}\rangle$ with arity less than 4 can be realized on RHS in the setting of $\hol(\neq_2N^{-1}\mid N\widehat{\mathcal{F}},[2,1])$, or equivalently $\widehat{g}\notin \langle \mathcal{R}\rangle$ with arity less than 4 can be realized in the setting of  $\khol(\widehat{\mathcal{F}},[1,1])$. As $[1,1]$ is neither a $\eog$ nor a $\eol$ signature, by Theorem \ref{thm:decomposition lemma}, we may further obtain an irreducible signature $\widehat{h}\notin\mathcal{R}$ with arity less than 4 from $\widehat{g}$.
\end{proof}

 It is noteworthy that when $\widehat{\mathcal{F}}\subseteq\langle\mathcal{R}\rangle\subseteq\langle \mathcal{M}\rangle$, $\khol(\widehat{\mathcal{F}})$ is polynomial time computable. When the second statement holds, by Lemma \ref{lem:Rnounary}, \ref{lem:Rnobinary} and \ref{lem:Rnoternary}, the complexity classification for $\hol(\mathcal{F})$ is done.

\subsection{Other cases}\label{sec:khol}

In this section, we classify the complexity of Case 2-4 in Lemma \ref{lem:1fenlei}. The reductions in this section are summarized as a map, presented as Figure \ref{fig:hol odd map}. We commence with the following lemmas employed in Case 4.

\begin{lemma}[{\cite[Lemma 5.2]{cai2020holantoddarity}}] \label{lem:khol to ccspk}
     For any $k\ge 3$, $\ccsp_k(\neq_2,F) \le_T \khol(=_k,F)$.
\end{lemma}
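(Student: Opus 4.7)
The plan is to construct, from any instance $I$ of $\ccsp_k(\neq_2, F)$, an instance $\Omega$ of $\khol(=_k, F)$ with $\text{Z}(\Omega) = \text{Z}(I)$. By the equivalence $\ccsp_k(\mathcal{H}) \equiv_T \hol(\mathcal{EQ}_k \mid \mathcal{H})$ from the preliminaries, it suffices to reduce an instance $\Omega_1$ of $\hol(\mathcal{EQ}_k \mid \neq_2, F)$ to an instance $\Omega_2$ of $\hol(\neq_2 \mid =_k, F) = \khol(=_k, F)$. In $\Omega_1$ each LHS vertex carries some $=_{jk}$ and each RHS vertex is $\neq_2$ or some $f \in F$, whereas in $\Omega_2$ the LHS must consist of $\neq_2$ only and the RHS of signatures from $\{=_k\} \cup F$.

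The central step is to realize, for every $j \ge 1$, the signature $=_{jk}$ as a $\{=_k\}$-gate on the RHS of $\khol(=_k)$. For $j = 1$ a single $=_k$ suffices. For larger $j$ one can use a bipartite gadget: place $s$ ``left'' and $k$ ``right'' copies of $=_k$ on the RHS and put one $\neq_2$ on the LHS between every left--right pair. The $\neq_2$'s force the left $=_k$'s to a common value $v$ and the right $=_k$'s to $1 - v$; since each left $=_k$ uses all $k$ of its slots inside the gadget, only the right $=_k$'s carry dangling edges, contributing $k - s$ each, all locked to $1 - v$. This realizes $=_{(k-s)k}$ for $s = 1, \ldots, k - 1$. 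For general $j$ one combines such gadgets -- for instance, by linking several through a shared ``root'' $=_k$, or by iteratively growing a tree of $=_k$'s -- using auxiliary $=_k$'s to absorb leftover slots so that the overall gate forces all $jk$ external edges to a single common value.

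With these gates in hand, $\Omega_2$ is assembled as follows: each LHS $=_{jk}$ in $\Omega_1$ is replaced by its $\{=_k\}$-gate (on the RHS of $\Omega_2$); each $\neq_2$ vertex of $\Omega_1$ (previously on the RHS) is moved to the LHS of $\Omega_2$; each $f \in F$ is kept on the RHS. Edges between a gate and a $\neq_2$ are bipartite automatically, while edges between a gate and an $f$-vertex (both on the RHS) are bridged by inserting a single $\neq_2$ on the LHS. This bridge flips the transmitted value, but the substitution $w_j = 1 - v_j$ of the Holant summation variable makes $f(1 - w_{j_1}, \ldots, 1 - w_{j_r})$ equal the CSP evaluation $f(v_{j_1}, \ldots, v_{j_r})$ and simultaneously preserves every $\neq_2$-constraint of $I$, giving $\text{Z}(\Omega_2) = \text{Z}(\Omega_1) = \text{Z}(I)$.

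The main obstacle is the gadget design in the central step: the bipartite template alone realizes $=_{jk}$ only for $1 \le j \le k - 1$, so reaching an arbitrary $j$ requires a composite construction in which the leftover slots of every auxiliary $=_k$ are closed off carefully, without introducing spurious couplings among different variables of $I$. Once this is settled, the remainder of the verification is routine term-by-term checking.
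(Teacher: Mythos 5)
The paper never proves this lemma itself --- it is imported verbatim from \cite[Lemma 5.2]{cai2020holantoddarity} --- so your argument can only be judged on its own terms, and on those terms it is correct and is the standard route: write $\ccsp_k(\neq_2,F)$ as $\hol(\mathcal{EQ}_k\mid\ \neq_2,F)$, simulate each $=_{jk}$ by an RHS gate built from copies of $=_k$ joined through LHS $\neq_2$'s, bridge every gate--$f$ incidence by one extra $\neq_2$, and absorb the resulting uniform flip by the global substitution $u=1-v$, which fixes the $f$-factors and preserves the $\neq_2$-constraints exactly as you say.

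The one step you leave open, realizing $=_{jk}$ for every $j\ge 1$, does close, and it is worth noting why your ``absorb the leftover slots'' plan is forced into the shape you sketch: there are no pins available in $\khol(=_k,F)$, a $\neq_2$ self-loop on an $=_k$ forces $v\neq v$ and annihilates the gadget, and a $\neq_2$ between two $=_k$'s of the same colour class does the same, so every internal $\neq_2$ must respect the two-colouring and leftover slots can only be absorbed by hanging fresh $=_k$'s whose dangling edges join the output class. Concretely, linking gates for $=_{j_1k}$ and $=_{j_2k}$ through a shared root $=_k$ and attaching a new $=_k$ to each of the root's $k-2$ remaining slots yields $=_{(j_1+j_2+k-3)k}$; combined with your $K_{s,k}$ template, which gives $j=1,\dots,k-1$, induction reaches every $j\ge 1$ once $k\ge 3$ (and visibly fails at $k=2$, consistent with the hypothesis). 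Equivalently, a connected bipartite gadget with $|B|$ fully saturated and $|A|$ output copies of $=_k$ exists precisely when $|A|-|B|=j$ and $|B|(k-2)\ge j-1$, which is satisfiable for all $j$ when $k\ge 3$. With that supplied, your verification of $\mathrm{Z}(\Omega_2)=\mathrm{Z}(I)$ is complete, and the reduction is even value-preserving, which is stronger than the stated $\le_T$.
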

\begin{lemma} \label{lem: you duoyuan xiangdeng hardness}
    For arbitrary integer $k\ge 3$ and $ab\neq 0$, there exists $\widehat{Q}\in\mathbb{C}^{2\times2}$ such that $\ccsp_k(\neq_2,\widehat{Q}\widehat{\mathcal{F}})\le_T\khol(\widehat{\mathcal{F}},[a,0,\dots,0,b]_{k})$. 
\end{lemma}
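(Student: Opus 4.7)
The plan is to apply a holographic transformation on the right-hand side that converts $[a,0,\ldots,0,b]_k$ into a scalar multiple of $=_k$, while leaving $\neq_2$ invariant on the left-hand side, and then invoke Lemma \ref{lem:khol to ccspk}.

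First I would exploit the fact that $[a,0,\ldots,0,b]_k$ is supported on $\{0_k,1_k\}$. By Lemma \ref{lem:quanxi}, the holographic transformations that keep $\neq_2$ invariant are precisely $\widehat{Q}=\mathrm{diag}(1/q,q)$ (and its anti-diagonal variant) for $q\neq 0$. Applying $\widehat{Q}^{\otimes k}$ to $[a,0,\ldots,0,b]_k$ rescales the value on $0_k$ by $(1/q)^k$ and the value on $1_k$ by $q^k$, yielding the symmetric signature whose two nonzero entries are $a/q^k$ and $bq^k$. I want these to be equal (up to a common scalar), which forces $q^{2k}=a/b$. Since $ab\neq 0$ and $\mathbb{C}$ is algebraically closed, a nonzero $q$ solving this exists; fix one such $q$, so that $\widehat{Q}[a,0,\ldots,0,b]_k=\sqrt{ab}\cdot(=_k)$ for a consistent choice of square root.

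Next I would use Theorem \ref{thm:holographic transformation equivalence} on the bipartite presentation
$$\khol(\widehat{\mathcal{F}},[a,0,\ldots,0,b]_k)=\hol(\neq_2\mid \widehat{\mathcal{F}},[a,0,\ldots,0,b]_k).$$
A holographic transformation by $\widehat{Q}$ turns the LHS disequality into $(\neq_2)\widehat{Q}^{-1}$. Since $\widehat{Q}^{-1}=\mathrm{diag}(q,1/q)$ again has the form prescribed by Lemma \ref{lem:quanxi}, $(\neq_2)\widehat{Q}^{-1}=(\neq_2)$. On the RHS the signatures become $\widehat{Q}\widehat{\mathcal{F}}$ together with $\sqrt{ab}\cdot(=_k)$. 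Absorbing the scalar and discarding the multiplicative factor, I obtain
$$\khol(\widehat{\mathcal{F}},[a,0,\ldots,0,b]_k)\equiv_T \khol(\widehat{Q}\widehat{\mathcal{F}},=_k).$$

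Finally, Lemma \ref{lem:khol to ccspk}, applied with $F=\widehat{Q}\widehat{\mathcal{F}}$, gives $\ccsp_k(\neq_2,\widehat{Q}\widehat{\mathcal{F}})\leq_T \khol(\widehat{Q}\widehat{\mathcal{F}},=_k)$, and chaining with the equivalence above yields the desired reduction $\ccsp_k(\neq_2,\widehat{Q}\widehat{\mathcal{F}})\leq_T \khol(\widehat{\mathcal{F}},[a,0,\ldots,0,b]_k)$. I do not anticipate a genuine obstacle here; the only step that could fail would be the existence of the rescaling $q$, and that is immediate from $ab\neq 0$ together with algebraic closure of $\mathbb{C}$.
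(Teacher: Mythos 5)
Your proposal is correct and matches the paper's proof essentially step for step: the paper also picks $q$ with $q^{2k}=a/b$, uses $\widehat{Q}=\mathrm{diag}(1/q,q)$ so that $\widehat{Q}[a,0,\dots,0,b]_k$ becomes a nonzero multiple of $=_k$ while $\neq_2$ is preserved (Lemma \ref{lem:quanxi}), and then invokes Lemma \ref{lem:khol to ccspk}. No substantive difference.
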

\begin{proof}
    Let $q^{2k}=\frac{a}{b}$ and $\widehat{Q}=\begin{pmatrix}
        1/q & 0\\
        0 & q
    \end{pmatrix}$. We have $\widehat{Q}[a,0,\dots,0,b]_{k}=[a(1/q)^k,0,\dots,0,bq^k]_{k}=bq^k[1,0,\dots,0,1]_{k}$. Then $\khol(\widehat{Q}\widehat{\mathcal{F}},[1,0,\dots,0,1]_{k})\equiv_T\khol(\widehat{\mathcal{F}},[a,0,\dots,0,b]_{k})$ by Lemma \ref{lem:quanxi}.
    By Lemma \ref{lem:khol to ccspk}, $\ccsp_k(\neq_2,\widehat{Q}\widehat{\mathcal{F}})\le_T\hol(\neq_2\mid\widehat{Q}\widehat{\mathcal{F}},=_{k})$. The proof is  completed.
\end{proof}

\begin{lemma}\label{lem:youduoyuanxiangdeng suanfa}
    If $\ccsp_k(\neq_2,\widehat{Q}\widehat{\mathcal{F}})$ is tractable, then $\hol(\neq_2\mid\widehat{\mathcal{F}},[a,0,\dots,0,b]_{k})$ is also tractable.
\end{lemma}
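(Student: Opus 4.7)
The plan is to reverse the chain of reductions used in Lemma \ref{lem: you duoyuan xiangdeng hardness}. First I apply the same holographic transformation by $\widehat{Q}$: since $\widehat{Q}$ preserves $\neq_2$ on the LHS by Lemma \ref{lem:quanxi} and sends $[a,0,\dots,0,b]_k$ to a non-zero scalar multiple of $=_k$, we obtain the equivalence
$$\hol(\neq_2 \mid \widehat{\mathcal{F}}, [a,0,\dots,0,b]_k) \equiv_T \hol(\neq_2 \mid \widehat{Q}\widehat{\mathcal{F}}, =_k).$$
It therefore suffices to establish a polynomial-time reduction $\hol(\neq_2 \mid \widehat{Q}\widehat{\mathcal{F}}, =_k) \leq_T \ccsp_k(\neq_2, \widehat{Q}\widehat{\mathcal{F}})$.

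Given an instance $I$ of the Holant problem, the construction produces a $\ccsp_k$ instance $I''$ in the natural way: each $=_k$ vertex $v$ of $I$ forces its $k$ incident edges to agree, so it collapses to a single CSP variable $x_v$ which already appears exactly $k$ times; each $\neq_2$ vertex of $I$ becomes a $\neq_2$ clause between the two appropriate variables; and each $\widehat{Q}\widehat{g}$ vertex of $I$ becomes a $\widehat{Q}\widehat{g}$ clause. By construction $Z(I'') = Z(I)$.

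The main obstacle will be enforcing the $k$-multiplicity condition on every CSP variable. Variables arising from $=_k$ vertices are fine, but the auxiliary position-variables corresponding to edges whose both endpoints lie on $\widehat{Q}\widehat{g}$ vertices appear only twice, violating the $\ccsp_k$ requirement for $k \geq 3$. I would address this by invoking Theorem \ref{thm:cspd neq_2 dichotomy}: the hypothesis that $\ccsp_k(\neq_2, \widehat{Q}\widehat{\mathcal{F}})$ is tractable forces $\widehat{Q}\widehat{\mathcal{F}} \subseteq \mathscr{P}$ or $\widehat{Q}\widehat{\mathcal{F}} \subseteq \mathscr{A}_k^r$ for some $r \in [k]$. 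Since both $=_k$ and $\neq_2$ already lie in $\mathscr{P}$ and in every $\mathscr{A}_k^r$, the enlarged signature set $\widehat{Q}\widehat{\mathcal{F}} \cup \{=_k, \neq_2\}$ remains inside the same tractable class, and a direct polynomial-time algorithm for $\hol(\neq_2 \mid \widehat{Q}\widehat{\mathcal{F}}, =_k)$ then follows from the algebraic structure of that class, completing the tractability transfer.
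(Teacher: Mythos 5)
Your final argument---transform by $\widehat{Q}$ to reach $\hol(\neq_2\mid\widehat{Q}\widehat{\mathcal{F}},=_k)$, use the $\ccsp_k(\neq_2,\cdot)$ dichotomy to place $\widehat{Q}\widehat{\mathcal{F}}$ in $\mathscr{P}$ or $\mathscr{A}_k^r$, observe $\neq_2$ and $=_k$ lie in these classes too, and conclude tractability from the algebraic structure (a further transformation by $T_k^{-r}$ in the $\mathscr{A}_k^r$ case)---is correct and is exactly the paper's proof, which it states as ``directly checking the tractable cases of $\ccsp_k$ problems and holographic transformation.'' The attempted Turing reduction from the Holant problem to $\ccsp_k$ in your middle paragraph indeed fails for the multiplicity reason you identify, but since you abandon it in favour of the dichotomy-based argument, it does no harm.
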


Lemma \ref{lem:youduoyuanxiangdeng suanfa} can be proved by directly checking the tractable cases of $\ccsp_k$ problems and holographic transformation. Combining Lemma \ref{lem: you duoyuan xiangdeng hardness} and \ref{lem:youduoyuanxiangdeng suanfa} we have $\ccsp_k(\neq_2,\widehat{Q}\widehat{\mathcal{F}})\equiv_T\khol(\widehat{\mathcal{F}},[a,0,\dots,0,b]_{k})$, which completes the complexity classification in Case 4.

For Case 2 and 3 in Lemma \ref{lem:1fenlei}, we have the following lemma.
\begin{lemma}\label{lem:odd case2 case3 fenlei}
    Let $c\in \{0,1\}$. Then one of the following statements holds:
    \begin{enumerate}
        \item All signatures in $\widehat{\mathcal{F}}\cup\{\pin_c\}$ are $\eog$ (or $\eol$ respectively);
        \item $\hol(Q\mathcal{F},\pin_0, [1,(-1)^c\ii])\equiv_T \khol(\widehat{\mathcal{F}},\pin_c)$ for some $Q\in\mathscr{O}$;
        \item $\khol^c(\widehat{\mathcal{F}})\equiv_T \khol(\widehat{\mathcal{F}},\pin_c)$.
    \end{enumerate}
\end{lemma}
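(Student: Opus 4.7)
The plan is to analyze $\khol(\widehat{\mathcal{F}},\pin_c)$ by trying to realize an additional non-trivial unary signature on the RHS and then deciding which of the three statements results. I treat $c=0$ in detail; the $c=1$ case is completely parallel after interchanging the roles of $0$ and $1$, $\eol$ and $\eog$, and $\ii$ and $-\ii$. As a first dichotomy, if every signature in $\widehat{\mathcal{F}}$ is already $\eol$, then combined with $\pin_0$ (itself $\eol$) we land in Statement 1 immediately. So assume some $\widehat{f}\in\widehat{\mathcal{F}}$ has a string $\alpha\in\su(\widehat{f})$ with $\#_1(\alpha)>\#_0(\alpha)$.

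Next I construct a non-trivial unary gadget. Applying Lemma~\ref{lem:self-loop to pure1 string} to $\widehat{f}$, I obtain via self-loops through $\neq_2$ a signature $g$ of some positive arity $k$ with $g(\mathbf{1})\neq 0$. Each copy of $\pin_0$ on the RHS, joined to a variable of $g$ through a fresh $\neq_2$ on the LHS, pins that variable to $1$; using $k-1$ such pinnings leaves a unary $u=[u_0,u_1]$ with $u_1=g(\mathbf{1})\neq 0$.

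The remainder is a clean split on whether $u_0$ vanishes. If $u_0=0$, then $u\propto\pin_1$ is realizable on the RHS, so $\khol(\widehat{\mathcal{F}},\pin_0)\equiv_T\khol^c(\widehat{\mathcal{F}})$, which is Statement 3. Otherwise $u_0u_1\neq 0$, and I translate back through the $K$-transformation: $Ku=\tfrac{1}{\sqrt{2}}[u_0+u_1,\ii(u_0-u_1)]$ and $K\pin_0=\tfrac{1}{\sqrt{2}}[1,\ii]$, where $Ku$ is non-isotropic since $(u_0+u_1)^2+(\ii(u_0-u_1))^2=4u_0u_1\neq 0$. Writing $Ku\propto [v_0,v_1]$ with $v_0^2+v_1^2=1$ after rescaling, set $Q=\begin{pmatrix}v_0 & v_1\\-v_1 & v_0\end{pmatrix}\in\mathscr{O}$. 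Then $Q\cdot Ku\propto\pin_0$, and a direct calculation gives $Q[1,\ii]^{\T}=(v_0+\ii v_1)[1,\ii]^{\T}$, where $v_0+\ii v_1\neq 0$ precisely because $Ku\not\propto[1,\ii]$, equivalently $u_1\neq 0$. Hence $Q$ sends $(Ku,K\pin_0)$ to $(\pin_0,[1,\ii])$ up to non-zero scalars, and combining Theorem~\ref{thm:holographic transformation equivalence} with the $2$-stretch yields $\hol(Q\mathcal{F},\pin_0,[1,\ii])\equiv_T\khol(\widehat{\mathcal{F}},\pin_0)$, which is Statement 2.

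The main subtlety I expect is the simultaneity in the last step: the same $Q$ must rotate the non-isotropic $Ku$ onto $\pin_0$ while at the same time sending the isotropic $K\pin_c$ to a non-zero multiple of $[1,(-1)^c\ii]$. This works because every $Q\in\mathscr{O}$ stabilizes each of the two isotropic lines $\mathbb{C}[1,\pm\ii]$ individually, so once a single non-isotropic direction is pivoted the isotropic lines are automatically fixed as lines; the only remaining point is non-vanishing of the scalar on the relevant line, which is exactly the dichotomy between $u_1=0$ (Statement 3) and $u_1\neq 0$ (Statement 2).
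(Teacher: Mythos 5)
Your proposal is correct and takes essentially the same route as the paper's proof: the same trichotomy (all of $\widehat{\mathcal{F}}\cup\{\pin_c\}$ being $\eol$/$\eog$ gives Statement 1; otherwise Lemma \ref{lem:self-loop to pure1 string} plus pinning through $\pin_c$ and $\neq_2$ yields a unary signature, with the degenerate case giving Statement 3 and the generic case giving Statement 2 after a transformation that fixes $\neq_2$, respectively $=_2$). The only difference is cosmetic: you construct the orthogonal $Q$ directly as a rotation sending the non-isotropic vector $Ku$ to $\pin_0$ while fixing the isotropic lines, whereas the paper normalizes $[a,1]$ to $[1,1]$ by a diagonal $\widehat{Q}$ in the hatted setting and takes $Q=K\widehat{Q}K^{-1}\in\mathscr{O}$ --- essentially the same transformation.
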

\begin{proof}
    Suppose $c=0$. When $c=1$, the proof is similar.

    If for each $\widehat{f}\in\widehat{\mathcal{F}}$, $\su(\widehat{f})\subseteq \eol$, then the first statement holds. 
    Now suppose there exist some $\widehat{f}\in\widehat{\mathcal{F}}$ and $\alpha\in\su(\widehat{f})$ such that $\alpha\in\eosg$. By applying Lemma \ref{lem:self-loop to pure1 string} we can obtain a signature $\widehat{g}$ of arity $k$ satisfying $\widehat{g}(\textbf{1})=1$ up to a constant factor $\widehat{f}(\alpha)$. 
    
    By connecting $k-1$ copies of $\pin_0$ to $\widehat{g}$ via $\neq_2$, we obtain the signature $(a,1)$. If $a=0$, then the third statement holds. Otherwise, we let $q^2=a$ and $\widehat{Q}=\begin{pmatrix}
        1/q & 0\\
        0 & q
    \end{pmatrix}$. Then $\widehat{Q}[a,1]=q[1,1]$ and $\hol((K\widehat{Q}K^{-1})\mathcal{F},[1,\ii],\pin_0)\equiv_T \khol(\widehat{Q}\widehat{\mathcal{F}},\pin_0,[1,1])\equiv_T \khol(\widehat{\mathcal{F}},\pin_0,[a,1])\equiv_T \khol(\widehat{\mathcal{F}},\pin_0)$. By taking $Q=K\widehat{Q}K^{-1}$, the second statement holds.
\end{proof}

The complexity of the first and second situation in Lemma \ref{lem:odd case2 case3 fenlei} is already classified in Theorem \ref{cor:eol eog dichotomy} and Lemma \ref{lem:hol0 dichotomy}. We now focus on the situation that the third statement holds.

\begin{lemma}
    Suppose $\widehat{f}$ is of arity $k\ge 3$, $\alpha,\beta\in\su(\widehat{f})$ and one of $\alpha,\beta$ is not in $\{0_k,1_k\}$. Then there exists a signature $\widehat{f'}$ satisfying the following conditions.
    \begin{itemize}
        \item $\khol^c(\widehat{f},\widehat{f'})\equiv_T\khol^c(\widehat{f})$ and arity$(\widehat{f'})<k$.
        \item $\alpha',\beta'\in \su(\widehat{f'})$ and $\#_1(\alpha')-\#_1(\beta')=\#_1(\alpha)-\#_1(\beta)$.
    \end{itemize}
    \label{lem:jiangyuan}
\end{lemma}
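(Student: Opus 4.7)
The plan is to reduce the arity of $\widehat{f}$ by either one or two, depending on whether $\alpha$ and $\beta$ share a common value at some position. First I would treat the easy case: suppose there exists $1\le i\le k$ with $\alpha_i=\beta_i=c$ for some $c\in\{0,1\}$. Then I take $\widehat{f'}=\widehat{f}^{x_i=c}$, which is realized in $\khol^c(\widehat{f})$ by connecting the variable $x_i$ of $\widehat{f}$ to $\Delta_{1-c}$ through a $\neq_2$ node. The resulting signature has arity $k-1<k$; writing $\alpha',\beta'$ for the strings obtained by deleting the $i$-th bit, we have $\widehat{f'}(\alpha')=\widehat{f}(\alpha)\neq 0$ and $\widehat{f'}(\beta')=\widehat{f}(\beta)\neq 0$, and since the same bit $c$ is removed from both, $\#_1(\alpha')-\#_1(\beta')=\#_1(\alpha)-\#_1(\beta)$.

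The remaining case is when $\alpha_i\neq\beta_i$ for every $i$, i.e., $\beta=\bar\alpha$. Together with the hypothesis, complementarity forces both $\alpha$ and $\beta$ to contain at least one $0$ and one $1$. I would then pick indices $i,j$ with $\alpha_i=0$ and $\alpha_j=1$ (so $\beta_i=1$ and $\beta_j=0$) and attempt the $\neq_2$ self-loop $\widehat{f'}=\widehat{\partial_{ij}}\widehat{f}$, of arity $k-2<k$. Letting $\alpha',\beta'$ denote the restrictions to the remaining $k-2$ positions, removing one $0$ and one $1$ from each of $\alpha,\beta$ preserves the Hamming weight difference; explicitly,
\[
\widehat{f'}(\alpha')=\widehat{f}(\alpha)+\widehat{f}(\alpha^{\oplus\{i,j\}}),\qquad \widehat{f'}(\beta')=\widehat{f}(\beta)+\widehat{f}(\beta^{\oplus\{i,j\}}),
\]
and if both are nonzero the construction is complete.

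The main obstacle will be the possible cancellation in those two expressions. My plan to overcome it is a twin-replacement trick: if $\widehat{f'}(\alpha')=0$, then $\widehat{f}(\alpha^{\oplus\{i,j\}})=-\widehat{f}(\alpha)\neq 0$, so $\alpha^{\oplus\{i,j\}}\in\su(\widehat{f})$. Setting $\tilde\alpha=\alpha^{\oplus\{i,j\}}$, flipping one $0$ and one $1$ preserves the Hamming weight, so $\#_1(\tilde\alpha)=\#_1(\alpha)$, while $\bar{\tilde\alpha}=\beta^{\oplus\{i,j\}}\neq\beta$, meaning the pair $(\tilde\alpha,\beta)$ in $\su(\widehat{f})$ is no longer complementary; in fact $\tilde\alpha$ and $\beta$ agree at positions $i$ and $j$, so applying the easy case to $(\tilde\alpha,\beta)$ yields a valid $\widehat{f'}$ of arity $k-1$. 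The symmetric argument handles the case $\widehat{f'}(\beta')=0$, so a valid $\widehat{f'}$ with $\ari(\widehat{f'})<k$ is produced in every situation.
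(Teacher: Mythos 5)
Your proof is correct and follows essentially the same route as the paper: pin a commonly-valued bit in the easy case, and in the complementary case $\beta=\bar\alpha$ use a $\neq_2$ self-loop on a position pair where $\alpha$ reads $0,1$, falling back to pinning when a flipped partner string lies in the support. The only cosmetic difference is that the paper checks support membership of the partner strings before choosing between pinning and the self-loop, whereas you try the self-loop first and deduce the partner's membership from the cancellation; the case analysis is identical.
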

\begin{proof}
    If $\alpha,\beta$ are the same at the $i$th bit, then by pinning the $i$th bit to $\alpha_i$ we obtain the desired $\widehat{f'}$.
    Now suppose $\alpha,\beta$ are distinct at every bit, which means $\beta=\overline{\alpha}$. As one of $\alpha,\beta$ is not from $\{0_k,1_k\}$, we may assume $\alpha=01\alpha',\beta=10\beta'$ without loss of generality. If $01\beta'\in\su(\widehat{f})$, by pinning $x_1$ to 0 and $x_2$ to 1 we obtain the desired $\widehat{f'}$. Similarly if $10\alpha'\in\su(\widehat{f})$, by pinning $x_1$ to 1 and $x_2$ to 0 we obtain the desired $\widehat{f'}$. Otherwise, we are done by adding a self-loop by $\neq_2$ on $x_1$ and $x_2$.
\end{proof}

\begin{lemma}\label{lem:kholantc fenlei}
    One of the following statements holds for $\khol^c(\widehat{\mathcal{F}})$.
    \begin{enumerate}
        \item All signatures in $\widehat{\mathcal{F}}$ are single-weighted.
        \item There exists a $\widehat{Q}=\begin{pmatrix}
        1/r & 0\\
        0 & r
    \end{pmatrix}$ such that $\khol^c(\widehat{Q}\widehat{\mathcal{F}},[1,0,\dots,0,1]_k)\equiv_T\khol^c(\widehat{\mathcal{F}}).$
    \end{enumerate}
\end{lemma}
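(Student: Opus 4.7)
Suppose the first alternative fails, so there exists some $\widehat{f} \in \widehat{\mathcal{F}}$ which is not single-weighted; hence $\su(\widehat{f})$ contains strings $\alpha, \beta$ with $\#_1(\alpha) \neq \#_1(\beta)$. The plan is to first realize in $\khol^c(\widehat{\mathcal{F}})$ a signature of the form $[a,0,\dots,0,b]_k$ with $ab \neq 0$ for some $k \geq 1$, and then normalize it by a diagonal holographic transformation preserving $\neq_2$. Once such $[a,0,\dots,0,b]_k$ is available, one picks $r \in \mathbb{C}$ with $r^{2k} = a/b$ and sets $\widehat{Q} = \operatorname{diag}(1/r, r)$; a direct computation then gives $\widehat{Q}\,[a,0,\dots,0,b]_k = b r^k\,[1,0,\dots,0,1]_k$, a nonzero scalar multiple of the target. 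Since $\widehat{Q}$ preserves $\neq_2$ by Lemma~\ref{lem:quanxi} and scales $\Delta_0, \Delta_1$ by nonzero constants, one obtains
\[
\khol^c(\widehat{\mathcal{F}}) \equiv_T \khol^c(\widehat{\mathcal{F}},\,[a,0,\dots,0,b]_k) \equiv_T \khol^c(\widehat{Q}\widehat{\mathcal{F}},\,[1,0,\dots,0,1]_k),
\]
as required.

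The core of the argument will be realizing $[a,0,\dots,0,b]_k$ from $\widehat{\mathcal{F}}$. I would proceed by iteratively applying Lemma~\ref{lem:jiangyuan}, noting that every operation it uses --- pinning via $\Delta_0$ or $\Delta_1$ through $\neq_2$, and self-loops via $\neq_2$ --- is available in $\khol^c$, so each intermediate signature stays Turing-equivalent to the original problem. The invariant to maintain is that the current signature $\widehat{f}_i$ of arity $n_i$ carries two supported points $\alpha_i, \beta_i$ with $\#_1(\alpha_i) \neq \#_1(\beta_i)$. Whenever $n_i \geq 3$ and at least one of $\alpha_i, \beta_i$ lies outside $\{0_{n_i}, 1_{n_i}\}$, a single invocation of Lemma~\ref{lem:jiangyuan} yields a strictly smaller $\widehat{f}_{i+1}$ preserving the invariant. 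When instead $\{\alpha_i, \beta_i\} = \{0_{n_i}, 1_{n_i}\}$, inspect $\su(\widehat{f}_i)$: either there is some $\gamma \in \su(\widehat{f}_i)$ with $0 < \#_1(\gamma) < n_i$, in which case re-pairing $\gamma$ with $0_{n_i}$ returns to the applicable case; or $\su(\widehat{f}_i) = \{0_{n_i}, 1_{n_i}\}$, so $\widehat{f}_i$ is already $[a,0,\dots,0,b]_{n_i}$ with $ab \neq 0$ and the process stops.

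Because the arity strictly decreases at every invocation, the process terminates. If it terminates at arity $\geq 3$, the terminal signature is of the desired form by construction; otherwise the arity has been pushed down to $1$ or $2$, which I would handle directly using the free pinning available in $\khol^c$. For arity $1$, the invariant forces the signature to be a unary $[a, b]$ with $ab \neq 0$, i.e., the $k = 1$ case of the target. For arity $2$, a brief case analysis on the signature matrix shows that either some pinning $\widehat{f}_i^{x_j = c}$ yields a unary $[a, b]$ with $ab \neq 0$, or the two off-diagonal entries of the matrix both vanish so $\widehat{f}_i$ is already $[a, 0, b]_2$ with $ab \neq 0$. The main subtlety is this low-arity fallback: Lemma~\ref{lem:jiangyuan} formally requires arity $\geq 3$, so one must verify the binary and unary cases by hand, which is routine given that pinning via $\Delta_0$ and $\Delta_1$ is free in $\khol^c$.
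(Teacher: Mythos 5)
Your proposal is correct and follows essentially the same route as the paper: realize a signature $[a,0,\dots,0,b]_k$ with $ab\neq 0$ by iterating Lemma~\ref{lem:jiangyuan}, then normalize it to $[1,0,\dots,0,1]_k$ via the diagonal transformation $\widehat{Q}$ with $r^{2k}=a/b$, invoking Lemma~\ref{lem:quanxi}. The only difference is bookkeeping: where you re-pair with an intermediate-weight support point and treat the arity-$1$ and arity-$2$ terminal cases by hand, the paper instead chooses the initial pair $\alpha,\beta$ at two consecutive supported Hamming weights, so no supported string lies strictly between them.
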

\begin{proof}  
    If the first statement does not hold, there exists a signature $\widehat{f}\in \widehat{\mathcal{F}}$ of arity $k$ such that $\alpha,\beta\in\su(\widehat{f})$ and $\#_1(\alpha)<\#_1(\beta)$. Furthermore, we may assume for each $\gamma\in\su(\widehat{f})$, either $\#_1(\gamma)\le \#_1(\alpha)$ or $\#_1(\gamma)\ge \#_1(\beta)$ holds. Now we prove that a signature of the form $[a,0,\dots,0,b]_{\ge 1},ab\neq0$ can be realized in this case. If $\{\alpha,\beta\}=\{0_k,1_k\}$ we are done. Otherwise, by applying Lemma \ref{lem:jiangyuan} successively, we can also realize $\widehat{f'}=[a,0,\dots,0,b]_{\ge 1}$.

    Now we already realize a signature $[a,0,\ldots,0,b]_k$, $ab\neq0$. Let $r^{2k}=\frac{a}{b}$ and $\widehat{Q}=\begin{pmatrix}
        1/r & 0\\
        0 & r
    \end{pmatrix}$, then $\widehat{Q}[a,0,\dots,0,b]_k=[a/r^k,0,\dots,0,br^k]_k=br^k[1,0,\dots,0,1]_k$. Notice that the holographic transformation by $\widehat{Q}$ does not change $\Delta_0$ and $\Delta_1$. By Lemma \ref{lem:quanxi}, we have $\khol^c(\widehat{Q}\widehat{\mathcal{F}},[1,0,\dots,0,1]_k)\equiv_T\khol^c(\widehat{\mathcal{F}},[a,0,\dots,0,b]_k)\equiv_T\khol^c(\widehat{\mathcal{F}})$.
\end{proof}

In the first statement of Lemma \ref{lem:kholantc fenlei},  the complexity of $\khol^c(\widehat{\mathcal{F}})$ is already classified by Theorem \ref{thm:single weighted dichotomy}.
In the second statement of Lemma \ref{lem:kholantc fenlei}, when $k\geq3$, the complexity is already classified by Lemma \ref{lem: you duoyuan xiangdeng hardness} and \ref{lem:youduoyuanxiangdeng suanfa}. When $k=1$, the complexity is already classified by Lemma \ref{lem:hol0 dichotomy}. By the following lemma and Theorem \ref{thm:holantc dichotomy} we complete the complexity classification when $k=2$.

\begin{lemma}\label{lem:kholantc to holantc with neq}
$\hol^c(\widehat{\mathcal{F}},\neq_2)\equiv_T\khol^c(\widehat{\mathcal{F}},=_2)$.   
\end{lemma}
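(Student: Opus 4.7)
The plan is to reduce both sides to a common bipartite form where only the placement of $=_2$ and $\neq_2$ differs, and then to swap the two by local gadget substitutions. First, applying the $2$-stretch (inserting a $=_2$ vertex on every edge) to the non-bipartite problem gives
$$\hol^c(\widehat{\mathcal{F}},\neq_2) \;=\; \hol(\widehat{\mathcal{F}},\neq_2,\Delta_0,\Delta_1)\;\equiv_T\;\hol(=_2 \mid \widehat{\mathcal{F}},\neq_2,\Delta_0,\Delta_1),$$
and unfolding the definition of $\khol^c$ gives
$$\khol^c(\widehat{\mathcal{F}},=_2)\;=\;\hol(\neq_2 \mid \widehat{\mathcal{F}},=_2,\Delta_0,\Delta_1).$$
So it suffices to prove that these two bipartite problems are polynomial-time Turing equivalent.

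For the direction $\le_T$, I would give two gadgets inside $\hol(\neq_2\mid \widehat{\mathcal{F}},=_2,\Delta_0,\Delta_1)$. The first gadget places one RHS vertex with signature $=_2$ together with two LHS $\neq_2$ vertices, each sharing one edge with the central $=_2$ and leaving the other edge dangling; a direct calculation (the two $\neq_2$'s force the dangling edges to be the bitwise complements of the equal internal edges, so the two dangling variables coincide) shows that this gadget realizes $=_2$ with dangling edges exiting through LHS vertices, and thus can be substituted for any LHS $=_2$. The dual gadget has one LHS $\neq_2$ flanked by two RHS $=_2$'s whose other edges are dangling; again by direct calculation it realizes $\neq_2$, with dangling edges exiting through RHS vertices, so it can be substituted for any RHS $\neq_2$. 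Performing both substitutions simultaneously converts any instance of $\hol(=_2\mid\widehat{\mathcal{F}},\neq_2,\Delta_0,\Delta_1)$ into an instance of $\hol(\neq_2\mid\widehat{\mathcal{F}},=_2,\Delta_0,\Delta_1)$ with the same partition function.

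The direction $\ge_T$ is completely symmetric: inside $\hol(=_2\mid\widehat{\mathcal{F}},\neq_2,\Delta_0,\Delta_1)$ the dual pair of gadgets (one LHS $=_2$ flanked by two RHS $\neq_2$'s, and one RHS $\neq_2$ flanked by two LHS $=_2$'s) realizes $\neq_2$ on the LHS and $=_2$ on the RHS respectively, allowing any instance of $\hol(\neq_2\mid\widehat{\mathcal{F}},=_2,\Delta_0,\Delta_1)$ to be rewritten as an equivalent instance of $\hol(=_2\mid\widehat{\mathcal{F}},\neq_2,\Delta_0,\Delta_1)$. Each substitution is local and increases the instance size by a constant factor, so the reductions are polynomial time. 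There is no substantial obstacle here: the content of the proof is the routine verification that the four small gadgets compute the signatures $=_2$ and $\neq_2$ on the claimed sides of the bipartition, which reduces to propagating the simple constraints $x_1=x_2$ and $x_1\neq x_2$ along a path of length two.
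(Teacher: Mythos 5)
Your proposal is correct and matches the paper's own argument: the paper likewise rewrites both problems in bipartite form and swaps $=_2$ and $\neq_2$ across the bipartition using exactly the two length-two path gadgets you describe ($\neq_2$–$=_2$–$\neq_2$ realizing $=_2$, and $=_2$–$\neq_2$–$=_2$ realizing $\neq_2$). No gaps; the verification $XIX=I$ and $IXI=X$ is all that is needed.
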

\begin{proof}
    By connecting $[1,0,1]$ to 2 copies of $[0,1,0]$, we realize $[1,0,1]$. By connecting $[0,1,0]$ to 2 copies of $[1,0,1]$, we realize $[0,1,0]$. Then we have:

    \begin{align*}
        \hol^c(\widehat{\mathcal{F}},\neq_2)&\equiv_T\hol(=_2\mid \widehat{\mathcal{F}},\neq_2,\pin_0,\pin_1)\\
        &\equiv_T\hol(\neq_2,=_2\mid \widehat{\mathcal{F}},=_2,\neq_2,\pin_0,\pin_1)\\
        &\equiv_T \hol(\neq_2\mid \widehat{\mathcal{F}},=_2,\pin_0,\pin_1)\\
        &\equiv_T\khol^c(\widehat{\mathcal{F}},=_2).
    \end{align*}   
\end{proof}

\subsection{Proof of the main theorem}\label{sec:proof of main}
In preparation for the proof of Theorem \ref{thm:main theorem}, we introduce an important lemma.
\begin{lemma}\label{lem:orthogonal does not change tractable}
    Let $\mathcal{F}$ be a set of complex-valued signatures that satisfies condition ($\mathcal{PC}$). Then for any $O\in\mathscr{O}$, $O\mathcal{F}$ satisfies condition ($\mathcal{PC}$).
\end{lemma}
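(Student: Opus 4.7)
The plan is to reduce the statement to an analysis of $\widehat{Q'} := K^{-1}OK$. By a direct computation that separates the case $\det O = 1$ from $\det O = -1$, writing $O = \begin{pmatrix} a & b \\ c & d \end{pmatrix}$ with $a^2+c^2=b^2+d^2=1$ and $ab+cd=0$, one finds that $K^{-1}OK$ is diagonal of the form $\begin{pmatrix} 1/q & 0\\ 0 & q\end{pmatrix}$ when $\det O = 1$, and anti-diagonal of the form $\begin{pmatrix} 0 & 1/q\\ q & 0\end{pmatrix}$ when $\det O = -1$. Hence $\widehat{Q'}$ fits the form described in Lemma \ref{lem:quanxi}, and since $\widehat{O\mathcal{F}} = K^{-1}O\mathcal{F} = (K^{-1}OK)K^{-1}\mathcal{F} = \widehat{Q'}\widehat{\mathcal{F}}$, the entire analysis reduces to showing that a $\widehat{Q'}$ of these two forms preserves all the structural conditions.

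The easy cases are 3, 5, 6, 7. For case 3, $O$ maps unary signatures to unary and binary to binary, so $O\langle\mathcal{T}\rangle\subseteq\langle\mathcal{T}\rangle$. For the transformability cases, given a transformer $M$ witnessing $\mathscr{C}$-transformability of $\mathcal{F}$ with $\mathscr{C}\in\{\mathscr{A},\mathscr{P},\mathscr{L}\}$, take $M' = OM$; then $(M')^{-1}(O\mathcal{F}) = M^{-1}\mathcal{F}\subseteq\mathscr{C}$, and since $(=_2)O = (=_2)$ for $O\in\mathscr{O}$, we have $(=_2)M' = (=_2)M \in \mathscr{C}$, so $O\mathcal{F}$ is also $\mathscr{C}$-transformable.

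The substantive cases are 1, 2, 4, where we study how $\widehat{Q'}$ acts. A direct calculation gives, for diagonal $\widehat{Q'}$, $(\widehat{Q'}f)(\alpha) = q^{\#_1(\alpha)-\#_0(\alpha)}f(\alpha)$, which preserves the support of $f$; for anti-diagonal $\widehat{Q'}$, $(\widehat{Q'}f)(\alpha) = (1/q)^{\#_0(\alpha)}q^{\#_1(\alpha)}f(\overline{\alpha})$, which is a bit-flip composed with a scaling. Consequently the diagonal case preserves $\eog/\eol$, $\mitsuup/\mitsudown$, single-weightedness, the matching class $\mathcal{M}$, and each $\eom[P]$-restriction (on $\eoe$ the scaling factor equals $1$). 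The anti-diagonal case swaps $\eog\leftrightarrow\eol$ and $\mitsuup\leftrightarrow\mitsudown$, maps $\mathcal{M}\leftrightarrow X\mathcal{M}$, and acts by the bit-flip $X^{\otimes n}$ on supports; the point is that bit-flip preserves $\mathscr{A}$ (an affine condition $AX=0$ becomes $A(X+\mathbf{1})=0$, still affine; the $\mathfrak{i}^{L_j(X)}$ factors remain affine linear), preserves $\mathscr{P}$ ($=_2$ and $\neq_2$ are both invariant under simultaneous flip of both variables, and unary signatures remain unary), and preserves the set of perfect pairings, so $\eom[\mathscr{A}]$ and $\eom[\mathscr{P}]$ are preserved. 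For case 2 the same analysis applies once one notes that single-weightedness at weight $d$ becomes single-weightedness at weight $k-d$ under anti-diagonal $\widehat{Q'}$, and $(\widehat{Q'}f)_{\to\eo}$ equals $X^{\otimes N}(f_{\to\eo})$ up to scalars for the appropriate $N$.

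The main obstacle I anticipate is the careful bookkeeping in case 2 to check that the construction $f\mapsto f_{\to\eo}$ commutes with the $\widehat{Q'}$-action (up to a bit-flip and padding), and that the ``either all $\mitsuup$ or all $\mitsudown$'' alternative is preserved coherently across $\mathcal{F}$ rather than independently per signature: the anti-diagonal flip must swap $\mitsuup\leftrightarrow\mitsudown$ \emph{for every} $f\in\widehat{\mathcal{F}}_{\to\eo}$ simultaneously, which is automatic since $\widehat{Q'}$ is a single transformation applied globally. The rest of the argument is essentially a finite case-by-case verification using only the explicit forms of $\widehat{Q'}$ from Lemma \ref{lem:quanxi} and the definitions of the classes $\mathscr{A}$, $\mathscr{P}$, $\eom[\mathscr{A}]$, $\eom[\mathscr{P}]$, $\mathcal{M}$, and $\mathcal{T}$.
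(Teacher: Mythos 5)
Your proposal is correct and follows essentially the same route as the paper: the paper's proof likewise first shows $OK=KD$ or $OK=KXD$ with $D$ diagonal (your diagonal/anti-diagonal computation of $K^{-1}OK$, organized by $\det O=\pm1$), then observes that the diagonal part acts by a Hamming-weight-dependent scalar (constant on the relevant supports) and $X$ by a global bit flip swapping the symmetric alternatives within Cases 1, 2, 4, while Cases 3, 5--7 are checked directly from the definitions. Your explicit witness $M'=OM$ for the transformability cases and the observation that the scaling is exactly $1$ on $\eoe$ are just slightly more detailed versions of steps the paper states as ``directly verified.''
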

\begin{proof}
    We will prove that if $\mathcal{F}$ satisfies some case in condition ($\mathcal{PC}$), then $O\mathcal{F}$ satisfies the same case. The Case 3, 5, 6, 7 in condition ($\mathcal{PC}$) can be directly verified by definition. We now consider the Case 1, 2 and 4. We commence with proving two claims as follows.

    The first claim is that for any $O\in\mathscr{O}$, $OK=KD$ or $OK=KXD$, where $D\in\mathbb{C}^{2\times2}$ is a diagonal matrix. Now we prove this claim. Suppose $O=\begin{pmatrix}
        a & b \\
        c & d
    \end{pmatrix}$. Since $O\in\mathscr{O}$, we have $a^2+b^2=c^2+d^2=1$ and $ac+bd=0$. 
    \begin{enumerate}
        \item If $b=0$, then $cd=0$. Since $O\in\mathbf{GL}_2(\mathbb{C})$, we have $c=0$. In this case $O=\begin{pmatrix}
        1 & 0 \\
        0 & 1
    \end{pmatrix}$ or $\begin{pmatrix}
        1 & 0 \\
        0 & -1
    \end{pmatrix}$. The claim is true with $D=\begin{pmatrix}
        1 & 0 \\
        0 & 1
    \end{pmatrix}$. 
    \item If $c=0$, the situation is symmetric to the case $b=0$.
    \item If $a=0$, then $d=0$. Therefore, $O=\begin{pmatrix}
        0 & 1 \\
        1 & 0
    \end{pmatrix}$ or $\begin{pmatrix}
        0 & 1 \\
        -1 & 0
    \end{pmatrix}$. We have $OK=K\begin{pmatrix}
        0 & -2\ii \\
        2\ii & 0
    \end{pmatrix}$ or $K\begin{pmatrix}
        2\ii & 0 \\
        c & -2\ii
    \end{pmatrix}$.
    \item If $d=0$, the situation is symmetric to the case $a=0$.
    \item If $abcd\neq0$, then $(\frac{a}{b})^2=(\frac{d}{c})^2$. Therefore, $b^2=c^2,a^2=d^2$. If $a=d$, then $b=-c$, and we have $O=\begin{pmatrix}
        a & b \\
        -b & a
    \end{pmatrix}$. In this case $OK=K\begin{pmatrix}
        a+b\ii & 0 \\
        0 & a-b\ii
    \end{pmatrix}$. If $a=-d$, then $b=c$, and we have $O=\begin{pmatrix}
        a & b \\
        b & -a
    \end{pmatrix}$. In this case $OK=K\begin{pmatrix}
        0 & a-b\ii \\
        a+b\ii & 0
    \end{pmatrix}$.
    \end{enumerate}


    We now present the second claim: suppose $D=\begin{pmatrix}
        a & 0 \\
        0 & b
    \end{pmatrix}$ is an invertible diagonal matrix and $\widehat{\mathcal{F}}$ satisfies one of the cases among Case 1, 2, 4, then $D\widehat{\mathcal{F}}$ also satisfies the same case. Notice that by Lemma \ref{lem:quanxi}, for any $\widehat{f}\in\widehat{\mathcal{F}}$, $\su(\widehat{f})=\su(D\widehat{f})$. Therefore, the claim holds for Case 4. We also remark that, for any $\widehat{f}$ of arity $k$ and $\alpha\in\su(\widehat{f})$ of Hamming weight $d$, $D\widehat{f}(\alpha)=a^{k-d}b^d\widehat{f}(\alpha)$. Consequently, for signature $\widehat{f}$ of arity $2d$ we have $(D\widehat{f})|_{\eo}=a^{d}b^d\widehat{f}|_{\eo}$. Moreover, for a \sw\  signature $\widehat{f}$ of arity $k$ which takes the value $0$ on all input strings whose Hamming weight is not equal to $d$, $D\widehat{f}=a^{k-d}b^d\widehat{f}$ and consequently $(D\widehat{f})_{\to \eo}=a^{k-d}b^d\widehat{f}_{\to \eo}$. These properties ensure that the claim holds for Case 1 and 2.
    
    Furthermore, the second claim still holds by replacing $D$ with $XD$.  The only difference between $D\widehat{f}$ and $XD\widehat{f}$ is that the symbols of $0$ and $1$ are exchanged. Therefore, $\widehat{f}$ and $XD\widehat{f}$ now satisfy two symmetric conditions within the same case. For example, if $\widehat{f}$ is $\eog$, then $XD\widehat{f}$ is $\eol$.

    Combining the two claims above, the proof is completed.
\end{proof}

We remark that by Lemma \ref{lem:orthogonal does not change tractable} and reduction to absurdity, for any $O\in\mathscr{O}$, if $\mathcal{F}$ does not satisfy condition ($\mathcal{PC}$), then $O\mathcal{F}$ also does not satisfy condition ($\mathcal{PC}$). A direct corollary is presented in the following.

\begin{corollary}
\label{cor:no pc no others}
    Let $\mathcal{F}$ be a set of complex-valued signatures and $\mathcal{F}$ does not satisfy condition ($\mathcal{PC}$). Then for any $O\in\mathscr{O}$, $\ccsp(O\mathcal{F})$, $\ccsp_2(O\mathcal{F})$, $\hol^c(O\mathcal{F})$, $\khol(K^{-1}O\mathcal{F})$ are \#P-hard.  
\end{corollary}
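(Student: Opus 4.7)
The plan is to argue by contradiction for each of the four problems, relying on Lemma~\ref{lem:orthogonal does not change tractable} used contrapositively---$\mathcal{F}$ fails $(\mathcal{PC})$ implies $O\mathcal{F}$ fails $(\mathcal{PC})$---together with the appropriate established dichotomy. If one of the four problems were tractable, I will derive that $O\mathcal{F}$ (or $\mathcal{F}$) satisfies some case of $(\mathcal{PC})$, yielding the desired contradiction.

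For $\ccsp(O\mathcal{F})$, Theorem~\ref{thm:CSPdichotomy} forces $O\mathcal{F} \subseteq \mathscr{A}$ or $O\mathcal{F} \subseteq \mathscr{P}$; since $(=_2)$ belongs to $\mathscr{A} \cap \mathscr{P}$, the identity matrix witnesses $\mathscr{A}$- or $\mathscr{P}$-transformability of $O\mathcal{F}$, giving case 5 or 6 of $(\mathcal{PC})$. For $\ccsp_2(O\mathcal{F})$, Theorem~\ref{thm:csp2 dichotomy} adds the possibilities $\mathscr{A}_2^1 = T_2 \mathscr{A}$ and $\mathscr{L}$; taking $M = T_2$ and verifying $(=_2) T_2 \in \mathscr{A}$ by a direct calculation handles the first, while checking $(=_2) \in \mathscr{L}$ handles the second. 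For $\hol^c(O\mathcal{F})$, the five tractable cases of Theorem~\ref{thm:holantc dichotomy} map onto $(\mathcal{PC})$ cases 3, 6, 4, 5, 7 applied to $O\mathcal{F}$, since in cases 2 and 4 of that theorem the $\mathscr{P}$- or $\mathscr{A}$-transformability of $O\mathcal{F} \cup \{\Delta_0, \Delta_1\}$ trivially entails the same property for $O\mathcal{F}$ alone.

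The case $\khol(K^{-1} O \mathcal{F})$ is the main obstacle. The plan here is to start from the structural analysis of $OK$ inside the proof of Lemma~\ref{lem:orthogonal does not change tractable}, which writes $OK = KD$ or $OK = KXD$ for some diagonal matrix $D$ of the scalar-multiple form $\mathrm{diag}(1/q, q)$; together with Lemma~\ref{lem:quanxi}, this yields $\khol(K^{-1} O \mathcal{F}) \equiv_T \khol(\widehat{\mathcal{F}}) \equiv_T \hol(\mathcal{F})$, so the goal reduces to proving $\hol(\mathcal{F})$ is \#P-hard whenever $\mathcal{F}$ fails $(\mathcal{PC})$. The hard part is that one cannot directly invoke Theorem~\ref{thm:main theorem} without circularity, and the inclusion $\hol(\mathcal{F}) \leq_T \hol^c(\mathcal{F})$ goes in the wrong direction for transferring hardness. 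I would bridge back to the already-established hardness of $\hol^c(O\mathcal{F})$ (from the previous paragraph) by arguing that the simultaneous failure of every case of $(\mathcal{PC})$ forces enough non-degeneracy in $\mathcal{F}$ to simulate $\Delta_0$ and $\Delta_1$ via gadget construction, thereby making $\hol(\mathcal{F})$ and $\hol^c(\mathcal{F})$ Turing-equivalent; the techniques should parallel those used to realize $\Delta_0$ in Lemma~\ref{lem:1fenlei} and the surrounding analyses.
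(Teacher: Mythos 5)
Your handling of the first three problems is correct and coincides with the paper's own (largely implicit) argument: the corollary is presented as a direct consequence of the remark that, by Lemma~\ref{lem:orthogonal does not change tractable} and reductio, $O\mathcal{F}$ also fails condition ($\mathcal{PC}$), after which each tractability condition in Theorems~\ref{thm:CSPdichotomy}, \ref{thm:csp2 dichotomy} and \ref{thm:holantc dichotomy} is matched with a case of ($\mathcal{PC}$) for $O\mathcal{F}$. Your small verifications ($=_2\in\mathscr{A}\cap\mathscr{P}\cap\mathscr{L}$, $=_2T_2\in\mathscr{A}$ so that $O\mathcal{F}\subseteq\mathscr{A}_2^1$ yields $\mathscr{A}$-transformability, and the observation that transformability of $O\mathcal{F}\cup\{\pin_0,\pin_1\}$ entails that of $O\mathcal{F}$) are exactly the glue the paper leaves to the reader.

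The genuine gap is your plan for $\khol(K^{-1}O\mathcal{F})$. You correctly note that $\khol(K^{-1}O\mathcal{F})\equiv_T\hol(O\mathcal{F})\equiv_T\hol(\mathcal{F})$, but the proposed bridge---that the failure of every case of ($\mathcal{PC}$) ``forces enough non-degeneracy to simulate $\Delta_0$ and $\Delta_1$,'' making $\hol(\mathcal{F})\equiv_T\hol^c(\mathcal{F})$---cannot work. The corollary carries no odd-arity hypothesis: if every signature in $\mathcal{F}$ has even arity, then every $\mathcal{F}$-gate has an even number of dangling edges (the sum of arities equals $2|E|+|D|$), so no unary signature, in particular neither $\pin_0$ nor $\pin_1$, is realizable, and interpolating unaries in this generality is precisely the obstruction that keeps the full complex-valued \hol\ dichotomy open. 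Even with an odd-arity signature present, $\pin_0$/$\pin_1$ need not be obtainable---this is exactly the $[1,\pm\ii]$ branch of Lemma~\ref{lem:1fenlei}, which forces the whole analysis of Section~\ref{sec:khol}. So your argument for the fourth item would, if it worked, prove the unrestricted hardness statement and is out of reach. You should also be aware that the paper itself supplies no argument for this item: read literally, $\khol(K^{-1}O\mathcal{F})$ is $\hol(\mathcal{F})$ in disguise, and in the proof of Theorem~\ref{thm:main theorem} the corollary is only ever used through its $\ccsp$, $\ccsp_2$ and $\hol^c$ parts (the $\khol$ instances that actually arise there always carry additional signatures such as $=_k$, $[1,1]$, $\pin_0$, $\pin_1$, which bring them back to one of those three). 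The sound course is to prove and invoke only the first three items, not to manufacture $\pin_0,\pin_1$ from the failure of ($\mathcal{PC}$).
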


Now we are ready to prove Theorem \ref{thm:main theorem}.

\begin{proof}[Proof of Theorem \ref{thm:main theorem} and Lemma \ref{lem:hol0 dichotomy}]
By the algorithm results in dichotomies for $\#\eo$ (Theorem \ref{thm:eo dichotomy}), $\khol$ defined by single-weighted signatures (Theorem \ref{thm:single weighted dichotomy}), $\ccsp$ (Theorem \ref{thm:CSPdichotomy}), $\ccsp_2$ (Theorem \ref{thm:csp2 dichotomy}), $\hol^c$ (Theorem \ref{thm:holantc dichotomy}) and Theorem \ref{thm:holographic transformation equivalence}, when $\mathcal{F}$ satisfies condition ($\mathcal{PC}$), $\hol(\mathcal{F})$ is computationally tractable to varying degrees.

We now prove the hardness part. We show the sketch of proof in two graphs in Section\ref{main}. Assume $\mathcal{F}$ does not satisfy condition ($\mathcal{PC}$) and contains a non-trivial signature of odd arity. By Lemma \ref{lem:1fenlei} there are four cases.

Case 1 is that after a holographic transformation by some $O\in\mathscr{O}$, the problem is equivalent to $\hol(\mathcal{F},\Delta_0)$. By Lemma \ref{lem:2fenlei} there are three situations.  By Corollary \ref{cor:no pc no others} we have that the third situation is \#P-hard. By Lemma \ref{lem:hol =4 csp2 part1}, \ref{lem:hol =4 csp2 part2} and Corollary \ref{cor:no pc no others}, the second situation is \#P-hard. By Lemma \ref{lem:symmetric GHZ goto CSP} ($\mathcal{F}$ is not $\mathscr{A}$-transformable or $\mathscr{P}$-transformable), \ref{lem:GHZsym}, \ref{lem:Wsym}, \ref{lem:only-1,1,0,0}, \ref{lem:Rnounary}, \ref{lem:Rnobinary}, \ref{lem:Rnoternary}, \ref{lem:yi dingyou unary or binary or tenary} and Corollary \ref{cor:no pc no others}, the first situation is \#P-hard.  By now, we have already proved Lemma \ref{lem:hol0 dichotomy}.

We now consider  Case 4 in Lemma \ref{lem:1fenlei}. By Lemma \ref{lem: you duoyuan xiangdeng hardness} and \ref{lem:youduoyuanxiangdeng suanfa}, it is sufficient for us to classify the complexity of $\ccsp_d(\neq_2,\widehat{Q}\widehat{\mathcal{F}})$, where $\widehat{Q}=\begin{pmatrix}
    1/q & 0 \\
    0 & q
\end{pmatrix}$ and $q\neq0$. Assume it is tractable, by Theorem \ref{thm:cspd neq_2 dichotomy}, $\widehat{Q}K^{-1}\mathcal{F}$ is a subset of $\mathscr{P}$ or $\mathscr{A}_d^r$. The former situation satisfies that $\mathcal{F}$ is $\mathscr{P}$-transformable. Since $\neq_2T_d^r$ is still $\neq_2$ up to a constant, the latter situation satisfies that $\mathcal{F}$ is $\mathscr{A}$-transformable. Both situations lead to a contradiction as $\mathcal{F}$ does not satisfy condition ($\mathcal{PC}$). Therefore, the assumption is false and $\ccsp_d(\neq_2,\widehat{Q}\widehat{\mathcal{F}})$ is \#P-hard.

For Case 2 and 3 in Lemma \ref{lem:1fenlei}, there are three situations by Lemma \ref{lem:odd case2 case3 fenlei}. Since $\mathcal{F}$ does not satisfy condition ($\mathcal{PC}$), the first situation is \#P-hard by Corollary \ref{cor:eol eog dichotomy}. By the first claim in the proof of Lemma \ref{lem:orthogonal does not change tractable}, we have $K\widehat{Q}K^{-1}=OKK^{-1}=O$ for some $O\in\mathscr{O}$. Therefore, the second situation is reduced to Case 1 in Lemma 
\ref{lem:1fenlei}, and it is \#P-hard. The only left situation is $\khol^c(\widehat{\mathcal{F}})$. If all signatures in $\widehat{\mathcal{F}}$ are single-weighted, by Theorem \ref{thm:single weighted dichotomy} it is \#P-hard. By Lemma \ref{lem:kholantc fenlei}, we only need to consider $\khol^c(\widehat{Q}\widehat{\mathcal{F}},[1,0,\ldots,0,1]_k)$, where $\widehat{Q}=\begin{pmatrix}
    1/r & 0 \\
    0 & r
\end{pmatrix}$ with $r\neq0$. If $k\geq 3$, by the analysis of Case 4, it is \#P-hard. If $k=1$, by the first claim in the proof of Lemma \ref{lem:orthogonal does not change tractable}, it is equivalent to $\hol([1,\ii],[1,-\ii],O\mathcal{F},\Delta_0)$ with some $O\in\mathscr{O}$, which is already analyzed in the proof of Case 1. 

When $k=2$, $\khol^c(\widehat{Q}\widehat{\mathcal{F}},=_2)$ is computationally equivalent to $\hol^c(\widehat{Q}\widehat{\mathcal F},\neq_2)$ by Lemma \ref{lem:kholantc to holantc with neq}. Since $\mathcal{F}$ does not satisfy condition ($\mathcal{PC}$), $\widehat{Q}\widehat{\mathcal{F}}\not\subseteq\langle\mathcal{T}\rangle$. We also have $\neq_2\notin\langle K\mathcal{M}\rangle\cup\langle KX\mathcal{M}\rangle\cup\mathscr{L}$. If $\widehat{Q}\widehat{\mathcal{F}}\cup\{\neq_2\}$ is $\mathscr{P}$-transformable, then there exists an $M\in\mathbf{GL}_2(\mathbb{C})$ such that $M\widehat{Q}\widehat{\mathcal{F}}\cup\{M\neq_2\}\subseteq\mathscr{P}$ . By Lemma \ref{lem:dagonggaocheng}, $\neq M^{-1}\in \mathscr{P}$. Therefore, $=_2K\widehat{Q}^{-1}M^{-1}\in\mathscr{P}$ and $\mathcal{F}$ is $\mathscr{P}$-transformable, which is a contradiction. If $\widehat{Q}\widehat{\mathcal{F}}\cup\{\neq_2\}$ is $\mathscr{A}$-transformable, the analysis is similar and  we have $\mathcal{F}$ is $\mathscr{A}$-transformable, which is a contradiction. By Theorem \ref{thm:holantc dichotomy} we have $\hol^c(\widehat{Q}\widehat{\mathcal F},\neq_2)$ is \#P-hard. 

In summary, we have proved that when $\mathcal{F}$ does not satisfy condition ($\mathcal{PC}$), $\hol(\mathcal{F})$ is \#P-hard. Therefore, Theorem \ref{thm:main theorem} is proved.
\end{proof}
\section{Conclusion}\label{ccls}

In this article, we prove a generalized decomposition lemma for complex-valued \hol. Based on this lemma, we further prove a dichotomy for \hol\ when a non-trivial signature of odd arity exists ($\holodd$).

We emphasize that this dichotomy for $\holodd$ is still an $\pnp$ vs. \#P dichotomy due to the dichotomy for \ceo\ in \cite{meng2025fpnp}.  As stated in \cite{meng2025fpnp}, a specific problem defines the NP oracle and leads to the study of Boolean constraint satisfaction problems \cite{feder2006classification}. It is our hope that a complete complexity classification for this kind of problems can be established in the future.

Furthermore, it is worthwhile to pursuit the dichotomy for complex-valued \hol. By Theorem \ref{thm:decomposition lemma} and \ref{thm:main theorem}, the only remaining case is when all signatures in $\mathcal{F}$ are of even arity and irreducible. Nevertheless, the analysis of this case may present significant challenges, as was evidenced in the proof of its sub-cases \cite{shao2020realholant,caifu2023eightvertex}.



\bibliography{ref}

\appendix

\end{document}